\newlength\chapnumb
\titleformat{\chapter}[block]{\normalfont}{}{0pt}
{\parbox[b]{\chapnumb}{\fontsize{120}{144}\selectfont\thechapter}
 \parbox[b]{\dimexpr\textwidth-\chapnumb\relax}{\raggedleft\hfill{\fontsize{35}{42}\selectfont#1}\\
    									  \rule{\dimexpr\textwidth-\chapnumb\relax}{0.4pt}}}
\def\cleardoublepage{
	\clearpage
	\if@twoside
		\ifodd
			\c@page
		\else
			\hbox{}
			\vspace*{\fill}
			\vspace{\fill}
			\thispagestyle{empty}
			\newpage
 			\if@twocolumn
				\hbox{}
				\newpage
			\fi
		\fi
	\fi
}
\newcommand{\spminitoc}{\vspace{-25pt}
					 \singlespacing
					 \minitoc
					 \onehalfspacing
}
\theoremstyle{definition}
\newtheorem{defi}{Definition}
\newtheorem{rem}{Remark}
\theoremstyle{plain}
\newtheorem{pro}{Proposition}
\newtheorem{lem}{Lemma}
\newtheorem{thm}{Theorem}
\newcommand{\rbb}{\mathbb{R}}
\newcommand{\nbb}{\mathbb{N}}
\newcommand{\ybb}{\mathbb{Y}}
\newcommand{\bbb}{\mathbb{B}}
\newcommand{\nbsp}{\bbb}
\newcommand{\ncal}{\mathcal{N}}
\newcommand{\lb}[1]{\mathbf{#1}}
\newcommand{\lbf}[3][]{\boldsymbol{#2}\ifx\relax#1\relax\else_{#1}\fi\ifx\relax#3\relax\else\!\left( #3 \right)\fi}
\newcommand{\rfs}[4][]{\ifx\relax{\pi}\relax\else#1{\pi}\fi\ifx\relax#2\relax\else_{#2}\fi\ifx\relax#3\relax\else^{#3}\fi\ifx\relax#4\relax\else\!\left( #4 \right)\fi}
\newcommand{\lmb}[2][]{\boldsymbol{\pi}\ifx\relax#1\relax\else^{#1}\fi\ifx\relax#2\relax\else\!\left( #2 \right)\fi}
\newcommand{\clmb}[4][]{\ifx\relax\boldsymbol{\pi}\relax\else#1{\boldsymbol{\pi}}\fi\ifx\relax#2\relax\else_{#2}\fi\ifx\relax#3\relax\else^{#3}\fi\ifx\relax#4\relax\else\!\left( #4 \right)\fi}
\newcommand{\cd}[2][]{\rho\ifx\relax#1\relax\else_{#1}\fi\ifx\relax#2\relax\else\!\left( #2 \right)\fi}
\newcommand{\cdd}[3][]{\rho\ifx\relax#1\relax\else_{#1}\fi\ifx\relax#2\relax\else^{#2}\fi\ifx\relax#3\relax\else\!\left( #3 \right)\fi}
\newcommand{\ex}[2][]{r\ifx\relax#1\relax\else_{#1}\fi\ifx\relax#2\relax\else^{\left( #2 \right)}\fi}
\newcommand{\nex}[2][]{q\ifx\relax#1\relax\else_{#1}\fi\ifx\relax#2\relax\else^{\left( #2 \right)}\fi}
\newcommand{\p}[3][]{p\ifx\relax#1\relax\else_{#1}\fi\ifx\relax#2\relax\else^{\left( #2 \right)}\fi\ifx\relax#3\relax\else\!\left( #3 \right)\fi}
\newcommand{\np}[3][]{p\ifx\relax#1\relax\else_{#1}\fi\ifx\relax#2\relax\else^{#2}\fi\ifx\relax#3\relax\else\!\left( #3 \right)\fi}
\newcommand{\dli}[1]{\Delta\!\left( #1 \right)}
\newcommand{\w}[2][]{w\ifx\relax#1\relax\else_{#1}\fi\!\left( #2 \right)}
\newcommand{\lbs}[1]{\mathcal{L}\ifx\relax#1\relax\else\!\left( #1 \right)\fi}
\newcommand{\inc}[2]{1_{#1}\!\left( #2 \right)}
\newcommand{\lbb}{\mathbb{L}}
\newcommand{\lbsp}[1][]{\lbb\ifx\relax#1\relax\else_{#1}\fi}
\newcommand{\xbb}{\mathbb{X}}
\newcommand{\stsp}{\xbb}
\newcommand{\fs}[2][]{\mathcal{F}\ifx\relax#1\relax\else_{#1}\fi\!\left( #2 \right)}
\newcommand{\bfbox}[1]{\framebox[1.1\width]{\textbf{#1}}}
\newcommand{\intf}[4][]{\ifx\relax{d}\relax\else#1{d}\fi\ifx\relax#2\relax\else_{#2}\fi\ifx\relax#3\relax\else^{#3}\fi\ifx\relax#4\relax\else\!\left( #4 \right)\fi}
\newcommand{\lpdf}[4][]{\ifx\relax{s}\relax\else#1{s}\fi\ifx\relax#2\relax\else_{#2}\fi\ifx\relax#3\relax\else^{#3}\fi\ifx\relax#4\relax\else\!\left( #4 \right)\fi}
\newcommand{\be}{\begin{equation}}
\newcommand{\ee}{\end{equation}}
\newcommand{\ben}{\begin{equation*}}
\newcommand{\een}{\end{equation*}}
\newcommand{\ba}{\begin{array}}
\newcommand{\ea}{\end{array}}
\newcommand{\bie}{\begin{IEEEeqnarray}{rCl}}
\newcommand{\eie}{\end{IEEEeqnarray}}
\newcommand{\mb}{\mathbf}
\newcommand\BackgroundPic{
	\put(0,0){
		\parbox[b][\paperheight]{\paperwidth}{
		\vfill
		{\transparent{0.03}\hspace{10cm}\includegraphics[height=\paperheight,keepaspectratio]{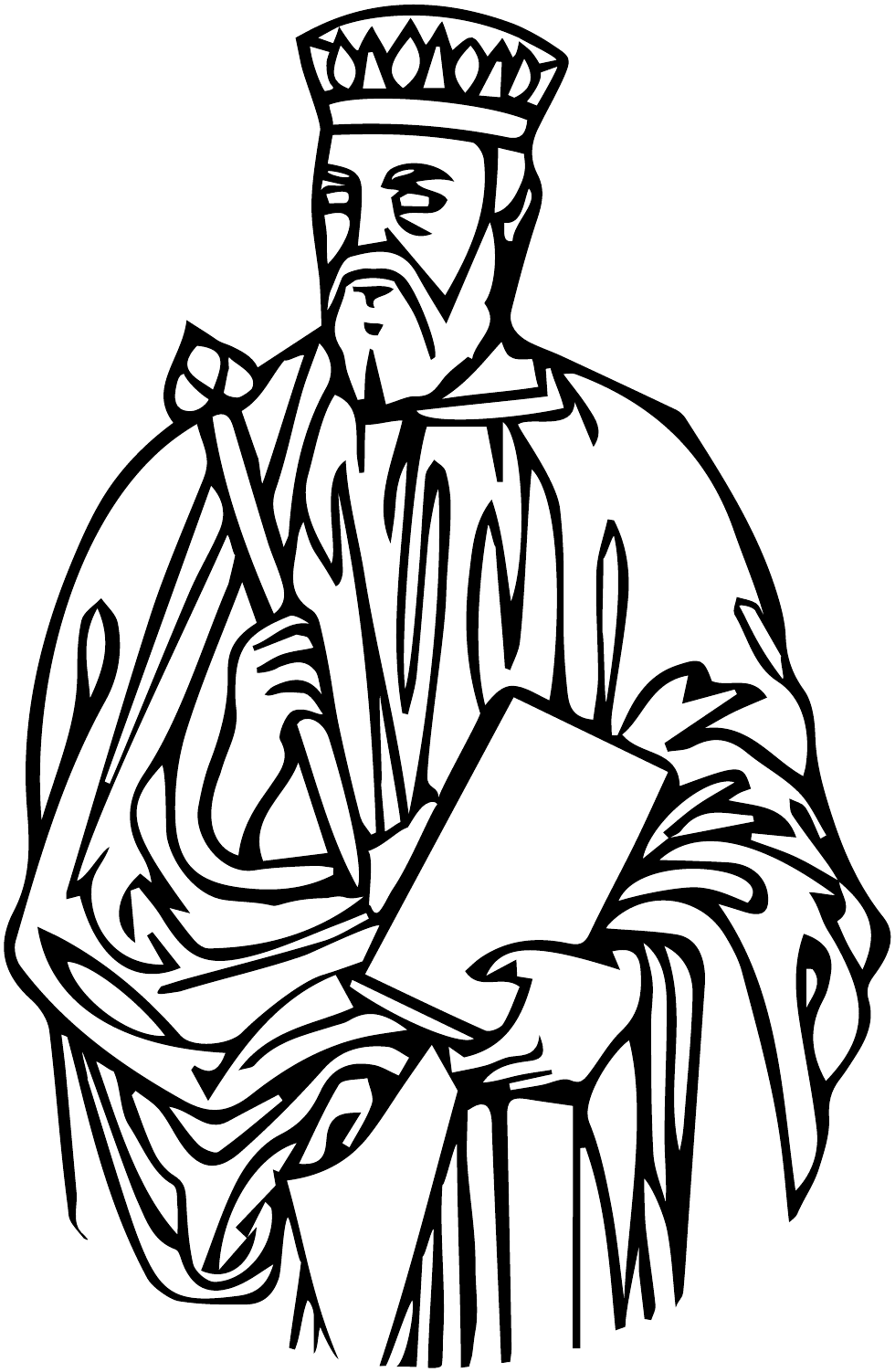}}
		\vfill
		}
	}
}
\begin{document}
\doparttoc
\dominitoc

\frontmatter
\pagestyle{empty}
\fancyfoot[C]{}

\newcommand{\HRule}{\rule{\linewidth}{0.5mm}}
\renewcommand{\arraystretch}{1.3}
\begin{titlepage}
\AddToShipoutPicture*{\BackgroundPic}
\begin{center}
	\includegraphics[width=0.25\textwidth]{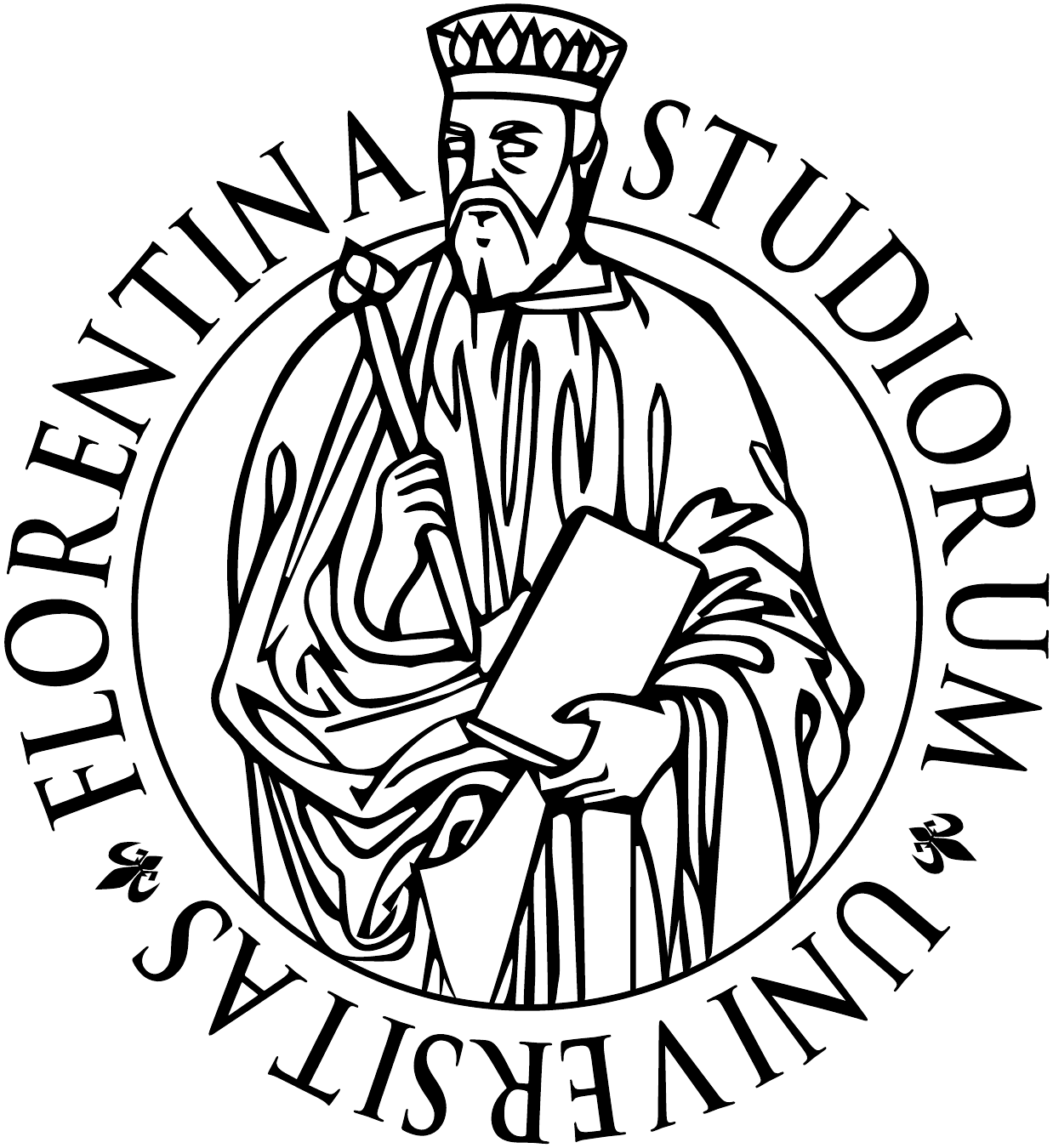}\\
	{\Large
		\textsc{Università degli Studi di Firenze}\\[1cm]
		\textsc{Dottorato in Informatica, Sistemi e Telecomunicazioni}\\\vspace{-0.25em}
		\textsc{Indirizzo: Ingegneria Informatica e dell'Automazione}\\\vspace{-0.25em}
		\textsc{Ciclo XXVII}\\\vspace{-0.25em}
		\textsc{Coordinatore: Prof. Luigi Chisci}\\[1cm]
	}
	\HRule~\\
	{\Large
		\textbf{Distributed multi-object tracking\\over sensor networks:\\a random finite set approach}\\
	}
	\HRule~\\[1cm]
	{\large
		\textsc{Dipartimento di Ingegneria dell'Informazione}\\
		\textsc{Settore Scientifico Disciplinare ING-INF/04}\\[1.5cm]
	}
\end{center}
\begin{minipage}[t][][t]{0.33\textwidth}
	\begin{flushleft}
		\textsc{Author:}\\[0.5cm]
		Claudio Fantacci
	\end{flushleft}
\end{minipage}
\begin{minipage}[t][][t]{0.33\textwidth}
	\begin{center}
		\textsc{Supervisors:}\\[0.5cm]
			Prof. Luigi Chisci\\[1cm]
			Prof. Giorgio Battistelli
	\end{center}
\end{minipage}
\begin{minipage}[t][][t]{0.33\textwidth}
	\begin{flushright}
		\textsc{Coordinator:}\\[0.5cm]
			Prof. Luigi Chisci\\[1cm]
	\end{flushright}
\end{minipage}
\vfill
\begin{center}
{\large
	Years 2012/2014
}
\end{center}
\end{titlepage}

\mainmatter
\pagestyle{empty}
\pagenumbering{Roman}

\tableofcontents

\listoftheorems[ignoreall,show={thm,cor,pro,lem}]

\listoffigures
\mtcaddchapter

\listoftables
\mtcaddchapter

\newpage
\mtcaddchapter
\singlespacing
\chapter*{List of Acronyms}
\begin{longtable}{p{.30\textwidth} >{\raggedright\arraybackslash}p{.70\textwidth}} 
BF						& Bayes Filter\\
BMF					& Belief Mass Function\\
CBMM					& Centralized Bayesian Multiple-Model\\
CDF					& Cumulative Distribution Function\\
CI					& Covariance Intersection\\
CM$\delta$GLMB		& Consensus Marginalized $\delta$-Generalized Labelled Multi-Bernoulli\\
CGM-CPHD				& Consensus Gaussian Mixture-Cardinalized Probability Hypothesis Density\\
CLCP					& Consensus on Likelihoods and Priors\\
CLMB					& Consensus Labelled Multi-Bernoulli\\
COM					& Communication\\
CP						& Consensus on Posteriors\\
CPHD					& Cardinalized Probability Hypothesis Density\\
CT						& Coordinated Turn\\
$\delta$-GLMB			& $\delta$-Generalized Labeled Multi-Bernoulli\\
DGPB$_1$				& Distributed First Order Generalized Pseudo-Bayesian\\
DIMM					& Distributed Interacting Multiple Model\\
DIMM-CL				& Distributed Interacting Multiple Model with Consensus on Likelihoods\\
DMOF					& Distributed Multi-Object Filtering\\
DMOT					& Distributed Multi-Object Tracking\\
DOA					& Direction Of Arrival\\
DSOF					& Distributed Single-Object Filtering\\
DWNA					& Discrete White Noise Acceleration\\
EKF						& Extended Kalman Filter\\
EMD					& Exponential Mixture Density\\
FISST					& FInite Set STatistics\\
GCI						& Generalized Covariance Intersection\\
GGM-CPHD				& Global Gaussian Mixture - Cardinalized Probability Hypothesis Density\\
GLMB					& Generalized Labeled Multi-Bernoulli\\
GM						& Gaussian Mixture\\
GM-CM$\delta$GLMB	& Gaussian Mixture-Consensus Marginalized $\delta$-Generalized Labelled Multi-Bernoulli\\
GM-CLMB				& Gaussian Mixture-Consensus Labelled Multi-Bernoulli\\
GM-$\delta$GLMB		& Gaussian Mixture-$\delta$-Generalized Labelled Multi-Bernoulli\\
GM-LMB				& Gaussian Mixture-Labelled Multi-Bernoulli\\
GM-M$\delta$GLMB	& Gaussian Mixture-Marginalized $\delta$-Generalized Labelled Multi-Bernoulli\\
GPB$_{1}$				& First Order Generalized Pseudo-Bayesian\\
I.I.D. / i.i.d. / iid			& Independent and Identically Distributed\\
IMM					& Interacting Multiple Model\\
JPDA					& Joint Probabilistic Data Association\\
KF						& Kalman Filter\\
KLA					& Kullback-Leibler Average\\
KLD					& Kullback-Leibler Divergence\\
LMB					& Labeled Multi-Bernoulli\\
M$\delta$-GLMB		& Marginalized $\delta$-Generalized Labelled Multi-Bernoulli\\
MHT					& Multiple Hypotheses Tracking\\
MM						& Multiple Model\\
MOF					& Multi-Object Filtering\\
MOT					& Multi-Object Tracking\\
MSKF					& Multi-Sensor Kalman Filter\\
NCV					& Nearly-Constant Velocity\\
NWGM					& Normalized Weighted Geometric Mean\\
OSPA					& Optimal SubPattern Assignment\\
OT						& Object Tracking\\ 
PF						& Particle Filter\\
PDF					& Probability Density Function\\
PHD					& Probability Hypothesis Density\\
PMF					& Probability Mass Function\\
PRMSE					& Position Root Mean Square Error\\
RFS						& Random Finite Set	\\
SEN					& Sensor\\
SMC					& Sequential Monte Carlo\\
SOF					& Single-Object Filtering\\
TOA					& Time Of Arrival\\
UKF					& Unscented Kalman Filter\\
UT						& Unscented Transform
\end{longtable}

\onehalfspacing

\newpage
\singlespacing
\chapter*{List of Symbols}
\begin{longtable}{p{.30\textwidth} p{.70\textwidth}} 
$\mathcal{G}$														& Directed graph\\
$\mathcal{N}$														& Set of nodes\\
$\mathcal{A}$														& Set of arcs\\
$\mathcal{S}$														& Set of sensors\\
$\mathcal{C}$														& Set of communication nodes\\
$\mathcal{N}^{j}$													& Set of in-neighbours (including $j$ itself)\\
$\operatorname{col}\!\left( \, \cdot^{i} \right)_{i \in \mathcal{I}}$	& Stacking operator\\
$\operatorname{diag}\!\left( \, \cdot^{i} \right)_{i \in \mathcal{I}}$	& Square diagonal matrix operator\\
$\left< f, g\right> \triangleq \displaystyle\int f(x)\,g(x) dx$			& Inner product operator for vector valued functions\\
$\top$																& Transpose operator\\
$\mathbb{R}$														& Real number space\\
$\mathbb{R}^{n}$													& $n$ dimensional Euclidean space\\
$\mathbb{N}$														& Natural number set\\
$k$																	& Time index\\
$x_{k}$																& State vector\\
$\mathbb{X}$														& State space\\
$n_{x}$																& Dimension of the state vector\\
$f_{k}(x)$															& State transition function\\
$w_{k}$															& Process noise\\
$\varphi\!\left( k | \zeta \right)$													& Markov transition density\\
$y_{k}$																& Measurement vector\\
$\mathbb{Y}$														& Measurement space\\
$n_{y}$																& Dimension of the measurement vector\\
$h_{k}(x)$															& Measurement function\\
$v_{k}$																& Measurement noise\\
$g_{k}(y|x)$															& Likelihood function\\
$x_{1:k}$															& State history\\
$y_{1:k}$															& Measurement history\\
$p_{k}(x)$															& Posterior/Filtered probability density function\\
$p_{k|k-1}(x)$														& Prior/Predicted probability density function\\
$p_{0}(x)$															& Initial probability density function\\
$g^{i}_{k}\left( y | x \right)$														& Likelihood function of sensor $i$\\
$A_{k-1}$															& $n_{x} \times n_{x}$ state transition matrix\\
$C_{k}$																& $n_{y} \times n_{x}$ observation matrix\\
$Q_{k}$															& Process noise covariance matrix\\
$R_{k}$																& Measurement noise covariance matrix\\
$\mathcal{N}\!\left( \cdot; \, \cdot, \cdot \right)$					& Gaussian probability density function\\
$\hat{x}_{k}$														& Updated mean (state vector)\\
$P_{k}$																& Updated covariance matrix (associated to $\hat{x}_{k}$)\\
$\hat{x}_{k|k-1}$													& Predicted mean (state vector)\\
$P_{k|k-1}$															& Predicted covariance matrix (associated to $\hat{x}_{k|k-1}$)\\
$e_{k}$																& Innovation\\
$S_{k}$																& Innovation covariance matrix\\
$K_{k}$																& Kalman gain\\
$\alpha_{\sigma}$, $\beta_{\sigma}$, $\kappa_{\sigma}$							& Unscented transformation weights parameters\\
$c$, $w_{m}$, $w_{c}$, $W_{c}$											& Unscented transformation weights\\
$X$																	& Finite-valued-set\\
$\displaystyle h^{X} \triangleq \prod_{x \in X}h(x)$					& Multi-object exponential\\
$\delta _{X}(\cdot)$												& Generalized Kronecker delta\\
$1_{X}(\cdot)$														& Generalized indicator function\\
$| \cdot |$															& Cardinality operator\\
$\mathcal{F}(X)$													& Space of finite subsets of $X$\\
$\mathcal{F}_{n}(\mathbb{X})$										& Space of finite subsets of $X$ with exactly $n$ elements\\
$f(X)$, $\pi(X)$														& Multi-object densities\\
$\rho(n)$															& Cardinality probability mass function\\
$\beta(X)$															& Belief mass function\\
$d(x)$																& Probability hypothesis density function\\
$E[ \,\cdot\, ]$														& Expectation operator\\
$\bar{n}$, $D$														& Expected number of objects\\
$s(x)$																& Location density (normalized $d(x)$)\\
$r$																	& Bernoulli existence probability\\
$q \triangleq 1 - r$																	& Bernoulli non-existence probability\\
$X_{k}$																& Multi-object random finite set\\
$Y_{k}$																& Random finite set of the observations\\
$Y_{1:k}$															& Observation history random finite set\\
$Y^{i}_{k}$															& Random finite set of the observations of sensor $i$\\
$B_{k}$																& Random finite set of new-born objects\\
$P_{S,k}$															& Survival probability\\
$\mathcal{C}$, $\mathcal{C}_{k}$																& Random finite sets of clutter\\
$\mathcal{C}^{i}$, $\mathcal{C}^{i}_{k}$															& Random finite sets of clutter of sensor $i$\\
$P_{D,k}$															& Detection probability\\
$\pi_{k}(X)$															& Posterior/Filtered multi-object density\\
$\pi_{k|k-1}(X)$														& Prior/Predicted multi-object density\\
$\varphi_{k|k-1}(X|Z)$												& Markov multi-object transition density\\
$g_{k}(Y|X)$														& Multi-object likelihood function\\
$\rho_{k|k-1}(n)$													& Predicted cardinality probability mass function\\
$d_{k|k-1}(x)$														& Predicted probability hypothesis density function\\
$\rho_{k}(n)$														& Updated cardinality probability mass function\\
$d_{k}(x)$															& Updated probability hypothesis density function\\
$p_{b,k}(\cdot)$													& Birth cardinality probability mass function\\
$\rho_{S, k|k-1}\!\left( \cdot \right)$								& Cardinality probability mass function of survived objects\\
$d_{b,k}(\cdot)$													& Probability hypothesis density function of new-born objects\\
$\mathcal{G}_{k}^{0}\!\left( \cdot, \cdot, \cdot  \right)$, 
$\mathcal{G}_{Y_{k}}\!\left( \cdot \right)$							& Cardinalized probability hypothesis density function generalized likelihood functions\\
$\ell$																& Label\\
$\mathbf{X}$														& Labeled finite-valued-set\\
$\mathbf{x}$														& Labeled state vector\\
$\mathcal{L}\!\left( \mathbf{X} \right)$								& Label projection operator\\
$\Delta\!\left( \mathbf{X} \right)$									& Distinct label indicator\\
$\mathbb{L}_{k}$, $\mathbb{B}$									& Label set for objects born at time $k$\\
$\mathbb{L}_{1:k-1}$, $\mathbb{L}_{-}$							& Label set for objects up to time $k-1$\\
$\mathbb{L}_{1:k}$, $\mathbb{L}$									& Label set for objects up to time $k$\\
$\xi$																& Association history\\
$\Xi$																& Association history set\\
$\boldsymbol{f}(\mathbf{X})$, $\boldsymbol{\pi}(\mathbf{X})$	& Labeled multi-object densities\\
$w^{\left( c \right)}$, 
$w^{\left( c \right)}\!\left( \mathbf{X} \right)$						& Generalized labeled multi-Bernoulli weights indexed with $c \in \mathbb{C}$\\
$p^{\left( c \right)}$												& Generalized labeled multi-Bernoulli location probability density function indexed with $c \in \mathbb{C}$\\
$(I, \xi) \in \mathcal{F}\!\left( \mathbb{L} \right) \times \Xi$			& Labeled multi-object hypothesis\\
$w^{\left( I, \xi \right)}\!\left( \mathbf{X} \right)$					& $\delta$-Generalized labeled multi-Bernoulli weight of hypothesis $( I, \xi )$\\
$p^{\left( I \right)}$													& $\delta$-Generalized labeled multi-Bernoulli location probability density function with association history $\xi$\\
$r^{\left( \ell \right)}$												& Bernoulli existence probability associated to the object with label $\ell$\\
$q^{\left( \ell \right)}$												& Bernoulli non-existence probability associated to the object with label $\ell$\\
$\boldsymbol{\varphi}_{k|k-1}(X|Z)$								& Markov labeled multi-object transition density\\
$\left(p \oplus q \right)(x) \triangleq
\dfrac{p(x) \, q(x)}{\left< p, q \right>}$								& Information fusion operator\\
$\left( \alpha \odot p \right)(x) \triangleq
\dfrac{\left[ p(x) \right]^{\alpha}}{\left< p^{\alpha}, 1 \right>}$		& Information weighting operator\\
$\overline{p}(x)$, $\overline{q}$, $\overline{f}$, $\dots$			& Weighted Kullback-Leibler average\\
$\left( q, \Omega \right)$											& Information vector and (inverse covariance) matrix\\
$\omega^{i,j}$														& Consensus weight of node $i$ relative to $j$\\
$\omega_{l}^{i,j}$													& Consensus weight of node $i$ relative to $j$ at the consensus step $l$\\
$\Pi$																& Consensus matrix\\
$l$																	& Consensus step\\
$L$																	& Maximum number of consensus steps\\
$\rho^{i}$															& Likelihood scalar weight of node $i$\\
$b^{i}$																& Estimate of the fraction $| \mathcal{S}/\mathcal{N}|$\\
$\delta \Omega_{k}^{i} \triangleq
\left( C_{k}^{i} \right)^\top \left( R_{k}^{i} \right)^{-1} C_{k}^{i}$		& Information matrix gain\\
$\delta q_{k}^{i} \triangleq
\left( C_{k}^{i} \right)^\top \left( R_{k}^{i} \right)^{-1} y_{k}^{i}$		& Information vector gain\\
$\mathcal{P}_{c}$													& Sets of probability density functions over a continuous state space\\
$\mathcal{P}_{d}$													& Sets of probability mass functions over a discrete state space\\
$p_{jt}$															& Markov transition probability from mode $t$ to $j$\\
$\mu^{j}_{k}$														& Filtered modal probability of mode $j$\\
$\mu^{j}_{k|k-1}$													& Predicted modal probability of mode $j$\\
$\mu^{j|t}_{k}$														& Filtered modal probability of mode $j$ conditioned to mode $t$\\
$\alpha_{i}$														& Gaussian mixture weight of the component $i$\\
$N_{G}$															& Number of components of a Gaussian mixture\\
$\alpha_{ij}$														& Fused Gaussian mixture weight relative to components $i$ and $j$\\
$T_{s}$																& Sampling interval\\
$\beta$															& Fused Gaussian mixture normalizing constant\\
$p_{x}$, $p_{y}$													& Object planar position coordinates\\
$\dot{p}_{x}$, $\dot{p}_{y}$										& Object planar velocity coordinates\\
$\lambda_{c}$														& Poisson clutter rate\\
$N_{mc}$															& Number of Monte Carlo trials\\
$N_{max}$															& Maximum number of Gaussian components\\
$\gamma_{m}$														& Merging threshold\\
$\gamma_{t}$														& Truncation threshold\\
$\sigma_{TOA}$													& Standard deviation of time of arrival sensor\\
$\sigma_{DOA}$													& Standard deviation of direction of arrival sensor\\
$\boldsymbol{\pi}_{k}(\mathbf{X})$								& Posterior/Filtered labeled multi-object density\\
$\boldsymbol{\pi}_{k|k-1}(\mathbf{X})$								& Prior/Predicted labeled multi-object density\\
$\boldsymbol{f}_{B}\!\left( \mathbf{X} \right)$						& Labeled multi-object birth density\\
$w_{B}\left( \mathbf{X} \right)$										& Labeled multi-object birth weight\\
$p_{B}(x,\ell)$														& Labeled multi-object birth location probability density function\\
$r_{B}^{(\ell)}$														& Labeled multi-Bernoulli newborn object weight\\
$p^{(\ell)}_{B}(x)$													& Labeled multi-Bernoulli newborn object location probability density function\\
$w_{S}^{(\xi)}\left( \mathbf{X} \right)$								& Labeled multi-object survival object weight with association history $\xi$\\
$p_{S}(x,\ell)$														& Labeled multi-object survival object location probability density function with association history $\xi$\\
$\theta$															& New association map\\
$\Theta(I)$															& New association map set corresponding to the label subset $I$\\
$w_{k}^{\left(I,\xi\right)}$											& $\delta$-Generalized labeled multi-Bernoulli posterior/filtered weight of hypothesis $( I, \xi )$\\
$p_{k}^{(\xi)}(x,\ell)$												& $\delta$-Generalized labeled multi-Bernoulli posterior/filtered location probability density function with association history $\xi$\\
$w_{k|k-1}^{\left(I,\xi\right)}$										& $\delta$-Generalized labeled multi-Bernoulli prior/predicted weight of hypothesis $( I, \xi )$\\
$p_{k|k-1}^{(\xi)}(x,\ell)$											& $\delta$-Generalized labeled multi-Bernoulli prior/predicted location probability density function with association history $\xi$ and label $\ell$\\
$w_{k}^{\left(I\right)}$												& Marginalized $\delta$-Generalized labeled multi-Bernoulli posterior/filtered weight of the set $I$\\
$p_{k}^{(I)}(x,\ell)$													& Marginalized $\delta$-Generalized labeled multi-Bernoulli posterior/filtered location probability density function of the set $I$ and label $\ell$\\
$w_{k|k-1}^{\left(I\right)}$											& Marginalized $\delta$-Generalized labeled multi-Bernoulli prior/predicted weight of the set $I$\\
$p_{k|k-1}^{(I)}(x,\ell)$												& Marginalized $\delta$-Generalized labeled multi-Bernoulli prior/predicted location probability density function of the set $I$ and label $\ell$\\
$r_{S}^{(\ell)}$														& Labeled multi-Bernoulli survival object weight\\
$p^{(\ell)}_{S}(x)$													& Labeled multi-Bernoulli survival object location probability density function\\
$r^{(\ell)}_{k}$														& Labeled multi-Bernoulli posterior/filtered existence probability of the object with label $\ell$\\
$p^{(\ell)}_{k}(x)$													& Labeled multi-Bernoulli posterior/filtered location probability density function of the object with label $\ell$\\
$r^{(\ell)}_{k|k-1}$													& Labeled multi-Bernoulli prior/predicted existence probability of the object with label $\ell$\\
$p^{(\ell)}_{k|k-1}(x)$												& Labeled multi-Bernoulli prior/predicted location probability density function of the object with label $\ell$\\
$\left< f, g\right> \triangleq \displaystyle\int f(X)\,g(X) \delta X$	& Inner product operator for finite-valued-set functions\\
$\left< \mathbf{f}, \mathbf{g}\right> \triangleq
\displaystyle\int \mathbf{f}(\mathbf{X}) 
\,\mathbf{g}(\mathbf{X}) \delta\mathbf{X}$						& Inner product operator for labeled finite-valued-set functions
\end{longtable}

\onehalfspacing

\mainmatter
\pagestyle{fancy}
\fancyhead[LE]{\slshape \leftmark}
\fancyhead[RE]{}
\fancyhead[LO]{}
\fancyhead[RO]{\slshape \rightmark}
\fancyfoot[C]{\thepage}
\renewcommand{\headrulewidth}{0.4pt}
\renewcommand{\footrulewidth}{0pt}
\pagenumbering{arabic}

\chapter*{Acknowledgment}
\addcontentsline{toc}{chapter}{Acknowledgment}
First and foremost, I would sincerely like to thank my supervisors Prof. Luigi Chisci and Prof. Giorgio Battistelli for their constant and relentless support and guidance during my 3-year doctorate.
Their invaluable teaching and enthusiasm for research have made my Ph.D. education a very challenging yet rewarding experience.
I would also like to thank Dr. Alfonso Farina, Prof. Ba-Ngu Vo and Prof. Ba-Tuong Vo.
It has been a sincere pleasure to have the opportunity to work with such passionate, stimulating and friendly people.
Our collaboration, knowledge sharing and feedback have been of great help and truly appreciated.
Last but not least, I would like to thank all my friends and my family who supported, helped and inspired me during my studies.

\chapter*{Foreword}
\addcontentsline{toc}{chapter}{Foreword}
Statistics, mathematics and computer science have always been the favourite subjects in my academic career.
The Ph.D. in automation and computer science engineering brought me to address challenging problems involving such disciplines.
In particular, multi-object filtering concerns the joint detection and estimation of an unknown and possibly time-varying number of objects, along with their dynamic states, given a sequence of observation sets.
Further, its distributed formulation also considers how to efficiently address such a problem over a heterogeneous sensor network in a fully distributed, scalable and computationally efficient way.
Distributed multi-object filtering is strongly linked with statistics and mathematics for modeling and tackling the main issues in an elegant and rigorous way, while computer science is fundamental for implementing and testing the resulting algorithms.
This topic poses significant challenges and is indeed an interesting area of research which has fascinated me during the whole Ph.D. period.

This thesis is the result of the research work carried out at the University of Florence (Florence, Italy) during the years 2012-2014, of a scientific collaboration for the biennium 2012-2013 with Selex ES (former SELEX SI, Rome, Italy) and of 6 months spent as a visiting Ph.D. scholar at the Curtin University of Technology (Perth, Australia) during the period January-July 2014.

\chapter*{Abstract}
\addcontentsline{toc}{chapter}{Abstract}
The aim of the present dissertation is to address distributed tracking over a network of heterogeneous and geographically dispersed nodes (or agents) with sensing, communication and processing capabilities.
Tracking is carried out in the Bayesian framework and its extension to a distributed context is made possible via an information-theoretic approach to data fusion which exploits \textit{consensus} algorithms and the notion of \textit{Kullback–Leibler Average} (KLA) of the Probability Density Functions (PDFs) to be fused.

The first step toward distributed tracking considers a single moving object.
Consensus takes place in each agent for spreading information over the network so that each node can track the object.
To achieve such a goal, consensus is carried out on the local single-object posterior distribution, which is the result of local data processing, in the Bayesian setting, exploiting the last available measurement about the object.
Such an approach is called \textit{Consensus on Posteriors} (CP).
The first contribution of the present work \cite{cpcl} is an improvement to the CP algorithm, namely \textit{Parallel Consensus on Likelihoods and Priors} (CLCP).
The idea is to carry out, in parallel, a separate consensus for the novel information (likelihoods) and one for the prior information (priors). This parallel procedure is conceived to avoid underweighting the novel information during the fusion steps. The outcomes of the two consensuses are then combined to provide the fused posterior density.
Furthermore, the case of a single highly-maneuvering object is addressed.
To this end, the object is modeled as a jump Markovian system and the \textit{multiple model} (MM) filtering approach is adopted for local estimation. Thus, the consensus algorithms needs to be re-designed to cope with this new scenario.
The second contribution \cite{batchifan2014} has been to devise two novel consensus MM filters to be used for tracking a maneuvering object.
The novel consensus-based MM filters are based on the \textit{First Order Generalized Pseudo-Bayesian} (GPB$_{1}$) and \textit{Interacting Multiple Model} (IMM) filters.

The next step is in the direction of distributed estimation of multiple moving objects.
In order to model, in a rigorous and elegant way, a possibly time-varying number of objects present in a given area of interest, the \textit{Random Finite Set} (RFS) formulation is adopted since it provides the notion of \textit{probability density for multi-object states} that allows to directly extend existing tools in distributed estimation to multi-object tracking.
The multi-object Bayes filter proposed by Mahler is a theoretically grounded solution to recursive Bayesian tracking based on RFSs.
However, the multi-object Bayes recursion, unlike the single-object counterpart, is affected by combinatorial complexity and is, therefore, computationally infeasible except for very small-scale problems involving few objects and/or measurements.
For this reason, the computationally tractable \textit{Probability Hypothesis Density} (PHD) and \textit{Cardinalized PHD} (CPHD) filtering approaches will be used as a first endeavour to distributed multi-object filtering.
The third contribution \cite{ccphd} is the generalisation of the single-object KLA to the RFS framework, which is the theoretical fundamental step for developing a novel consensus algorithm based on CPHD filtering, namely the \textit{Consensus CPHD} (CCPHD).
Each tracking agent locally updates multi-object CPHD, i.e. the cardinality distribution and the PHD, exploiting the multi-object dynamics and the available local measurements, exchanges such information with communicating agents and then carries out a fusion step to combine the information from all neighboring agents.

The last theoretical step of the present dissertation is toward distributed filtering with the further requirement of unique object identities.
To this end the labeled RFS framework is adopted as it provides a tractable approach to the multi-object Bayesian recursion.
The $\delta$-GLMB filter is an exact closed-form solution to the multi-object Bayes recursion which jointly yields state and label (or trajectory) estimates in the presence of clutter, misdetections and association uncertainty.
Due to the presence of explicit data associations in the $\delta$-GLMB filter, the number of components in the posterior grows without bound in time.
The fourth contribution of this thesis is an efficient approximation of the $\delta$-GLMB filter \cite{mdglmbf}, namely \textit{Marginalized $\delta$-GLMB} (M$\delta$-GLMB), which preserves key summary statistics (i.e. both the PHD and cardinality distribution) of the full labeled posterior.
This approximation also facilitates efficient multi-sensor tracking with detection-based measurements.
Simulation results are presented to verify the proposed approach.
Finally, distributed labeled multi-object tracking over sensor networks is taken into account.
The last contribution \cite{fanvovo2015} is a further generalization of the KLA to the labeled RFS framework, which enables the development of two novel consensus tracking filters, namely the \textit{Consensus Marginalized $\delta$-Generalized Labeled Multi-Bernoulli} (CM-$\delta$GLMB) and the \textit{Consensus Labeled Multi-Bernoulli} (CLMB) tracking filters.
The proposed algorithms provide a fully distributed, scalable and computationally efficient solution for multi-object tracking.

Simulation experiments on challenging single-object or multi-object tracking scenarios confirm the effectiveness of the proposed contributions.

\chapter{Introduction}
\label{chap:intro}
Recent advances in wireless sensor technology have led to the development of large networks consisting of radio-interconnected nodes (or agents) with sensing, communication and processing capabilities.
Such a net-centric technology enables the building of a more complete picture of the environment, by combining information from individual nodes (usually with limited observability) in a way that is \textit{scalable} (w.r.t. the number of nodes), \textit{flexible} and \textit{reliable} (i.e. \textit{robust} to failures). Getting these benefits calls for architectures in which individual agents can operate without knowledge of the information flow in the network.
Thus, taking into account the above-mentioned considerations, \textit{Multi-Object Tracking} (MOT) in sensor networks requires redesigning the architecture and algorithms to address the following issues:
\begin{itemize}
	\item lack of a central fusion node;
	\item scalable processing with respect to the network size;
	\item each node operates without knowledge of the network topology;
	\item each node operates without knowledge of the dependence between its own information and the information received from other nodes.
\end{itemize}

To combine limited information (usually due to low observability) from individual nodes, a suitable \textit{information fusion} procedure is required to reconstruct, from the node information, the state of the objects present in the surrounding environment. {T}he scalability requirement, the lack of a fusion center and knowledge on the network topology call for the adoption of a \textit{consensus} approach to achieve a collective fusion over the network by iterating local fusion steps among neighboring nodes \cite{Olfati,Xiao,Calafiore,cp}. In addition, due to the possible data incest problem in the presence of network loops that can causes \textit{double counting} of information, robust (but suboptimal) fusion rules, such as the \textit{Chernoff fusion} rule \cite{info,mori1} (that includes \textit{Covariance Intersection} (CI) \cite{juluhl1997,julier2008} and its generalization \cite{mah2000}) are required.

The focus of the present dissertation is \textit{distributed estimation}, from the single object to the more challenging multiple object case.

In the context of \textit{Distribute Single-Object Filtering} (DSOF), standard or Extended or Unscented Kalman filters are adopted as local estimators, the consensus involves a single Gaussian component per node, characterized by either the estimate-covariance or the information pair.
Whenever multiple models are adopted for better describing the motion of the object in the tracking scenario, multiple Gaussian components per node arise and consensus has to be extended to this multicomponent setting.
Clearly the presence of different Gaussian components related to different motion models of the same object or to different objects imply different issues and corresponding solution approaches that will be separately addressed.
In this single-object setting, the main contributions in the present work are:
\begin{enumerate}[label=\textsc{\roman*}.]
	\item the development of a novel consensus algorithm, namely \textit{Parallel Consensus on Likelihoods and Priors} (CLCP), that carries out, in parallel, a separate consensus for the novel information (likelihoods) and one for the prior information (priors);
	\item two novel consensus MM filters to be used for tracking a maneuvering object, namely \textit{Distributed First Order Generalized Pseudo-Bayesian} (DGPB$_{1}$) and \textit{Distributed Interacting Multiple Model} (DIMM) filters.
\end{enumerate}

Furthermore, \textit{Distribute Multi-Object Filtering} (DMOF) is taken into account.
To model a possibly time-varying number of objects present in a given area of interest in the presence of detection uncertainty and clutter, the \textit{Random Finite Set} (RFS) approach is adopted.
The RFS formulation provides the useful concept of \textit{probability density} for \textit{multi-object states} that allows to directly extend existing tools in distributed estimation to multi-object tracking.
Such a concept is not available in the MHT and JPDA approaches \cite{reid,far1985v1,far1985v2,book0,book,BlPo}.
However, the multi-object Bayes recursion, unlike the single-object counterpart, is affected by combinatorial complexity and is, therefore, computationally infeasible except for very small-scale problems involving very few objects and/or measurements.
For this reason, the computationally tractable \textit{Probability Hypothesis Density} (PHD) and \textit{Cardinalized PHD} (CPHD) filtering approaches will be used to address DMOF.
It is recalled that the CPHD filter propagates in time the discrete distribution of the number of objects, called \textit{cardinality distribution}, and the spatial distribution in the state space of such objects, represented by the PHD (or intensity function).
It is worth to point out that there have been several interesting contributions \cite{mah2000,chmoch1990,cljumhri2010,unjuclri2010,unclju2011} on multi-object fusion.
More specifically, \cite{chmoch1990} addressed the problem of optimal fusion in the case of known correlations while \cite{mah2000,cljumhri2010,unjuclri2010,unclju2011} concentrated on robust fusion for the practically more relevant case of unknown correlations. In particular, \cite{mah2000} first generalized CI in the context of multi-object fusion.
Subsequently, \cite{cljumhri2010} specialized the \textit{Generalized Covariance Intersection} (GCI) of \cite{mah2000} to specific forms of the multi-object densities providing, in particular, GCI fusion of cardinality distributions and PHD functions.
In \cite{unjuclri2010}, a Monte Carlo (particle) realization is proposed for the GCI fusion of PHD functions.
The two key contributions in this thesis work are:
\begin{enumerate}[label=\textsc{\roman*}.]
	\item the generalisation of the single-object KLA to the RFS framework;
	\item a novel \textit{consensus CPHD} (CCPHD) filter, based on a Gaussian Mixture (GM) implementation.
\end{enumerate}

\textit{Multi-object tracking} (MOT) involves the on-line estimation of an unknown and time-varying number of objects and their individual trajectories from sensor data \cite{BlPo, book0, mahler}.
The key challenges in multi-object tracking also include \textit{data association uncertainty}.
Numerous multi-object tracking algorithms have been developed in the literature and most of these fall under the three major paradigms of: \textit{Multiple Hypotheses Tracking} (MHT) \cite{reid,BlPo}, \textit{Joint Probabilistic Data Association} (JPDA) \cite{book0} and \textit{Random Finite Set} (RFS) filtering \cite{mahler}.
The proposed solutions are based on the recently introduced concept of labeled RFS that enables the estimation of multi-object trajectories in a principled manner \cite{vovo1}.
In addition, labeled RFS-based trackers do not suffer from the so-called ``\textit{spooky effect}'' \cite{spooky} that degrades performance in the presence of low detection probability like in the multi-object filters \cite{vo-vo-cantoni,ccphd,emd}.
Labeled RFS conjugate priors \cite{vovo1} have led to the development of a tractable analytic multi-object tracking solution called the \textit{$\delta$-Generalized Labeled Multi-Bernoulli} ($\delta$-GLMB) filter \cite{vovo2}.
The computational complexity of the $\delta$-GLMB filter is mainly due to the presence of explicit data associations.
For certain applications such as tracking with multiple sensors, partially observable measurements or decentralized estimation, the application of a $\delta$-GLMB filter may not be possible due to limited computational resources.
Thus, cheaper approximations to the $\delta$-GLMB filter are of practical significance in MOT.
Core contribution of the present work is a new approximation of the $\delta$-GLMB filter.
The result is based on the approximation proposed in \cite{papi2014} where it was shown that the more general \textit{Generalized Labeled Multi-Bernoulli} (GLMB) distribution can be used to construct a principled approximation of an arbitrary labeled RFS density that matches the PHD and the cardinality distribution.
The resulting filter is referred to as \textit{Marginalized $\delta$-GLMB} (M$\delta$-GLMB) since it can be interpreted as a marginalization over the data associations.
The proposed filter is, therefore, computationally cheaper than the $\delta$-GLMB filter while preserving key summary statistics of the multi-object posterior.
Importantly, the M$\delta$-GLMB filter facilitates tractable multi-sensor multi-object tracking.
Unlike PHD/CPHD and multi-Bernoulli based filters, the proposed approximation accommodates statistical dependence between objects.
An alternative derivation of the \textit{Labeled Multi-Bernoulli} (LMB) filter \cite{lmbf} based on the newly proposed M$\delta$-GLMB filter is presented.

Finally, \textit{Distributed MOT} (DMOT) is taken into account.
The proposed solutions are based on the above-mentioned labeled RFS framework that has led to the development of the $\delta$-GLMB tracking filter \cite{vovo2}.
However, it is not known if this filter is amenable to DMOT.
Nonetheless, the M$\delta$-GLMB and the LMB filters are two efficient approximations of the $\delta$-GLMB filter that
\begin{itemize}
	\item have an appealing mathematical formulation that facilitates an efficient and tractable closed-form fusion rule for DMOT;
	\item preserve key summary statistics of the full multi-object posterior.
\end{itemize}
In this setting, the main contributions in the present work are:
\begin{enumerate}[label=\textsc{\roman*}.]
	\item the development of the first distributed multi-object tracking algorithms based on the labeled RFS framework, generalizing the approach of \cite{ccphd} from moment-based filtering to tracking with labels;
	\item the development of \textit{Consensus Marginalized $\delta$-Generalized Labeled Multi-Bernoulli} (CM$\delta$-GLMB) and \textit{Consensus Labeled Multi-Bernoulli} (CLMB) tracking filters.
\end{enumerate}

Simulation experiments on challenging tracking scenarios confirm the effectiveness of the proposed contributions.

~\newline\noindent The rest of the thesis is organized as follows.
\subsection*{Chapter \ref{chap:back} - Background}
This chapter introduces notation, provides the necessary background on recursive Bayesian estimation, Random Finite Sets (RFSs), Bayesian multi-object filtering, distributed estimation and the network model.

\subsection*{Chapter \ref{chap:dsof} - Distributed single-object filtering}
This chapter provides novel contributions on distributed nonlinear filtering with applications to nonlinear single-object tracking.
In particular: \textsc{i}) a \textit{Parallel Consensus on Likelihoods and Priors} (CLCP) filter is proposed to improve performance with respect to existing consensus approaches for distributed nonlinear estimation; \textsc{ii}) a consensus-based multiple model filter for jump Markovian systems is presented and applied to tracking of a highly-maneuvering objcet.

\subsection*{Chapter \ref{chap:dmof} - Distributed multi-object filtering}
This chapter introduces consensus multi-object information fusion according to an information-theoretic interpretation in terms of Kullback-Leibler averaging of multi-object distributions.
Moreover, the \textit{Consensus Cardinalized Probability Hypothesis Density} (CCPHD) filter is presented and its performance is evaluated via simulation experiments.

\subsection*{Chapter \ref{chap:mot} - Centralized multi-object tracking}
In this chapter, two possible approximations of the \textit{$\delta$-Generalized Labeled Multi-Bernoulli} ($\delta$-GLMB) density are presented, namely \textsc{i}) the \textit{Marginalized $\delta$-Generalized Labeled Multi-Bernoulli} (M$\delta$-GLMB) and \textsc{ii}) the \textit{Labeled Multi-Bernoulli} (LMB).
Such densities will allow to develop a new centralized tracker and to establish a new theoretical connection to previous work proposed in the literature.
Performance of the new centralized tracker is evaluated via simulation experiments.

\subsection*{Chapter \ref{chap:dmot} - Distributed multi-object tracking}
This chapter introduces the information fusion rules for \textit{Marginalized $\delta$-Generalized Labeled Multi-Bernoulli} (M$\delta$-GLMB) and \textit{Labeled Multi-Bernoulli} (LMB) densities.
An information-theoretic interpretation of such fusions, in terms of Kullback-Leibler averaging of labeled multi-object densities, is also established.
Furthermore, the \textit{Consensus M$\delta$-GLMB} (CM$\delta$-GLMB) and \textit{Consensus LMB} (CLMB) tracking filters are presented as two new labeled distributed multi-object trackers.
Finally, the effectiveness of the proposed trackers is discussed via simulation experiments on realistic distributed multi-object tracking scenarios.

\subsection*{Chapter \ref{chap:conclusion} - Conclusions and future work}
The thesis ends with concluding remarks and perspectives for future work.

\chapter{Background}
\label{chap:back}
\spminitoc
\section{Network model}
\label{sec:net}
Recent advances in wireless sensor technology has led to the development of large networks consisting of radio-interconnected nodes (or agents) with sensing, communication and processing capabilities. Such a net-centric technology enables the building of a more complete picture of the environment, by combining information from individual nodes (usually with limited observability) in a way that is scalable (w.r.t. the number of nodes), flexible and reliable (i.e. robust to failures). Getting these benefits calls for architectures in which individual agents can operate without knowledge of the information flow in the network. Thus, taking into account the above-mentioned considerations, \textit{Object Tracking} (OT) in sensor networks requires redesigning the architecture and algorithms to address the following issues:
\begin{itemize}
	\item lack of a central fusion node;
	\item scalable processing with respect to the network size;
	\item each node operates without knowledge of the network topology;
	\item each node operates without knowledge of the dependence between its own information and the information received from other nodes.
\end{itemize}

The network considered in this work (depicted in Fig. \ref{fig:netmodelSC}) consists of two types of heterogeneous and geographically dispersed nodes (or agents): \textit{communication} (COM) nodes have only processing and communication capabilities, i.e. they can process local data as well as exchange data with the neighboring nodes, while \textit{sensor} (SEN) nodes have also sensing capabilities, i.e. they can sense data from the environment.
Notice that, since COM nodes do not provide any additional information, their presence is needed only to improve network connectivity.

\begin{figure}[h!]
	\centering
	\includegraphics[width=\textwidth]{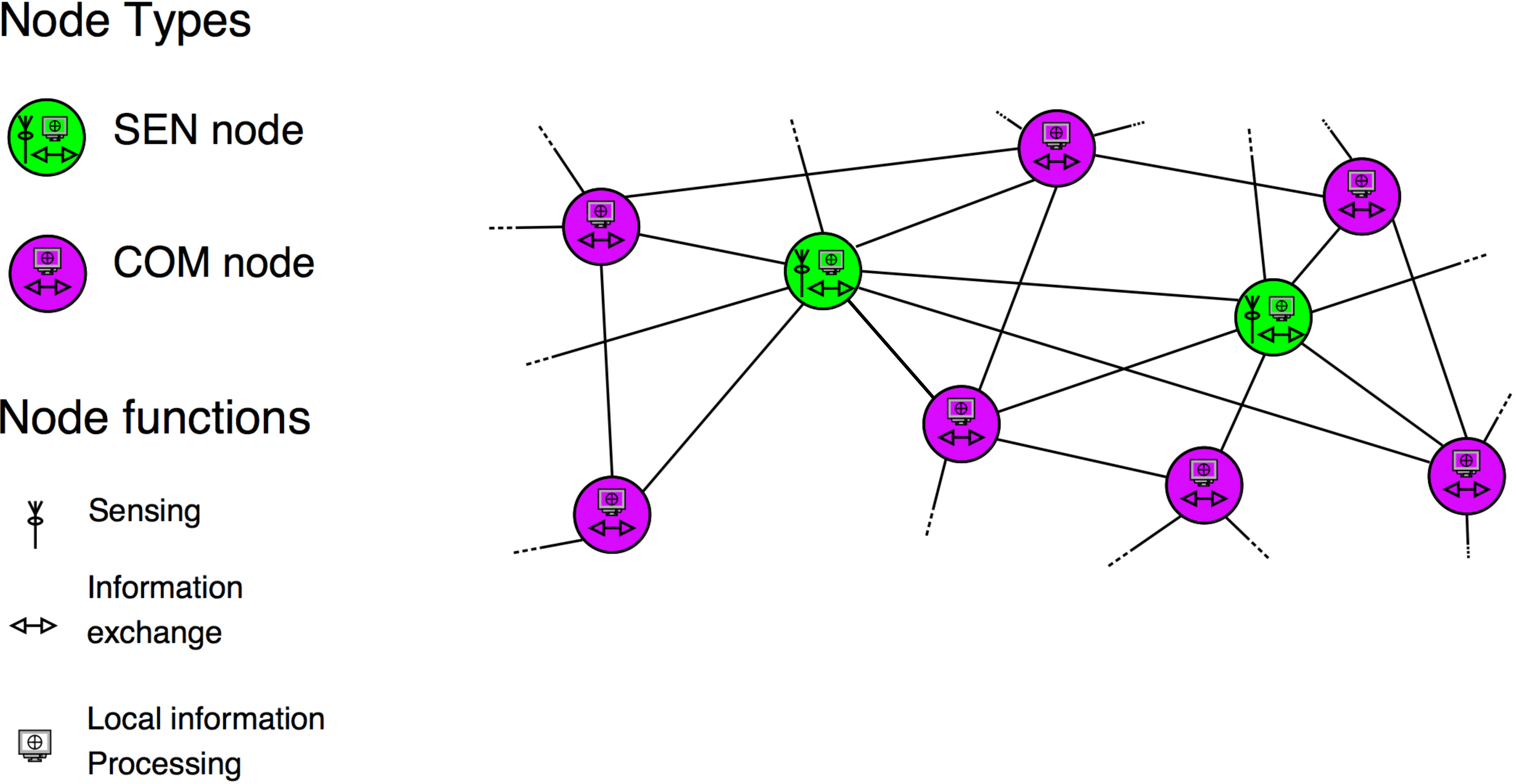}
	\caption{Network model}
	\label{fig:netmodelSC}
\end{figure}

From a mathematical viewpoint, the network is described by a directed graph $\mathcal{G} = \left( \mathcal{N},\mathcal{A}\right)$ where $\mathcal{N} = \mathcal{S} \cup \mathcal{C}$ is the set of nodes, $\mathcal{S}$ is the set of sensor and $\mathcal{C}$ the set of communication nodes, and $\mathcal{A} \subseteq \mathcal{N}\times \mathcal{N}$ is the set of arcs, representing links (or connections). In particular, $(i,j)\in \mathcal{A}$ if node $j$ can receive data from node $i$.
For each node $j\in \mathcal{N}$, $\mathcal{N}^{j}\triangleq \left\{ i\in \mathcal{N}:(i,j)\in \mathcal{A}\right\} $ denotes the set of in-neighbours (including $j$ itself), i.e. the set of nodes from which node $j$ can receive data.

Each node performs local computation, exchanges data with the neighbors and gathers measurements of kinematic variables (e.g., angles, distances, Doppler shifts, etc.) relative to objects present in the surrounding environment (or surveillance area). 
The focus of this thesis will be the development of networked estimation algorithms that are scalable with respect to network size, and to allow each node to operate without knowledge of the dependence between its own information and the information from other nodes.

\section{Recursive Bayesian estimation}
\label{sec:bayesest}
The main interest of the present dissertation is \textit{estimation}, which refers to inferring the values of a set of unknown variables from information provided by a set of noisy measurements whose values depend on such unknown variables.
Estimation theory dates back to the work of Gauss \cite{gauss} on determining the orbit of celestial bodies from their observations.
These studies led to the technique known as \textit{Least Squares}.
Over centuries, many other techniques have been proposed in the field of estimation theory\cite{fis1912,kol1950,stra1960,vantree2004,jaz2007,andmoo2012}, e.g., the \textit{Maximum Likelihood}, the \textit{Maximum a Posteriori} and the \textit{Minimum Mean Square Error} estimation.
The Bayesian approach models the quantities to be estimated as random variables characterized by \textit{Probability Density Functions} (PDFs), and provides an improved estimation of such quantities by conditioning the PDFs on the available noisy measurements.
Hereinafter, we refer to the Bayesian approach as to \textit{recursive Bayesian estimation} (or \textit{Bayesian filtering}), a renowned and well-established probabilistic approach for recursively propagating, in a principled way via a two-step procedure, a PDF of a given time-dependent variable of interest.
The first key concept of the present work is, indeed, Bayesian filtering.
The propagated PDF will be used to describe, in a probabilistic way, the behaviour of a moving object.
In the following, a summary of the \textit{Bayes Filter} (BF) is given, as well as a review of a well known closed-form solution of it, the \textit{Kalman Filter} (KF) \cite{kal1960,kalbuc1961} obtained in the linear Gaussian case.

\subsection{Notation}
\label{ssec:bayesnotation}
The following notation is adopted throughout the thesis:
$\operatorname{col}\!\left( \, \cdot^{i} \right)_{i \in \mathcal{I}}$, where $I$ is a finite set, denotes the vector/matrix obtained by stacking the arguments on top of each other;
$\operatorname{diag}\!\left( \, \cdot^{i} \right)_{i \in \mathcal{I}}$, where $I$ is a finite set, denotes the square diagonal matrix obtained by placing the arguments in the $(i, i)$-th position of the main diagonal; the standard inner product notation is denoted as
\be
	\left\langle f,g\right\rangle \triangleq \int f(x)\,g(x)dx \, ;
\label{eq:innerproduct}
\ee
vectors are represented by lowercase letters, e.g. $x$, $\mathbf{x}$; spaces are represented by blackboard bold letters e.g. $\xbb$, $\ybb$, $\lbsp$, etc.
The superscript $\top$ stems for the transpose operator.

\subsection{Bayes filter}
\label{ssec:bayes}
Consider a discrete-time state-space representation for modelling a dynamical system.
At each time $k \in \nbb$, such a system is characterized by a \textit{state vector} $x_{k} \in \xbb \subseteq \rbb^{n_{x}}$, where $n_{x}$ is the dimension of the state vector.
The state evolves according to the following discrete-time stochastic model:
\be
	x_{k} = f_{k-1}\!\left( x_{k-1}, w_{k-1} \right) \, ,
\label{eq:dtmodel}
\ee
where $f_{k-1}$ is a, possibly nonlinear, function; $w_{k-1}$ is the process noise modeling uncertainties and disturbances in the object motion model. The time evolution (\ref{eq:dtmodel}) is equivalently represented by a Markov transition density
\be
	\varphi_{k|k-1}\!\left( x | \zeta \right) \, ,
\label{eq:mtmodel}
\ee
which is the PDF associated to the transition from the state $\zeta = x_{k-1}$ to the new state $x = x_{k}$.

Likewise, at each time $k$, the dynamical system described with state vector $x_{k}$ can be observed via a noisy \textit{measurement vector} $y_{k} \in \ybb \subseteq \rbb^{n_{y}}$, where $n_{y}$ is the dimension of the observation vector.
The measurement process can be modelled by the measurement equation
\be
	y_{k} = h_{k}\!\left( x_{k}, v_{k} \right) \, ,
\label{eq:mmodel}
\ee
which provides an indirect observation of the state $x_{k}$ affected by the measurement noise $v_{k}$. The modeling of the measurement vector is equivalently represented by the \textit{likelihood function}
\be
	g_{k}\!\left( y | x \right) \, ,
\label{eq:like}
\ee
which is the PDF associated to the generation of the measurement  vector $y = y_{k}$ from the dynamical system with state $x = x_{k}$.

The aim of recursive state estimation (or filtering) is to sequentially estimate over time $x_{k}$ given the measurement history $y_{1:k} \triangleq \left\{ y_{1}, \dots, y_{k} \right\}$. It is assumed that the PDF associated to $y_{1:k}$ given the state history $x_{1:k} \triangleq \left\{ x_{1}, \dots, x_{k} \right\}$ is
\be
	g_{1:k}\!\left( y_{1:k} | x_{1:k} \right) = \prod_{\kappa = 1}^{k} g_{\kappa}(y_{\kappa} | x_{\kappa}) \, ,
\label{eq:indlike}
\ee
i.e. \textit{the measurements $y_{1:k}$ are conditionally independent on the states $x_{1:k}$}.
In the Bayesian framework, the entity of interest is the \textit{posterior density} $\p[k]{}{x}$ that contains all the information about the state vector $x_{k}$ given all the measurements up to time $k$. Such a PDF can be recursively propagated in time resorting to the well know Chapman-Kolmogorov equation and the Bayes' rule \cite{holee1964}
\bie
	\p[k|k-1]{}{x} & = & \int \varphi_{k|k-1}\!\left( x | \zeta \right) \p[k-1]{}{\zeta} d \zeta \, ,\label{eq:chapkol}\\
	\p[k]{}{x} & = & \dfrac{g_{k}\!\left( y_{k} | x \right) \, \p[k|k-1]{}{x}}{\displaystyle \int g_{k}\!\left( y_{k} | \zeta \right) \, \p[k|k-1]{}{\zeta} d \zeta} \, ,\label{eq:bayesrule}
\eie
given an \textit{initial density} $\p[0]{}{\cdot}$.
The PDF $\p[k|k-1]{}{\cdot}$ is referred to as the \textit{predicted density}, while $\p[k]{}{\cdot}$ is the \textit{filtered density}.

Let us consider a multi-sensor \textit{centralized} setting in which a sensor network $\left( \mathcal{N},\mathcal{A}\right)$ conveys all the measurements to a central fusion node. Assuming that the measurements taken by the sensors are independent, the Bayesian filtering recursion can be naturally extended as follows:
\bie
	\p[k|k-1]{}{x} & = & \int \varphi_{k|k-1}\!\left( x | \zeta \right) \p[k-1]{}{\zeta} d \zeta \, ,\label{eq:mschapkol}\\
	\p[k]{}{x} & = & \dfrac{\displaystyle \prod_{i \in \mathcal{N}} g^{i}_{k}\!\left( y^{i}_{k} | x \right) \, \p[k|k-1]{}{x}}{\displaystyle \int \prod_{i \in \mathcal{N}} g^{i}_{k}\left( y^{i}_{k} | \zeta \right) \, \p[k|k-1]{}{\zeta} d \zeta} \, .\label{eq:msbayesrule}
\eie

\subsection{Kalman Filter}
\label{ssec:kalman}
The KF \cite{kal1960,kalbuc1961,holee1964} is a closed-form solution of (\ref{eq:chapkol})-(\ref{eq:bayesrule}) in the linear Gaussian case. That is, suppose that (\ref{eq:dtmodel}) and (\ref{eq:mmodel}) are linear transformations of the state with additive Gaussian white noise, i.e.
\bie
	x_{k} & = & A_{k-1} x_{k-1} + w_{k-1} \, ,\label{eq:lindtmodel}\\
	y_{k} & = & C_{k} x_{k} + v_{k} \, ,\label{eq:linmmodel}
\eie
where $A_{k-1}$ is the $n_{x} \times n_{x}$ state transition matrix, $C_{k}$ is the $n_{y} \times n_{x}$ observation matrix, $w_{k-1}$ and $v_{k}$ are mutually independent zero-mean white Gaussian noises with covariances $Q_{k-1}$ and $R_{k}$, respectively.
Thus, the Markov transition density and the likelihood functions are
\bie
	\varphi_{k|k-1}\!\left( x | \zeta \right) & = & \ncal\!\left( x; \, A_{k-1} \zeta, Q_{k-1} \right) \, ,\label{eq:linmtmodel}\\
	g_{k}\!\left( y | x \right) & = & \ncal\!\left( y; \, C_{k} x, R_{k} \right) \, ,\label{eq:linlike}
\eie
where
\be
	\ncal\!\left( x; \, m, P \right) \triangleq \left| 2 \pi P \right|^{-\frac{1}{2}} e^{-\frac{1}{2}\left( x - m \right)^{\top} P^{-1} \left( x - m \right)} \,
\label{eq:gausspdf}
\ee
is a Gaussian PDF.
Finally, suppose that the prior density
\be
	\p[k-1]{}{x} = \ncal\!\left( x; \, \hat{x}_{k-1}, P_{k-1} \right)
\label{eq:linprior}
\ee
is Gaussian with mean $\hat{x}_{k-1}$ and covariance $P_{k-1}$.
Solving (\ref{eq:chapkol}), the predicted density turns out to be
\be
	\p[k|k-1]{}{x} = \ncal\!\left( x; \, \hat{x}_{k|k-1}, P_{k|k-1} \right) \,,
\label{eq:linprediction}
\ee
a Gaussian PDF with mean $\hat{x}_{k|k-1}$ and covariance $P_{k|k-1}$.
Moreover, solving (\ref{eq:bayesrule}), the posterior density (or \textit{updated density}), turns out to be
\be
	\p[k]{}{x} = \ncal\!\left( x; \, \hat{x}_{k}, P_{k} \right) \, ,
\label{eq:linupdate}
\ee
i.e. a Gaussian PDF with mean $\hat{x}_{k}$ and covariance $P_{k}$.
\begin{rem}
	If the posterior distributions are in the same family as the prior probability distribution, the prior and posterior are called \textit{conjugate distributions}, and the prior is called a \textit{conjugate prior} for the likelihood function. The Gaussian distribution is a conjugate prior.
\end{rem}
The KF recursion for computing both predicted and updated pairs $\left( \hat{x}_{k-1}, P_{k-1} \right)$ and $\left( \hat{x}_{k}, P_{k} \right)$ is reported in Table \ref{alg:kf}.
\begin{table}[!h]
	\caption{The Kalman Filter (KF)}
	\label{alg:kf}
	\hrulefill\hrule
	\begin{algorithmic}[0]
		\For{$k = 1, 2, \dots$}\vspace{0.5em}
			\State \bfbox{\textsc{Prediction}}
			\State $\hat{x}_{k|k-1} = A_{k-1} \hat{x}_{k-1}$\Comment{Predicted mean}
			\State $P_{k|k-1} = A_{k-1} P_{k-1} A^{\top}_{k-1} + Q_{k-1}$\Comment{Predicted covariance matrix}\vspace{0.5em}
			\State \bfbox{\textsc{Correction}}
			\State $e_{k} = y_{k} - C_{k}\hat x_{k|k-1}$\Comment{Innovation}
			\State $S_{k} = R_{k} + C_{k}P_{k|k-1}C_{k}^{\top}$\Comment{Innovation covariance matrix}
			\State $K_{k} = P_{k|k-1}C_{k}^{\top}S_{k}^{-1}$\Comment{Kalman gain}
			\State $\hat x_{k} = \hat x_{k|k-1} + K_{k}e_{k}$\Comment{Updated mean}
			\State $P_{k} = P_{k|k-1} - K_{k}S_{k}K_{k}^{\top}$\Comment{Updated covariance matrix}\vspace{0.5em}
		\EndFor
	\end{algorithmic}
	\hrule\hrulefill
\end{table}

In the centralized setting the network $\left( \mathcal{N},\mathcal{A}\right)$ conveys all the measurements
\bie
	y^{i}_{k} & = & C^{i}_{k} x_{k} + v^{i}_{k} \, ,\\
	v^{i}_{k} & \sim & \ncal\!\left( 0, R^{i}_{k} \right) \, ,
\eie
$i \in \ncal$, to a fusion center in order to evaluate (\ref{eq:msbayesrule}).
The result amounts to stack all the information from all nodes $i \in \ncal$ as follows
\bie
	y_{k} & = & \operatorname{col}\!\left( y^{i}_{k} \right)_{i \in \ncal}\\
	C_{k} & = & \operatorname{col}\!\left( C^{i}_{k} \right)_{i \in \ncal}\\
	R_{k} & = & \operatorname{diag}\!\left( R^{i}_{k} \right)_{i \in \ncal}
\eie
and then to perform the same steps of the KF. A summary of the \textit{Multi-Sensor KF} (MSKF) is reported in Table \ref{alg:mskf}.
\begin{table}[!h]
	\caption{The Multi-Sensor Kalman Filter (MSKF)}
	\label{alg:mskf}
	\hrulefill\hrule
	\begin{algorithmic}[0]
		\For{$k = 1, 2, \dots$}\vspace{0.5em}
			\State \bfbox{\textsc{Prediction}}
			\State $\hat{x}_{k|k-1} = A_{k-1} \hat{x}_{k-1}$\Comment{Predicted mean}
			\State $P_{k|k-1} = A_{k-1} P_{k-1} A^{\top}_{k-1} + Q_{k-1}$\Comment{Predicted covariance matrix}\vspace{0.5em}
			\State \bfbox{\textsc{Stacking}}
			\State $y_{k} = \operatorname{col}\!\left( y^{i}_{k} \right)_{i \in \ncal}$
			\State $C_{k} = \operatorname{col}\!\left( C^{i}_{k} \right)_{i \in \ncal}$
			\State $R_{k} = \operatorname{diag}\!\left( R^{i}_{k} \right)_{i \in \ncal}$\vspace{0.5em}
			\State \bfbox{\textsc{Correction}}
			\State $e_{k} = y_{k} - C_{k}\hat x_{k|k-1}$\Comment{Innovation}
			\State $S_{k} = R_{k} + C_{k}P_{k|k-1}C_{k}^{\top}$\Comment{Innovation covariance matrix}
			\State $K_{k} = P_{k|k-1}C_{k}^{\top}S_{k}^{-1}$\Comment{Kalman gain}
			\State $\hat x_{k} = \hat x_{k|k-1} + K_{k}e_{k}$\Comment{Updated mean}
			\State $P_{k} = P_{k|k-1} - K_{k}S_{k}K_{k}^{\top}$\Comment{Updated covariance matrix}\vspace{0.5em}
		\EndFor
	\end{algorithmic}
	\hrule\hrulefill
\end{table}

The KF has the advantage of being Bayesian optimal, but is not directly applicable to nonlinear state-space models.
Two well known approximations have proven to be effective in situations where one or both the equations (\ref{eq:dtmodel}) and (\ref{eq:mmodel}) are nonlinear: $\textsc{i})$ Extended KF (EKF) \cite{EKF} and $\textsc{ii})$ Unscented KF (EKF) \cite{juluhl1997}.
The EKF is a first order approximation of the Kalman filter based on local linearization.
The UKF uses the sampling principles of the \textit{Unscented Transform} (UT) \cite{juluhldw1995} to propagate the first and second order moments of the predicted and updated densities.

\subsection{The Extended Kalman Filter}
\label{ssec:ekf}
This subsection presents the EKF which is basically an extension of the linear KF whenever one or both the equations (\ref{eq:dtmodel}) and (\ref{eq:mmodel}) are nonlinear transformations of the state with additive Gaussian white noise \cite{EKF}, i.e.
\bie
	x_{k} & = & f_{k-1}\!\left( x_{k-1} \right) + w_{k-1} \, ,\label{eq:nonlindtmodel}\\
	y_{k} & = & h_{k}\!\left( x_{k} \right) + v_{k} \, .\label{eq:nonlinmmodel}
\eie

The prediction equations of the EKF are of the same form as the KF, with the transition matrix $A_{k-1}$ of (\ref{eq:lindtmodel}) evaluated via linearization about the updated mean $\hat{x}_{k-1}$, i.e.
\be
	A_{k-1} = \left. \dfrac{\partial f_{k-1}(\cdot)}{\partial x} \right|_{x = \hat{x}_{k-1}} \, .\label{eq:jacobiana}
\ee
The correction equations of the EKF are also of the same form as the KF, with the observation matrix $C_{k}$ of (\ref{eq:linmmodel}) evaluated via linearization about the predicted mean $\hat{x}_{k|k-1}$, i.e.
\be
	C_{k} = \left. \dfrac{\partial h_{k}(\cdot)}{\partial x} \right|_{x = \hat{x}_{k|k - 1}} \, .\label{eq:jacobianc}
\ee
The EKF recursion for computing both predicted and updated pairs $\left( \hat{x}_{k-1}, P_{k-1} \right)$ and $\left( \hat{x}_{k}, P_{k} \right)$ is reported in Table \ref{alg:ekf}.
\begin{table}[!h]
	\caption{The Extended Kalman Filter (EKF)}
	\label{alg:ekf}
	\hrulefill\hrule
	\begin{algorithmic}[0]
		\For{$k = 1, 2, \dots$}\vspace{0.5em}
			\State \bfbox{\textsc{Prediction}}
			\State $A_{k-1} = \left. \dfrac{\partial f_{k-1}(\cdot)}{\partial x} \right|_{x = \hat{x}_{k-1}}$\Comment{Linearization about the updated mean}
			\State $\hat{x}_{k|k-1} = A_{k-1} \hat{x}_{k-1}$\Comment{Predicted mean}
			\State $P_{k|k-1} = A_{k-1} P_{k-1} A^{\top}_{k-1} + Q_{k-1}$\Comment{Predicted covariance matrix}\vspace{0.5em}
			\State \bfbox{\textsc{Correction}}
			\State $C_{k} = \left. \dfrac{\partial h_{k}(\cdot)}{\partial x} \right|_{x = \hat{x}_{k|k - 1}}$\Comment{Linearization about the predicted mean}
			\State $e_{k} = y_{k} - C_{k}\hat x_{k|k-1}$\Comment{Innovation}
			\State $S_{k} = R_{k} + C_{k}P_{k|k-1}C_{k}^{\top}$\Comment{Innovation covariance matrix}
			\State $K_{k} = P_{k|k-1}C_{k}^{\top}S_{k}^{-1}$\Comment{Kalman gain}
			\State $\hat x_{k} = \hat x_{k|k-1} + K_{k}e_{k}$\Comment{Updated mean}
			\State $P_{k} = P_{k|k-1} - K_{k}S_{k}K_{k}^{\top}$\Comment{Updated covariance matrix}\vspace{0.5em}
		\EndFor
	\end{algorithmic}
	\hrule\hrulefill
\end{table}

\subsection{The Unscented Kalman Filter}
\label{ssec:ukf}
The UKF is based on the \textit{Unscented Transform} (UT), a derivative-free technique capable of providing a more accurate statistical characterization of a random variable undergoing a nonlinear transformation \cite{juluhl1997}.
In particular, the UT is a deterministic technique suited to provide an approximation of the mean and covariance matrix of a given random variable subjected to a nonlinear transformation via a minimal set of its samples.
Let us consider the mean $m$ and associated covariance matrix $P$ of a generic random variable along with a nonlinear transformation function $g\!\left(\cdot\right)$, the UT proceeds as follows:
\begin{itemize}
	\item generates $2n_{x}+1$ samples $X \in \mathbb{R}^{n_{x}\times\left(2n_{x}+1\right)}$, the so called $\sigma$-points, starting from the mean $m$ with deviation given by the matrix square root $\Sigma$ of $P$;
	\item propagates the $\sigma$-points through the nonlinear transformation function $g\!\left(\cdot\right)$ resulting in $G \in \mathbb{R}^{n_{x}\times\left(2n_{x}+1\right)}$;
	\item calculates the new transformed mean $m^{\prime}$ and associated covariance matrix $P_{gg}$ as well as the cross-covariance matrix $P_{xg}$ of the initial and transformed $\sigma$-points.
\end{itemize}
The pseudo-code of the UT is reported in Table \ref{alg:ut}.
\begin{table}[h!]
	\caption{The Unscented Transformation (UT)}
	\label{alg:ut}
	\hrulefill\hrule
	\begin{algorithmic}[0]
		\Procedure{UT}{$m$, $P$, $g$}
			\State $c$, $w_{m}$, $W_{c}$ = UTW$(\alpha_{\sigma}, \beta_{\sigma}, \kappa_{\sigma})$\Comment{Weights are calculated exploiting UTW in Table \ref{alg:weights}}
			\State $\Sigma = \sqrt{P}$
			\State  $X = \Big[ m \dots m \Big] + \sqrt{c} \, \Big[ \underline{0}, \Sigma, -\Sigma \Big]$\Comment{$\underline{0}$ is a zero column vector}
			\State $G = g\!\left( X \right)$\Comment{$g\!\left( \cdot \right)$ is applied to each column of $X$}
			\State $m^{\prime} = G w_{m}$
			\State $P_{gg} = G W_{c} G^{\top}$
			\State $P_{yg} = G W_{c} G^{\top}$
			\State \textbf{Return} $m^{\prime}$, $P_{gg}$, $P_{xg}$
		\EndProcedure
	\end{algorithmic}
	\hrule\hrulefill
	\caption{Unscented Transformation Weights (UTW)}
	\label{alg:weights}
	\hrulefill\hrule
	\begin{algorithmic}[0]
		\Procedure{UTW}{$\alpha_{\sigma}$, $\beta_{\sigma}$, $\kappa_{\sigma}$}
			\State $\varsigma = \alpha_{\sigma}^{2}(n_{x} + \kappa_{\sigma})-n_{x}$
			\State $w^{(0)}_{m} = \varsigma\left(n_{x} + \varsigma\right)^{-1}$
			\State $w^{(0)}_{c} = \varsigma\left(n_{x} + \varsigma\right)^{-1} + (1 - \alpha_{\sigma}^{2} + \beta_{\sigma})$
			\State $w^{(1, \dots, 2n_{x})}_{m}, w^{(1, \dots, 2n_{x})}_{c} = \left[2(n_{x} + \varsigma)\right]^{-1}$
			\State $w_{m} = \left[ w_{m}^{\left( 0 \right)}, \dots, w_{m}^{\left( 2n_{x} \right)}\right]^{\top}$
			\State $w_{c} = \left[ w_{c}^{\left( 0 \right)}, \dots, w_{c}^{\left( 2n_{x} \right)}\right]^{\top}$
			\State $W_{c} = \left( I - \left[ w_{m} \dots w_{m} \right] \right) \operatorname{diag}\!\left( w_{c}^{(0)} \dots w_{c}^{(2n)} \right) \left( I - \left[ w_{m} \dots w_{m} \right] \right)^{\top}$
			\State $c = \alpha_{\sigma}^{2}(n_{x} + \kappa_{\sigma})$
			\State \textbf{Return} $c$, $w_{m}$, $W_{c}$
		\EndProcedure
	\end{algorithmic}
	\hrule\hrulefill
\end{table}

Given three parameters $\alpha_{\sigma}$, $\beta_{\sigma}$ and $\kappa_{\sigma}$, the weights $c$, $w_{m}$ and $W_{c}$ are calculated exploiting the algorithm in Table \ref{alg:weights}.
Moment matching properties and performance improvements are discussed in \cite{juluhl1997,wanmer2001} by resorting to specific values of $\alpha_{\sigma}$, $\beta_{\sigma}$ and $\kappa_{\sigma}$.
It is of common practice to set these three parameters as constants, thus computing the weights once at the beginning of the estimation process.

The UT can be applied in the KF recursion allowing to obtain a nonlinear recursive estimator known as UKF \cite{juluhl1997}.
The pseudo-code of the UKF is shown in Table \ref{alg:ukf}.
\begin{table}[h!]
	\caption{The Unscented Kalman Filter (UKF)}
	\label{alg:ukf}
	\hrulefill\hrule
	\begin{algorithmic}[0]
		\For{$k = 1, 2, \dots$}\vspace{0.5em}
			\State \bfbox{\textsc{Prediction}}
			\State $\hat{x}_{k|k-1}, P_{k|k-1} =$ UT$\left( \hat{x}_{k-1|k-1}, P_{k-1|k-1}, f(\cdot) \right)$
			\State $P_{k|k-1} = P_{k|k-1} + Q$\vspace{0.5em}
			\State \bfbox{\textsc{Correction}}
			\State $\hat{y}_{k|k-1}, S_{k}, C_{k} =$ UT$\left( \hat{x}_{k|k-1}, P_{k|k-1}, h(\cdot) \right)$
			\State $S_{k} = S_{k} + R$
			\State $\hat{x}_{k} = \hat{x}_{k|k-1} + C_{k}S_{k}^{-1} \left( y_{k} - \hat{y}_{k|k-1} \right)$
			\State $P_{k} = P_{k|k-1} - C_{k}S_{k}^{-1}C_{k}^{\top}$
		\EndFor
	\end{algorithmic}
	\hrule\hrulefill
\end{table}

The main advantages of the UKF approach are the following:
\begin{itemize}
	\item it does not require the calculation of the Jacobians (\ref{eq:jacobiana}) and (\ref{eq:jacobianc}). The UKF algorithm is, therefore, very suitable for highly nonlinear problems and represents a good trade-off between accuracy and numerical efficiency;
	\item being derivative free, it can cope with functions with jumps and discontinuities;
	\item it is capable of capturing higher order moments of nonlinear transformations \cite{juluhl1997}.
\end{itemize}
Due to the above mentioned benefits, the UKF is herewith adopted as the nonlinear recursive estimator.

The KF represents the basic tool for recursively estimating the state, in a Bayesian framework, of a moving object, i.e. to perform \textit{Single-Object Filtering} (SOF) \cite{far1985v1,far1985v2,book0,book,bar-shalom,BlPo}.
A natural evolution of SOF is \textit{Multi-Object Tracking} (MOT), which involves the on-line estimation of an unknown and (possibly) time-varying number of objects and their individual trajectories from sensor data \cite{BlPo,book0,mahler}.
The key challenges in multi-object tracking include \textit{detection uncertainty}, \textit{clutter}, and \textit{data association uncertainty}.
Numerous multi-object tracking algorithms have been developed in the literature and most of these fall under three major paradigms: \textit{Multiple Hypotheses Tracking} (MHT) \cite{reid,BlPo}; \textit{Joint Probabilistic Data Association} (JPDA) \cite{book0}; and \textit{Random Finite Set} (RFS) filtering \cite{mahler}.
In this thesis, the focus is on the RFS formulation since it provides the concept of \textit{probability density} for \textit{multi-object state} that allows to directly extend the single-object Bayesian recursion.
Such a concept is not available in the MHT and JPDA approaches \cite{reid,far1985v1,far1985v2,book0,book,BlPo}.

\section{Random finite set approach}
\label{sec:rfs}
The second key concept of the present dissertation is the RFS \cite{goomahngu1997,mahler,mah2004,mah2013} approach.
In the case where the need is to recursively estimate the state of a possibly time varying number of multiple dynamical systems, RFSs allow to generalize standard Bayesian filtering to a unified framework.
In particular, states and observations will be modelled as RFSs where not only the single state and observation are random, but also their number (set cardinality).
The purpose of this section is to cover the aspects of the RFS approach that will be useful for the subsequent chapters.
Overviews on RFSs and further advanced topics concerning point process theory, stochastic geometry and measure theory can be found in \cite{mat1975,stokenmec1995,goomahngu1997,mahler1,VSD05,mahler}.

\begin{rem}
	It is worth pointing out that in multi-object scenarios there is a subtle difference between \textit{filtering} and \textit{tracking}. In particular, the first refers to estimating the state of a possibly time-varying number of objects without, however, uniquely identifying them, i.e. after having estimated a multi-object density a decision-making operation is needed to extract the objects themselves.
	On the other hand, the term ``tracking'' refers to jointly estimating a possibly time-varying number of objects and to uniquely mark them over time so that no decision-making operation has to be carried out and object trajectories are well defined. Finally, it is clear that in single-object scenarios the terms \textit{filtering} and \textit{tracking} can be used interchangeably.
\end{rem}

\subsection{Notation}
\label{ssec:notation}
Throughout the thesis, finite sets are represented by uppercase letters, e.g. $X$, $\mathbf{X}$.
The following multi-object exponential notation is used
\be
	h^{X} \triangleq \prod_{x \in X}h(x) \, ,
\label{eq:setexp}
\ee
where $h$ is a real-valued function, with $h^{\varnothing} = 1$ by convention \cite{mahler}.
The following generalized Kronecker delta \cite{vovo1,vovo2} is also adopted
\be
	\delta _{Y}(X)\triangleq \left\{ 
						\begin{array}{l}
							1,\text{ if }X=Y \\ 
							0,\text{ otherwise}
						\end{array} \right. ,
\ee
along with the inclusion function, a generalization of the indicator function, defined as
\be
	1_{Y}(X)\triangleq \left\{ 
					\begin{array}{l}
						1,\text{ if }X\subseteq Y \\ 
						0,\text{ otherwise}
						\end{array} \right. .
\ee
The shortand notation $1_{Y}(x)$ is used in place of $1_{Y}(\{x\})$ whenever $X$ = $\{x\}$.
The cardinality (number of elements) of the finite set $X$ is denoted by $| X |$.
The following PDF notation will be also used.
\bie
	\operatorname{Poisson}_{\left[ \lambda \right]}\!\left( n \right) & = & \dfrac{e^{-\lambda}\lambda^{n}}{{n!}} \, , \lambda \in \nbb \, , n \in \nbb \, ,\label{eq:abbpoisspdf}\\
	\operatorname{Uniform}_{\left[ a, b \right]}\!\left( n \right) & = & \left\{ \ba{ll}
															\dfrac{1}{b - a} \, , & n \in \left[ a, b \right]\\
															0 \, , & n \notin \left[ a, b \right]
														\ea \right. \, , a \in \rbb \, , b \in \rbb \, , a < b \, . \label{eq:abbuniformpdf}
\eie

\subsection{Random finite sets}
\label{ssec:rfss}
In a typical multiple object scenario, the number of objects varies with time due to their appearance and disappearance.
The sensor observations are affected by misdetection (e.g., occlusions, low radar cross section, etc.) and false alarms (e.g., observations from the environment, clutter, etc.). This is further compounded by association uncertainty, i.e. it is not known which object generated which measurement. The objective of multi-object filtering is to jointly estimate over time the number of objects and their states from the observation history.

In this thesis we adopt the RFS formulation, as it provides the concept of \textit{probability density of the multi-object state} that allows us to directly generalize (single-object) estimation to the multi-object case. Indeed, from an estimation viewpoint, the multi-object system state is naturally represented as a finite set \cite{VVPS10}. More concisely, suppose that at time $k$, there are $N_{k}$ objects with states $x_{k,1},\ldots ,x_{k,N_{k}}$, each taking values in a state space $\mathbb{X}\subseteq \mathbb{R}^{n_{x}}$, i.e. the \textit{multi-object state} at time $k$ is the finite set
\be
	X_{k}=\{x_{k,1},\ldots ,x_{k,N_{k}}\}\subset \mathbb{X}.
\ee
Since the multi-object state is a finite set, the concept of RFS is required to model, in a probabilistic way, its uncertainty.

An RFS $X$ on a space $\mathbb{X}$ is a random variable taking values in $\mathcal{F}(\mathbb{X})$, the space of finite subsets of $\mathbb{X}$. The notation $\mathcal{F}_{n}(\mathbb{X})$ will be also used to refer to the space of finite subsets of $\mathbb{X}$ with exactly $n$ elements.
\begin{defi}
	An RFS $X$ is a random variable that takes values as (unordered) finite sets, i.e. a finite-set-valued random variable.
\end{defi}
At the fundamental level, like any other random variable, an RFS is described by its probability distribution or probability density.
\begin{rem}
	What distinguishes an RFS from a random vector is that:
	$\textsc{i})$ the number of points is random;
	$\textsc{ii})$ the points themselves are random and unordered.
\end{rem}
The space $\mathcal{F}(\mathbb{X})$ does not inherit the usual Euclidean notion of integration and density.
In this thesis, we use the \textit{FInite Set STatistics} (FISST) notion of integration/density to characterize RFSs \cite{mahler1, mahler}.

From a probabilistic viewpoint, an RFS $X$ is completely characterized by its \textit{multi-object} \textit{density} $f(X)$.
In fact, given $f(X)$, the cardinality \textit{Probability Mass Function} (PMF) $\cd{n}$ that $X$ have $n \geq 0$ elements and the joint conditional PDFs $f(x_1,x_2,\dots,x_n|n)$ over $\mathbb{X}^n$ given that $X$ have $n$ elements,
can be obtained as follows:
\begin{IEEEeqnarray}{rCl}
	\cd{n} & = & \dfrac{1}{n!}~ \displaystyle{\int_{\mathbb{X}^n}}~ f \left( \left\{ x_1, \dots, x_n\right\} \right)~ dx_1 \cdots dx_n\vspace{0.5em}\\
	f\!\left( x_1, \dots, x_n | n \right)  & = & \dfrac{1}{n! \, \cd{n}}~ f \left( \left\{ x_1, \dots, x_n\right\} \right)
\end{IEEEeqnarray}
\begin{rem}
	The multi-object density $f(X)$ is nothing but the multi-object counterpart of the state PDF in the single-object case.
\end{rem}
In order to measure probability over subsets of $\mathbb{X}$ or compute expectations of random set variables, it is convenient to introduce the following definition of \textit{set integral} for a generic real-valued function $g(X)$ (not necessarily a multi-object density) of an RFS variable
$X$:
\begin{IEEEeqnarray}{rCl}
	\displaystyle{\int_{\xbb}} g\!\left( X \right) \delta X & \triangleq & \displaystyle{\sum_{n=0}^\infty} \, \dfrac{1}{n!} \displaystyle{\int_{\xbb^{n}}} g\!\left( \left\{ x_1, \dots, x_n \right\} \right) dx_{1} \cdots dx_{n}\label{eq:unlsetint}\\
	& = & g\!\left( \varnothing \right) + \displaystyle{\int_{\xbb}} g\!\left( \left\{ x \right\} \right) dx + \dfrac{1}{2}\displaystyle{\int_{\xbb^{2}}} g\!\left( \left\{ x_1, x_2 \right\} \right) dx_1 dx_2 + \cdots \IEEEnonumber
\end{IEEEeqnarray}
In particular,
\be
	\beta\!\left( X \right) \triangleq \operatorname{Prob}\!\left( X \subset \xbb \right) = \displaystyle{\int_{\xbb}} f\!\left( X \right) \delta X
\ee
measures the probability that the RFS $X$ is included in the subset $\xbb$ of $\rbb^{n_{x}}$.
The function $\beta\!\left( X \right)$ is also known as the \textit{Belief-Mass Function} (BMF) \cite{mahler}.
\begin{rem}
	The BMF $\beta\!\left( X \right)$ is nothing but the multi-object counterpart of the state \textit{Cumulative Distribution Function} (CDF) in the single-object case.
\end{rem}
It is also easy to see that, thanks to the set integral definition (\ref{eq:unlsetint}), the multi-object density, like the single-object state PDF, satisfies
the trivial normalization constraint
\be
	\int_{\mathbb{X}} f\!\left( X \right) \delta X = 1.
\ee

It is worth pointing out that the multi-object density, while completely characterizing an RFS, involves a combinatorial complexity; hence simpler, though incomplete, characterizations are usually adopted in order to keep the \textit{Multi-Object Filtering} (MOF) problem computationally tractable.
In this respect, the \textit{first-order moment} of the multi-object density, better known as \textit{Probability Hypothesis Density} (PHD) or \textit{intensity function}, has been found to be a very successful characterization \cite{mah2004,mahler1,mahler}.
In order to define the PHD function, let us introduce the number of elements of the RFS $X$ which is given by
\be
	N_{X} = \displaystyle{\int_{\xbb}} \phi_{X}\!\left( x \right) d x \, ,
\ee
where
\be
	\phi_{X}\!\left( x \right) \triangleq \sum_{\xi \in X} \delta_{x}\!\left( \xi \right) \, .
\ee
We would like to define the PHD function $\intf{}{}{x}$ of $X$ over the state space $\xbb$ so that the expected number of elements of $X$ in $\xbb$ is obtained by integrating $\intf{}{}{\cdot}$ over $\xbb$, i.e.
\be
	\operatorname{E}\!\left[ N_{X} \right] = \displaystyle \int_{\xbb} \intf{}{}{x} d x \, .
\label{eq:objnum}
\ee
Since
\bie
	\operatorname{E}\!\left[ N_{X} \right] & = & \displaystyle \int N_{X} f\!\left( X \right) \delta X\\
	& = & \displaystyle \int \left[ \int_{\xbb} \phi_{X}\!\left( x \right) d x \right] f\!\left( X \right) \delta X\\
	& = & \displaystyle \int_{\xbb} \left[ \int \phi_{X}\!\left( x \right) f\!\left( X \right) \delta X \right] d x \, ,
\label{eq:d1}
\eie
comparing (\ref{eq:objnum}) with (\ref{eq:d1}), it turns out that
\be
	d\!\left( x \right) \triangleq \operatorname{E}\!\left[ \phi_{X}\!\left( x \right) \right] = \int \phi_{X}\!\left( x \right) f\!\left( X \right) \delta X
\label{eq:def-d}
\ee
Without loss of generality, the PHD function can be expressed in the form
\be
	\intf{}{}{x} = \overline{n} \, s\!\left( x \right)
\label{eq:PHDform}
\ee
where
\be
	\overline{n} = \operatorname{E}\!\left[ n \right] = \operatorname{E}\!\left[ n\!\left( \xbb \right) \right] = \displaystyle \sum_{n = 0} ^{\infty} n \cd{n}
\label{target-num}
\ee
is the expected number of objects and $s\!\left( \cdot \right)$ is a single-object PDF, called \textit{location density}, such that
\be
	\int_{\xbb} s\!\left( x \right) d x = 1 \, .
\ee
It is worth to highlight that, in general, the PHD function $\intf{}{}{\cdot}$ and the cardinality PMF $\cd{\cdot}$ do not completely characterize the multi-object distribution.
However, for specific RFSs defined in the next section, the characterization is complete.

\subsection{Common classes of RFS}
\label{ssec:classrfs}
A review of the common RFS densities is provided \cite{mahler} hereafter.

\subsubsection{Poisson RFS}
\label{ssec:poissrfs}
A \textit{Poisson} RFS $X$ on $\xbb$ is uniquely characterized by its intensity function $\intf{}{}{\cdot}$.
The Poisson RFSs have the unique property that the distribution of the cardinality of $X$ is Poisson with mean
\be
	D = \int_{\xbb} \intf{}{}{x} d x \, ,
\label{eq:poissmean}
\ee
and for a given cardinality the elements of $X$ are i.i.d. with probability density
\be
	\lpdf{}{}{x} = \dfrac{\intf{}{}{x}}{D} \, .
\label{eq:poisspdf}
\ee
The probability density of $X$ can be written as
\be
	\rfs{}{}{X} = e^{-D} \prod_{x \in X} \intf{}{}{x}
\label{eq:poissrfspdf}
\ee
The Poisson RFS is traditionally described as characterizing no spatial interaction or complete spatial randomness in the following sense.
For any collection of disjoint subsets $B_{i} \in \xbb$, $i \in \nbb$, it can be shown that the count functions $\left| X \cap B_{i} \right|$ are independent random variables which are Poisson distributed with mean
\be
	D_{B_{i}} = \int_{B_{i}} \intf{}{}{x} d x \, ,
\ee
and for a given number of points occurring in $B_{i}$ the individual points are i.i.d. according to
\be
	\dfrac{\intf{}{}{x} \inc{B_{i}}{x}}{N_{B_{i}}} \, .
\ee
The procedure in Table \ref{alg:samplepoissrfs} illustrates how to generate a sample from a Poisson RFS.

\begin{table}[!h]
\caption{Sampling a Poisson RFS}
\label{alg:samplepoissrfs}
\renewcommand{\arraystretch}{1.3}
\hrulefill\hrule
\begin{algorithmic}[0]
	\State $X = \varnothing$
	\State Sample $n \sim \operatorname{Poisson}_{\left[ D \right]}$
	\For{$i =1, \dots, n$}
		\State Sample $x_{i} \sim \lpdf{}{}{\cdot}$
		\State $X = X \cup \left\{ x \right\}$
	\EndFor
\end{algorithmic}
\hrule\hrule
\end{table}

\subsubsection{Independent identically distributed cluster RFS}
\label{ssec:iidcrfs}
An i.i.d. cluster RFS $X$ on $\xbb$ is uniquely characterized by its cardinality distribution $\cd{\cdot}$ and matching intensity function $\intf{}{}{\cdot}$.
The cardinality distribution must satisfy
\be
	D = \sum_{n = 0}^{\infty} n \cd{n} = \int_{\xbb} \intf{}{}{x} d x\, ,
\ee
but can otherwise be arbitrary, and for a given cardinality the elements of $X$ are i.i.d. with probability density
\be
	\lpdf{}{}{x} = \dfrac{\intf{}{}{x}}{D}
\ee
The probability density of an i.i.d. cluster RFS can be written as
\be
	\rfs{}{}{X} = \left| X \right|! \, \cd{\left| X \right|} \prod_{x \in X} \lpdf{}{}{x}
\label{eq:iidcrfspdf}
\ee
Note that an i.i.d. cluster RFS essentially captures the spatial randomness of the Poisson RFS without the restriction of a Poisson cardinality distribution.
The procedure in Table \ref{alg:sampleiidcrfs} illustrates how a sample from an i.i.d. RFS is generated.

\begin{table}[!h]
\caption{Sampling an i.i.d. RFS}
\label{alg:sampleiidcrfs}
\renewcommand{\arraystretch}{1.3}
\hrulefill\hrule
\begin{algorithmic}[0]
	\State $X = \varnothing$
	\State Sample $n \sim \cd{\cdot}$
	\For{$i =1, \dots, n$}
		\State Sample $x_{i} \sim \lpdf{}{}{\cdot}$
		\State $X = X \cup \left\{ x \right\}$
	\EndFor
\end{algorithmic}
\hrule\hrule
\end{table}

\subsubsection{Bernoulli RFS}
\label{ssec:bernrfs}
A Bernoulli RFS $X$ on $\xbb$ has probability $\nex{} \triangleq 1 - \ex{}$ of being empty, and probability $\ex{}$ of being a singleton whose only element is distributed according to a probability density $\p{}{}$ defined on $\xbb$.
The cardinality distribution of a Bernoulli RFS is thus a Bernoulli distribution with parameter $\ex{}$.
A Bernoulli RFS is completely described by the parameter pair $\left( \ex{}, \p{}{\cdot} \right)$.

\subsubsection{Multi-Bernoulli RFS}
\label{ssec:mbernrfs}
A multi-Bernoulli RFS $X$ on $\xbb$ is a union of a fixed number of independent Bernoulli RFSs $X^{\left( i \right)}$ with \textit{existence probability} $\ex{i} \in \left( 0, 1 \right)$ and probability density $\p{i}{\cdot}$ defined on $\xbb$ for $i = 1, \dots , I$, i.e.
\be
	X = \bigcup_{i = 1}^{I} X^{\left( i \right)} \, .
\ee
It follows that the mean cardinality of a multi-Bernoulli RFS is
\be
	D = \sum_{i = 1}^{I} \ex{i}
\ee
A multi-Bernoulli RFS is thus completely described by the corresponding multi-Bernoulli parameter set $\left\{ \left( \ex{i}, \p{i}{\cdot} \right) \right\}_{i = 1}^{I}$.
Its probability density is
\be
	\rfs{}{}{X} = \prod_{j = 1}^{I} \left( 1 - \ex{j} \right) \sum_{1 \le i_{1} \neq \dots \neq i_{\left| X \right|} \le I} \prod_{j = 1}^{\left| X \right|} \dfrac{\ex{i_{j}} \, \p{i_{j}}{x_{j}}}{1 - \ex{i_{j}}}
\label{eq:mbernrfspdf}
\ee
For convenience, probability densities of the form (\ref{eq:mbernrfspdf}) are abbreviated by the form $\rfs{}{}{X} = \left \{ \left( \ex{i}, \p{i}{} \right)\right\}_{i = 1}^{I}$.
A multi-Bernoulli RFS jointly characterizes unions of non-interacting points with less than unity probability of occurrence and arbitrary spatial distributions.
The procedure in Table \ref{alg:samplembrfs} illustrates how a sample from a multi-Bernoulli RFS is generated.

\begin{table}[!h]
\caption{Sampling a multi-Bernoulli RFS}
\label{alg:samplembrfs}
\renewcommand{\arraystretch}{1.3}
\hrulefill\hrule
\begin{algorithmic}[0]
	\State $X = \varnothing$
	\For{$i = 1, \dots, I$}
		\State Sample $u \sim \operatorname{Uniform}_{\left[0, 1 \right]}$
		\If{$u \leq \ex{i}$}
			\State Sample $x \sim \p{i}{\cdot}$
			\State $X = X \cup \left\{ x \right\}$
		\EndIf
	\EndFor
\end{algorithmic}
\hrule\hrule
\end{table}

\subsection{Bayesian multi-object filtering}
\label{ssec:RFSBayesFiltering}
Let us now introduce the basic ingredients of the MOF problem \cite{mahler,mah2004,mah2013,mah2014book}, i.e. the object RFS $X_{k} \subset \xbb$ at  time $k$ and the RFS $Y_{k}^{i}$ of measurements gathered by node $i \in \mathcal{N}$ at time $k$.
It is also convenient to define the overall measurement RFSs at time $k$,
\be
	Y_{k} \triangleq \bigcup_{i \in \mathcal{N}} Y_{k}^{i} \, ,
\ee
and up to time $k$,
\be
	Y_{1:k} \triangleq \bigcup_{\kappa=1}^{k} Y_{\kappa} \, .
\ee
In the random set framework, the Bayesian approach to MOF consists, therefore, of recursively estimating the object set $X_{k}$ conditioned to the observations $Y_{1:k}$.
The object set is assumed to evolve according to  a multi-object dynamics
\be
X_{k} = \Phi_{k-1}\!\left( X_{k-1} \right) \cup B_{k-1}
\label{eq:MTD}
\ee
where $B_{k-1}$ is the RFS of \textit{new-born} objects at time $k-1$ and
\bie
	\Phi_{k-1}\!\left( X \right) & = & \displaystyle \bigcup_{x \in X} \phi_{k-1}\!\left( x \right) \, ,\label{eq:MTD1}\\
	\phi_{k-1}\!\left( x \right) & = & \left\{ \ba{ll}
											\left\{  x^{\prime} \right\} \, , & \mbox{with survival probability } P_{S,k-1} \\
											\varnothing \, , & \mbox{otherwise } \ea \right.\label{eq:MTD2}
\eie
Notice that according to (\ref{eq:MTD})-(\ref{eq:MTD2}) each object in the set $X_{k}$ is either a new-born object from the set $B_{k-1}$ or an object survived from $X_{k-1}$, with probability $P_{S,k-1}$, and whose state vector $x^{\prime}$ has evolved according to the single-object dynamics (\ref{eq:dtmodel}) with $x = x_{k-1}$ and $x^{\prime} = x_{k}$.
In a similar way, observations are assumed to be generated, at each node $i \in \mathcal{N}$, according to the measurement model
\be
 	Y_{k}^{i} = \Psi_{k}^{i}\!\left( X_{k} \right) \cup \mathcal{C}_{k}^{i}
\label{eq:MM}
\ee
where $\mathcal{C}_{k}^{i}$ is the \textit{clutter} RFS (i.e. the set of measurements not due to objects) at time $k$ and node $i$, and
\bie
	\Psi_{k}^{i}\!\left( X \right) & = & \displaystyle \bigcup_{x \in X} \psi_{k}^{i}\!\left( x \right) \, ,\label{eq:MM1} \\
	\psi_{k}^{i}\!\left( x \right) & = & \left\{ \ba{ll}
												\left\{  y_{k} \right\}\, , & \mbox{with detection probability } P_{D,k}\\
												\varnothing \, , & \mbox{otherwise } \ea \right. \label{eq:MM2}
\eie
Notice that according to (\ref{eq:MM})-(\ref{eq:MM2}) each measurement in the set $Y_{k}^{i}$ is either a false one from the clutter set $\mathcal{C}_{k}^{i}$ or is related to an object in $X_{k}$, with probability $P_{D,k}$, according to the single-sensor measurement equation
(\ref{eq:mmodel}).
Hence it is clear how one of the major benefits of the random set approach is to directly include in the MOF problem formulation (\ref{eq:MTD})-(\ref{eq:MM2}) fundamental practical issues such as object birth and death, false alarms (clutter) and missed detections.
Following a Bayesian approach, the aim is to propagate in time the \textit{posterior} (\textit{filtered}) \textit{multi-object densities} $\rfs{k}{}{X}$ of the object set
$X_{k}$ given $Y_{1:k}$, as well as the \textit{prior} (\textit{predicted}) ones $\rfs{k|k-1}{}{X}$ of $X_{k}$ given $Y_{1:k-1}$.
Exploiting random set theory, it has been found \cite{mahler} that the multi-object densities follow recursions that are conceptually analogous to the well-known
ones of Bayesian nonlinear filtering, i.e.
\bie
	\rfs{k|k-1}{}{X} & = & \int \varphi_{k|k-1}\!\left( X | Z \right) \rfs{k-1}{}{Z} \delta Z \, , \label{eq:MTunlBayesPred}\\
	\rfs{k}{}{X} & = & \dfrac{g_{k}\!\left( Y_{k} | X \right) \rfs{k|k-1}{}{X}}{\displaystyle \int g_{k}\!\left( Y_{k} | Z \right) \rfs{k|k-1}{}{Z} \delta Z} \, ,\label{eq:MTunlBayesUpdate}
\eie
where $\varphi_{k|k-1}(\,\cdot \,|\,\cdot \,)$ is the \textit{multi-object transition density} to time $k$, $g_{k}(\,\cdot\,|\,\cdot \,)$ is the \textit{multi-object likelihood function} at time $k$ \cite{mahler1,mahler2,mahler}.
\begin{rem}
	Despite this conceptual resemblance, however, the multi-object Bayes recursions (\ref{eq:MTunlBayesPred})-(\ref{eq:MTunlBayesUpdate}), unlike their single-object counterparts, are affected by combinatorial complexity and are, therefore, computationally infeasible except for very small-scale MOF problems involving few objects and/or measurements.
\end{rem}
For this reason, the computationally tractable PHD and \textit{Cardinalized PHD} (CPHD) filtering approaches will be followed and briefly reviewed.

\subsection{CPHD filtering}
\label{sec:cphd}
The CPHD filter \cite{mahler2,vo-vo-cantoni} propagates in time the cardinality PMFs $\cd[k|k-1]{n}$ and $\cd[k]{n}$ as well as the PHD functions $\intf{k|k-1}{}{x}$ and $\intf{k}{}{x}$ of $X_{k}$ given $Y_{1:k-1}$ and, respectively, $Y_{1:k}$ assuming that the clutter RFS, the predicted and filtered RFSs are i.i.d. cluster processes (see (\ref{eq:iidcrfspdf})).
The resulting CPHD recursions (prediction and correction) are as follows
\begin{IEEEeqnarray}{l}
	\mbox{\bfbox{Prediction}}\IEEEnonumber\\
	\ba{l}
		\cd[k|k-1]{n} = \displaystyle \sum_{j=0}^{n} \, p_{b}\!\left( n-j \right) \, \rho_{S, k|k-1}\!\left( j \right)\\
		\intf{k|k-1}{}{x} = \intf{b,k}{}{x} + \displaystyle \int \varphi_{k|k-1}\!\left( x | \zeta \right) \, P_{S,k-1}\!\left( \zeta \right) \intf{k-1}{}{\zeta} d\zeta 
	\ea\label{eq:cphdpred}\vspace{0.5em}\\
	\mbox{\bfbox{Correction}}\IEEEnonumber\\
\ba{l}
	\cd[k]{n} = \dfrac{\mathcal{G}_{k}^{0}\!\left( \intf{k|k-1}{}{\cdot}, Y_{k}, n \right) \, \cd[k|k-1]{n}}{\displaystyle \sum_{\eta = 0}^{\infty} \, \mathcal{G}_{k}^{0}\!\left( \intf{k|k-1}{}{\cdot}, Y_{k}, \eta \right) ~\cd[k|k-1]{\eta}}\\
	\intf{k}{}{x} = \mathcal{G}_{Y_{k}}\!\left( x \right) \, \intf{k|k-1}{}{x}                           
\ea\label{eq:cphdcor}
\end{IEEEeqnarray}
where: $p_{b,k}(\cdot)$ is the assumed birth cardinality PMF; $\rho_{S, k|k-1}\!\left( \cdot \right)$ is the cardinality PMF of survived objects, given by
\be
	\rho_{S, k|k-1}\!\left( j \right) = \displaystyle \sum_{t = j}^{\infty} \, \left( \ba{c} t \\ j \ea \right) P_{S,k}^{j} \, (1 - P_{S,k})^{h-j} \, \cd[k-1]{t} \, ;
\ee
$d_{b,k}(\cdot)$ is the assumed PHD function of new-born objects; $\varphi_{k|k-1}(x | \xi)$ is the state transition PDF (\ref{eq:mtmodel}) associated to the single object dynamics; the generalized likelihood functions $\mathcal{G}_{k}^{0}\!\left( \cdot, \cdot, \cdot  \right)$ and $\mathcal{G}_{Y_{k}}\!\left( \cdot \right)$ have cumbersome expressions which can be found in \cite{vo-vo-cantoni}.

Enforcing Poisson cardinality distributions for the clutter and object sets, the update of the cardinality PMF is no longer needed and the CPHD filter reduces to the PHD filter \cite{mahler1,vo-ma}.
The CPHD filter provides enhanced robustness with respect to misdetections, clutter as well as various uncertainties on the multi-object model (i.e. detection and survival probabilities, clutter and birth distributions) at the price of an increased computational load: a CPHD recursion has $O(m^3 n_{max})$ complexity compared to $O(m n_{max})$ of PHD, $m$ being the number of measurements and $n_{max}$ the maximum number of objects. 
There are essentially two ways of implementing PHD/CPHD filters, namely the \textit{Particle Filter} (PF)  or \textit{Sequential Monte Carlo} (SMC) \cite{VSD05} and the \textit{Gaussian Mixture} (GM) \cite{vo-ma,vo-vo-cantoni} approaches, the latter being cheaper in terms of computation and memory requirements and, hence, by far preferable for distributed implementation on a sensor network wherein nodes have limited processing and communication capabilities.

\subsection{Labeled RFSs}
\label{ssec:lrfss}
The aim of MOT is not only to recursively estimate the state of the objects and their time varying number, but also to keep track of their trajectories. To this end, the notion of label is introduced in the RFS approach \cite{vovo1,vovo2} so that each estimated object can be uniquely identified and its track reconstructed.

Let $\mathcal{L}: \mathbb{X}\mathcal{\times }\mathbb{L}\rightarrow \mathbb{L}$ be the projection $\mathcal{L}((x,\ell ))=\ell $, where $\mathbb{X}\subseteq \mathbb{R}^{n_{x}}$, $\mathbb{L}=\mathcal{\{\alpha }_{i}:i\in \mathbb{N\}}$ and $\mathcal{\alpha }_{i}$'s are distinct.
To estimate the trajectories of the objects they need to be uniquely identified by a (unobserved) label drawn from a discrete countable space $\mathbb{L}=\left\{ \alpha_{i}: i\in \mathbb{N} \right\}$, where the $\alpha_{i}$’s are distinct.
To incorporate object identity, a label $\ell \in \mathbb{L}$ is appended to the state $x$ of each object and the multi-object state is regarded as a finite set on $\mathbb{X} \times \mathbb{L}$, i.e. $\lb{x} \triangleq \left( x, \ell \right) \in \mathbb{X} \times \mathbb{L}$.
However, this idea alone is not enough since $\mathbb{L}$ is discrete and it is possible (with non-zero probability) that multiple objects have the same identity, i.e. be marked with the same label. This problem can be alleviated using a special case of RFS called \textit{labeled} RFS \cite{vovo1,vovo2}, which is, in essence, a marked RFS with distinct labels.
Then, a finite subset $\mathbf{X}$ of $\mathbb{X}\mathcal{\times }\mathbb{L}$ has distinct labels if and only if $\mathbf{X}$ and its labels $\mathcal{L}(\mathbf{X})=\{\mathcal{L}(\mathbf{x}):\mathbf{x} \in \mathbf{X}\}$ have the same cardinality, i.e. $|\mathbf{X}|=|\mathcal{L}(\mathbf{X})|$. The function $\Delta (\mathbf{X})\triangleq $ $\delta _{|\mathbf{X}|}(|\mathcal{L}(\mathbf{X})|)$ is called the \textit{distinct label indicator} and assumes value $1$ if and only if the condition $|\mathbf{X}|=|\mathcal{L}(\mathbf{X})|$ holds.
\begin{defi}
	A labeled RFS with state space $\stsp$ and (discrete) label space $\lbsp$ is an RFS on $\stsp \times \lbsp$ such that each realization $\lb{X}$ has distinct labels, i.e.
	\be
		\left| \lb{X} \right| = \left| \lbs{X} \right| \, .
	\ee
\end{defi}

The unlabeled version of a labeled RFS with density $\lmb{}$ is distributed according to the \textit{marginal density} \cite{vovo1}
\be
	\pi\!\left( \left\{ x_{1}, \dots, x_{n} \right\} \right) = \sum_{\left( \ell_{1}, \dots, \ell_{n} \right) \in \lbsp^{n}} \lmb{\left\{ \left( x_{1}, \ell_{1} \right), \dots, \left( x_{n}, \ell_{n} \right) \right\}} \, .
\ee
The unlabeled version of a labeled RFS is its projection from $\stsp \times \lbsp$ into $\stsp$, and is obtained by simply discarding the labels. The cardinality distribution of a labeled RFS is the same as its unlabeled version \cite{vovo1}.

Hereinafter, symbols for labeled states and their distributions are bolded to distinguish them from unlabeled ones, e.g. $\mathbf{x}$, $\mathbf{X}$, $\boldsymbol{\pi }$, etc.

\subsection{Common classes of labeled RFS}
\label{ssec:classlrfs}
A review of the common labeled RFS densities is provided \cite{vovo1}.

\subsubsection{Labeled Poisson RFS}
\label{ssec:lpoissrfs}
A labeled Poisson RFS $\lb{X}$ with state space $\stsp$ and label space $\lbsp = \left\{ \alpha_{i} : i \in \nbb \right\}$, is a Poisson RFS $X$ on $\stsp$ with intensity $\intf{}{}{\cdot}$, tagged with labels from $\lbsp$.
A sample from such labeled Poisson RFS can be generated by the procedure reported in Table \ref{alg:samplelpoissrfs}.
\begin{table}[!h]
\caption{Sampling a labeled Poisson RFS}
\label{alg:samplelpoissrfs}
\renewcommand{\arraystretch}{1.3}
\hrulefill\hrule
\begin{algorithmic}[0]
	\State $\lb{X} = \varnothing$
	\State Sample $n \sim \operatorname{Poisson}_{\left[ D \right]}$
	\For{$i = 1, \dots, n$}
		\State Sample $x \sim \intf{}{}{\cdot}/D$
		\State $\lb{X} = \lb{X} \cup \left\{ \left( x, \alpha_{i} \right) \right\}$
	\EndFor
\end{algorithmic}
\hrule\hrule
\end{table}

The probability density of an LMB RFS is given by 
\be
	\lmb{\lb{X}} = \delta_{\lbsp\!\left( \left| \lb{X} \right| \right)}\!\left( \lbs{\lb{X}} \right) \, \operatorname{Poisson}_{\left[ D \right]}\!\left( \left| \lb{X} \right| \right) \, \prod_{\lb{x} \in \lb{X}} \dfrac{\intf{}{}{x}}{D} \, ,\label{eq:lpoisspdf}
\ee
where $\lbsp\!\left( n \right) = \left\{ \alpha_{i} \in \lbsp \right\}_{i=1}^{n}$; $\delta_{\lbsp\!\left( n \right)}\!\left( \left\{ \ell_{1}, \dots, \ell_{n} \right\} \right)$ serves to check whether or not labels are distinct.

\begin{rem}
	By a tracking point of view, each label of a labeled Poisson RFS $\lb{X}$ cannot directly refer to an object since all the corresponding states are sampled from the same intensity $\intf{}{}{x}$, i.e. there is not a 1-to-1 mapping between labels and location PDFs.
	Thus, it is not clear how to properly use such a labeled distribution in a MOT problem.
\end{rem}

\subsubsection{Labeled independent identically distributed cluster RFS}
\label{ssec:liidcrfs}
In the same fashion of the unlabeled RFSs, the labeled Poisson RFS can be generalized to the labeled i.i.d. cluster RFS by removing the Poisson assumption on the cardinality and specifying an arbitrary cardinality distribution.
A sample from such a labeled i.i.d. RFS can be generated by the procedure reported in Table \ref{alg:sampleliidcrfs}.
\begin{table}[!h]
\caption{Sampling a labeled i.i.d. cluster RFS}
\label{alg:sampleliidcrfs}
\renewcommand{\arraystretch}{1.3}
\hrulefill\hrule
\begin{algorithmic}[0]
	\State $\lb{X} = \varnothing$
	\State Sample $n \sim \cd{\cdot}$
	\For{$i = 1, \dots, n$}
		\State Sample $x \sim \intf{}{}{\cdot}/D$
		\State $\lb{X} = \lb{X} \cup \left\{ \left( x, \alpha_{i} \right) \right\}$
	\EndFor
\end{algorithmic}
\hrule\hrule
\end{table}

The probability density of an LMB RFS is given by 
\be
	\lmb{\lb{X}} = \delta_{\lbsp\!\left( \left| \lb{X} \right| \right)}\!\left( \lbs{\lb{X}} \right) \, \cd{\left| \lb{X} \right|} \, \prod_{\lb{x} \in \lb{X}} \dfrac{\intf{}{}{x}}{D} \, ,\label{eq:liidpdf}
\ee

The same consideration drawn for the labeled Poisson RFS is inherited by the i.i.d. cluster RFS.

\subsubsection{Generalized Labeled Multi-Bernoulli RFS}
\label{ssec:glmbrfs}
A \textit{Generalized Labeled Multi-Bernoulli} (GLMB) RFS \cite{vovo1} is a labeled RFS with state space $\stsp$ and (discrete) label space $\lbsp$ distributed according to
\be
	\lmb{\lb{X}} = \dli{\lb{X}} \, \sum_{c \in \mathbb{C}} w^{\left( c \right)}\!\left( \lbs{\lb{X}} \right) \left[ \p{c}{} \right]^{\lb{X}}
\label{eq:glmbpdf}
\ee
where: $\mathbb{C}$ is a discrete index set; $w^{\left( c \right)}\!\left( L \right)$ and $\p{c}{}$ satisfy the normalization constraints:
\bie
	\sum_{L \subseteq \lbsp} \sum_{c \in \mathbb{C}} w^{\left( c \right)}\!\left( L \right) & = & 1 \, ,\label{eq:vinc:1}\\
	\int \p{c}{x, \ell} d x & = & 1 \, .
\eie
A GLMB can be interpreted as a mixture of $\left| \mathbb{C} \right|$ multi-object exponentials
\be
	w^{\left( c \right)}\!\left( \lbs{\lb{X}} \right) \left[ \p{c}{} \right]^{\lb{X}} \, .
\ee
Each term in the mixture (\ref{eq:glmbpdf}) consists of the product of two factors:
\begin{enumerate}
	\item a weight $w^{\left( c \right)}\!\left( \lbs{\lb{X}} \right)$ that only depends on the labels $\lbs{\lb{X}}$ of the multi-object state $\lb{X}$;
	\item a multi-object exponential $\left[ \p{c}{} \right]^{\lb{X}}$ that depends on the entire multi-object state.
\end{enumerate}
The cardinality distribution of a GLMB is given by
\be
	\cd{n} = \sum_{L \in \fs[n]{\lbsp}} \sum_{c \in \mathbb{C}} w^{\left( c \right)}\!\left( L \right) \, ,\label{eq:lrfscardinality}
\ee
from which, in fact, summing over all possible $n$ implies (\ref{eq:vinc:1}).
The PHD of the unlabeled version of a GLMB is given by
\be
	\intf{}{}{x} = \sum_{c \in \mathbb{C}} \sum_{\ell \in \lbsp} \p{c}{x, \ell} \sum_{L \subseteq \lbsp} \inc{L}{\ell} w^{\left( c \right)}\!\left( L \right) \, .
\label{eq:glmbphd}
\ee

\begin{rem}
	The GLMB RFS distribution has the appealing feature of having a 1-to-1 mapping between labels and location PDFs provided by the single-object densities $\p{c}{x, \ell}$, for each term of the multi-object exponential mixture indexed with $c$.
	However, in the context of MOT, it is not clear how to exploit (in particular how to implement) such a distribution \cite{vovo1,vovo2}.
\end{rem}

\subsubsection{$\delta$-Generalized Labeled Multi-Bernoulli RFS}
\label{ssec:dglmbrfs}
A $\delta$-Generalized Labeled Multi-Bernoulli ($\delta$-GLMB) RFS with state space $\stsp$ and (discrete) label space $\lbsp$ is a special case of a GLMB with
\bie
	\mathbb{C} & = & \fs{\lbsp} \times \Xi \, ,\\
	w^{\left( c \right)}\!\left( L \right) & = & w^{\left( I, \xi \right)}\!\left( L \right) = w^{\left( I, \xi \right)} \delta_{I}\!\left( L \right) \, ,\\
	\p{c}{\cdot} & = & \p{I, \xi}{\cdot} = \p{\xi}{\cdot} \, ,
\eie
i.e. is distributed according to
\bie
	\lmb{\lb{X}} & = & \dli{\lb{X}} \sum_{\left( I, \xi \right) \in \fs{\lbsp} \times \Xi} w^{\left( I, \xi \right)} \delta_{I}\!\left( \lbs{\lb{X}} \right) \left[ \p{\xi}{} \right]^{\lb{X}}\label{eq:dglmbpdf}\\
	& = & \dli{\lb{X}} \sum_{I \in \fs{\lbsp}} \delta_{I}\!\left( \lbs{\lb{X}} \right) \sum_{\xi \in \Xi} w^{\left( I, \xi \right)} \left[ \p{\xi}{} \right]^{\lb{X}} \, ,\label{eq:dglmbpdf2}
\eie
where $\Xi$ is a discrete space.
In MOT a $\delta$-GLMB can be used (and implemented) to represent the multi-object densities over time.
In particular:
\begin{itemize}
	\item each finite set $I \in \fs{\lbsp}$ represents a different configurations of labels whose cardinality provides the number of objects;
	\item each $\xi \in \Xi$ represents a history of association maps, e.g. $\xi = \left( \theta_{1}, \dots, \theta_{k} \right)$, where an association map at time $1 \le j \le k$ is a function $\theta_{j}$ which maps track labels at time $j$ to a measurement at time $j$ with the constraint that a track can generate at most one measurement, and a measurement can be assigned to at most one track;
	\item the pair $\left( I, \xi \right)$ is called \textit{hypothesis}.
\end{itemize}
To clarify the notion of $\delta$-GLMB, let us consider the following two examples.
\begin{description}
	\item[E1:] Suppose the following two possibilities
		\begin{enumerate}
			\item 0.4 chance of 1 object with label $\ell_{1}$, i.e. $I_{1} = \left\{ \ell_{1} \right\}$, and density $\p{}{x, \ell_{1}}$;
			\item 0.6 chance of 2 objects with, respectively, labels $\ell_{1}$ and $\ell_{2}$, i.e. $I_{2} = \left\{ \ell_{1}, \ell_{2} \right\}$, and densities $\p{}{x, \ell_{1}}$ and $\p{}{x, \ell_{2}}$.
		\end{enumerate}
		Then, the $\delta$-GLMB representation is:
		\be
			\lmb{\lb{X}} = 0.4 \, \delta_{\left\{ \ell_{1} \right\}}\!\left( \mathcal{L}\!\left( \lb{X} \right) \right) \, \p{}{x, \ell_{1}} +
					     0.6 \, \delta_{\left\{ \ell_{1}, \ell_{2} \right\}}\!\left( \mathcal{L}\!\left( \lb{X} \right) \right) \, \p{}{x, \ell_{1}} \, \p{}{x, \ell_{2}} \, .
		\label{eq:pdfex1}
		\ee
		Notice that in this example there are no association histories, i.e. $\Xi = \varnothing$. Thus (\ref{eq:pdfex1}) has exactly two hypotheses $\left( I_{1}, \varnothing \right)$ and $\left( I_{2}, \varnothing \right)$.
	\item[E2:] Let us now consider the previous example with $\Xi \neq \varnothing$:
		\begin{enumerate}
			\item 0.4 chance of 1 object with label $\ell_{1}$, i.e. $I_{1} = \left\{ \ell_{1} \right\}$, association histories $\xi_{1}$ and $\xi_{2}$, i.e. there are two hypotheses $\left( I_{1}, \xi_{1} \right)$ and $\left( I_{1}, \xi_{2} \right)$, weights $w^{\left( I_{1}, \xi_{1} \right)} = 0.3$ and $w^{\left( I_{1}, \xi_{2} \right)} = 0.1$, densities $\p{\xi_{1}}{x, \ell_{1}}$ and $\p{\xi_{2}}{x, \ell_{1}}$;
			\item 0.6 chance of 2 objects with, respectively, label $\ell_{1}$ and $\ell_{2}$, i.e. $I_{2} = \left\{ \ell_{1}, \ell_{2} \right\}$, association histories $\xi_{1}$ and $\xi_{2}$, i.e. there are two hypotheses $\left( I_{2}, \xi_{1} \right)$ and $\left( I_{2}, \xi_{2} \right)$, weights $w^{\left( I_{2}, \xi_{1} \right)} = 0.4$ and $w^{\left( I_{2}, \xi_{2} \right)} = 0.2$, densities $\p{\xi_{1}}{x, \ell_{1}}$, $\p{\xi_{1}}{x, \ell_{2}}$, $\p{\xi_{2}}{x, \ell_{1}}$ and $\p{\xi_{2}}{x, \ell_{2}}$.
		\end{enumerate}
		Then, the $\delta$-GLMB representation is:
		\bie
			\lmb{\lb{X}} & = & \delta_{\left\{ \ell_{1} \right\}}\!\left( \mathcal{L}\!\left( \lb{X} \right) \right) \,
						\left[ 0.3 \, \p{\xi_{1}}{x, \ell_{1}} + 0.1 \, \p{\xi_{2}}{x, \ell_{1}} \right] + \IEEEnonumber\\
					  && \delta_{\left\{ \ell_{1}, \ell_{2} \right\}}\!\left( \mathcal{L}\!\left( \lb{X} \right) \right) \,
					     	\left[ 0.4 \, \p{\xi_{1}}{x, \ell_{1}} \, \p{\xi_{1}}{x, \ell_{2}} + 0.2 \, \p{\xi_{2}}{x, \ell_{1}} \, \p{\xi_{2}}{x, \ell_{2}} \right]\, .
		\label{eq:pdfex2}
		\eie
		Thus (\ref{eq:pdfex2}) has exactly four hypotheses $\left( I_{1}, \xi_{1} \right)$, $\left( I_{1}, \xi_{2} \right)$, $\left( I_{2}, \xi_{1} \right)$ and $\left( I_{2}, \xi_{2} \right)$.
\end{description}
The weight $w^{\left( I, \xi \right)}$ represents the probability of hypothesis $\left( I, \xi \right)$ and $\p{\xi}{x, \ell}$ is the probability density of the kinematic state of track $\ell$ for the association map history $\xi$.

\subsubsection{Labeled Multi-Bernoulli RFS}
\label{ssec:lmbrfs}
A \textit{labeled multi-Bernoulli} (LMB) RFS $\mathbf{X}$ with state space $\mathbb{X}$, label space $\mathbb{L}$ and (finite) parameter set $\{(r^{(\ell)}, p^{(\ell)} ): \ell \in \mathbb{L} \}$, is a multi-Bernoulli RFS on $\mathbb{X}$ augmented with labels corresponding to the successful (non-empty) Bernoulli components.
In particular, $r^{(\ell)}$ is the \textit{existence probability} and $p^{(\ell)}$ is the PDF on the state space $\mathbb{X}$ of the \textit{Bernoulli component} $(r^{(\ell)}, p^{(\ell)})$ with unique label $\ell \in \mathbb{L}$.
The procedure in Table \ref{alg:samplelmbrfs} illustrates how a sample from a labeled multi-Bernoulli RFS is generated.

\begin{table}[!h]
\caption{Sampling a labeled multi-Bernoulli RFS}
\label{alg:samplelmbrfs}
\renewcommand{\arraystretch}{1.3}
\hrulefill\hrule
\begin{algorithmic}[0]
	\State $\lb{X} = \varnothing$
	\For{$\ell \in \mathbb{L}$}
		\State Sample $u \sim \operatorname{Uniform}_{\left[0, 1 \right]}$
		\If{$u \leq \ex{\ell}$}
			\State Sample $x \sim \p{\ell}{\cdot}$
			\State $\lb{X} = \lb{X} \cup \left\{ \left( x, \ell \right) \right\}$
		\EndIf
	\EndFor
\end{algorithmic}
\hrule\hrule
\end{table}

The probability density of an LMB RFS is given by 
\begin{equation}
	\boldsymbol{\pi}(\mathbf{X})=\Delta (\mathbf{X}) \, w(\mathcal{L}(\mathbf{X})) \, p^{\mathbf{X}}  \label{eq:lmbpdf}
\end{equation}
where

\begin{IEEEeqnarray}{rCl}
w(L) & = & \prod\limits_{\ell \in L} 1_{\mathbb{L}}(\ell) \, r^{(\ell )} \prod\limits_{\ell \in \mathbb{L} \backslash L}\left( 1-r^{(\ell)}\right) \, ,\label{eq:lmb:w}\\
\p{\ell}{x} & \triangleq & p(x,\ell ) \, .\label{eq:lmb:pdf}
\end{IEEEeqnarray}

\noindent For convenience, the shorthand notation $\boldsymbol{\pi} = \left \{ \left(r^{(\ell)}, p^{(\ell)} \right ) \right \}_{\ell \in \mathbb{L}}$ will be adopted for the density of an LMB RFS.
The LMB is also a special case of the GLMB \cite{vovo1,vovo2,lmbf} having $\mathbb{C}$ with a single element (thus the superscript is simply avoided) and
\bie
	\p{c}{x, \ell} & = & \p{}{x, \ell} = \p{\ell}{x} \, ,\\
	w^{\left( c \right)}\!\left( L \right) & = & w\!\left( L \right) \, .
\eie

\subsection{Bayesian multi-object tracking}
\label{ssec:RFSBayesTracking}
The labeled RFS paradigm \cite{vovo1,vovo2}, along with the mathematical tools provided by FISST \cite{mahler}, allows to formalize in a rigorous and elegant way the multi-object Bayesian recursion for MOT.

For MOT, the object label is an ordered pair of integers $\ell = \left( k, i \right)$, where $k$ is the \textit{time of birth} and $i\in \mathbb{N}$ is a unique index to distinguish objects born at the same time.
The label space for objects born at time $k$ is $\lbsp[k] = \left\{ k \right\} \times \nbb$.
An object born at time $k$ has state $\lb{x} \in \stsp \times \lbsp[k]$.
Hence, the label space for objects at time $k$ (including those born prior to $k$), denoted as $\lbsp[0:k]$, is constructed recursively by $\lbsp[0:k] = \lbsp[0:k-1] \cup \lbsp[k]$ (note that $\lbsp[0:k-1]$ and $\lbsp[k]$ are disjoint).
A multi-object state $\lb{X}$ at time $k$ is a finite subset of $\stsp \times \lbsp[0:k]$.

Suppose that, at time $k$, there are $N_{k}$ objects with states $\lb{x}_{k,1}, \dots, \lb{x}_{k, N_{k}}$, each taking values in the (labeled) state space $\stsp \times \lbsp[0:k]$, and $M_{k}$ measurements $y_{k,1}, \dots, y_{k, M_{k}}$ each taking values in an observation space $\ybb$. The \textit{multi-object state} and \textit{multi-object observation}, at time $k$, \cite{mahler1, mahler} are, respectively, the finite sets 
\bie
	\lb{X}_{k} & = & \left\{ \lb{x}_{k,1}, \dots, \lb{x}_{k,N_{k}} \right\} \, ,\\
	Y_{k} & = & \left\{ y_{k,1}, \dots,y_{k,M_{k}} \right\} \, .
\eie

Let $\clmb{k}{}{\cdot}$ denote the \textit{multi-object filtering density} at time $k$, and $\clmb{k|k-1}{}{\cdot}$ the \textit{multi-object prediction density} to time $k$ (formally $\clmb{k}{}{\cdot}$ and $\clmb{k|k-1}{}{\cdot}$ should be written respectively as $\clmb{k}{}{\cdot | Y_{0}, \dots, Y_{k-1}, Y_{k}}$, and $\clmb{k|k-1}{}{\cdot | Y_{0}, \dots, Y_{k-1}}$, but for simplicity the dependence on past measurements is omitted).
Then, the \textit{multi-object Bayes recursion} propagates $\clmb{k}{}{}$ in time \cite{mahler1,mahler} according to the following update and prediction
\bie
	\clmb{k|k-1}{}{\lb{X}} & = & \int \boldsymbol{\varphi}_{k|k-1}\!\left( \lb{X} | \lb{Z} \right) \clmb{k-1}{}{\lb{Z}} \delta \lb{Z} \, ,  \label{eq:MTBayesPred}\\
	\clmb{k}{}{\lb{X}} & = & \dfrac{g_{k}\!\left( Y_{k} | \lb{X} \right) \clmb{k|k-1}{}{\lb{X}}}{\displaystyle \int g_{k}\!\left( Y_{k} | \lb{Z} \right) \clmb{k|k-1}{}{\lb{Z}} \delta \lb{Z}} \, ,\label{eq:MTBayesUpdate}
\eie
where $\boldsymbol{\varphi}_{k|k-1}(\,\cdot \,|\,\cdot \,)$ is the \textit{labeled multi-object transition density} to time $k$, $g_{k}(\,\cdot\,|\,\cdot \,)$ is the \textit{multi-object likelihood function} at time $k$, and the integral is a \textit{set integral} defined, for any function 
$\boldsymbol{f} : \mathcal{F}\!\left( \mathbb{X} \times \mathbb{L} \right) \rightarrow \rbb$, by
\begin{equation}
	\int \boldsymbol{f}\!\left( \lb{X} \right) \delta \lb{X} = \sum_{n = 0}^{\infty} \frac{1}{n!} \sum_{\left( \ell _{1}, \dots, \ell_{n} \right) \in \lbsp} \int_{\lb{X}^{n}} \boldsymbol{f}\!\left( \left\{ \left( x_{1}, \ell_{1} \right), \dots, \left( x_{n}, \ell_{n} \right) \right\} \right) d x_{1} \cdots dx_{n} \, .\label{eq:setint}
\end{equation}
The multi-object posterior density captures all information on the number of objects, and their states \cite{mahler}.
The multi-object likelihood function encapsulates the underlying models for detections and false alarms while the multi-object transition density embeds the underlying models of motion, birth and death \cite{mahler}.

\begin{rem}
	Eqs. (\ref{eq:MTBayesPred})-(\ref{eq:MTBayesUpdate}) represent the multi-object counterpart of (\ref{eq:chapkol})-(\ref{eq:bayesrule}).
\end{rem}

Let us now consider a multi-sensor \textit{centralized} setting in which a sensor network $\left( \mathcal{N},\mathcal{A}\right)$ conveys all the measurement to a central fusion node. Assuming that the measurements taken by the sensors are independent, the multi-object Bayesian filtering recursion can be naturally extended as follows:
\bie
	\clmb{k|k-1}{}{\lb{X}} & = & \int \boldsymbol{\varphi}_{k|k-1}\!\left( \lb{X} | \lb{Z} \right) \clmb{k-1}{}{\lb{Z}} \delta \lb{Z} \, ,  \label{eq:msMTBayesPred}\\
	\clmb{k}{}{\lb{X}} & = & \dfrac{\displaystyle \prod_{i \in \ncal} g^{i}_{k}\!\left( Y^{i}_{k} | \lb{X} \right) \clmb{k|k-1}{}{\lb{X}}}{\displaystyle \int \displaystyle \prod_{i \in \ncal} g^{i}_{k}\!\left( Y^{i}_{k} | \lb{Z} \right) \clmb{k|k-1}{}{\lb{Z}} \delta \lb{Z}} \, ,\label{eq:msMTBayesUpdate}
\eie

\begin{rem}
	Eqs. (\ref{eq:msMTBayesPred})-(\ref{eq:msMTBayesUpdate}) represents the multi-object counterpart of (\ref{eq:mschapkol})-(\ref{eq:msbayesrule}) which has been made possible thanks to the concept of RFS densities.
\end{rem}

In the work \cite{vovo1}, the full labeled multi-object Bayesian recursion is derived for the general class of GLMB densities, while in \cite{vovo2} the analytical implementation of the multi-object Bayesian recursion for the $\delta$-GLMB density is provided. Moreover, in the work \cite{lmbf}, a multi-object Bayesian recursion for the LMB density, which turns out not to be in a closed form, is discussed and presented.

For convenience, hereinafter, explicit reference to the time index $k$ for label sets will be omitted by denoting $\lbsp[-] \triangleq \lbsp[0:k-1]$, $\nbsp \triangleq \lbsp[k]$, $\lbsp \triangleq \lbsp[-] \cup \nbsp$.

\section{Distributed information fusion}
\label{sec:distinfofusion}
The third and fourth core concepts of this dissertation are the \textit{Kullback-Leibler Average} (KLA) and \textit{consensus}, which are two mathematical tools for distributing information over sensor networks.

To combine limited information from individual nodes, a suitable information fusion procedure is required to reconstruct, from the information of the various local nodes, the state of the objects present in the surrounding environment.
The scalability requirement, the lack of a fusion center and knowledge on the network topology (see section \ref{sec:net}) dictate the adoption of a consensus approach to achieve a collective fusion over the network by iterating local fusion steps among neighboring nodes \cite{Olfati,Xiao,Calafiore,cp}.
In addition, due to the data incest problem in the presence of network loops that causes \textit{double counting} of information, robust (but suboptimal) fusion rules, such as the \textit{Chernoff fusion} rule \cite{info,mori1} (that includes \textit{Covariance Intersection} \cite{juluhl1997,julier2008} and its generalization \cite{mah2000}) are required.

In this chapter, the KLA and consensus will be separately dealt with, respectively, as a robust suboptimal fusion rule and as a technique to spread information, in a scalable way, throughout sensor networks.

\subsection{Notation}
\label{ssec:difnotation}

Given PDFs $p,$ $q$ and a scalar $\alpha >0$, the information fusion $\oplus$ and weighting operators $\odot $ \cite{ccphd, cp, cpcl} are defined as follows 
\begin{IEEEeqnarray}{rCl}
	\left(p \oplus q \right)(x) & \triangleq & \dfrac{p(x) \, q(x)}{\left< p, q \right>} \label{oplus}\\
	\left( \alpha \odot p \right)(x) & \triangleq & \dfrac{\left[ p(x) \right]^{\alpha}}{\left< p^{\alpha}, 1 \right>} \label{odot}.
\end{IEEEeqnarray}
It can be checked that the fusion and weighting operators satisfy the following properties: 
\begin{IEEEeqnarray*}{ll}
	\mbox{\textsc{p.a}\quad} & (p \oplus q) \oplus h = p \oplus (q \oplus h) = p \oplus q \oplus h \\
	\mbox{\textsc{p.b}\quad} & p \oplus q = q \oplus p \\
	\mbox{\textsc{p.c}\quad} & (\alpha \, \beta) \odot p = \alpha \odot (\beta \odot p) \\
	\mbox{\textsc{p.d}\quad} & 1 \odot p = p \\
	\mbox{\textsc{p.e}\quad} & \alpha \odot (p \oplus q) = (\alpha \odot p) \oplus (\alpha \odot q) \\
	\mbox{\textsc{p.f}\quad} & (\alpha + \beta) \odot p = (\alpha \odot p) \oplus (\beta \odot q)
\end{IEEEeqnarray*}
for any PDFs $p$ and $q$, positive scalars $\alpha $ and $%
\beta $.

\subsection{Kullback-Leibler average of PDFs}
\label{ssec:klapdf}

A key ingredient for networked estimation is the capability to fuse in a consistent way PDFs of the quantity to be estimated provided by different nodes.
In this respect, a sensible information-theoretic definition of fusion among PDFs is the \textit{Kullback-Leibler Average} (KLA) \cite{batchi2011,cp} relying on the \textit{Kullback-Leibler Divergence} (KLD).
Given PDFs $\left\{ p^{i}\!\left( \cdot \right) \right\}_{i \in \ncal}$ and relative weights $\omega^{i} > 0$ such that $\sum_{i \in \ncal} \omega^{i} = 1$, their weighted KLA $\overline{p}\!\left( \cdot \right)$ is defined as
\be
	\overline{p}(\cdot) = \arg \inf_{p\!\left( \cdot \right)} \displaystyle{\sum_{i \in \ncal}} \omega^{i} D_{KL} \left( p \parallel p^{i} \right)
\label{eq:kla}
\ee
where
\be
	D_{KL}\!\left( p \parallel p^{i} \right) = \displaystyle{\int} p\!\left( x \right) \operatorname{log}\left(\dfrac{p\!\left( x \right)}{p^{i}\!\left( x \right)}\right) dx
\ee
denotes the KLD between the PDFs $p\left( \cdot \right)$ and $p^{i}\!\left( \cdot \right)$.
In \cite{cp} it is shown that the weighted KLA in (\ref{eq:kla}) coincides with the \textit{normalized weighted geometric mean} (NWGM) of the PDFs, i.e.
\be
	\overline{p}(x) = \dfrac{\displaystyle\prod_{i \in \ncal} \left[ p^{i}\!\left( x \right) \right]^{\omega^{i}}}
                                {\displaystyle \int \prod_{i \in \ncal} \left[ p^{i}\!\left( x \right) \right]^{\omega^{i}} d x}
                                \, \triangleq \, \displaystyle{\bigoplus_{i \in \ncal}} \, \left( \omega^{i} \odot p^{i}\!\left( x \right) \right)
\label{eq:geomean}
\ee
where the latter equality follows from the properties of the operators $\odot$ and $\oplus$.
Note that in the unweighted KLA $\omega^{i} = 1 / \left| \mathcal{N} \right|$, i.e.
\begin{equation}
	\overline{p}\left( x \right) = \displaystyle{\bigoplus_{i \in \ncal}} \,\left( \dfrac{1}{\left| \ncal \right|} \odot p^{i} \right) \! \left( x \right) \, .
\label{eq:uwgeomean}
\end{equation}
If all PDFs are Gaussian, i.e. $p^{i}\!\left( \cdot \right) = \ncal\!\left( \, \cdot \, ; \, m^{i}, P^{i} \right)$, $\overline{p}\!\left( \cdot \right)$ in (\ref{eq:geomean}) turns out to be Gaussian \cite{cp}, i.e. $\overline{p}\!\left( \cdot \right) = \ncal\!\left( \, \cdot \, ; \, \overline{m}, \overline{P} \right)$.
In particular, defining the \textit{information} (inverse covariance) \textit{matrix}
\be
	\Omega \triangleq P^{-1}
\ee
and information vector
\be
	q \triangleq P^{-1} m
\ee
associated to the Gaussian mean $m$ and covariance $P$, one has
\begin{IEEEeqnarray}{rCl}
	\overline{\Omega} & = & \displaystyle{\sum_{i \in \mathcal{N}}}~\omega^{i} \Omega^{i} \, ,\label{eq:ci1}\\
	\overline{q} & = & \displaystyle{\sum_{i \in \mathcal{N}}}~\omega^{i} q^{i} \, . \label{eq:ci2}
\end{IEEEeqnarray}
which corresponds to the well known \textit{Covariance Intersection} (CI) fusion rule \cite{juluhl1997}.
Hence the KLA of Gaussian PDFs is a Gaussian PDF whose information pair $(\overline{\Omega},\overline{q})$ is obtained by the weighted arithmetic mean of the information pairs $( \Omega^{i},q^{i} )$ of the averaged Gaussian PDFs.

\subsection{Consensus algorithms}
\label{ssec:consensus}
In recent years, \textit{consensus algorithms} have emerged as a powerful tool for distributed computation over networks \cite{Olfati,Xiao} and have been widely used in distributed parameter/state estimation algorithms \cite{OlfatiInnovation,Tomlin,Sayed,Chiuso,Calafiore,Stankovic,batchi2011,Farina,li-jia,clike1,ccphd,cp,clike2,cpcl,batchifan2014,fanbatchi2012,batchifan2013eusipco,batchifan2013irs,batchifan2013,ccjpda}. In its basic form, a consensus algorithm can be seen as a technique for distributed averaging over a network; each agent aims to compute the collective average of a given quantity by iterative regional averages, where the terms ``\textit{collective}'' and ``\textit{regional}'' mean ``\textit{over all network nodes}'' and, respectively, ``\textit{over neighboring nodes only}''. 
Consensus can be exploited to develop scalable and reliable distributed fusion techniques.

To this end, let us briefly introduce a prototypal average consensus problem.
Let node $i \in \mathcal{N}$ be provided with an estimate ${\hat{\theta}}^{i}$ of a given quantity of interest ${\theta}$.
The objective is to develop an algorithm that computes in a distributed way, in each node, the average
\be
	\bar{\theta} = \dfrac{1}{\left| \ncal \right|} \displaystyle{\sum_{i \in \ncal}} \hat{\theta}^{i} \, .
\label{eq:distavg}
\ee
To this end, let ${\hat{\theta}}^{i}_{0}  = {\hat{\theta}}^{i}$, then a simple consensus algorithm takes the following iterative form:
\be
	\hat{\theta}^{i}_{l+1} = \displaystyle{\sum_{j \in \ncal^{i}}} \omega^{i, j} \, \hat{\theta}^{j}_{l} \, , \qquad \forall i \in \ncal
\label{eq:consensus}
\ee
where the \textit{consensus weights} must satisfy the conditions
\bie
	\displaystyle{\sum_{j \in \mathcal{N}^{i}}} \omega^{i,j} & = & 1 \, , \qquad \forall i \in \mathcal{N}\\
	\omega^{i,j} & \geq & 0 \, , \qquad \forall i,j \in \mathcal{N} \, .
\label{eq:consensusconstraints}
\eie
Notice from (\ref{eq:consensus})-(\ref{eq:consensusconstraints}) that at a given consensus step the estimate in any node is computed as a convex combination of the estimates of the neighbors at the previous consensus step. In other words, the iteration (\ref{eq:consensus}) is simply a regional average computed in node $i$, the objective of consensus being convergence of such regional averages to the collective average (\ref{eq:distavg}). Important convergence properties, depending on the consensus weights, can be found in \cite{Olfati,Xiao}. For instance, let us denote by $\Pi$ the consensus matrix whose generic $(i,j)$-element coincides with the consensus weight $\omega^{i,j}$ (if $j \notin \mathcal N^{i}$ then $\omega^{i,j}$ is taken as $0$).
Then, if the consensus matrix $\Pi$ is primitive and doubly stochastic\footnote{A non-negative square matrix $\Pi$ is doubly stochastic if all its rows and columns sum up to $1$. Further, it is primitive if there exists an integer $m$ such that all the elements of $\Pi^m$ are strictly positive.}, the consensus algorithm (\ref{eq:consensuspdf}) asymptotically yields the average (\ref{eq:distavg}) in that
\be
 	\lim_{l \rightarrow \infty} \hat{\theta}^{i} = \bar{\theta} \, , \qquad \forall i \in \ncal \, .
\ee

A necessary condition for the matrix $\Pi$ to be primitive is that the graph $\mathcal{G}$ associated with the sensor network be strongly connected \cite{Calafiore}. Moreover, in the case of an undirected graph $\mathcal{G}$, a possible choice ensuring convergence to the collective average is given by the so-called \textit{Metropolis weights} \cite{Xiao,Calafiore}.
\bie
	\omega^{i, j} & = & \frac{1}{1 + \operatorname{max}\!\left\{ \left| \ncal^{i} \right| , \left| \ncal^{j} \right| \right\}} \, , \qquad i \in \ncal \, , j \in \ncal^{i} \, , i \ne j \\
	\omega^{i, i} & = & 1 - \sum_{j \in \ncal^{i} , \, j \ne i} \omega^{i,j} \, .
\eie

\subsection{Consensus on posteriors}
\label{ssec:cpdfs}
The idea is to exploit consensus to reach a \textit{collective agreement} (over the entire network), by having each node iteratively updating and passing its local information to neighbouring nodes. Such repeated local operations provide a mechanism for propagating information throughout the whole network.
The idea is to perform, at each time instant and in each node of the network $i \in \ncal$, a local correction step to find the local posterior PDF followed by consensus on such posterior PDFs to determine the collective unweighted KLA of the posterior densities $p_{k}^{i}$.

Suppose that at time $k$, each agent $i$ starts with the posterior $p_{k}^{i}$ as the initial iterate $p_{k,0}^{i}$, and computes the $l$-th consensus iterate by 
\begin{equation}
	p_{k, l}^{i}=\displaystyle{\bigoplus_{j\in \mathcal{N}^{i}}}\,\left( \omega^{i,j}\odot p_{k, l-1}^{j}\right)
	\label{eq:consensuspdf}
\end{equation}
where $\omega ^{i,j}\geq 0$, satisfying $\sum_{j\in \mathcal{N}^{i}}~\omega^{i,j}=1$, are the consensus weights relating agent $i$ to nodes $j\in \mathcal{N}^{i}$. Then, using the properties of the operators $\oplus $ and $\odot $, it can be shown that \cite{cp}
\begin{equation}
	p_{k,l}^{i}=\displaystyle{\bigoplus_{j\in \mathcal{N}}}\,\left( \omega_{l}^{i,j}\odot p_{k}^{j}\right)
\end{equation}
where $\omega _{l}^{i,j}$ is the $(i,j)$-th entry of $\Pi ^{n}$, with $\Pi$ the (square) consensus matrix with $(i,j)$-th entry $\omega ^{i,j}1_{\mathcal{N}^{i}}(j)$ (it is understood that $p_{k}^{j}$ is omitted from the fusion whenever $\omega _{l}^{i,j}=0$). More importantly, it was shown in \cite{Olfati, Xiao} that if the consensus matrix $\Pi $ is primitive, (i.e. non-negative, and there exists an integer $m$ such that $\Pi ^{m}$ is positive) and doubly stochastic (all rows and columns sum to 1), then for any $i,j\in \mathcal{N}$, one has
\be
	\lim_{l \rightarrow \infty} \omega_{l}^{i, j} = \dfrac{1}{\left| \ncal \right|}.
\label{eq:consweightlimit}
\ee
In other words, at time $k$, if the consensus matrix is primitive then the consensus iterate of each node in the network ``tends'' to the collective unweighted KLA (\ref{eq:uwgeomean}) of the posterior densities \cite{Calafiore,cp}.

%
The \textit{Consensus on Posteriors} (CP) approach to \textit{Distributed SOF} (DSOF) is summarized by the algorithm of Table \ref{alg:cp} to be carried out at each sampling interval $k$ in each node $i \in \mathcal{N}$.
In the linear Gaussian case, the CP algorithm involves a Kalman filter for local prediction and correction steps and CI for the consensus steps.
In this case, it has been been proved \cite{batchi2011,cp} that, under suitable assumptions, the CP algorithm guarantees mean-squared bounded estimation error in all network nodes for any number $L \geq 1$ of consensus iterations.
In the nonlinear and/or non Gaussian case, the PDFs in the various steps can be approximated as Gaussian and the corresponding means-covariances updated via, e.g., EKF \cite{EKF} or UKF \cite{juluhl2004}.
\begin{rem}
Whenever the number of consensus steps $L$ tends to infinity, consensus on posteriors is unable to recover the solution of the Bayesian filtering problem.
In fact, in the CP approach, the novel information undergoing consensus combined with the prior information, is unavoidably underweighted.
\end{rem}
\begin{table}[!h]
\caption{Consensus on Posteriors (CP)}
\label{alg:cp}
\centering
\hrulefill\hrule
\begin{algorithmic}[0]
	\Procedure{CP}{\textsc{Node} $i$, \textsc{Time} $k$}
		\State \bfbox{Prediction}
		\State $\np[k|k-1]{i}{x_{k}} = \displaystyle \int \varphi_{k|k-1}\!\left( x_{k} | x_{k-1} \right) \np[k-1]{i}{x_{k-1}} d x_{k-1}$\vspace{0.5em}
		\State \bfbox{Local Correction}
		\State $\np[k]{i}{x_{k} | y^{i}_{k}} = \left\{ \ba{ll}
								\left( g_{k}^{i}\!\left( y^{i}_{k} | \cdot \right) \oplus \np[k|k-1]{i}{\cdot} \right) \left( x \right) \, , & i \in \mathcal{S}\\
								\np[k|k-1]{i}{x_{k}} \, , & i \in \mathcal{C}
							\ea \right.$\vspace{0.5em}
		\State \bfbox{Consensus}
		\State $\np[k,0]{i}{x_{k}} = \np[k]{i}{x_{k} | y^{i}_{k}}$
		\For{$l = 1, \dots, L$}
			\State $\np[k, l]{i}{x_{k}} = \displaystyle{\bigoplus_{j \in \ncal^{i}}} \left( \omega^{i,j} \odot \np[k,l-1]{j}{x} \right)$
		\EndFor
		\State $\np[k]{i}{x_{k}} = \np[k, L]{i}{x_{k}}$
	\EndProcedure
\end{algorithmic}
\hrule\hrule
\end{table}


\chapter{Distributed single-object filtering}
\label{chap:dsof}
\spminitoc
This chapter presents two applications of distributed information fusion for single-object filtering \cite{cpcl,fanbatchi2012,batchifan2014}.
The first application presents an improvement of the CP approach, described in subsection \ref{ssec:cpdfs}, and is applied to track a single non-maneuvering object.
The second application takes into account the possibility of the object of being highly maneuvering.
Thus, a multiple-model filtering approach is used to devise a consensus-based algorithm capable of tracking a single highly-maneuvering object.
The effectiveness of the proposed algorithms is demonstrated via simulation experiments on realistic scenarios.

\section{Consensus-based distributed filtering}
The approach proposed in this section is based on the idea of carrying out, in parallel, a separate consensus for the novel information (likelihoods) and one for the prior information (priors).
This parallel procedure is conceived as an improvement of the CP approach to avoid underweighting the novel information during the fusion steps.
The outcomes of the two consensuses are then combined to provide the fused posterior density.

\subsection{Parallel consensus on likelihoods and priors}
The proposed parallel \textit{Consensus on Likelihoods and Priors} (CLCP) approach to DSOF is summarized by the algorithm of Table \ref{CLCP-Algorithm} to be carried out at each sampling interval $k$ in each node $i \in \mathcal{N}$.
\begin{table}[h!]
\caption{Consensus on Likelihoods and Priors (CLCP) pseudo-code}
\label{CLCP-Algorithm}
\centering
\hrulefill\hrule
\begin{algorithmic}[0]
	\Procedure{CLCP}{\textsc{Node} $i$, \textsc{Time} $k$}
		\State \bfbox{Prediction}
		\State $\np[k|k-1]{i}{x} = \displaystyle \int \varphi_{k|k-1}\!\left( x | \zeta \right) \np[k-1]{i}{x} d \zeta$\vspace{0.5em}
		\State \bfbox{Consensus}
		\State $g_{k, 0}^{i}\!\left( x \right) = \left\{ \ba{ll}
											g_{k}^{i}\!\left( y^{i}_{k} | x \right) \, , & i \in \mathcal{S} \\
											1 \, , & i \in \mathcal{C}
									\ea \right.$
 		\State $\np[k|k-1, 0]{i}{x} = \np[k|k-1]{i}{x}$\vspace{0.5em}
		\For{${l}=1, \dots, L$ }
			\State $g_{k, {l}}^{i}\!\left( x \right) = \displaystyle{\bigoplus_{j \in \mathcal{N}^{i}}} \left( \omega^{i,j} \odot g_{k,{l}-1}^{j}\!\left( x \right)  \right)$
			\State $\np[k|k-1, {l}]{i}{x} = \displaystyle{\bigoplus_{j \in \mathcal{N}^{i}}} \left( \omega^{i,j} \odot \np[k|k-1, {l}-1]{j}{x}  \right)$
		\EndFor\vspace{0.5em}
		\State \bfbox{Correction}
		\State $\np[k]{i}{x} = \np[k|k-1, L]{i}{x} \oplus \left( \rho_{k}^{i} \odot g_{k, L}^{i}\left( x \right) \right)$
	\EndProcedure
\end{algorithmic}
\hrule\hrule
\end{table}
Notice that in the correction step, a suitable positive weight $\rho_{k}^{i}\ge1$ is applied to the outcome of the consensus on likelihoods, in order to possibly counteract the underweighting of novel information. To elaborate more on this issue, observe that each local posterior PDF $p_{k}^{i}(x)$ resulting from application of the CLCP algorithm
turns out to be equal to
\be
	p_{k}^{i}(x) = \left ( \bigoplus_{j \in \mathcal{N}} \left( \omega_{L}^{i,j} \odot p^{j}_{k|k-1}(x) \right) \right ) \oplus \left ( \bigoplus_{j \in \mathcal{S}} \left( \rho_{k}^{i} \omega_{L}^{i,j} \odot g_{k}^{i}\!\left( y^{i}_{k} | x \right) \right) \right ) \, .
\ee
As it can be seen,  $p_{k}^{i}(x)$ is composed of two parts: \textsc{i}) a weighted geometric mean of the priors $p^{j}_{k|k-1}(x)$; \textsc{ii}) a weighted combination of the likelihoods $g_{k}^{i}\!\left( y^{i}_{k} | x \right)$. The fact that in the first part the weights
$\omega_{L}^{i,j} $ sum up to one ensures that no double counting of the common information contained in the priors can occur \cite{julier2008,BaJuAg}. 
As for the second part,
care must be taken in the choice of the scalar weights $\rho_{k}^{i}$. For instance, in order to avoid overweighting some of the likelihoods, it is important that, for any pair $i,j$, one has $\rho_{k}^{i} \omega_{L}^{i,j} \le 1 $. In this way, each component of independent information is replaced with a conservative approximation, thus ensuring the conservativeness of the overall fusion rule \cite{BaJuAg}.
On the other hand, the product $\rho_{k}^{i} \omega_{L}^{i,j}$ should not be too small in order to avoid excessive underweighting.
Based on these considerations and recalling the consensus property (\ref{eq:consweightlimit}), different strategies for choosing  $\rho_{k}^{i}$ can be devised.

A reasonable choice would amount to letting 
\be
	\rho_{k}^{i} = \displaystyle \min_{j \in \mathcal{S}} \left( \dfrac{1}{\omega_{L}^{i,j}} \right)
\ee
whenever at least one of the weights $\omega_{L}^{i,j}$ is different from zero. Such a choice would assign to $\rho_{k}^{i}$ the closest value to the true number of agents $\left| \ncal \right|$ of the network.
This strategy is easily applicable when $L=1$ but, unfortunately, for $L>1$ requires that each node of the network can compute $\Pi^L$.
In most settings, this is not possible and alternative solutions must be adopted.

For example, when $L\!\gg\!1$ and the matrix $\Pi$ is primitive and doubly stochastic, one has that $\omega_{L}^{i,j} \approx 1/|\mathcal N|$ for any $i,j$. Then, in this case, one can let
\be
	\rho_{k}^{i} = | \mathcal{N} | \, .
\ee
Such a choice has the appealing feature of giving rise to a distributed algorithm converging to the centralized one as $L$ tends to infinity. However, when only a moderate number of consensus steps is performed, it leads to an overweighting of some likelihood components. Furthermore, the number of nodes $| \mathcal{N} |$ might be unknown, in particular for a time-varying network. Notice that when such a choice is adopted,
due to the properties of the fusion $\oplus$ and weighting $\odot$ operators, the CLCP algorithm can be shown to be mathematically equivalent (in the sense
that they would generate the same $p_{k}^{i}(x)$) to the optimal distributed protocol of \cite{OrCo}.\\
\indent An alternative solution is to exploit consensus so as to compute, in a distributed way, a normalization factor 
to improve the filter performance while preserving consistency of each local filter.
For example, an estimate of the fraction $| \mathcal{S} | / | \mathcal{N} |$ of sensor nodes
in the network can be computed via the consensus algorithm\\
\be
	b_{k,{l}}^{i} = \displaystyle{\sum_{j \in \mathcal{N}^{i}}}~ \omega^{i,j} ~b_{k,{l}-1}^j, \qquad {l} = 1,\ldots,L
\label{eq:b}
\ee
with the initialization $b_{k,0}^{i} = 1$ if $i \in \mathcal S$, and $b_{k,0}^{i} = 0$ otherwise.
Then, the choice
\begin{equation}
 \rho^{i}_{k} = \left \{ \begin{array}{ll}
 					1 \, ,					& \mbox{ if } b_{k,L}^{i} = 0 \\
					\dfrac{1}{b_{k,L}^{i}} \, ,	& \mbox{ otherwise }
				\end{array}\right.
\label{eq:omega}
\end{equation}
has the desirable property of ensuring that $\rho_{k}^{i} \omega_{L}^{i,j} \le 1$ for any $i,j$. Another positive feature of (\ref{eq:omega}) is that no a priori knowledge of the network is required.

\subsection{Approximate CLCP}
It must be pointed out that unfortunately the CLCP recursion of Table \ref{CLCP-Algorithm} does not admit an exact analytical solution except for the
linear Gaussian case.
Henceforth it will be assumed that the noises are Gaussian, i.e. $w_{k} \sim \mathcal{N}\!\left( 0, Q_{k} \right)$
and $v^{i}_{k} \sim \mathcal{N}\!\left( 0, R_{k}^{i} \right)$, but that the system, i.e. $f_{k}(\cdot)$ and/or $h_{k}^{i}(\cdot)$, might be nonlinear.
The main issue is how to deal with nonlinear sensors in the consensus on likelihoods as for the other tasks (i.e. consensus on priors, correction, prediction) it is well known how to handle nonlinearities exploiting, e.g., EKF \cite{EKF} or UKF \cite{juluhl2004} or particle filters \cite{PF}.
Under the Gaussian assumption, the local likelihoods take the form
\be
	g_{k}^{i}(x) = g_{k}^{i}\!\left( y^{i}_{k} | x \right) = \mathcal{N}\!\left( y_{k}^{i} - h_{k}^{i}(x); 0, R_{k}^{i} \right)
\ee
Whenever sensor $i$ is linear, i.e. $h_{k}^{i}\left( x \right) = C_{k}^{i} x$,
\be
	g_{k}^{i}(x) \propto e^{- \frac{1}{2} \left( x^\top \delta \Omega_{k}^{i} x - 2 x^\top \delta q_{k}^{i} \right)}
\ee
where
\be
\ba{c}
 \delta \Omega_{k}^{i} \triangleq \left( C_{k}^{i} \right)^\top \left( R_{k}^{i} \right)^{-1} C_{k}^{i} \, ,\\
 \delta q_{k}^{i} \triangleq \left( C_{k}^{i} \right)^\top \left( R_{k}^{i} \right)^{-1} y_{k}^{i} \, ,
\ea
\label{CL}
\ee
Then it is clear that multiplying, or exponentiating by suitable weights, the likelihoods $g_{k}^{i}(\cdot)$ is equivalent to adding, or multiplying by such weights, the corresponding $\delta\Omega_{k}^{i}$ and $\delta q_{k}^{i}$ defined in (\ref{CL}).
Then, in the case of linear Gaussian sensors, the consensus on likelihoods reduces to a consensus on the information pairs defined in (\ref{CL}).
For nonlinear sensors, a sensible approach seems therefore to approximate the nonlinear measurement function $h_{k}^{i}(\cdot)$ by a linear affine one, i.e.
\be
 h_{k}^{i} \left( x \right) \cong C_{k}^{i} \left( x - \hat{x}_{k|k-1}^{i} \right) + h^{i}\!\left(\hat{x}_{k|k-1}\right)
\label{APP}
\ee
and then replace the measurement equation (\ref{eq:mmodel}) with $\overline{y}_{k}^{i} = C_{k}^{i} x_{k} + v_{k}^{i}$ for a suitably defined pseudo-measurement
\be
\overline{y}_{k}^{i} ~=~ y_{k}^{i} - \hat{y}_{k|k-1}^{i} + C_{k}^{i} \hat{x}_{k|k-1}^{i}
\label{PM}
\ee
The sensor linearization (\ref{APP}) can be carried out for example by following the EKF paradigm.
As an alternative, exploiting the unscented transform and the unscented information filter \cite{UIF}, from $\hat{x}_{k|k-1}^{i}$ and the relative covariance $P_{k|k-1}^{i}$, one can get the
$\sigma$-points $\hat{x}^{i,j}_{k|k-1}$ and relative weights $\omega_j$ for $j=0,1,\dots,2n$ ($n = \operatorname{dim}\!\left( x \right)$), and thus compute
\bie
	\hat{y}_{k|k-1}^{i} & = & \displaystyle{\sum_{j=0}^{2n}}~ \omega_j~ \hat{y}_{k|k-1}^{i,j},~ \hat{y}_{k|k-1}^{i,j} = h_{k}^{i} \left( \hat{x}_{k|k-1}^{i,j} \right) \\
	P^{yx,i}_{k|k-1} & = & \displaystyle{\sum_{j=0}^{2n}}~ \omega_j ~ \left( \hat{y}_{k|k-1}^{i} - \hat{y}_{k|k-1}^{i,j} \right) \left( \hat{x}_{k|k-1}^{i} - \hat{x}_{k|k-1}^{i,j} \right)^\top \\
	C_{k}^{i} & = & P_{k|k-1}^{yx,i}~ \left( P_{k|k-1}^{i} \right)^{-1}
\eie
Summarizing the above derivations, the proposed consensus-based distributed nonlinear filter is detailed in Table \ref{alg:aclcp}.

The reason for performing in parallel consensus on priors and likelihoods is the counteraction of underweighting of the novel information contained in the likelihoods by using the weighting factor $\rho_{k}^{i}$.
Clearly, such a choice requires transmitting, at each consensus step, approximately twice the number of floating-point data with respect to other algorithms like CP.
Notice that, depending on the packet size, this need not necessarily increase the packet rate.

\begin{table}[h!]
\caption{Analytical implementation of Consensus on Likelihoods and Priors (CLCP)}
\label{alg:aclcp}
\centering
\hrulefill\hrule
\begin{algorithmic}[0]
\Procedure{CLCP}{\textsc{Node} $i$, \textsc{Time} $t$}
	\State \bfbox{Consensus}
	\State If $i \in \mathcal{S}$: $\left\{ \ba{c}
								\delta \Omega_{k,0}^{i} = \left( C_{k}^{i} \right)^\top \left( R_{k}^{i} \right)^{-1} C_{k}^{i}\\
								\delta q_{k,0}^{i} = \left( C_{k}^{i} \right)^\top \left( R_{k}^{i} \right)^{-1} \overline{y}_{k}^{i}
							     \ea \right.$
	\State If $i \in \mathcal{C}$: $\left\{ \ba{c}
								\delta \Omega_{k,0}^{i} = 0\\
								\delta q_{k,0}^{i} = 0
								\ea \right.$
	\State $\Omega_{k|k-1,0}^{i} = \left( P_{k|k-1}^{i} \right)^{-1}$
	\State $q_{k|k-1,0}^{i} = \left( P_{k|k-1}^{i} \right)^{-1} \hat{x}_{k|k-1}^{i}$
	\For{${l}=1, \dots, L$}
		\State \bfbox{Likelihood}
		\State $\delta \Omega_{k,{l}}^{i} = \displaystyle{\sum_{j \in \mathcal{N}^{i}}} \omega^{i,j} \delta \Omega_{k,{l}-1}^j$
		\State $\delta q_{k,{l}}^{i} = \displaystyle{\sum_{j \in \mathcal{N}^{i}}} \omega^{i,j} \delta q_{k,{l}-1}^j$
		\State \bfbox{Prior}
		\State $\Omega_{k|k-1,{l}}^{i} = \displaystyle{\sum_{j \in \mathcal{N}^{i}}} \omega^{i,j} \Omega_{k|k-1,{l}-1}^j$
		\State $q_{k|k-1,{l}}^{i} =\displaystyle{\sum_{j \in \mathcal{N}^{i}}} \omega^{i,j} q_{k|k-1,{l}-1}^j$
	\EndFor\vspace{0.5em}
	\State \bfbox{Correction}
	\State $\Omega_{k}^{i} = \Omega_{k|k-1,L}^{i} ~+~ \rho_{k}^{i} ~\delta \Omega_{k,L}^{i}$ 
	\State $q_{k}^{i} = q_{k|k-1,L}^{i} ~+ ~\rho_{k}^{i} ~\delta q_{k,L}^{i}$\vspace{0.5em}
	\State \bfbox{Prediction}
	\State $P_{k}^{i} = \left( \Omega_{k}^{i} \right)^{-1}$, $\hat{x}_{k}^{i} = \left( \Omega_{k}^{i} \right)^{-1} q_{k}^{i}$
	\State \textbf{from} $\hat{x}_{k}^{i}, P_{k}^{i}$ \textbf{compute} $\hat{x}_{k+1|k}^{i}, P_{k+1|k}^{i}$ \textbf{via UKF}
\EndProcedure
\end{algorithmic}
\hrule\hrule
\end{table}

\subsection{A tracking case-study}
To evaluate performance of the proposed approach, the networked object tracking scenario of fig. \ref{fig:objectnetwork} has been simulated.
\begin{figure}[h!]
	\centering
	\includegraphics[width=0.85\textwidth]{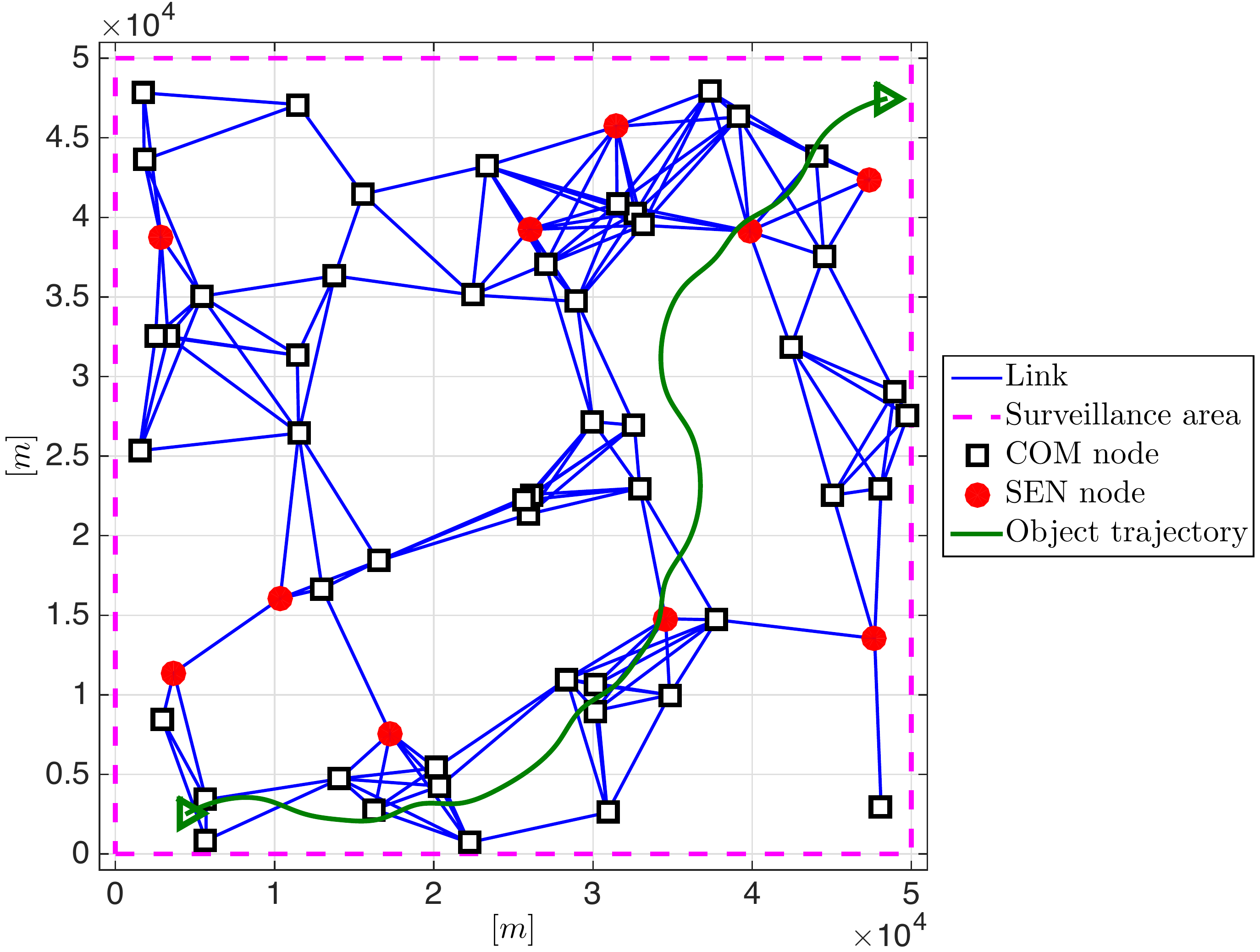}
	\caption{Network and object trajectory.}
	\label{fig:objectnetwork}
\end{figure}
The network consists of $50$ COM nodes and $10$ SEN measuring the \textit{Time Of Arrival} (TOA), i.e. the object-to-sensor distance.
A linear white-noise acceleration model \cite[p. 270]{bar-shalom}, with $4$-dimensional state consisting of positions and velocities along the coordinate axes, is assumed for the object motion; with this choice of the state, sensors are clearly nonlinear.
Standard deviation of the measurement noise has been set to $100 \, [m]$, the sampling interval to $5 \, [s]$, the duration of the simulation to $1500 \, [s]$, i.e. $300$ samples.
In the above scenario, $1000$ Monte Carlo trials with independent measurement noise realizations have been performed. The \textit{Position Root Mean Square Error} (PRMSE) averaged over time, Monte Carlo trials and nodes is reported in Table \ref{tab:results} for different values of $L$ (number of consensus steps) and different consensus-based distributed nonlinear filters.
In all simulations, Metropolis weights  \cite{Calafiore,Xiao}  have been adopted for consensus.
For each filter and each $L$ two values are reported: the top one refers to the PRMSE computed over whole the simulation horizon, whereas the bottom one refers to the PRMSE computed after the transient period.
Three approaches are compared in the UKF variants: the CLCP, CP (Consensus on Posteriors) \cite{batchi2011, cp} and CL (Consensus on Likelihoods) wherein consensus is carried out on the novel information only\footnote{In order to have a fair comparison, the CL algorithm is implemented as in Table \ref{alg:aclcp} with the difference that no consensus on the priors is performed.}. For the CLCP, three different choices for the weights $\rho^{i}_{k}$ are considered. From Table \ref{tab:results}, it can be seen that the weight choice $\rho^{i}_{k} = | \mathcal N|$ provides the best performance only when $L$ is sufficiently large, while for few consensus steps the other choices are preferable. It is worth pointing out that the CL approach, which becomes optimal as $L \rightarrow \infty$, guarantees a bounded estimation error only when the number $L$ of consensus steps is sufficiently high (in Table \ref{tab:results} ``$-$'' indicates a divergent PRMSE).
\begin{table}[h!]
	\setlength\arrayrulewidth{0.5pt}\arrayrulecolor{black} 
	\setlength\doublerulesep{0.5pt}\doublerulesepcolor{black} 
	\caption{Performance comparison}
	\label{tab:results}
	\centering
	\scalebox{1}{
	\begin{tabular}{c|c||c||c|c|c|}
		\multicolumn{6}{c}{~}\\
		\cline{2-6} &\multicolumn{1}{>{\columncolor[gray]{.95}}c||}{CP} &  \multicolumn{1}{>{\columncolor[gray]{.95}}c||}{CL} & \multicolumn{3}{>{\columncolor[gray]{.95}}c|}{CLCP}\\
		\hline
		Choice of $\rho^{i}_{k}$ &  & $\left| \mathcal{N} \right|$ & (\ref{eq:omega}) & $\left| \mathcal{N} \right|$ & $\min_{j \in \mathcal{S}} (1/\omega_{L}^{i,j})$ \\
		\hline
		\multicolumn{1}{>{\columncolor[gray]{.95}}c|}{} & $1733$ & $-$ & $1909$ & $-$ & $1645$ \\ \cline{2-6}
		\multicolumn{1}{>{\columncolor[gray]{.95}}c|}{\multirow{-2}{*}{\textit{$L = 1$}}} & $513$ & $-$ & $407$ & $-$ & $332$ \\
		\hline
		\multicolumn{1}{>{\columncolor[gray]{.95}}c|}{} & $1299$ & $-$ & $1182$ & $1621$ & $821$ \\ \cline{2-6}
		\multicolumn{1}{>{\columncolor[gray]{.95}}c|}{\multirow{-2}{*}{\textit{$L = 2$}}} & $423$ & $-$ & $257$ & $545$ & $226$ \\
		\hline
		\multicolumn{1}{>{\columncolor[gray]{.95}}c|}{} & $1041$ & $-$ & $491$ & $736$ & $480$ \\ \cline{2-6}
		\multicolumn{1}{>{\columncolor[gray]{.95}}c|}{\multirow{-2}{*}{\textit{$L = 3$}}} & $389$ & $-$ & $231$ & $129$ & $198$ \\
		\hline
		\multicolumn{1}{>{\columncolor[gray]{.95}}c|}{} & $936$ & $4269$ & $391$ & $455$ & $355$ \\ \cline{2-6}
		\multicolumn{1}{>{\columncolor[gray]{.95}}c|}{\multirow{-2}{*}{\textit{$L = 4$}}} & $370$ & $3238$ & $219$ & $118$ & $182$ \\
		\hline
		\multicolumn{1}{>{\columncolor[gray]{.95}}c|}{} & $868$ & $977$ & $341$ & $283$ & $299$ \\ \cline{2-6}
		\multicolumn{1}{>{\columncolor[gray]{.95}}c|}{\multirow{-2}{*}{\textit{$L = 5$}}} & $358$ & $134$ & $210$ & $111$ & $170$ \\
		\hline
		\hline
		\multicolumn{1}{>{\columncolor[gray]{.95}}c|}{} & $681$ & $222$ & $230$ & $138$ & $173$ \\ \cline{2-6}
		\multicolumn{1}{>{\columncolor[gray]{.95}}c|}{\multirow{-2}{*}{\textit{$L = 15$}}} & $318$ & $94$ & $182$ & $85$ & $122$ \\
		\hline
	\end{tabular} }
\end{table}

\section{Consensus-based distributed multiple-model filtering}
The section addresses distributed state estimation of jump Markovian systems and its application to tracking of a maneuvering object by means of a network of heterogeneous sensors and communication nodes.
It is well known that a single-model Kalman-like filter is ineffective for tracking a highly maneuvering object and that \textit{Multiple Model} (MM) filters \cite{ackfu1970,blobar1988,bar-shalom,liji2005} are by far superior for this purpose.
The contributions provide novel consensus MM filters to be used for tracking maneuvering objects with sensor networks.
Two novel consensus-based MM filters are presented. Simulation experiments in a tracking case-study, involving a strongly maneuvering object and a sensor network characterized by weak connectivity, demonstrate the superiority of the proposed filters with respect to existing solutions.

\subsection{Notation}
The following new notations are adopted throughout the next sections. 
$\mathcal{P}_{c}$ and $\mathcal{P}_{d}$ denote the sets of PDFs over the \textit{continuous} state space $\rbb^{n}$ and, respectively, of PMFs over the \textit{discrete} state space $\mathcal{R}$, i.e.
\bie
	\mathcal{P}_{c} & = & \left\{ p\!\left( \cdot \right) : \rbb^{n} \rightarrow \rbb ~\left|~ \displaystyle{\int_{\rbb^{n}}} p\!\left( x \right) dx = 1 \mbox{ and } p\!\left( x \right) \geq 0, \forall x \in \rbb^{n} \right. \right\} \, ,\\
	\mathcal{P}_{d} & = & \left\{ \mu = \operatorname{col}\!\left( \mu^{j} \right)_{j \in \mathcal{R}} \in \rbb^{|\mathcal{R}|} ~\left|~\displaystyle{\sum_{j \in \mathcal{R}}} \mu^{j} = 1 \mbox{ and } \mu^{j} \geq 0, \forall j \in \mathcal{R} \right. \right\}.
\eie

\subsection{Bayesian multiple-model filtering}
Let us now focus on state estimation for the \textit{jump Markovian system}
\bie
x_{k} & = & f\!\left( m_{k}, x_{k-1} \right) + w_{k-1} \label{eq:NLDJS} \\
y_{k} & = & h\!\left( m_{k}, x_{k} \right) + v_{k} \label{eq:NLDJS2}
\eie
where $m_{k} \in \mathcal{R} \triangleq \left\{ 1,2,\dots,r \right\}$ denotes the system \textit{mode} or discrete state; $w_{k}$ is the process noise with PDF $p_{w}\!\left( m_{k}, \cdot \right)$; $v_{k}$ is the measurement noise, independent of $w_{k}$, with PDF $p_{v}\!\left( m_{k}, \cdot \right)$.
It is assumed that the system can operate in $r$ possible modes, each mode $j \in \mathcal{R}$ being characterized by a mode-matched model with state-transition function $f\!\left( j, \cdot \right)$ and process noise PDF $p_{w}\!\left( j, \cdot \right)$, measurement function $h\!\left( j, \cdot \right)$ and measurement noise PDF $p_{v}\!\left( j, \cdot \right)$.
Further, mode transitions are modelled by means of a homogeneous Markov chain with suitable transition probabilities
\be
p_{jt} \triangleq \operatorname{prob}\!\left( m_{k}=j | m_{k-1} = t \right),~~~j, t \in \mathcal{R}
\label{eq:Markov}
\ee
whose possible time-dependence is omitted for notational simplicity.
It is well known that a jump Markovian system (\ref{eq:NLDJS})-(\ref{eq:Markov}) can effectively model the motion of a maneuvering object
\cite[section 11.6]{ackfu1970,bar-shalom}. For instance a \textit{Nearly-Constant Velocity} (NCV) model can be used to describe straight-line object motion while
\textit{Coordinated Turn} (CT) models with different angular speeds can describe object maneuvers.
In alternative, models with different process noise covariances (small for straight-line motion and larger for object maneuvers) can be used with the same kinematic transition function $f(\cdot)$. 
For multimodal systems, the classical single-model filtering approach is clearly inadequate.
To achieve better state estimation performance, the MM filtering approach \cite[section 11.6]{bar-shalom} can be adopted. 
MM filters can provide, in principle, the Bayes-optimal solution for the jump Markov system state estimation problem by running in parallel $r$ mode-matched Bayesian filters (one for each mode) on the same input measurements. 
Each Bayesian filter provides, at each time $k$, the conditional PDFs $p_{k|k-1}^{j}(\cdot)$ and $p_{k}^{j}(\cdot)$ of the state vector $x_{k}$ given the observations $y^{k} \triangleq \left\{ y_{1}, \dots, y_{k} \right\}$ and the mode hypothesis $m_{k} = j$.
Further, for each hypothesis $m_{k} = j$, the conditional modal probability $\mu_{k|\tau}^{j} \triangleq \operatorname{prob}\!\left( m_{k} = j | y^{\tau} \right)$ is also updated.
In summary, Bayesian MM filtering amounts to representing and propagating in time information of the state vector of the jump Markovian system in terms of the $r$ mode-matched PDFs $p^{j}_{\tau}\!\left( \cdot \right), p^{j}_{k|\tau}\!\left( \cdot \right) \in \mathcal{P}_{c}$, $j \in \mathcal{R}$, along with the PMF $\mu_{k|\tau} = \operatorname{col}\!\left( \mu_{k|\tau}^{j} \right)_{j \in \mathcal{R}} \in \mathcal{P}_{d}$.
Then, from such distributions the overall state conditional PDF is obtained by means of the PDF mixture
\bie
	p_{k|k-1}\!\left( \cdot \right) & = & \displaystyle{\sum_{j \in \mathcal{R}}} \, \mu_{k|k-1}^{j} \, p^{j}_{k|k-1}\!\left( \cdot \right) \, = \, \mu_{k|k-1}^{\top} \, p_{k|k-1}\!\left( \cdot \right) \, ,\\
	p_{k}\!\left( \cdot \right) & = & \displaystyle{\sum_{j \in \mathcal{R}}} \, \mu_{k|k}^{j} \, p^{j}_{k}\!\left( \cdot \right) \, = \, \mu_{k|k}^{\top} \, p_{k}\!\left( \cdot \right) \, ,
\eie
where $p_{k|k-1}\!\left( \cdot \right) \triangleq \operatorname{col}\!\left( p_{k|k-1}^{j}\!\left( \cdot \right) \right)_{j \in \mathcal{R}}$ and $p_{k}\!\left( \cdot \right) \triangleq \operatorname{col}\!\left( p_{k}^{j}\!\left( \cdot \right) \right)_{j \in \mathcal{R}}$.\newline
The Bayesian MM filter is summarized in Table \ref{CBMMF}.

\begin{table}[H!]
\caption{Centralized Bayesian Multiple-Model (CBMM) filter pseudo-code}
\label{CBMMF}
\hrulefill\hrule
\begin{algorithmic}[0]
	\Procedure{CBMM(node $i$, time $k$)}{}
		\State \bfbox{Prediction}
		\For{mode $j \in \mathcal{R}$}
			\State $p^{j}_{k|k-1}\!\left( x \right) = \displaystyle{\int} \, p_{w}\!\left( j, x - f\!\left( j, \zeta \right) \right) \, p_{k-1}^{j}\!\left( \zeta \right) \,  d\zeta$
			\State $\mu^{j}_{k|k-1} = \displaystyle{\sum_{t \in \mathcal{R}}} \,  p_{jt}  \, \mu^{t}_{k}$
		\EndFor\vspace{0.5em}
		\State \bfbox{Correction}
		\For{mode $j \in \mathcal{R}$}
			\State $ p_{k}^j(x) = \dfrac{p_{v} \left( j, y_{k} -h \left(j,x \right) \right) \, p_{k|k-1}(x)}{\displaystyle{\int} p_{v} \left( j, y_{k} - h (j, \zeta) \right) \, p_{k|k-1}(\zeta) \, d\zeta}$
			\State ${g}_{k}^{j} = p\!\left( y_{k} | y^{k-1}, m_{k} = j \right)$
			\State $\mu_{k}^{j} = \dfrac{g_{k}^{j} \, \mu_{k|k-1}^{j}}{\displaystyle\sum_{t \in \mathcal{R}}\, g_{k}^{t} \, \mu_{k|k-1}^{t}}$
		\EndFor\vspace{0.5em}
		\State \bfbox{Mode fusion}
		\State $p_{k}(x) = \displaystyle{\sum_{j \in \mathcal{R}}} \,  \mu_{k}^j  \, p_{k}^j(x)$\vspace{0.5em}
		\State \bfbox{Mixing}
		\State $\forall j, t \in \mathcal{R}: \, \mu_{k}^{t|j} = \dfrac{p_{jt} \, \mu_{k}^{t}}{\displaystyle \sum_{\imath \in \mathcal{R}} \, p_{j\imath} \, \mu_{k}^{\imath}}$
		\State $\forall j \in \mathcal{R}: \, \overline{p}_{k}^{j}\!\left( x \right) = \displaystyle{\sum_{t \in \mathcal{R}} \mu_{k}^{t|j}} \, p_{k}^{t}\!\left( x \right)$\vspace{0.5em}
		\State \bfbox{Re-initialization}
		\State $\forall j \in \mathcal{R}: \, p^{j}_{k}\!\left( x \right) = \overline{p}^{j}_{k}\!\left( x \right)$
	\EndProcedure
\end{algorithmic}
\hrule\hrule
\end{table}

It is worth to highlight, however, that the filter in Table \ref{CBMMF} is not practically implementable even in the simplest linear Gaussian case, i.e. when
\textsc{i}) the functions $f\!\left( j,\cdot \right)$ and $h\!\left( j,\cdot \right)$ are linear for all $j \in \mathcal{R}$;
and \textsc{ii}) the PDFs $p_{0}^{j}\!\left( \cdot \right)$, $p_{w}\!\left( j,\cdot \right)$ and $p_{v}\!\left( j,\cdot \right)$ are Gaussian for all $j \in \mathcal{R}$.
In fact, both the mode fusion and the mixing steps (see Table \ref{CBMMF}) produce, just at the first time instant $k = 1$, a Gaussian mixture of $r$ components.  
When time advances, the number of Gaussian components grows exponentially with time $k$ as $r^{k}$.
To avoid this exponential growth, the two most commonly used MM filter algorithms, i.e. \textit{First Order Generalized Pseudo-Bayesian} (GPB$_{1}$) and \textit{Interacting Multiple Model} (IMM), adopt a different strategy to keep all mode-matched PDFs Gaussian throughout the recursions. 
In particular, the GPB$_1$ algorithm (see Table \ref{CGPB1}) performs at each time $k$, before the prediction step, a re-initialization of all mode-matched PDFs with a single Gaussian component having the same mean and covariance of the fused Gaussian mixture.
Conversely, the IMM algorithm (see Table \ref{CIMM}) carries out a mixing procedure to suitably re-initialize each mode-matched PDF with a different Gaussian component.
In the linear Gaussian case,  the correction and prediction steps of MM filters are carried out, in an exact way, by a bank of mode-matched Kalman filters. 
Further, the likelihoods $g_{k}^j$ needed to correct the modal probabilities are evaluated as
\be
	g_{k}^j = \dfrac{1}{\sqrt{\operatorname{det}\!\left( 2\omega S_{k}^{j}\right)}} \, e^{- \dfrac{1}{2} \left( e_{k}^j \right)^{\top} \left( S_{k}^{j} \right)^{-1} e_{k}^{j}} 
\ee
where $e_{k}^{j}$ and $S_{k}^{j}$ are the innovation and relative covariance provided by the KF matched to mode $j$.
Whenever the functions $f\!\left( j,\cdot \right)$ and/or $h\!\left( j,\cdot \right)$ are nonlinear, even correction and/or prediction destroy the Gaussian form of the mode-matched PDFs.
To preserve Gaussianity, as it is common practice in nonlinear filtering, the posterior PDF can be approximated as Gaussian by making use either of linearization of the system model around the current estimate \cite{EKF}, i.e. EKF, or of the unscented transform \cite{juluhl2004}, i.e. UKF.

\begin{table}[!h]
\caption{Centralized First-Order Generalized Pseudo-Bayesian (CGBP$_1$) filter pseudo-code}
\label{CGPB1}
\centering
\hrulefill\hrule
\begin{algorithmic}[0]
\Procedure{CGBP$_1$(node $i$, time $k$)}{}
	\State \bfbox{Prediction}
	\For{mode $j \in \mathcal{R}$}
		\State \textbf{from} $\hat{x}_{k-1}^{j}, P_{k-1}^{j}$ \textbf{compute} $\hat{x}_{k|k-1}^{j}, P_{k|k-1}^{j}$ \textbf{via EKF or UKF}
		\State $\mu^{j}_{k|k-1} = \displaystyle{\sum_{t \in \mathcal{R}}}\, p_{jt} \,\mu^{t}_{k}$
	\EndFor\vspace{0.5em}
	\State \bfbox{Correction}
	\For{mode $j \in \mathcal{R}$}
		\State \textbf{from} $\hat{x}_{k|k-1}^{j}, P_{k|k-1}^{j}$ \textbf{compute} $\hat{x}_{k}^{j}, P_{k}^{j}$, $e_{k}^{j}$,  $S_{k}^{j}$ \textbf{via EKF or UKF}
		\State $g_{k}^{j} =  \dfrac{1}{\sqrt{\operatorname{det}\!\left( 2\omega S_{k}^{j} \right)}} \, e^{-\frac{1}{2} \left( e_{k}^{j} \right)^\top \left( S_{k}^{j} \right)^{-1} e_{k}^{j}}$
		\State $\mu_{k}^{j} = \dfrac{g_{k}^{j} \, \mu_{k|k-1}^{j}}{\displaystyle \sum_{t \in \mathcal{R}}\, g_{k}^{t} \, \mu_{k|k-1}^{t}}$
	\EndFor\vspace{0.5em}
	\State \bfbox{Mode fusion}
	\State $\hat{x}_{k} = \displaystyle{\sum_{j \in \mathcal{R}}}\, \mu_{k}^{j} \,\hat{x}_{k}^{j}$
	\State $P_{k} = \displaystyle{\sum_{j \in \mathcal{R}}}\, \mu_{k}^{j} \,\left[ P_{k}^{j} + \left( \hat{x}_{k} - \hat{x}_{k}^{j} \right) \, \left(\hat{x}_{k} - \hat{x}_{k}^{j} \right)^\top \right]$
	\State $\forall j \in \mathcal{R} : \, \hat{x}_{k}^j = \hat{x}_{k}$, $P_{k}^j = P_{k}$
\EndProcedure
\end{algorithmic}
\hrule\hrule
\end{table}

\begin{table}[!h]
\caption{Centralized Interacting Multiple Model (CIMM) filter pseudo-code}
\label{CIMM}
\centering
\hrulefill\hrule
\begin{algorithmic}[0]
\Procedure{CIMM(node $i$, time $k$)}{}
	\State \bfbox{Prediction}
	\State As in the GPB$_1$ algorithm of Table \ref{CGPB1}\vspace{0.5em}
	\State \bfbox{Correction}
	\State As in the GPB$_1$ algorithm of Table \ref{CGPB1}\vspace{0.5em}
	\State \bfbox{Mode fusion}
	\State As in the GPB$_1$ algorithm of Table \ref{CGPB1}\vspace{0.5em}
	\State \bfbox{Mixing}
	\State $\forall j, t \in \mathcal{R} : \, \mu_{k}^{t|j} = \dfrac{p_{jt} \,\mu_{k}^{t}}{\displaystyle \sum_{\imath \in \mathcal{R}}\,p_{j\imath}\,\mu_{k}^{\imath}}$
	\For{mode $j \in \mathcal{R}$}
		\State $\overline{x}_{k}^j = \displaystyle{\sum_{t \in \mathcal{R}}}\, \mu^{t|j}_{k}\, \hat{x}_{k}^{t}$
		\State $\overline{P}_{k}^j = \displaystyle{\sum_{t \in \mathcal{R}}}\, \mu^{t|j}_{k}\, \left[ P_{k}^{t} + \left(\overline{x}_{k}^{j} - \hat{x}_{k}^{t} \right) \left(\overline{x}_{k}^{j} - \hat{x}_{k}^{t} \right)^{\top} \right]$
	\EndFor\vspace{0.5em}
	\State \bfbox{Re-initialization}
	\State $\forall j \in \mathcal{R} : \, \hat{x}_{k}^{j} = \overline{x}_{k}^{j}$, $P_{k}^{j} = \overline{P}_{k}^j$
\EndProcedure
\end{algorithmic}
\hrule\hrule
\end{table}
 
\subsection{Networked multiple model estimation via consensus}
A fundamental issue for networked estimation is to consistently fuse PDFs of the continuous quantity, or PMFs of the discrete quantity, to be estimated coming from different nodes.
A suitable information-theoretic fusion among continuous probability distributions  has been introduced in subsection \ref{ssec:klapdf}, which can be straightforwardly applied in this context. 

Let us now turn the attention to the case of discrete variables.
Given PMFs $\mu, \nu \in \mathcal{P}_d$ their KLD is defined as
\be
	D_{KL} \left( \mu \parallel \nu \right) \triangleq \displaystyle{\sum_{j \in \mathcal{R}}} \, \mu^j \, \operatorname{log}\!\left(\dfrac{\mu^j}{\nu^j}\right)
\ee  
According to (\ref{eq:kla}), the weighted KLA of the PMFs $\mu^{i} \in \mathcal{P}_c$, $i \in \mathcal{N}$, is defined as
\be
 \overline{\mu} \, = \, \arg \inf_{\mu \in \mathcal{P}_d} \, \displaystyle{\sum_{i \in \mathcal{N}}} \, \omega^{i} \,
 D_{KL} \left( \mu \parallel
 \mu^{i} \right).
\label{KLA-d}
\ee 
The following result holds.

\begin{thm}[KLA of PMFs]\label{thm:wkla}~\\
The weighted KLA in (\ref{KLA-d}) is given by
\be
 \overline{\mu}  = \operatorname{col}\!\left( \overline{\mu}^j \right)_{j\in \mathcal{R}} \, , \qquad
 \overline{\mu}^j = \dfrac{\displaystyle{\prod_{i \in \mathcal{N}}} \, \left( \mu^{i,j} \right)^{\omega^{i}}}{\displaystyle{\sum_{t \in \mathcal{R}} \, \displaystyle{\prod_{i \in \mathcal{N}}} \, \left( \mu^{i,t} \right)^{\omega^{i}}}} 
\label{thm:pmfkla}
\ee
\end{thm}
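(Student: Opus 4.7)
The plan is to recognize (\ref{KLA-d}) as a strictly convex optimization problem over the probability simplex $\mathcal{P}_d$ and solve it by the method of Lagrange multipliers, in complete analogy with the derivation of the continuous KLA identity (\ref{eq:geomean}) established in \cite{cp}.

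First I would expand the objective. Using $\sum_{i \in \mathcal{N}} \omega^{i} = 1$, the cost in (\ref{KLA-d}) rewrites as
\begin{equation*}
J(\mu) \,=\, \sum_{j \in \mathcal{R}} \mu^j \log \mu^j \;-\; \sum_{j \in \mathcal{R}} \mu^j \sum_{i \in \mathcal{N}} \omega^{i} \log \mu^{i,j},
\end{equation*}
i.e.\ an entropy-like term plus a linear term in $\mu$. Since $\mu \mapsto \mu \log \mu$ is strictly convex on $[0,\infty)$, $J$ is strictly convex on $\mathcal{P}_d$, so the infimum is attained at a unique point.

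Next I would introduce a single Lagrange multiplier $\lambda$ for the equality constraint $\sum_{j\in\mathcal{R}} \mu^j = 1$ and ignore the nonnegativity constraints, which will be automatically satisfied because $J(\mu) \to +\infty$ whenever some $\mu^j \to 0^+$ while $\mu^{i,j}>0$ for some $i$ (boundary components can only be optimal if all $\mu^{i,j}=0$, in which case they drop out of both sides of (\ref{thm:pmfkla}) trivially). Setting $\partial / \partial \mu^j$ of the Lagrangian to zero yields
\begin{equation*}
\log \mu^j + 1 - \sum_{i \in \mathcal{N}} \omega^{i} \log \mu^{i,j} + \lambda \,=\, 0,
\end{equation*}
so that $\mu^j \propto \prod_{i \in \mathcal{N}} (\mu^{i,j})^{\omega^{i}}$. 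Imposing $\sum_{j} \mu^j = 1$ to fix the proportionality constant gives exactly formula (\ref{thm:pmfkla}).

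The only subtle point—and the one I would treat most carefully—is the handling of components on the boundary of $\mathcal{P}_d$, since $D_{KL}$ requires the convention $0 \log 0 = 0$ and is $+\infty$ as soon as $\mu^j>0$ while some $\mu^{i,j}=0$ with $\omega^i > 0$. This is handled by restricting the optimization to the common support $\mathcal{R}_0 \triangleq \{ j \in \mathcal{R} : \mu^{i,j} > 0,\ \forall i \in \mathcal{N} \text{ with } \omega^i > 0 \}$, where the derivation above applies verbatim; outside $\mathcal{R}_0$ the numerator in (\ref{thm:pmfkla}) vanishes, which is consistent with $\overline{\mu}^j = 0$. Apart from this bookkeeping, the argument is a direct discrete counterpart of the continuous NWGM derivation already used in the thesis, and no new machinery is required. \qed
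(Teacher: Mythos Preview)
Your Lagrange-multiplier argument is correct and reaches the right formula, but the paper takes a different and cleaner route: it manipulates the cost algebraically into the form
\[
J(\mu) \,=\, D_{KL}\!\left( \mu \parallel \overline{\mu} \right) - \log c,
\]
where $\overline{\mu}$ is precisely the normalized weighted geometric mean in (\ref{thm:pmfkla}) and $c = \sum_{t \in \mathcal{R}} \prod_{i \in \mathcal{N}} (\mu^{i,t})^{\omega^{i}}$ is the normalization constant. The minimum is then immediate from nonnegativity of the KLD, with no calculus, no multiplier, and no boundary bookkeeping. Your approach is the standard convex-optimization derivation and is equally valid; the paper's rewriting is shorter and makes the structure more transparent (the constant $-\log c$ absorbs everything that does not depend on $\mu$, leaving a single KLD term whose minimizer is obvious).

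One small slip in your justification: the claim that $J(\mu) \to +\infty$ as some $\mu^j \to 0^+$ while $\mu^{i,j} > 0$ is false --- the contribution $\mu^j \log(\mu^j/\mu^{i,j})$ tends to $0$, not $+\infty$. The nonnegativity constraints are inactive not because the boundary repels, but simply because the unconstrained stationary point you compute already lies in the relative interior of the simplex, and strict convexity then makes it the unique global minimizer over $\mathcal{P}_d$. This does not break your argument, but the stated reason should be replaced by the convexity one.
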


Analogously to the case of continuous probability distributions, fusion and weighting operators can be introduced for PMFs $\mu, \nu \in
\mathcal{P}_d$ and $\omega > 0$, as follows:
\be
\ba{rclc}
\mu \oplus \nu = \eta = \operatorname{col}\!\left( \eta^j \right)_{j \in \mathcal{R}}  & \mbox{with} & \eta^j = \dfrac{\mu^j \nu^j}{\displaystyle{\sum_{t \in \mathcal{R}}} \, \mu^{t} \nu^{t}} \triangleq \mu^j \oplus \nu^j, & \forall j \in \mathcal{R}\\
\omega \odot \mu = \eta = \operatorname{col}\!\left( \eta^j \right)_{j \in \mathcal{R}}  & \mbox{with} & \eta^j = \dfrac{\left( \mu^j \right)^\omega}{\displaystyle{\sum_{t \in \mathcal{R}}} \left( \mu^{t} \right)^\omega} \triangleq \omega \odot \mu^j, & \forall j \in \mathcal{R}
\ea 
\label{ops-d}
\ee
Thanks to the properties \textsc{p.a}-\textsc{p.f} of the operators $\oplus$ and $\odot$ (see subsection \ref{ssec:difnotation}), the KLA PMF in (\ref{KLA-d}) can be expressed as
\be
\overline{\mu} ~=~ \displaystyle{\bigoplus_{i \in \mathcal{N}}} \, \left( \omega^{i} \odot \mu^{i} \right)
\label{coll-fusion-d}
\ee
The collective fusion of PMFs (\ref{coll-fusion-d}) can be computed, in a distributed and scalable fashion, via consensus iterations
\be
\mu_{{l}}^{i} ~=~ \displaystyle{\bigoplus_{j \in \mathcal{N}^{i}}} \, \left( \omega^{i,j} \odot \mu_{{l}-1}^{j} \right)
\label{consensus-d}
\ee
initialized from $\mu^{i}_0= \mu^{i}$ and with consensus weights satisfying $\omega^{i,j} \geq 0$ and $\sum_{j \in \mathcal{N}^{i}}~ \omega^{i,j} = 1$.

\subsection{Distributed multiple-model algorithms}
\label{distributed-mm-alg}

Let us now address the problem of interest, i.e. DSOF for the jump Markovian system (\ref{eq:NLDJS})-(\ref{eq:NLDJS2}) over a sensor network $\left( \mathcal{N}, \mathcal{A} \right)$ of the type modeled in section \ref{sec:net}.
In this setting, measurements in (\ref{eq:NLDJS2}) are provided by sensor nodes, i.e.
\bie
	y_{k} & = & \operatorname{col}\!\left( y_{k}^{i} \right)_{i \in \mathcal{S}} \, , \\
	h(\cdot,\cdot) & = & \operatorname{col}\!\left( h^{i}(\cdot,\cdot) \right)_{i \in \mathcal{S}} \, ,\\
	v_{k} & = & \operatorname{col}\!\left( v_{k}^{i} \right)_{i \in \mathcal{S}} \, ,
\eie
where the sensor measurement noise $v_{k}^{i}$, $i \in \mathcal{S}$, is characterized by the PDF $p_{v^{i}}(\cdot,\cdot)$.
The main difficulty, in the distributed context, arises from the possible lack of complete observability from an individual node (e.g. a communication node or a sensor node measuring only angle or range or Doppler-frequency shift of a moving object)  which makes impossible for such a node to reliably estimate the system mode as well as the continuous state on the sole grounds of local information.
To get around this problem, consensus on both the discrete state PMF and either the mode-matched or the fused continuous state PDFs can be
exploited in order to spread information throughout the network and thus guarantee observability in each node, provided that the network is  strongly connected (i.e., for any pair of nodes $i$ and $j$ there exists a path from $i$ to $j$ and viceversa) and the jump Markovian system is collectively observable (i.e. observable from the whole set $\mathcal{S}$ of sensors).

Let us assume that at sampling time $k$, before processing the new measurements $y_{k}$, each node $i \in \mathcal{N}$ be provided with the prior mode-matched PDFs $p_{k|k-1}^{i}(\cdot) = col \left( p^{i,j}_{k|k-1}(\cdot) \right)_{j \in \mathcal{R}}$ along with the mode PMF
$\mu^{i}_{k|k-1} = col \left( \mu^{i,j}_{k|k-1} \right)_{j \in \mathcal{R}}$, which are the outcomes of the previous local and consensus computations up to time $k-1$.
Then, sensor nodes $i \in \mathcal{S}$ correct both mode-matched PDFs and mode PMF with the current local measurement $y_{k}^{i}$ while
communication nodes leave them unchanged; the PDFs and PMF obtained in this way are called local posteriors.
At this point, consensus is needed in each node $i \in \mathcal{N}$ to exchange local posteriors with the neighbours and regionally average them over the subnetwork $\mathcal{N}^{i}$ of in-neighbours.  
The more consensus iterations are performed, the faster will be convergence of the regional KLA to the collective KLA at the price of higher
communication cost and, consequently, higher energy consumption and lower network lifetime.
Two possible consensus strategies can be applied: 
\begin{description}
	\item[Consensus on the Fused PDF]- first carry out consensus on the mode PMFs, then fuse the local mode-matched posterior PDFs over $\mathcal{R}$ using the modal probabilities resulting from consensus, and finally carry out consensus on the fused PDF; 
	\item[Consensus on Mode-Matched PDFs]- carry out in parallel consensus on the mode PMFs and on the mode-matched PDFs and then perform  the mode fusion using the modal probabilities and mode-matched PDFs resulting from consensus.
\end{description}
Notice that the first approach is cheaper in terms of communication as the nodes need to exchange just a single fused PDF, instead of multiple ($r=|\mathcal{R}|$) mode-matched PDFs. Further, it seems the most reasonable choice to be used in combination with the GPB$_1$ approach, as only the fused PDF is needed in the re-initialization step.
Conversely, the latter approach is mandatory for the distributed IMM filter. 
Recall, in fact, that the IMM filter re-initializes each mode-matched PDF with a suitable mixture of such PDFs, totally disregarding the fused PDF.
Hence, to spread information about the continuous state through the network,  it is necessary to apply consensus to all mode-matched PDFs.

Summing up, two novel distributed MM filters are proposed: the \textit{Distributed GPB$_1$} (DGPB$_1$) algorithm of Table \ref{DGPB1} which adopts the \textit{Consensus on the Fused PDF} approach to limit data communication costs, and the \textit{Distributed IMM} (DIMM) algorithm of Table \ref{DIMM} which, conversely, adopts the \textit{Consensus on Mode-Matched PDFs} approach.
Since both DGPB$_1$ and DIMM filters, like their centralized counterparts, propagate Gaussian PDFs completely characterized by either
the estimate-covariance pair $\left( \hat{x},P\right)$ or the information pair $\left(q = P^{-1} \hat{x}, \Omega =
P^{-1} \right)$, the consensus iteration is simply carried as a weighted arithmetic average of the
information pairs associated to such PDFs, see (\ref{eq:ci1})-(\ref{eq:ci2}).

\begin{table}[H!]
\caption{Distributed GPB$_1$ (DGPB$_1$) filter pseudo-code}
\label{DGPB1}
\centering
\hrulefill\hrule
\begin{algorithmic}[0]
\Procedure{DGPB$_{1}$(node $i$, time $k$)}{}
	\State \bfbox{Prediction}
	\For{mode $j \in \mathcal{R}$}
		\State \textbf{given} $\hat{x}_{k-1}^{i,j}$ \textbf{and} $P_{k-1}^{i,j}$ \textbf{compute} $\hat{x}_{k|k-1}^{i,j}$ \textbf{and} $P_{k|k-1}^{i,j}$ \textbf{via EKF or UKF}
		\State $\mu^{i,j}_{k|k-1} = \displaystyle{\sum_{t \in \mathcal{R}}}\, p_{jt} \,\mu^{i,t}_{k-1}$
	\EndFor\vspace{0.5em}
	\State \bfbox{Correction}
	\For{mode $j \in \mathcal{R}$}
		\If{$i \in \mathcal{S}$}
			\State \textbf{from} $\hat{x}_{k|k-1}^{i,j}, P_{k|k-1}^{i,j}$ \textbf{compute} $\hat{x}_{k}^{i,j}, P_{k}^{i,j}$, $e_{k}^{i,j}$ $S_{k}^{i,j}$ \textbf{via EKF or UKF}
			\State $g_{k}^{i,j} = \left[ \operatorname{det}\!\left(2 \omega S_{k}^{i,j} \right) \right]^{-\frac{1}{2}} \, e^{-\frac{1}{2} \left( e_{k}^{i,j} \right)^\top \left( S_{k}^{i,j} \right)^{-1} e_{k}^{i,j}}$
			\State $\mu_{k,0}^{i,j} = g_{k}^{i,j} \mu_{k|k-1}^{i,j}\left[ \sum_{t \in \mathcal{R}}\,g_{k}^{i,t} \mu_{k|k-1}^{i,t}\right]^{-1}$
		\ElsIf{$i \in \mathcal{C}$}
			\State $\hat{x}_{k}^{i,j} = \hat{x}_{k|k-1}^{i,j}$, $P_{k}^{i,j} = P_{k|k-1}^{i,j}$, $\mu_{k,0}^{i,j} = \mu_{k|k-1}^{i,j}$
		\EndIf
	\EndFor\vspace{0.5em}
	\State \bfbox{Consensus on modal probabilities}
	\For{mode $j \in \mathcal{R}$}
		\For{${l}=1, \dots, L$}
			\State $\mu^{i,j}_{k,{l}} = \displaystyle{\bigoplus_{\imath \in \mathcal{N}^{i}}} \,\left[ \omega^{i,\imath} \odot \mu_{k,{l}-1}^{\imath,j} \right]$
		\EndFor
		\State $\mu_{k}^{i,j} = \mu_{k,L}^{i,j}$
	\EndFor\vspace{0.5em}
        \State \bfbox{Mode fusion}
	\State $\hat{x}_{k,0}^{i} = \displaystyle{\sum_{j \in \mathcal{R}}}\, \mu_{k}^{i,j} \,\hat{x}_{k}^{i,j}$
	\State $P_{k,0}^{i} = \displaystyle{\sum_{j \in \mathcal{R}}}\, \mu_{k}^{i,j} \,\left[ P_{k}^{i,j} + \left( \hat{x}_{k,0}^{i} - \hat{x}_{k}^{i,j} \right) \, \left(\hat{x}_{k,0}^{i} - \hat{x}_{k}^{i,j} \right)^\top \right]$\vspace{0.5em}
        \State \bfbox{Consensus on the fused PDF}
	\State $\Omega_{k,0}^{i} = \left( P_{k,0}^{i} \right)^{-1}$, $q_{k,0}^{i} = \Omega_{k,0}^{i} \hat{x}_{k,0}^{i}$
	\For{${l}=1, \dots, L$}
		\State $\Omega^{i}_{k,{l}} = \displaystyle{\sum_{\imath \in \mathcal{N}^{i}}} \,\omega^{i,\imath}  \,\Omega_{k,{l}-1}^{\imath}$
		\State $q^{i}_{k,{l}} = \displaystyle{\sum_{\imath \in \mathcal{N}^{i}}} \,\omega^{i,\imath}  \,q_{k,{l}-1}^{\imath}$
	\EndFor\vspace{0.5em}
	\State \bfbox{Re-initialization}
	\State $\forall j \in \mathcal{R} \, : \, \hat{x}_{k}^{i,j} = \left( \Omega^{i}_{k,L} \right)^{-1} q_{k,L}^{i}$, $P_{k}^{i,j} = \left( \Omega^{i}_{k,L} \right)^{-1}$
\EndProcedure
\end{algorithmic}
\hrule\hrule
\end{table}

\begin{table}[H!]
\caption{Distributed IMM (DIMM) filter pseudo-code}
\label{DIMM}
\centering
\hrulefill\hrule
\begin{algorithmic}[0]
\Procedure{DIMM(node $i$, time $k$)}{}
	\State \bfbox{Prediction}
	\State As in the DGPB$_1$ algorithm of Table \ref{DGPB1}\vspace{0.5em}
	\State \bfbox{Correction}
	\For{mode $j \in \mathcal{R}$}
		\If{$i \in \mathcal{S}$}
			\State \textbf{from} $\hat{x}_{k|k-1}^{i,j}, P_{k|k-1}^{i,j}$ \textbf{compute} $\hat{x}_{k,0}^{i,j}, P_{k,0}^{i,j}$, $e_{k}^{i,j}$, $S_{k}^{i,j}$ \textbf{via EKF or UKF}
			\State $g_{k}^{i,j} = \left[ \operatorname{det}\!\left( 2 \omega S_{k}^{i,j} \right) \right]^{-\frac{1}{2}} \, e^{-\frac{1}{2} \left( e_{k}^{i,j} \right)^\top \left( S_{k}^{i,j} \right)^{-1} e_{k}^{i,j}}$
			\State $\mu_{k,0}^{i,j} = g_{k}^{i,j} \mu_{k|k-1}^{i,j}\left[\sum_{t \in \mathcal{R}}\,g_{k}^{i,t}\mu_{k|k-1}^{i,t}\right]^{-1}$
		\EndIf
		\If{$i \in \mathcal{C}$}
			\State $\hat{x}_{k,0}^{i,j} = \hat{x}_{k|k-1}^{i,j}$, $P_{k,0}^{i,j}= P_{k|k-1}^{i,j}$, $\mu_{k,0}^{i,j} = \mu_{k|k-1}^{i,j}$
		\EndIf
	\EndFor\vspace{0.5em}
	\State \bfbox{Parallel consensus on modal probabilities \& mode-matched PDFs}
	\For{mode $j \in \mathcal{R}$}
		\State $\Omega_{k,0}^{i,j} = \left( P_{k,0}^{i,j} \right)^{-1}$, $q_{k,0}^{i,j} = \Omega_{k,0}^{i,j} \, \hat{x}_{k,0}^{i,j}$
		\For{${l}=1, \dots, L$}
			\State $\mu^{i,j}_{k,{l}} = \displaystyle{\bigoplus_{\imath \in \mathcal{N}^{i}}} \,\left[ \omega^{i,\imath}  \odot \mu_{k,{l}-1}^{\imath,j} \right]$
			\State $\Omega^{i,j}_{k,{l}} = \displaystyle{\sum_{\imath \in \mathcal{N}^{i}}} \,\omega^{i,\imath}  \,\Omega_{k,{l}-1}^{\imath,j},\,q^{i,j}_{k,{l}}	 \, = \, \displaystyle{\sum_{\imath \in \mathcal{N}^{i}}} \,\omega^{i,\imath}  \,q_{k,{l}-1}^{\imath,j}$
		\EndFor
		\State $\mu_{k}^{i,j} = \mu_{k,L}^{i,j}$
	\EndFor\vspace{0.5em}
	\State \bfbox{Mode fusion}
	\State $\forall j \in \mathcal{R} \, : \, P_{k,L}^{i,j} = \left( \Omega_{k,L}^{i,j} \right)^{-1}$, $\hat{x}_{k,L}^{i,j} = P_{k,L}^{i,j} q_{k,L}^{i,j}$
	\State $\hat{x}_{k}^{i} = \displaystyle{\sum_{j \in \mathcal{R}}}\, \mu_{k}^{i,j} \,\hat{x}_{k,L}^{i,j}$
	\State $P_{k}^{i} = \displaystyle{\sum_{j \in \mathcal{R}}}\, \mu_{k}^{i,j} \,\left[ P_{k,L}^{i,j} + \left(\hat{x}_{k}^{i} - \hat{x}_{k,L}^{i,j} \right) \, \left(\hat{x}_{k}^{i} - \hat{x}_{k,L}^{i,j} \right)^\top \right]$\vspace{0.5em}
	\State \bfbox{Mixing}
	\For{mode $j \in \mathcal{R}$}
		\State $\forall t \in \mathcal{R} \, : \, \mu_{k}^{i,t|j} = p_{jt} \,\mu_{k}^{i,t} \left[ \sum_{\jmath \in \mathcal{R}} \, p_{j\jmath} \,\mu_{k}^{\jmath,h} \right]^{-1}$
		\State $\hat{x}_{k}^{i,j} = \displaystyle{\sum_{t \in \mathcal{R}}}\, \mu^{i,t|j}_{k}\, \hat{x}_{k,L}^{i,j}$
		\State $P_{k}^{i,j} = \displaystyle{\sum_{t \in \mathcal{R}}}\, \mu^{i,t|j}_{k}\,\left[  P_{k,L}^{i,t} + \left(\hat{x}_{k}^{i,t} - \hat{x}_{k,L}^{i,t} \right)  \left(\hat{x}_{k}^{i,t} - \hat{x}_{k,L}^{i,t} \right)^{\top} \right]$
	\EndFor
\EndProcedure
\end{algorithmic}
\hrule\hrule
\end{table}

\subsection{Connection with existing approach}
It is worth to point out the differences of the proposed DGPB$_1$ and DIMM algorithms of Tables \ref{DGPB1} and, respectively, \ref{DIMM} with respect
to the distributed IMM algorithm of reference \cite{li-jia}.
In the latter, each node carries out consensus on the newly acquired information and then updates the local prior with the outcome of such a consensus.
Conversely, in the DGPB$_1$ and DIMM algorithms, each node first updates its local prior with the local new information, and then consensus on the resulting local posteriors is carried out.
To be more specific, recall first that, following the Gaussian approximation paradigm and representing Gaussian PDFs with information pairs, the local correction
for each node $i$ and mode $j$, takes the form
\bie
	\Omega_{k}^{i,j} & = & \Omega_{k|k-1}^{i,j} + \delta \Omega_{k}^{i,j} \, ,\label{eq:infeq1}\\
	 q_{k}^{i,j} & = & q_{k|k-1}^{i,j} + \delta q_{k}^{i,j} \, , \label{eq:infeq2}
\eie
where $\delta \Omega_{k}^{i,j}, \delta q_{k}^{i,j}$ denote the innovation terms due to the new measurements $y_{k}^{i}$.
In particular, for a linear Gaussian sensor model characterized by measurement function $h^{i}(j,x) = C^{i,j} x$ and measurement noise PDF
$p_{v^{i}}(j,\cdot) = \mathcal{N}\!\left( \cdot; 0, R^{i,j} \right)$, the innovation terms in (\ref{eq:infeq1})-(\ref{eq:infeq2}) are given by
\bie
	\delta \Omega_{k}^{i,j} & = & \left( C^{i,j} \right)^\top \, \left( R^{i,j} \right)^{-1} \, C^{i,j} \, ,\\
	\delta q_{k}^{i,j} & = & \left( C^{i,j} \right)^\top \, \left( R^{i,j} \right)^{-1} \, y_{k}^{i} \, .
\eie
Further, for a nonlinear sensor $i$, approximate innovation terms $ \left( \delta \Omega_{k}^{i,j},  q_{k}^{i,j} \right)$
can be obtained making use either of the EKF or of the unscented transform as shown in \cite{li-jia}.

Now, the following observations can be made.
\begin{itemize}
	\item The algorithm in \cite{li-jia}, which will be indicated below with the acronym DIMM-CL (\textit{DIMM with Consensus on Likelihoods}), applies consensus to such innovation terms scaled by the number of nodes, i.e. to
$|\mathcal{N}| \,\delta \Omega_{k}^{i,j}$ and $|\mathcal{N}| \,\delta q_{k|k-1}^{i,j}$. Conversely, the DIMM algorithm of Table \ref{DIMM} first performs the correction (\ref{eq:infeq1})-(\ref{eq:infeq2}) and then applies consensus to the local posteriors $\left( \Omega_{k}^{i,j}, q_{k}^{i,j} \right)$, and similarly does the GPB$_1$ algorithm of Table \ref{DGPB1} with the fused information pairs $\left( \Omega_{k}^{i}, q_{k}^{i} \right)$.
	\item A further difference between DIMM and DIMM-CL concerns the update of the discrete state PMF: the former applies consensus to the posterior mode probabilities $\mu_{k}^{i,j}$, while the latter applies it to the mode likelihoods $g_{k}^{i,j}$.
\end{itemize}

A discussion on the relative merits/demerits of the consensus approaches adopted for DIMM and, respectively, DIMM-CL is in order.
The main positive feature of DIMM-CL is that it converges to the centralized IMM as $L \rightarrow \infty$. However, it requires a minimum number of consensus iterations
per sampling interval in order to avoid divergence of the estimation error. In fact, by looking at the outcome of CL
$$
 \delta \Omega_{k,L}^{i,j} \, = \, \displaystyle{\sum_{\imath \in \mathcal{N}}} \, \omega^{i,\imath}_L \, \delta \Omega_{k}^{\imath,j} \, ,
$$ 
it can be seen that a certain number of consensus
iterations is needed so that the local information has time to spread through the network and the system becomes 
observable or, at least detectable, from the subset of sensors  $\mathcal{S}_L^{i} = \left\{ j \in \mathcal{S}: \omega^{i,j}_L \neq 0 \right\}$.

The proposed approach does not suffer from such limitations thanks to the fact that the whole posterior PDFs are combined.
Indeed, as shown in \cite{batchi2011,cp}, the single-model consensus Kalman filter based on CP ensures stability, for any $L \geq 1$ under the assumptions of system observability from the whole network and strong network connectivity, in that it provides a mean-square bounded estimation error in each node of the network.
Although the stability results in \cite{batchi2011,cp} are not proved for jump Markov and/or nonlinear systems, the superiority of DIMM over DIMM-CL for tracking a maneuvering object
when only a limited number of consensus iterations is performed is confirmed by simulation experiments, as it will be seen in the next section.
With this respect, since the number of data transmissions, which primarily affect energy consumption, is clearly proportional to $L$, 
in many situations it is preferable for energy efficiency to perform only a few consensus steps, possibly a single one.
A further advantage of CP over CL is that the former does not require any prior knowledge of the network topology as well as on the number of nodes
in order to work properly.
As a by-product, the CI approach can also better cope with time-varying networks where, due to node join/leave and link disconnections, the network graph is changing in time.
On the other hand, the proposed approach does not approach the centralized filter as $L \rightarrow \infty$ since the adopted fusion rule follows
a cautious strategy so as to achieve robustness with respect to data incest \cite{cp}.

\subsection{A tracking case-study}

To assess performance of the proposed distributed multiple-model algorithms described in section \ref{distributed-mm-alg}, 
the 2D tracking scenario of Fig. \ref{fig:trajectory} is considered.
From Fig. \ref{fig:trajectory} notice that the object is highly maneuvering with speeds ranging in $[0,300] [m/s]$ and acceleration magnitudes
in $[0,1.5] ~g$, $g$ being the gravity acceleration.
The object state is denoted by $x = \left[ x,~\dot{x},~y,~\dot{y} \right]^{\top}$ where $(x,~y)$ and 
$( \dot{x},~\dot{y} )$ represent the object Cartesian position and, respectively, velocity components.  
The sampling interval is $T_{s} = 5 [s]$ and the total object navigation time is $640 [s]$, corresponding to $129$ sampling intervals. 
Specifically, $r=5$ different \textit{Coordinated-Turn} (CT) models \cite{bar-shalom,far1985v1,far1985v2} are used in the MM algorithms.
All models have the state dynamics
\be
x_{k+1} = \left[ \ba{cccc}
					1 & \frac{\sin(\omega T_s)}{\omega} & 0 & -\frac{1 - \cos(\omega T_s)}{\omega}\\
					0 & \cos(\omega T_s) & 0 & -\sin(\omega T_s)\\
					0 & \frac{1 - \cos(\omega T_s)}{\omega} & 1 & \frac{\sin(\omega T_s)}{\omega}\\
					0 & \sin(\omega T_s) & 0 & \cos(\omega T_s)
				\ea \right] x_{k} + w_{k} \, ,
\ee
\be
	\mathbf{Q} = \operatorname{var}\!\left( w_{k} \right) = \left[\ba{cccc}
																	\frac{1}{4}T_{s}^{4} & \frac{1}{2}T_{s}^{3} & 0 & 0 \\
																	\frac{1}{2}T_{s}^{3} & T_{s}^{2} & 0 & 0 \\
																	0 & 0 & \frac{1}{4}T_{s}^{4}  & \frac{1}{2}T_{s}^{3} \\
																	0 & 0 & \frac{1}{2}T_{s}^{3}& T_{s}^{2}
																\ea \right] \, \sigma_w^2
\ee
for five different constant angular speeds $\omega  \in \left\{ -1, -0.5, 0, 0.5, 1 \right\} \, [^{\circ}/s]$.
Notice, in particular, that, taking the limit for $\omega \rightarrow 0$, the model corresponding to $\omega=0$ is nothing but the well known
\textit{Discrete White Noise Acceleration} (DWNA) model \cite{far1985v1, bar-shalom}, 
The standard deviation of the process noise is taken as $\sigma_w = 0.1 \, [m/s^{2}]$ for the DWNA ($\omega=0$) model and $\sigma_w = 0.5 \, [m/s^{2}]$
for the other models.  
Jump probabilities (\ref{eq:Markov}), for the Markov chain, are chosen as follows:
\be
	\left[ p_{jk} \right]_{j,k = 1,\dots,5} = \left[ \ba{ccccc}
													0.95 & 0.05 & 0 & 0 & 0\\
													0.05 & 0.9 & 0.05 & 0 & 0\\
													0 & 0.05 & 0.9 & 0.05 & 0\\
													0 & 0 & 0.05 & 0.9 & 0.05\\
													0 & 0 & 0 & 0.05 & 0.95
												\ea \right]
\ee
All the local mode-matched filters exploit the UKF \cite{juluhl2004}.
All simulation results have been obtained by averaging over $300$ Monte Carlo trials. 
No prior information on the initial object position is assumed, i.e. in all filters the object state is initialized in the center of the surveillance region with zero velocity and associated covariance matrix equal to $\operatorname{diag}\!\left\{ 10^{8}, 10^{4}, 10^{8}, 10^{4} \right\}$.
\begin{figure}[h!]
	\centering
	\includegraphics[width=\columnwidth]{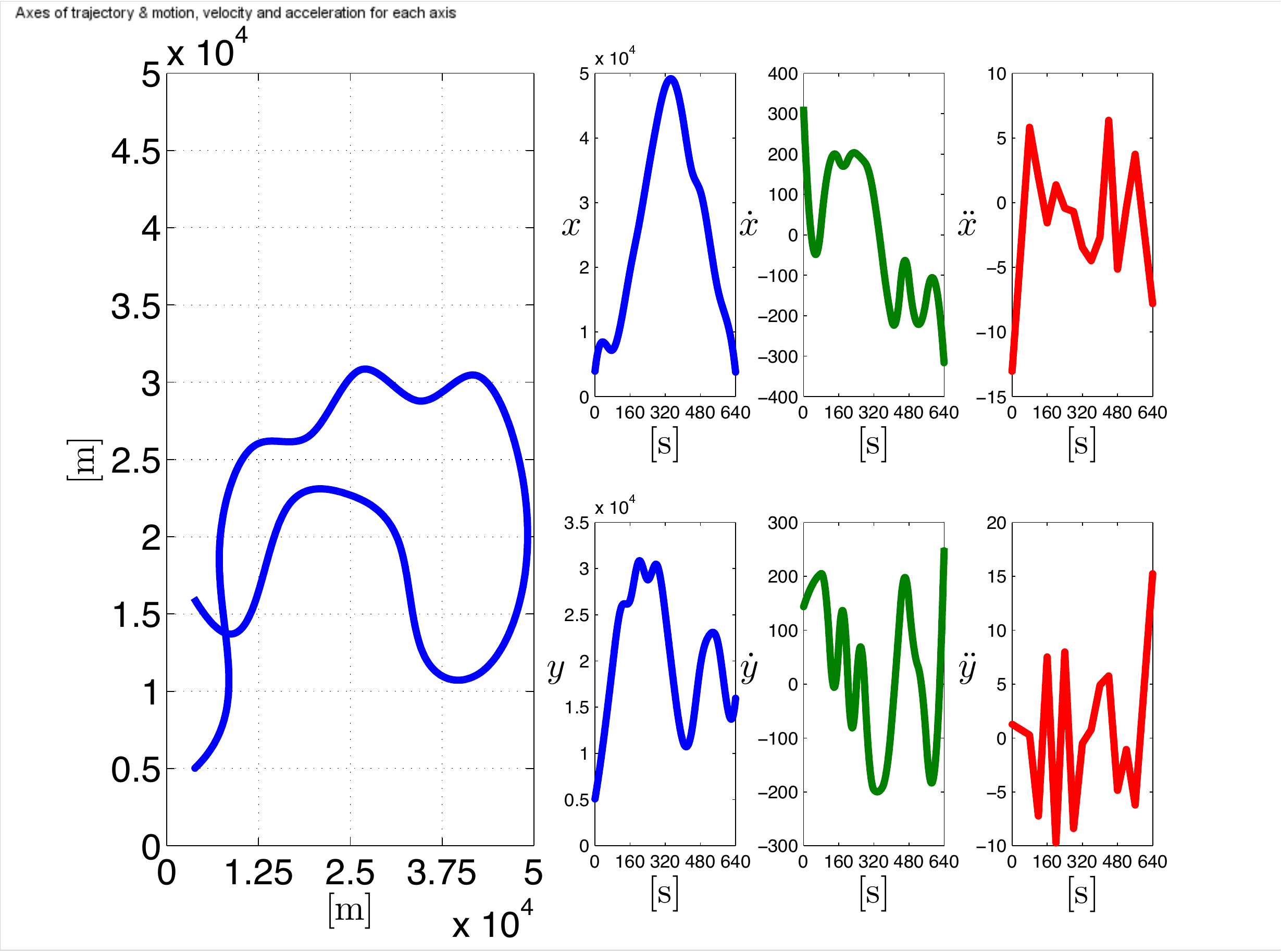}
	\caption{Object trajectory considered in the simulation experiments}
\label{fig:trajectory}
\end{figure}

A surveillance area of $50\times50 \, [km^2]$ is considered, wherein $4$ communication nodes, $4$ bearing-only sensors measuring the object \textit{Direction of Arrival} (DOA) and $4$ TOA are deployed as shown in Fig. \ref{fig:net-4c4t4d}. 
Notice that the presence of communication nodes improves energy efficiency in that it allows to reduce average distance among nodes and, hence, the required transmission power of each individual node, but also hampers information diffusion across the network.
This type of network represents, therefore, a valid benchmark to test the effectiveness of consensus state estimators.
The following measurement functions characterize the DOA and TOA sensors:
\be
	\ba{rcl}
		h^{i}(x) =\left\{ \ba{ll} \angle [ \left( x - x^{i} \right) + j \left( y - y^{i} \right)] \, , & \mbox{if $i$ is a DOA sensor} \\[0.5em]
               		                \sqrt{ \left( x - x^{i} \right)^2+ \left( y - y^{i} \right)^2} \, , & \mbox{if $i$ is a TOA sensor}
	\ea\right.
\ea
\ee
where $( x^{i},~y^{i} )$ represents the known position of sensor $i$ in Cartesian coordinates. 
The standard deviation of DOA and TOA measurement noises are taken respectively as $\sigma_{DOA} = 1 \, [\mbox{}^{\circ}]$ and $\sigma_{TOA} = 100 \, [m]$. 
The number of consensus steps used in the simulations ranges from $L = 1$ to $L=5$, reflecting a special attention towards energy efficiency.
\begin{figure}[h!]
	\centering
	\includegraphics[width=0.75\columnwidth]{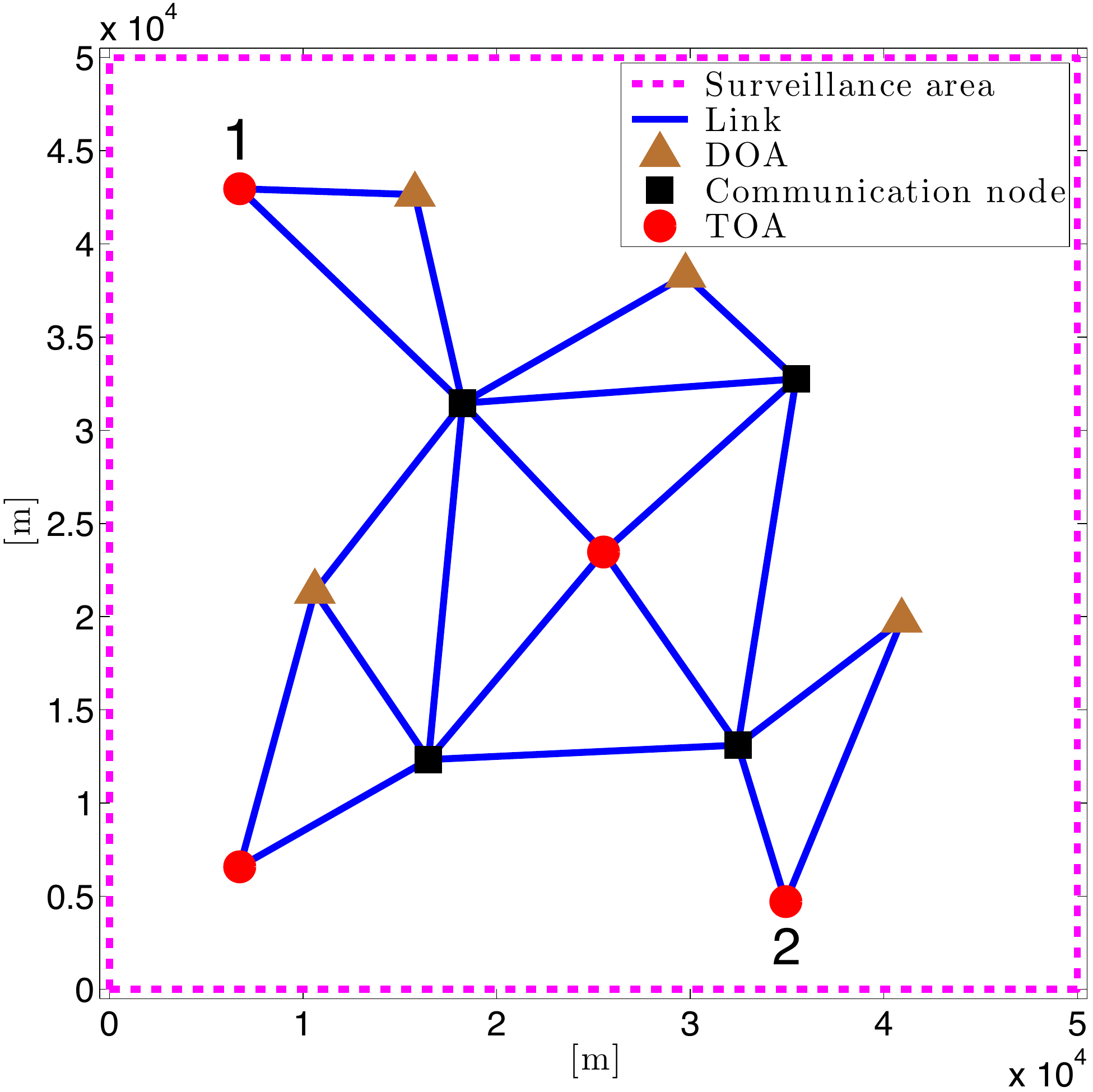}
	\caption{{Network with $4$ DOA, $4$ TOA sensors and $4$ communication nodes.}}
\label{fig:net-4c4t4d}
\end{figure}

The following MM algorithms have been compared: centralized GPB$_1$ and IMM, distributed GPB$_1$ and IMM (i.e. DGPB$_1$ and DIMM) as well as the DIMM-CL algorithm of reference \cite{li-jia}.
The performance of the various algorithms is measured by the PRMSE. Notice that averaging is carried out over time and Monte Carlo trials for the centralized algorithms, while further averaging over network nodes is applied for distributed algorithms.
Table \ref{tab:results2} summarizes performance of the various algorithms using the network of Fig. \ref{fig:net-4c4t4d}.
\begin{table}[h!]
	\setlength\arrayrulewidth{0.5pt}\arrayrulecolor{black} 
	\setlength\doublerulesep{0.5pt}\doublerulesepcolor{black} 
	\caption{Performance comparison for the sensor network of fig. \ref{fig:net-4c4t4d}}
	\label{tab:results2}
	\centering
	~\\
	\scalebox{1}{
	\begin{tabular}{c|c|c|}
		& \multicolumn{1}{>{\columncolor[gray]{.95}}c|}{GPB$_{1}$} & \multicolumn{1}{>{\columncolor[gray]{.95}}c|}{IMM}\\
		\hline
		\multicolumn{1}{>{\columncolor[gray]{.95}}c|}{PRMSE $[m]$} & $538$ & $446$\\
		\hline
	\end{tabular} }~\\\vspace{0.5em}
	\scalebox{1}{
	\begin{tabular}{c|c|c|}
		PRMSE $[m]$ & \multicolumn{1}{>{\columncolor[gray]{.95}}c|}{DGPB$_{1}$} & \multicolumn{1}{>{\columncolor[gray]{.95}}c|}{DIMM}\\
		\hline
		\multicolumn{1}{>{\columncolor[gray]{.95}}c|}{\textit{$L = 1$}} & $1603$ & $960$\\
		\hline
		\multicolumn{1}{>{\columncolor[gray]{.95}}c|}{\textit{$L = 2$}} & $1312$ & $781$\\
		\hline
		\multicolumn{1}{>{\columncolor[gray]{.95}}c|}{\textit{$L = 3$}} & $183$ & $711$\\
		\hline
		\multicolumn{1}{>{\columncolor[gray]{.95}}c|}{\textit{$L = 4$}} & $1103$ & $672$\\
		\hline
		\multicolumn{1}{>{\columncolor[gray]{.95}}c|}{\textit{$L = 5$}} & $1052$ & $648$\\
		\hline
	\end{tabular} }
\end{table}
Notice that the PRMSEs of DIMM-CL are not included in Table \ref{tab:results2} since for all the considered values of $L$ such an algorithm exhibited a divergent behaviour. 
{These results are actually consistent with the performance evaluation in \cite{li-jia} where, considering circular networks with $N \geq 4$ nodes, a minimum number of $L=10$ is required to track the object. Moreover, considering the network in Fig. \ref{fig:net-4c4t4d} without communication nodes, $L=80$ consensus steps are required for non-divergent behaviour of the DIMM-CL in \cite{li-jia}. Notice that such a number of consensus steps might still be too large for practical applications, which is also stated in \cite{li-jia}.}
As it can be seen from table \ref{tab:results2}, DIMM performs significantly better than DGPB$_{1}$ and, due to the presence of communication nodes, the distributed case exhibits significantly worse performance compared to the centralized case.
On the other hand it is evident that, even in presence of communication nodes (bringing no information about the object position) and of a highly maneuvering object, distributed MM algorithms still work satisfactorily by distributing the little available information throughout the network by means of consensus. 
In fact, despite these difficulties, the object can still be tracked with reasonable confidence. 
{Fig. \ref{fig:estimationPlot} shows, for $L=5$, the differences between two nodes that are placed almost at opposite positions in the network.} It is noticeable that the main difficulty in tracking by communication nodes arises when a turn is approaching, especially, in this trajectory, during the last turn. Clearly, one should use a higher number of consensus steps $L$ to allow data to reach two distant points in the network and, hence, to have similar estimated trajectory and mode probabilities.
\begin{figure}[h!]
	\centering
	\includegraphics[width=0.7\columnwidth]{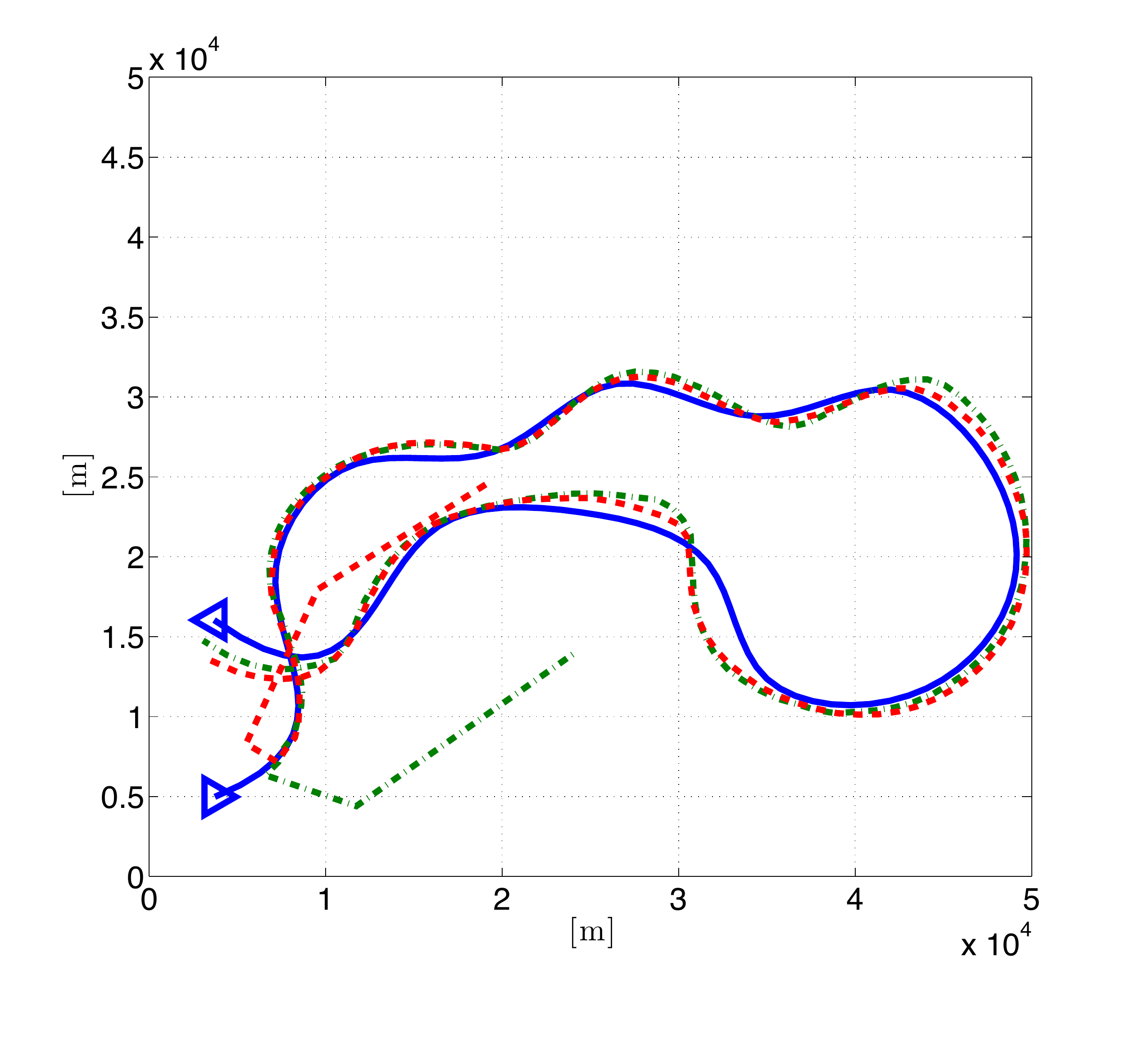}
	\caption{{Estimated object trajectories of node 1 (TOA, green dash-dotted line) and node 2 (TOA, red dashed line) in the sensor network of fig. \ref{fig:net-4c4t4d}. The real object trajectory is the blue continuous line.}}
\label{fig:estimationPlot}
\end{figure}

\chapter{Distributed multi-object filtering}
\label{chap:dmof}
\spminitoc
The focus of the next sections is on \textit{Distributed MOF} (DMOF) over the network of section \ref{sec:net} wherein each node (tracking agent) locally updates multi-object information exploiting the multi-object dynamics and the available local measurements, exchanges such information with communicating agents and then carries out a fusion step in order to combine the information from all neighboring agents.
Specifically, a CPHD filtering approach (see subsection \ref{sec:cphd}) to DMOF will be adopted \cite{ccphd}.
Hence, the agents will locally update and fuse the cardinality PMF $\cd{\cdot}$ and the location PDF $\lpdf{}{}{\cdot}$ (see (\ref{eq:iidcrfspdf})) that, for the sake of brevity, will also be referred to as the CPHD.

More formally, the DMOF problem over the network $\left( \mathcal{N}, \mathcal{A} \right)$ can be stated as follows.
Each node $i \in \mathcal{N}$ must estimate at each time $k \in \{ 1, 2, \dots \}$ the CPHD of the unknown multi-object set $X_{k}$ in (\ref{eq:MTD})-(\ref{eq:MM2}) given local measurements $Y_{\kappa}^{i}$ for all $\kappa \geq 1$ up to time $k$ and data received from all adjacent nodes $j \in \mathcal{N}^{i} \backslash \left\{ i \right\}$ so that the estimated pair $\left( \cdd[k]{i}{\cdot}, \lpdf{k}{i}{\cdot} \right)$ be as close as possible to the one that would be provided by a centralized CPHD filter simultaneously processing information from all nodes.

Clearly local CPHD filtering is the same as the centralized one of (\ref{eq:cphdpred})-(\ref{eq:cphdcor}), operating on the local measurement set $Y_{k}^{i}$ instead of $Y_{k}$. Conversely, CPHD fusion deserves special attention and will be tackled in the next sections.

\section{Multi-object information fusion via consensus}
\label{sec:moinfofusion}
Suppose that, in each node $i$ of the  sensor network, a multi-object density $f^{i}\!\left( X \right)$ is available
which has been computed on the basis of the information (collected locally or propagated from other nodes) which has become available to node $i$.
Aim of this section is to investigate whether it is possible to devise a suitable distributed algorithm guaranteeing that all the nodes of the network reach an agreement regarding the multi-object density of the unknown multi-object set.
An important feature the devised algorithm should enjoy is \textit{scalability}, i.e. the processing load of each node must be independent of the network size. For this reason, approaches based on the optimal (Bayes) fusion rule \cite{chmoch1990,mah2000} are ruled out. In fact, they would require, for each pair of adjacent nodes $(i,j)$, the knowledge of the multi-object density $f\!\left( X | I^{i} \cap I^{j} \right)$ conditioned to the common information $I^{i} \cap I^{j}$ and, in a practical network,  it is impossible to keep track of such a common information in a scalable way. Hence, some robust suboptimal fusion technique has to be devised.

\subsection{Multi-object Kullback-Leibler average}
\label{ssec:KLA}
The first important issue to be addressed is how to define the average of the local multi-object densities $f^{i}(X)$. 
To this end, taking into account the benefit provided by the RFS approach about defining multi-object densities, it is possible to extend the notion of KLA of single-object PDFs introduced in subsection \ref{ssec:klapdf} to multi-object ones.
Let us first define the notion of KLD to multi-object densities $f\!\left( X \right)$ and $g\!\left( X \right)$ by
\be
	D_{KL} \left( f \parallel g \right) \triangleq \displaystyle \int f\!\left( X \right) \, \log\!\left( \dfrac{f(X)}{g(X)}\right) \delta X
\label{KLD}
\ee
where the integral in (\ref{KLD}) must be interpreted as a set integral according to the definition (\ref{eq:unlsetint}).
Then, the weighted KLA $f_{KLA}\!\left( X \right)$ of the multi-object densities $f^{i}(X)$ is defined as follows
\be
	f_{KLA} = \arg \inf_{f} \displaystyle \sum_{i \in \ncal} \, \omega^{i} D_{KL}\!\left( f \parallel f^{i} \right).
\label{KLA}
\ee
with weights $\omega^{i}$ satisfying
\be
	\omega^{i} \geq 0 \, , \qquad \displaystyle \sum_{i \in \ncal} \omega^{i} = 1.
\label{weights}
\ee
Notice from (\ref{KLA}) that the weighted KLA of the agent densities is the one that minimizes the weighted sum of distances from such densities.
In particular, if $N$ is the number of agents and $\omega^{i}= 1/N$ for $i=1,\dots,N$, (\ref{KLA}) provides the (unweighted) KLA which averages the agent densities giving to all of them the same level of confidence.
An interesting interpretation of such a notion can be given recalling that, in Bayesian statistics, the KLD (\ref{KLD}) can be seen as the information gain achieved when moving from a prior $g(X) $ to a posterior $f(X)$. 
Thus, according to (\ref{KLA}), the average PDF is the one that minimizes the sum of the information gains from the initial multi-object densities.
Thus, this choice is coherent with the {\em Principle of Minimum Discrimination Information} (PMDI) according to which the probability density which best represents the current state of knowledge is the one which produces an information gain as small as possible (see \cite{Campbell,Akaike} for a discussion on such a principle and its relation with Gauss' principle and maximum likelihood estimation) or, in other words \cite{Jaynes}:
\begin{quotation}
\noindent``the probability assignment which most honestly describes what we know should be the most conservative assignment
in the sense that it does not permit one to draw any conclusions not warranted by the data''.
\end{quotation}
The adherence to the PMDI is important in order to counteract the so-called \textit{data incest} phenomenon, i.e. the unaware reuse of the same piece of information due to the presence of loops within the network. 

The following result holds.
\begin{thm}[Multi-object KLA]
\label{thm:KLA:GCI}
	~\\
	The weighted KLA defined in (\ref{KLA}) turns out to be given by
	\begin{equation}
		f_{KLA}  \left( X \right) = \dfrac{\displaystyle{\prod_{i \in \ncal}}\, \left[ f^{i} \left( X \right) \right]^{\omega^{i}}}{\displaystyle{\int} \displaystyle{\prod_{i \in \ncal}} \left[ f^{i} \left( X \right) \right]^{\omega^{i}} \delta X}
	\label{KLA:GCI}
	\end{equation}
\end{thm}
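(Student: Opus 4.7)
My plan is to reduce the constrained minimization in (\ref{KLA}) to a single Kullback-Leibler divergence, exactly as one does in the single-object case, and then invoke the non-negativity of the KLD. The key algebraic observation is that, because the weights obey $\sum_{i \in \mathcal{N}} \omega^{i} = 1$, one can combine the $|\mathcal{N}|$ separate KLDs into one. I will first expand
\begin{equation*}
\sum_{i \in \mathcal{N}} \omega^{i} D_{KL}(f \parallel f^{i}) = \sum_{i \in \mathcal{N}} \omega^{i} \int f(X) \log f(X)\, \delta X - \sum_{i \in \mathcal{N}} \omega^{i} \int f(X) \log f^{i}(X)\, \delta X
\end{equation*}
using the set integral (\ref{eq:unlsetint}). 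Exploiting $\sum \omega^{i}=1$ on the first term and swapping the finite sum with the set integral in the second, this collapses to
\begin{equation*}
\sum_{i \in \mathcal{N}} \omega^{i} D_{KL}(f \parallel f^{i}) = \int f(X) \log \frac{f(X)}{\prod_{i \in \mathcal{N}} [f^{i}(X)]^{\omega^{i}}} \, \delta X.
\end{equation*}

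Next, I would introduce the normalization constant $Z \triangleq \int \prod_{i \in \mathcal{N}} [f^{i}(X)]^{\omega^{i}}\, \delta X$, so that the right-hand side of (\ref{KLA:GCI}) is the \emph{bona fide} multi-object density $f_{KLA}(X) = Z^{-1} \prod_{i} [f^{i}(X)]^{\omega^{i}}$. Substituting $\prod_i [f^{i}(X)]^{\omega^{i}} = Z\, f_{KLA}(X)$ into the combined integral above and using that $\int f(X) \, \delta X = 1$, the cost becomes
\begin{equation*}
\sum_{i \in \mathcal{N}} \omega^{i} D_{KL}(f \parallel f^{i}) = D_{KL}\bigl(f \parallel f_{KLA}\bigr) - \log Z.
\end{equation*}
Since $\log Z$ is a constant independent of the decision variable $f$, the minimization problem (\ref{KLA}) is equivalent to minimizing $D_{KL}(f \parallel f_{KLA})$ over the set of multi-object densities. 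By Gibbs' inequality (transposed to the FISST set-integral setting by the same Jensen argument as in the Euclidean case), $D_{KL}(f \parallel f_{KLA}) \geq 0$ with equality iff $f = f_{KLA}$ almost everywhere, and this identifies the infimum in (\ref{KLA}) as being attained by (\ref{KLA:GCI}).

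The only delicate point I anticipate is making sure that the manipulations are meaningful: (i) the set integral $Z$ must be finite so that $f_{KLA}$ is a legitimate multi-object density, and (ii) one needs $f \ll f^{i}$ (absolute continuity) on the support of each $f^{i}$ to avoid undefined logarithms, which is the standard domain assumption that also makes each $D_{KL}(f\parallel f^{i})$ finite. Both conditions are implicitly required for the problem (\ref{KLA}) to be well-posed and will be carried through the argument without further comment. Once these are granted, the proof reduces entirely to the algebraic reshuffling above and to the FISST version of Gibbs' inequality; no variational calculus or Lagrange multipliers are needed, which is the same clean structure that yields (\ref{eq:geomean}) in the single-object case.
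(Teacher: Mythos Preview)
Your proof is correct and follows essentially the same route as the paper's own argument: expand the weighted sum of KLDs, collapse it using $\sum_i \omega^i = 1$ into a single integral of $f\log\bigl(f/\prod_i (f^i)^{\omega^i}\bigr)$, introduce the normalizing constant (the paper calls it $c$, you call it $Z$), and recognize the result as $D_{KL}(f\parallel f_{KLA}) - \log Z$, whose minimizer is $f_{KLA}$ by non-negativity of the KLD. Your added remarks on finiteness of $Z$ and absolute continuity are a welcome explicitness that the paper leaves implicit.
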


\subsection{Consensus-based multi-object filtering}
\label{ssec:consensusmof}
The identity (\ref{KLA:GCI}) with $\omega^{i} = 1/|\mathcal{N}|$ can be rewritten as
\begin{equation}\label{global:KLA}
f_{KLA}  \left( X \right) = \bigoplus_{i \in \mathcal N} \left ( \frac{1}{|\mathcal N|} \odot f^{i}( X ) \right ) \, .
\end{equation}
Then, the global (collective) KLA (\ref{global:KLA}), 
which would require all the local multi-object densities to be available, can be computed in a distributed and scalabale way by 
iterating  regional averages through the consensus algorithm
\begin{equation}
	{\hat{f}}^{i}_{{l}+1} (X) = \displaystyle{\bigoplus_{j \in \mathcal{N}^{i}}} \left ( \omega^{i,j} \odot {\hat{f}}^{j}_{{l}} (X) \right ) \, , \qquad \forall i \in \mathcal{N}
\label{consensus:RS}
\end{equation}
with ${\hat{f}}^{i}_{0} (X) = {{f}}^{i} (X)$.
In fact, thanks to properties \textsc{p.a} - \textsc{p.f}, it can be seen that\footnote{Notice that the scalar multiplication operator is defined only for strictly positive scalars.
However, in equation (\ref{consensus:RS2}) it is admitted that some of the scalar weights $\omega^{i,j}_{l}$ be zero. The understanding is that, whenever $\omega^{i,j}_{l}$ is equal to zero, the corresponding multi-object density  ${{f}}^{j} (X)$ is omitted from the addition. This can always be done, since for each $i \in \mathcal N$ and each ${l}$, at least one of the weights $\omega^{i,j}_{l}$ is strictly positive.}
\begin{equation}
	\hat{f}^{i}_{{l}}(X) = \displaystyle{\bigoplus_{j \in \mathcal{N}}} \left ( \omega^{i,j}_{l} \odot {{f}}^{j} (X) \right ) \, , \qquad\forall i \in \mathcal{N}
\label{consensus:RS2}
\end{equation}
where $\omega^{i,j}_{l}$ is defined as the element $(i,j)$ of the matrix $\Omega^{l}$. With this respect, recall that when the consensus weights $\omega^{i,j}$
are chosen so as to ensure that the matrix $\Omega$ is doubly stochastic, one has
\be
	\lim_{{l} \rightarrow + \infty} \omega^{i,j}_{l} = \frac{1}{|\mathcal N|} \, , \quad \forall i,j \in \mathcal N.
\ee
Hence, as the number of consensus steps increases, each local multi-object density ``tends'' to the global KLA (\ref{global:KLA}).

\subsection{Connection with existing approaches}
It is important to point out that the fusion rule (\ref{KLA:GCI}), which has been derived as KLA of the local multi-object densities, coincides with the Chernoff fusion \cite{info,mori1} known as \textit{Generalized Covariance Intersection} (GCI) for multi-object fusion, first proposed by Mahler \cite{mah2000}.
In particular, it is clear that ${f}_{KLA}\!(X)$ in (\ref{KLA:GCI}) is nothing but the NWGM of the agent multi-object densities $f^{i}\!\left( X \right)$; it is also called \textit{Exponential Mixture Density} (EMD) \cite{unjuclri2010,unclju2011}.
\begin{rem}
The name GCI stems from the fact that (\ref{KLA:GCI}) is the multi-object counterpart of the analogous fusion rule for (single-object) PDFs \cite{mah2000,hurley2002,jubauh2006,julier2008} which, in turn, is a generalization of CI originally conceived \cite{juluhl1997} for Gaussian PDFs.
\end{rem}
\noindent Remind that, given estimates $\hat{x}_{i}$ of the same quantity $x$ from multiple estimators with relative covariances $P_{i}$ and unknown correlations, their CI fusion is given by 
\bie
	P & = & \left( \displaystyle{\sum_{i \in \ncal}} \, \omega^{i} P_{i}^{-1} \right)^{-1}\IEEEnonumber\\\label{CI}\\
	\!\hat{x} & = & P \, \displaystyle{\sum_{i \in \ncal}} \, \omega^{i} P_{i}^{-1} \hat{x}_{i}\IEEEnonumber
\eie
The peculiarity of (\ref{CI}) is that, for any choice of the weights $\omega^{i}$ satisfying (\ref{weights}) and provided that all estimates are consistent in the sense that
\be
	E\!\left[ \left( x - \hat{x}_{i} \right)  \left( x - \hat{x}_{i} \right)^{\top} \right] \leq P_{i} \, , \qquad \forall i
\ee
then the fused estimate also turns out to be consistent, i.e.
\be
	E\!\left[ \left( x - \hat{x} \right)  \left( x - \hat{x} \right)^{\top} \right] \leq P \, .
\ee
It can easily be shown that, for normally distributed estimates, (\ref{CI}) is equivalent to
\be
	p (x) = \dfrac{\displaystyle{\prod_{i \in \ncal}} \left[ p^{i}(x) \right]^{\omega^{i}}}
        	              	  {\displaystyle{\int} \displaystyle{\prod_{i \in \ncal}} \left[ p^{i}(x) \right]^{\omega^{i}} dx}
\label{GCI}
\ee 
where $p^{i}(\cdot) \triangleq \mathcal{N}\!\left( \cdot; \hat{x}_{i}, P_{i} \right)$ is the Gaussian PDF with mean $\hat{x}_{i}$ and covariance $P_{i}$.
This suggested to use (\ref{GCI}) for arbitrary, possibly non Gaussian, PDFs. As a final remark, notice that the consistency property, which was the 
primary motivation that led to the development of the CI fusion rule, is in accordance with the PMDI discussed in Section \ref{ssec:KLA}, which indeed represents
one of the main positive features of the considered multi-object fusion rule.

\subsection{CPHD fusion}
\label{ssec:ccphd:gmfusion}
Whenever the object set is modelled as an i.i.d. cluster process, the agent multi-object densities to be fused take the form
\be
	f^{i} \left( X \right) = | X |! \,\, \cdd{i}{|X|} \, \displaystyle{\prod_{x \in X}} \lpdf{}{i}{x}
\label{agent-iid}
\ee
where $\left( \cdd{i}{n}, \lpdf{}{i}{x} \right)$ is the CPHD of agent $i$.
In \cite{cljumhri2010} it is shown that in this case the GCI fusion (\ref{KLA:GCI}) yields
\be
	\overline{f} \left( X \right) = | X |! \,\, \overline{\rho}\!\left( | X | \right) \, \displaystyle{\prod_{x \in X}} \, \overline{s} \left( x \right)
\label{fused-iid}
\ee
where
\bie
	\overline{s}(x) & = & \dfrac{\displaystyle{\prod_{i \in \ncal}} \left[ \lpdf{}{i}{x} \right]^{\omega^{i}}}{\displaystyle{\int} \displaystyle{\prod_{i \in \ncal}} \left[ \lpdf{}{i}{x} \right]^{\omega^{i}} dx} \, ,\label{loc} \\
	\overline{\rho}\!\left( n \right) & = & \dfrac{\displaystyle{\prod_{i \in \ncal}} \left[ \cdd{i}{n} \right]^{\omega^{i}} \left\{ \displaystyle{\int}\displaystyle{\prod_{i \in \ncal}} \left[ \lpdf{}{i}{x} \right]^{\omega^{i}} dx \right\}^n}{\displaystyle{\sum_{j=0}^\infty} \displaystyle{\prod_{i \in \ncal}} \left[ \cdd{i}{j} \right]^{\omega^{i}} \left\{ \displaystyle{\int}  \displaystyle{\prod_{i \in \ncal}} \left[ \lpdf{}{i}{x} \right]^{\omega^{i}} dx \right\}^{j}} \, .\label{card} 
\eie
In words, (\ref{fused-iid})-(\ref{card}) amount to state that the fusion of i.i.d. cluster processes provides an i.i.d. cluster process whose location density $\overline{s}(\cdot)$ is the weighted geometric mean of the agent location densities $\lpdf{}{i}{\cdot}$, while the fused cardinality $\overline{\rho}(\cdot)$ is obtained by the more complicated expression (\ref{card}) also involving the agent location PDFs besides the agent cardinality PMFs.

Please notice that, in principle, both the cardinality PMF and the location PDF are infinite-dimensional.
For implementation purposes, finite-dimensional parametrizations of both need to be adopted.
As far as the cardinality PMF $\cd{n}$ is concerned, it is enough to assume a sufficiently large maximum number of objects $n_{max}$ present in the scene and restrict $\cd{\cdot}$ to the finite subset of integers $\{ 0, 1, \dots, n_{max} \}$.
As for the location PDF, two finitely-parameterized representations based on the SMC or, respectively, GM approaches are most commonly adopted.
The SMC approach consists of representing location PDFs as linear combinations of delta Dirac functions, i.e.
\be
	\lpdf{}{}{x} = \displaystyle{\sum_{j=1}^{N_p}} w_{j} \, \delta \left( x - x_{j} \right)
\ee
Conversely, the GM approach expresses location PDFs as linear combinations of Gaussian components, i.e.
\be
	\lpdf{}{}{x} = \displaystyle{\sum_{j=1}^{N_G}} \alpha_{j} \, \mathcal{N}\!\left( x; \hat{x}_{j}, P_{j} \right)
\ee
Recent work \cite{unjuclri2010} has presented an implementation of the distributed multi-object fusion following the SMC approach.
For DMOF over a sensor network, typically characterized by limited processing power and energy resources of the individual nodes, it is of paramount importance to reduce as much as possible local (in-node) computations and inter-node data communication.
In this respect, the GM approach promises to be more parsimonious (usually the number of Gaussian components involved is orders of magnitude lower
than the number of particles required for a reasonable tracking performance) and hence preferable. 
For this reason, a GM implementation of the fusion (\ref{loc}) has been adopted in the present work which, further, exploits consensus in order to carry out the fusion in a fully distributed way.
For the sake of simplicity, let us consider only two agents, $a$ and $b$, with GM location densities
\be
	\lpdf{}{i}{x} = \displaystyle{\sum_{j=1}^{N_G^{i}}} \alpha_{j}^{i} \, \mathcal{N}\!\left( x; \hat{x}_{j}^{i}, P_{j}^{i} \right) \, , \qquad i = a, b
\label{2-agent-GMs}
\ee
A first natural question is whether the fused location PDF 
\be
	\overline{s}(x) = \dfrac{ \displaystyle \left[ \lpdf{}{a}{x} \right]^{\omega} \left[ \lpdf{}{b}{x} \right]^{1-\omega}  }
        		                            { \displaystyle \int \left[ \lpdf{}{a}{x} \right]^{\omega} \left[ \lpdf{}{b}{x} \right]^{1-\omega} dx }
\label{2-agent-fusion}
\ee
is also a GM.  
Notice that (\ref{2-agent-fusion}) involves exponentiation and multiplication of GMs. 
To this end, it is useful to draw the following observations concerning elementary operations on Gaussian components and mixtures.
\begin{itemize}
\item The power of a Gaussian component is a Gaussian component, more precisely
\be
	\left[ \alpha \, \mathcal{N}\!\left( x; \hat{x}, P \right) \right]^{\omega} = \alpha^{\omega} \, \beta\!\left( \omega, P \right) \, \mathcal{N}\!\left( x; \hat{x}, \dfrac{P}{\omega} \right)
\label{power}
\ee
where
\be
	\beta\!\left( \omega, P \right) \triangleq \dfrac{\left[ \operatorname{det} \! \left( 2 \pi P \omega^{-1} \right) \right]^{\frac{1}{2}}}{\left[ \operatorname{det} \! \left( 2 \pi P \right) \right]^{\frac{\omega}{2}}}
\ee
\item The product of Gaussian components is a Gaussian component, more precisely \cite{willmay2005,will2003}
\be
	\alpha_1 \, \mathcal{N}\!\left( x; \hat{x}_1, P_1 \right) \cdot \alpha_2 \, \mathcal{N}\!\left( x; \hat{x}_2, P_2 \right) = \alpha_{12} \, \mathcal{N}\!\left( x; \hat{x}_{12}, P_{12} \right)
\label{product}
\ee
where
\begin{eqnarray}
P_{12} & = & \left( P_1^{-1} + P_2^{-1} \right)^{-1} \label{naive1} \\
\hat{x}_{12} & = &  P_{12} \left( P_1^{-1} \hat{x}_1 + P_2^{-1} \hat{x}_2 \right) \vspace{1mm} \label{naive2} \\
\alpha_{12} & = & \alpha_1 \, \alpha_2 \, \mathcal{N}\!\left(  \hat{x}_1 - \hat{x}_2; 0, P_1 + P_2 \right)   
\label{alpha_12}
\end{eqnarray}
\item Due to (\ref{product}) and the distributive property, the product of GMs is a GM. In particular, if $\lpdf{}{a}{\cdot}$ and $\lpdf{}{b}{\cdot}$ have 
$N_G^{a}$ and, respectively, $N_G^{b}$ Gaussian components, then $\lpdf{}{a}{\cdot}\lpdf{}{b}{\cdot}$ will have $N_G^{a} N_G^{b}$ components.
\item Exponentiation of a GM does not provide, in general, a GM. 
\end{itemize}
As a consequence of the latter observation, the fusion (\ref{2-agent-GMs})-(\ref{2-agent-fusion}) does not provide a GM.
Hence, in order to preserve the GM form of the location PDF throughout the computations a suitable approximation of the GM exponentiation has to be devised.
In \cite{jul2006} it is suggested to use the following approximation
\be
	\left[ \displaystyle{\sum_{j=1}^{N_G}} \alpha_{j} \, \mathcal{N}\!\left( x; \hat{x}_{j}, P_{j} \right) \right]^{\omega} \cong \displaystyle{\sum_{j=1}^{N_G}} \left[ \alpha_{j} \mathcal{N}\!\left( x; \hat{x}_{j}, P_{j} \right) \right]^{\omega} = \displaystyle{\sum_{j=1}^{N_G}} \alpha_{j}^{\omega} \, \beta\!\left( \omega, P_{j} \right) \, \mathcal{N}\!\left( x; \hat{x}_{j}, \dfrac{P_{j}}{\omega} \right)
\label{approx}
\ee
As a matter of fact, the above approximation seems reasonable whenever the cross-products of the different terms in the GM are negligible for all $x$; this, in turn, holds provided that the centers $\hat{x}_{i}$ and $\hat{x}_{j}$, $i \neq j$, of the Gaussian components are well separated, as measured by the respective covariances $P_{i}$ and $P_{j}$.
In geometrical terms, the more separated are the confidence ellipsoids of the Gaussian components the smaller should be the error involved in the approximation (\ref{approx}).
In mathematical terms, the conditions for the validity of (\ref{approx}) can be expressed in terms of Mahalanobis distance \cite{mahal1925} inequalities
of the form
$$
\ba{rcl}
 \left( \hat{x}_{i} - \hat{x}_{j} \right)^{\top} P^{-1}_{i} \left( \hat{x}_{i} - \hat{x}_{j} \right) & \gg & 1 \vspace{1mm} \\
 \left( \hat{x}_{i} - \hat{x}_{j} \right)^{\top} P^{-1}_{j} \left( \hat{x}_{i} - \hat{x}_{j} \right) & \gg & 1  
\ea
$$
Provided that the use of (\ref{approx}) is preceded by a suitable merging step that fuses Gaussian components with Mahalanobis (or other type of)
distance below a given threshold, the approximation seems reasonable.
To this end, the merging algorithm proposed by Salmond \cite{salm1988,salm1990} represents a good tool. 

Exploiting (\ref{approx}), the fusion (\ref{2-agent-fusion}) can be approximated as follows:
\be
	\overline{s}(x) = \dfrac{\displaystyle{\sum_{i=1}^{N_G^{a}}} \displaystyle{\sum_{j=1}^{N_G^{b}}} \, \alpha_{ij}^{ab} \, \mathcal{N}\!\left( x; \hat{x}_{ij}^{ab}, P_{ij}^{ab} \right)}  {\displaystyle{\int}  \displaystyle{\sum_{i=1}^{N_G^{a}}} \displaystyle{\sum_{j=1}^{N_G^{b}}} \, \alpha_{ij}^{ab} \, \mathcal{N}\!\left( x;  \hat{x}_{ij}^{ab}, P_{ij}^{ab} \right) dx} = \dfrac{\displaystyle{\sum_{i=1}^{N_G^{a}}} \displaystyle{\sum_{j=1}^{N_G^{b}}} \, \alpha_{ij}^{ab} \, \mathcal{N}\!\left( x; \hat{x}_{ij}^{ab}, P_{ij}^{ab} \right)} { \displaystyle{\sum_{i=1}^{N_G^{a}}} \displaystyle{\sum_{j=1}^{N_G^{b}}} \, \alpha_{ij}^{ab} }
\label{f1}
\ee
where
\begin{IEEEeqnarray}{rCl}
P_{ij}^{ab} & = & \left[ \omega \left( P_{i}^{a} \right)^{-1} + (1-\omega) \left( P_{j}^{b} \right)^{-1} \right]^{-1}  \vspace{1mm} \label{f2} \\ \hat{x}_{ij}^{ab} & = & P_{ij}^{ab} \left[ \omega \left( P_{i}^{a} \right)^{-1} \hat{x}_{i}^{a} +
(1-\omega) \left( P_{j}^{b} \right)^{-1} \hat{x}_{j}^{b} \right]  \vspace{1mm} \label{f3} \\
\alpha_{ij}^{ab} & = & \left( \alpha_{i}^{a} \right)^{\omega} \, \left( \alpha_{j}^{b} \right)^{1-\omega} \beta\!\left( \omega, P_{i}^{a} \right) \beta\!\left( 1 - \omega, P_{j}^{b} \right) \mathcal{N}\!\left( \hat{x}_{i}^{a} - \hat{x}_{j}^{b}; \, 0, \, \frac{P_{i}^{a}}{\omega} + \frac{P_{j}^{b}}{1-\omega} \right) \label{f4}
\end{IEEEeqnarray}
Notice that (\ref{f1})-(\ref{f4}) amounts to performing a CI fusion on any possible pair formed by a Gaussian component of agent $a$ and a Gaussian component of agent $b$.
Further the coefficient $\alpha_{ij}^{ab}$ of the resulting (fused) component includes a factor $\mathcal{N}\!\left( \hat{x}_{i}^{a} - \hat{x}_{j}^{b}; \, 0, P_{i}^{a}/\omega + P_{j}^{b}/\left( 1 - \omega \right) \right)$ that measures the separation of the two fusing components $\left( \hat{x}_{i}^{a}, P_{i}^{a} \right)$ and $\left( \hat{x}_{j}^{b}, P_{j}^{b} \right)$. 
In (\ref{f1}) it is reasonable to remove Gaussian components with negligible coefficients $\alpha_{ij}^{ab}$.
This can be done either by fixing a threshold for such coefficients or by checking whether the Mahalanobis distance 
\be
	\sqrt{\left( x_{i}^{a} - x_{i}^{b} \right)^{\top} \left(\dfrac{P_{a}}{1 - \omega} + \dfrac{P_{b}}{\omega} \right)^{-1} \left( x_{i}^{a} - x_{i}^{b} \right)}
\ee
falls below a given threshold.
The fusion (\ref{2-agent-fusion}) can be easily extended to $N \ge 2$ agents by sequentially applying the pairwise fusion (\ref{f2})-(\ref{f3}) $N - 1$ times. Note that, by the associative and commutative properties of multiplication, the ordering of pairwise fusions is irrelevant.

\section{Consensus GM-CPHD filter}
\label{distributedgmcphd}

This section presents the proposed \textit{Consensus Gaussian Mixture-CPHD} (CGM-CPHD) filter algorithm.
The sequence of operations carried out at each sampling interval $k$ in each node $i \in \mathcal{N}$ of the network is reported in Table \ref{alg:dgmcphd}.
All nodes $i \in \mathcal{N}$ operate in parallel at each sampling interval $k$ in the same way, each starting from its own previous estimates of the cardinality PMF and location PDF in GM form
\be
	\left\{ \cdd[k-1]{i}{n} \right\}_{n=0}^{n_{max}} \, , \left\{ \left( \alpha_{j}^{i}, \hat{x}_{j}^{i}, P_{j}^{i} \right)_{k-1} \right\}_{j=1}^{\left(N_G^{i}\right)_{k-1}}
\ee
and producing, at the end of the various steps listed in Table \ref{alg:dgmcphd}, its new estimates of the CPHD as well as of the
object set, i.e.
\be
	\left\{ \cdd[k]{i}{n} \right\}_{n=0}^{n_{max}} \, , \left\{ \left( \alpha_{j}^{i}, \hat{x}_{j}^{i}, P_{j}^{i} \right)_{k} \right\}_{j=1}^{\left(N_G^{i}\right)_{k}} \, , \widehat{X}_{k}^{i}
\ee
A brief description of the sequence of steps of the CGM-CPHD algorithm is in order.
\begin{enumerate}
\item First, each node $i$ performs a local GM-CPHD filter update exploiting the multi-object dynamics and the local measurement set $Y_{k}^{i}$.
      The details of the GM-CPHD update (prediction and correction) can be found in \cite{vo-vo-cantoni}. 
      A merging step, to be described later, is introduced after the local update and before the consensus phase in order to reduce the number of Gaussian components and, hence, alleviate both the communication and the computation burden.
\item  Then, consensus takes place in each node $i$ involving the subnetwork $\mathcal{N}^{i}$.
       Each node exchanges information (i.e., cardinality PMF and GM representation of the location PDF) with the neighbors; more precisely node $i$ transmits its data to nodes $j$ such that $i \in \mathcal{N}^{j}$ and waits until it receives data from $j \in \mathcal{N}^{i} \backslash \{ i \}$.
       Next, node $i$ carries out the GM-GCI fusion in (\ref{card}) and (\ref{f1})-(\ref{f4}) over $\mathcal{N}^{i}$. Finally, a merging step is applied to reduce the joint communication-computation burden for the next consensus step.
       This procedure is repeatedly applied for an appropriately chosen number $L \geq 1$ of consensus steps. 
\item After the consensus, the resulting GM is further simplified by means of a pruning step to be described later.
      Finally, an estimate of the object set is obtained from the cardinality PMF and the pruned location GM via an 
      estimate extraction step to be described later.   
\end{enumerate}

\begin{table}[!h]
\renewcommand{\arraystretch}{1.3}
\caption{Distributed GM-CPHD (DGM-CPHD) filter pseudo-code}
\label{alg:dgmcphd}
\centering
\hrulefill
\hrule
\begin{algorithmic}[0]
	\Procedure{DGM-CPHD(node $i$, time $k$)}{}
		\State \bfbox{Local Filtering}
		\State \textsc{Local GM-CPHD Prediction} \Comment{See (\ref{eq:cphdpred}) and \cite{vo-vo-cantoni}}
		\State \textsc{Local GM-CPHD Correction} \Comment{See (\ref{eq:cphdcor}) and \cite{vo-vo-cantoni}}
		\State \textsc{Merging} \Comment{See Table \ref{alg:merging}}\vspace{0.5em}
		\State \bfbox{Information Fusion}
		\For{${l} = 1, \dots, L$}
			\State \textsc{Information Exchange}
			\State \textsc{GM-GCI Fusion} \Comment{See (\ref{card}) and (\ref{f1})}
			\State \textsc{Merging} \Comment{See Table \ref{alg:merging}}
		\EndFor\vspace{0.5em}
		\State \bfbox{Extraction}
		\State \textsc{Pruning} \Comment{See Table \ref{alg:pruning}}
		\State \textsc{Estimate Extraction} \Comment{See Table \ref{alg:extraction}}
	\EndProcedure
\end{algorithmic}
\hrule\hrule
\end{table}
While the local GM-CPHD update (prediction and correction) is thoroughly described in the literature \cite{vo-vo-cantoni} and the
GM-GCI fusion has been thoroughly dealt with in the previous section, the remaining steps (merging, pruning and estimate extraction) are
outlined in the sequel. 

\subsection*{Merging}
Recall \cite{vo-ma,vo-vo-cantoni} that the prediction step (\ref{eq:cphdpred}) and the correction step (\ref{eq:cphdcor}) of the PHD/CPHD filter make the number of Gaussian components increase. 
Hence, in order to avoid an unbounded growth with time of such components and make the GM-(C)PHD filter practically implementable, 
suitable component reduction strategies have to be adopted. 
In \cite[Section III.C, Table II]{vo-ma}, a reduction procedure based on truncation of components with low weights, merging of similar components and pruning, has been presented.
For use in the proposed CGM-CPHD algorithm, it is convenient to deal with merging and pruning in a separate way.
A pseudo-code of the merging algorithm is in Table \ref{alg:merging}.

\begin{table}[!h]
\renewcommand{\arraystretch}{1.3}
\caption{Merging pseudo-code}
\label{alg:merging}
\centering
\hrulefill\hrule
\begin{algorithmic}[0]
	\Procedure{Merging}{$\left\{ \hat{x}_{i}, P_{i}, \alpha_{i} \right\}_{i = 1}^{N_{G}}, \gamma_{m}$}
		\State $t = 0$
		\State $\mathcal{I} = \left\{ 1, \dots, N_{G} \right\}$
		\Repeat
			\State $t = t + 1$
			\State $j = \operatorname{arg \underset{i \in \mathcal{I}}{max}}\alpha_{i}$
			\\
			\State $\mathcal{MC} = \left\{ i \in \mathcal{I}  \Big | \left( \hat{x}_{j} - \hat{x}_{i} \right)^{\top} P_{i}^{-1} \left( \hat{x}_{j} - \hat{x}_{i} \right) \le \gamma_{m} \right\}$
			\\
			\State $\bar \alpha_{t} = \displaystyle \sum_{i \in \mathcal{MC}} \alpha_{i}$
			\\
			\State $\bar{x}_{t} = \dfrac{1}{\bar \alpha_{k}}\displaystyle\sum_{i \in \mathcal{MC}} \alpha_{i} \hat{x}_{i}$
			\\
			\State $\bar{P}_{t} = \dfrac{1}{\bar \alpha_{k}}\displaystyle\sum_{i \in \mathcal{MC}} \alpha_{i} \left[ P_{i} + \left( \bar{x}_{k} - \hat{x}_{i} \right)\left( \bar{x}_{k} - \hat{x}_{i} \right)^{\top} \right]$
			\\
			\State $\mathcal{I} = \mathcal{I} \backslash \mathcal{MC}$
		\Until{$\mathcal{I} \ne \emptyset$}
		\State \textbf{return} $\left\{ \bar{x}_{i}, \bar{P}_{i}, \bar \alpha_{i} \right\}_{i = 1}^{t}$
	\EndProcedure
\end{algorithmic}
\hrule\hrule
\end{table}

\subsection*{Pruning}
To prevent the number of Gaussian components from exceeding a maximum allowable value, say $N_{max}$, only the first 
$N_{max}$ Gaussian components are kept  while the other are removed from the GM.
The pseudo-code of the pruning algorithm is in Table \ref{alg:pruning}.

\begin{table}[!h]
\renewcommand{\arraystretch}{1.3}
\caption{Pruning pseudo-code}
\label{alg:pruning}
\centering
\hrulefill\hrule
\begin{algorithmic}[0]
	\Procedure{Pruning}{{$\left\{ \hat{x}_{i}, P_{i}, \alpha_{i} \right\}_{i = 1}^{N_{G}}, N_{max}$}}
		\If{$N_{G} > N_{max}$}
			\State $\mathcal{I} = \left\{ \mbox{indices of the } N_{max} \mbox{ Gaussian components with highest weights } \alpha_{i} \right\}$
			\State \textbf{return} $\left\{ \hat{x}_{i}, P_{i}, \alpha_{i} \right\}_{i \in \mathcal{I}}$
		\EndIf
		\State \textbf{return} $\left\{ \hat{x}_{i}, P_{i}, \alpha_{i} \right\}_{i = 1}^{N_{max}}$
	\EndProcedure
\end{algorithmic}
\hrule\hrule
\end{table}

Since the CPHD filter directly estimates the cardinality distribution, the estimated number of objects can be obtained via \textit{Maximum A Posteriori} (MAP) estimation, i.e.
\be
	\hat n_{k} = \underset{n}{\operatorname{max}} \, \cd[k]{n}
\label{MAP}
\ee
Given the MAP-estimated number of objects $\hat n_{k}$, estimate extraction is performed via the algorithm in Table \ref{alg:extraction} \cite[Section III.C, Table III]{vo-ma}.
The algorithm extracts the $\hat n_{k}$ local maxima (peaks) of the estimated location PDF keeping those for which the corresponding weights are above a preset threshold $\gamma_e$.

\begin{table}[!h]
\renewcommand{\arraystretch}{1.3}
\caption{Estimate extraction pseudo-code}
\label{alg:extraction}
\centering
\hrulefill\hrule
\begin{algorithmic}[0]
	\Procedure{Estimate Extraction}{{$\left\{ \hat{x}_{i}, P_{i}, \alpha_{i} \right\}_{i = 1}^{N_{G}}, \hat n_{k}, \gamma_{e}$}}
		\State $\mathcal{I} = \left\{ \mbox{indices of the } \hat n_{k} \mbox{ Gaussian components with highest weights } \alpha_{i} \right\}$
		\State $\hat X_{t} = \emptyset$
		\For{$i \in \mathcal{I}$}
			\If{$\alpha_{i} \hat n_{k} > \gamma_{e}$}
				\State $\hat X_{t} = \hat X_{t} \cup {\hat{x}_{i}}$
			\EndIf
		\EndFor
		\State \textbf{return}  $\hat X_{t}$
	\EndProcedure
\end{algorithmic}
\hrule\hrule
\end{table}

\section{Performance evaluation}
To assess performance of the proposed CGM-CPHD algorithm described in section \ref{distributedgmcphd}, a $2$-dimensional (planar)  multi-object tracking scenario is considered over a surveillance area of $50\times50 \, [km^2]$, wherein the sensor network of Fig. \ref{fig:4toa3doa} is deployed. 
The scenario consists of $6$ objects as depicted in Fig. \ref{fig:6trajectories}.

\begin{figure}[h!]
\centering
\includegraphics[width=0.6\columnwidth]{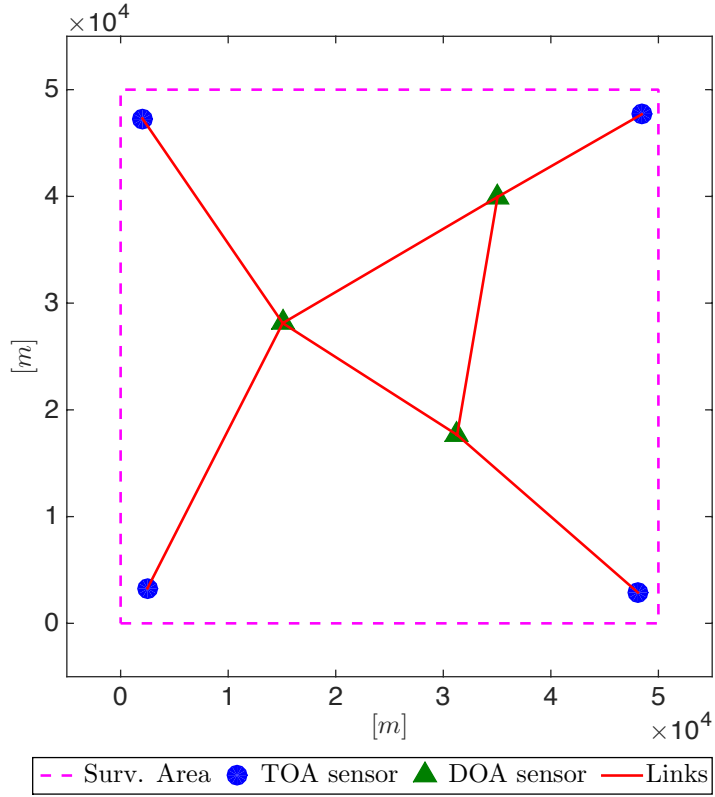}
\caption{Network with 7 sensors: 4 TOA and 3 DOA.}
\label{fig:4toa3doa}
\end{figure}

\begin{figure}[h!]
\centering
\includegraphics[width=0.6\columnwidth]{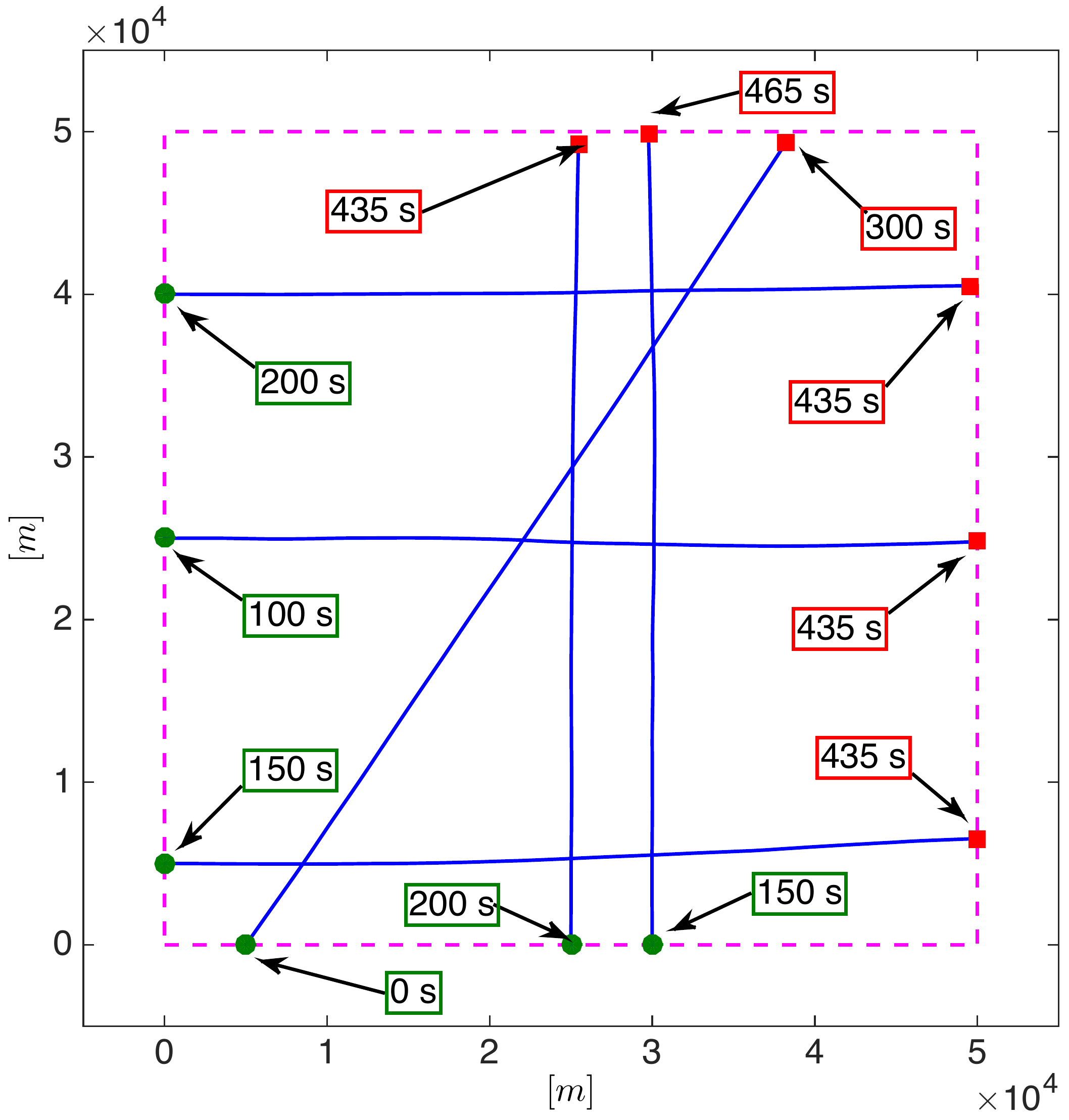}
\caption{object trajectories considered in the simulation experiment. The start/end point for each trajectory is denoted, respectively, by $\bullet\backslash\blacksquare$.}
\label{fig:6trajectories}
\end{figure}

The object state is denoted by $x = \left[ p_{x}, \, \dot{p}_{x}, \,p_{y}, \,\dot{p}_{y} \right]^{\top}$ where $(p_{x}, p_{y})$ and $( \dot{p}_{x}, \dot{p}_{y} )$ represent the object Cartesian position and, respectively, velocity components.
The motion of objects is modeled by the filters according to the nearly-constant velocity model:
\be
x_{k + 1} = \left[ \ba{cccc}
1 & T_{s} & 0 & 0	\\
0 & 1 	  & 0 & 0		\\
0 & 0 	  & 1 & T_{s} \\
0 & 0 	  & 0 & 1		\ea \right] x_{k} + w_{k} \, , \qquad
Q = \sigma_{w}^{2} \left[ \ba{cccc}
\frac{1}{4}T_{s}^{4} & \frac{1}{2}T_{s}^{3} & 0 & 0 \\
\frac{1}{2}T_{s}^{3} & T_{s}^{2} & 0 & 0 \\
0 & 0 & \frac{1}{4}T_{s}^{4} & \frac{1}{2}T_{s}^{3}\\
0 & 0 & \frac{1}{2}T_{s}^{3} & T_{s}^{2} \ea \right]
\ee
where $\sigma_{w} = 2 \, [m/s^{2}]$ and the sampling interval is $T_{s} = 5 \, [s]$.

As it can be seen from Fig. \ref{fig:4toa3doa}, the sensor network considered in the simulation consists of $4$range-only (\textit{Time Of Arrival}, TOA) and $3$ bearing-only (\textit{Direction Of Arrival}, DOA) sensors characterized by the following measurement functions:
\be
\begin{array}{rcl}
h^{i}(x) = \left\{ \ba{ll} \angle [ \left( p_{x} - x^{i} \right) + j \left( p_{y} - y^{i} \right)], & \mbox{DOA} \\[0.5em]
                                \sqrt{ \left( p_{x} - x^{i} \right)^2+ \left( p_{y} - y^{i} \right)^2 }, & \mbox{TOA}
\ea
                                \right.
\end{array}
\ee
where $( x^{i}, y^{i} )$ represents the known position of sensor $i$. The standard deviation of DOA and TOA measurement noises are taken respectively as $\sigma_{DOA} = 1\, [\mbox{}^{\circ}]$ and $\sigma_{TOA} = 100 \, [m]$. Because of the non linearity of the aforementioned sensors, the UKF \cite{juluhl2004} is exploited in each sensor in order to update means and covariances of the Gaussian components.

Clutter is modeled as a Poisson process with parameter $\lambda_{c} = 5$ and uniform spatial distribution over the surveillance area; the probability of object detection is $P_{D} = 0.99$.

In the considered scenario, objects pass through the surveillance area with no prior information for object birth locations. 
Accordingly, a $40$-component GM
\be
d_b(x) = \displaystyle \sum_{j = 1}^{40} \alpha_{j} \, \mathcal{N}\!\left( x; \hat{x}_{j}, P_{j} \right)
\ee
has been hypothesized for the birth intensity. Fig. \ref{fig:borderlineinit} gives a pictorial view of $d_b(x)$; notice that the center of the Gaussian components are regularly placed along the border of the surveillance region. For all components, the same covariance $P_{j} = \operatorname{diag}\!\left( 10^{6}, \,10^{4},10^{6}, \,10^{4} \right)$ and the same coefficient $\alpha_{j} = \alpha$, such that $n_{b} = 40 \alpha$ is the expected number of new-born objects, have been assumed.
\begin{figure}[h!]
\centering
\includegraphics[width=0.6\columnwidth]{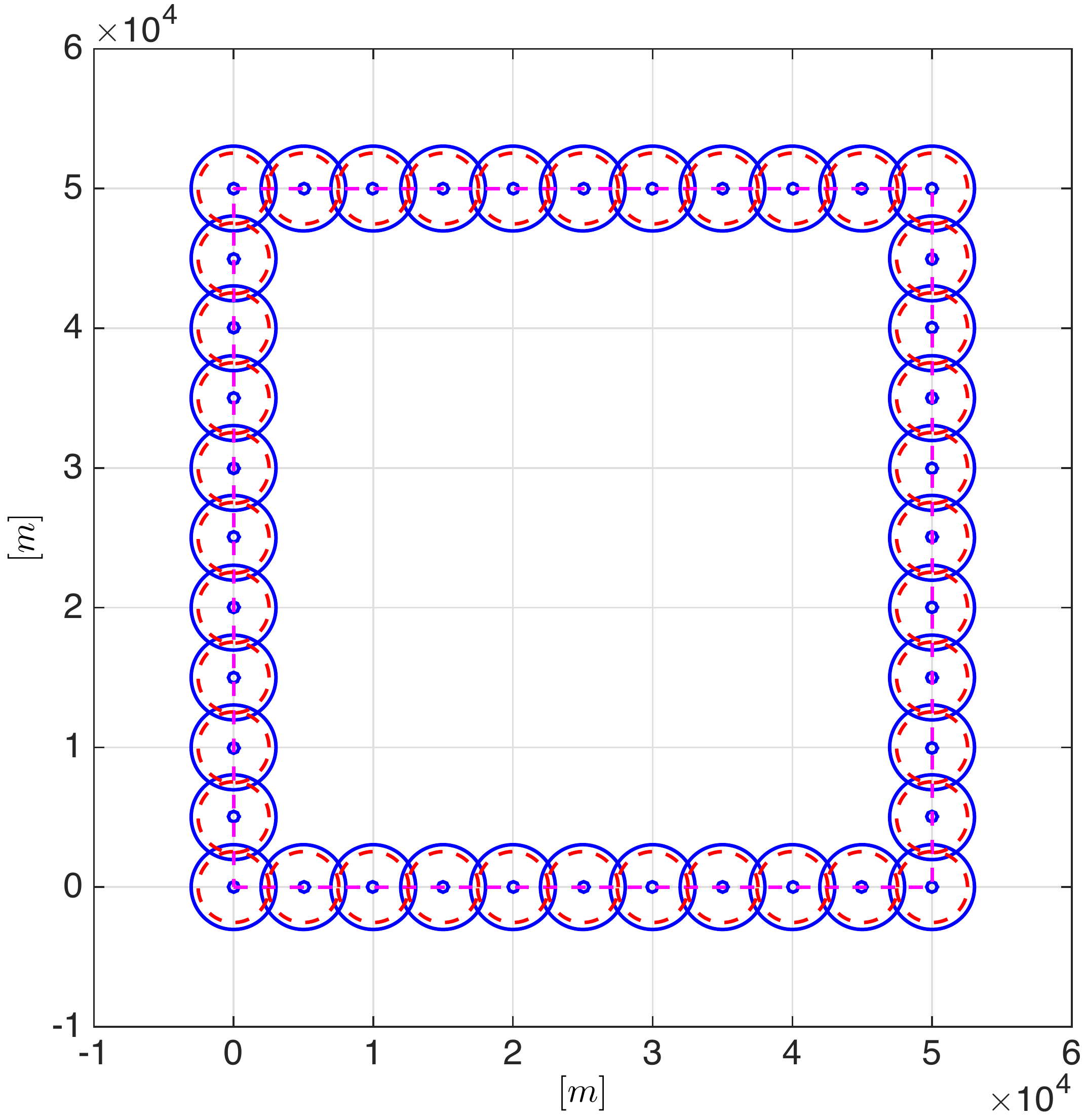}
\caption{Borderline position GM initialization. The symbol $\circ$ denotes the component mean while the blue solid line and the red dashed line are, respectively, their $3\sigma$ and $2\sigma$ confidence regions.}
\label{fig:borderlineinit}
\end{figure}

The proposed consensus CGM-CPHD filter is compared to an analogous filter, called \textit{Global GM-CPHD} (GGM-CPHD), that at each sampling interval performs a global fusion among all network nodes.
Multi-object tracking performance is evaluated in terms of the \textit{Optimal SubPattern Assignment} (OSPA) metric \cite{schvovo2008}.
The reported metric is averaged over $N_{mc} = 200$ Monte Carlo trials for the same object trajectories but different, independently generated, clutter and measurement noise realizations. The duration of each simulation trial is fixed to $500 \, [s]$ ($100$ samples).

The parameters of the CGM-CPHD and GGM-CPHD filters have been chosen as follows: the survival probability is $P_{S} = 0.99$; the maximum number of Gaussian components is $N_{max} = 25$; the merging threshold is $\gamma_{m} = 4$; the truncation threshold is $\gamma_{t} = 10^{-4}$; the extraction threshold is $\gamma_{e} = 0.5$; the weight of each Gaussian component of the birth PHD function is chosen as $\alpha_{j} = 1.5 \cdot 10^{-3}$.

Figs. \ref{fig:cardggmcphd}, \ref{fig:carddgmcphd1cs}, \ref{fig:carddgmcphd2cs} and \ref{fig:carddgmcphd3cs} display the statistics (mean and standard deviation) of the estimated number of objects obtained with CGM-CPHD (Fig. \ref{fig:cardggmcphd}) and CGM-CPHD with $L = 1$ (Fig. \ref{fig:carddgmcphd1cs}), $L = 2$ (Fig. \ref{fig:carddgmcphd2cs}) and $L = 3$ (Fig. \ref{fig:carddgmcphd3cs}) consensus steps. Fig. \ref{fig:6tospa} reports the OSPA metric (with Euclidean distance, $p = 2$, and cutoff parameter $c = 600$) for the same filters.
As it can be seen from Fig. \ref{fig:6tospa}, the performance obtained with three consensus steps is significantly better than with a single one, and comparable with the one given by the non scalable GGM-CPHD filter which performs a global fusion over all network nodes. Similar considerations hold for cardinality estimation.
These results show that by applying consensus, for a suitable number of steps, performance of distributed scalable algorithms is comparable to the one provided by the non scalable GGM-CPHD.

\begin{figure}[h!]
\centering
\includegraphics[width=\columnwidth]{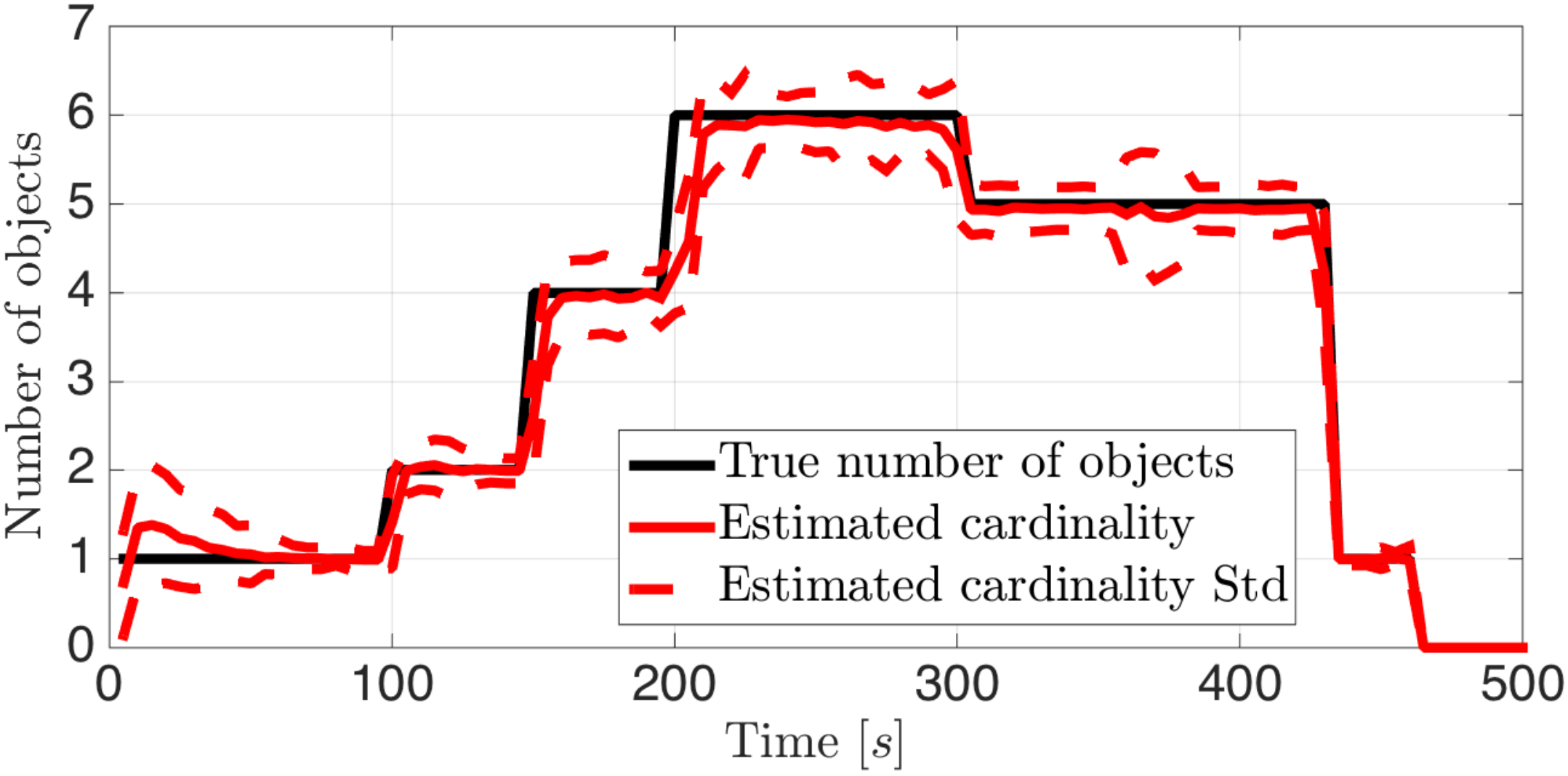}
\caption{Cardinality statistics for GGM-CPHD.}
\label{fig:cardggmcphd}
\end{figure}

\begin{figure}[h!]
\centering
\includegraphics[width=\columnwidth]{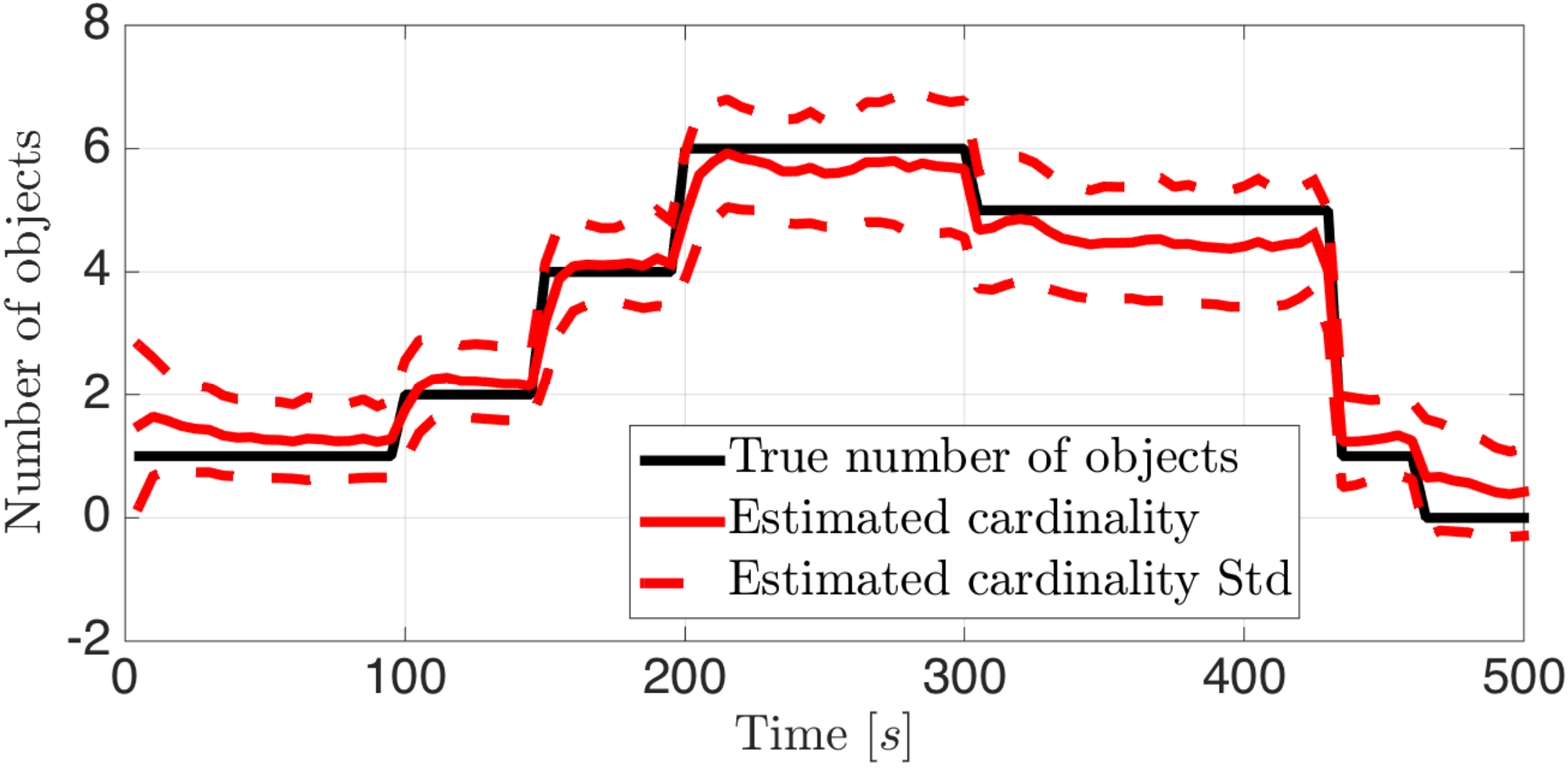}
\caption{Cardinality statistics for CGM-CPHD with $L = 1$ consensus step.}
\label{fig:carddgmcphd1cs}
\end{figure}

\begin{figure}[h!]
\centering
\includegraphics[width=\columnwidth]{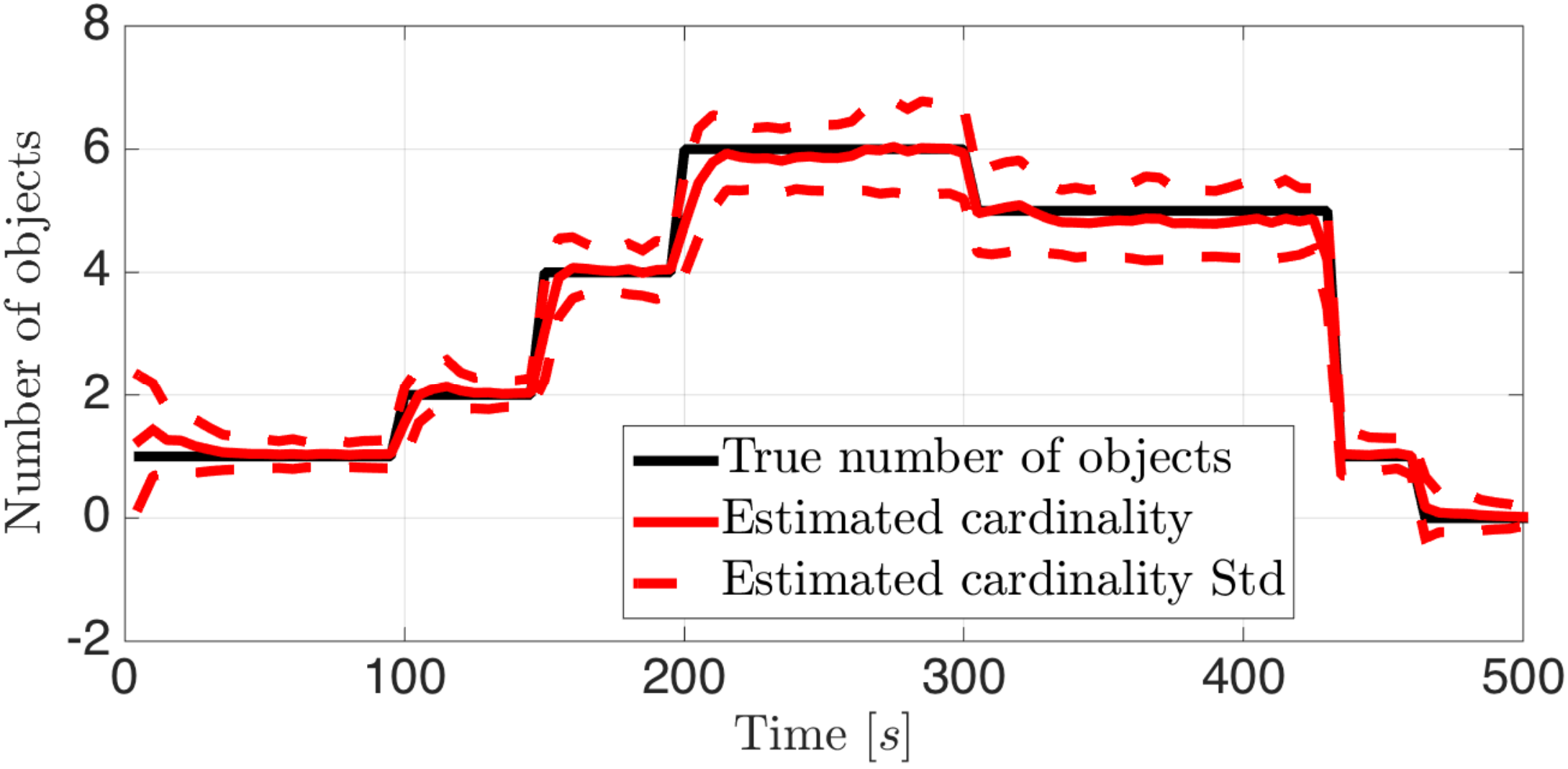}
\caption{Cardinality statistics for CGM-CPHD with $L = 2$ consensus steps.}
\label{fig:carddgmcphd2cs}
\end{figure}

\begin{figure}[h!]
\centering
\includegraphics[width=\columnwidth]{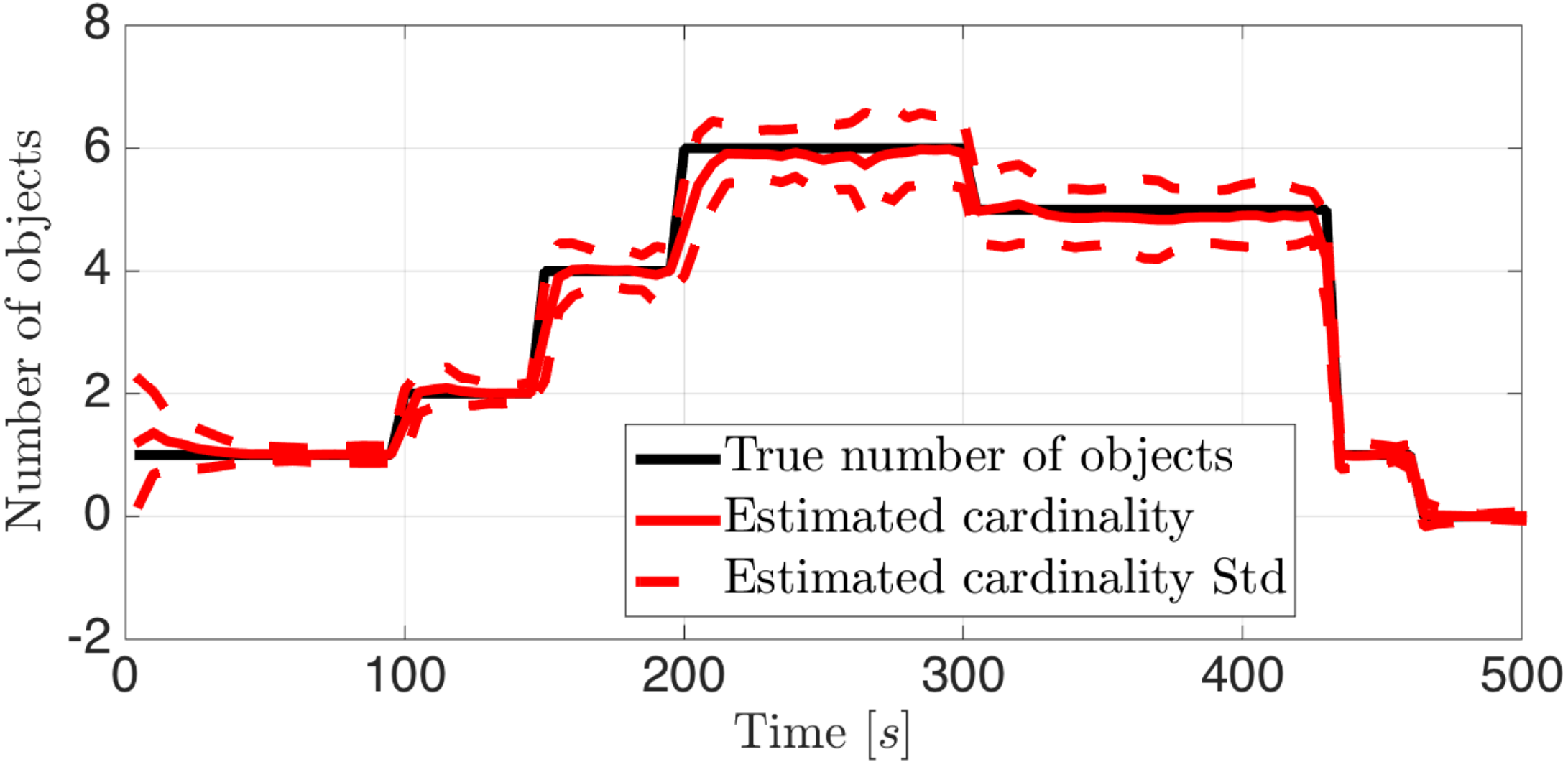}
\caption{Cardinality statistics for CGM-CPHD with $L = 3$ consensus steps.}
\label{fig:carddgmcphd3cs}
\end{figure}

\begin{figure}[h!]
\centering
\includegraphics[width=\columnwidth]{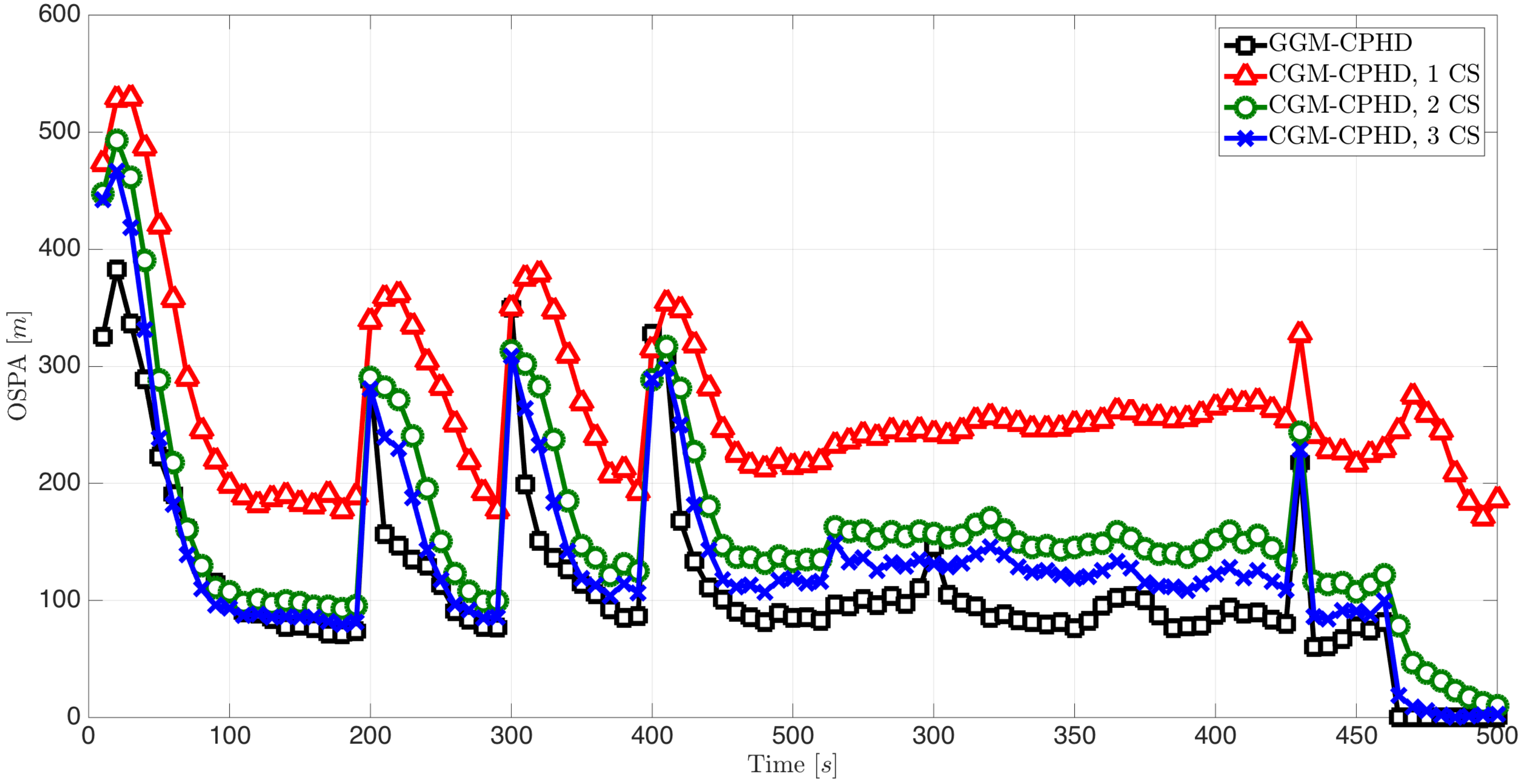}
\caption{Performance comparison, using OSPA, between GGM-CPHD and CGM-CPHD respectively with $L = 1$, $L = 2$ and $L = 3$ consensus steps.}
\label{fig:6tospa}
\end{figure}

To provide an insightful view of the consensus process, Fig. \ref{fig:benefits_consensus} shows the GM representation within a particular sensor node (TOA sensor $1$ in Fig. \ref{fig:4toa3doa}) before consensus and after $L = 1, 2, 3$ consensus steps. It can be noticed that consensus steps provide a progressive refinement of multi-object information. In fact, before consensus the CPHD reveals the presence of several false objects 
(see Fig. \ref{fig:benefits_consensus-a}); this is clearly due to the fact that the considered TOA sensor does not guarantee observability. Then, in the subsequent consensus steps (see Figs. \ref{fig:benefits_consensus-b}-\ref{fig:benefits_consensus-d}), the weights of the Gaussian components corresponding to false objects as well as the covariance of the component relative to the true object are progressively reduced. Summing up, just one consensus step leads to acceptable performance but, in this case, the distributed algorithm tends to be less responsive and accurate in estimating
object number and locations when compared to the centralized one (GGM-CPHD). However, by performing additional consensus steps, it is possible to retrieve performance comparable to the one of GGM-CPHD.

\begin{figure}[h!]
        \centering
        \subfloat[][GM before exploiting consensus.]{
                \includegraphics[width=0.45\textwidth]{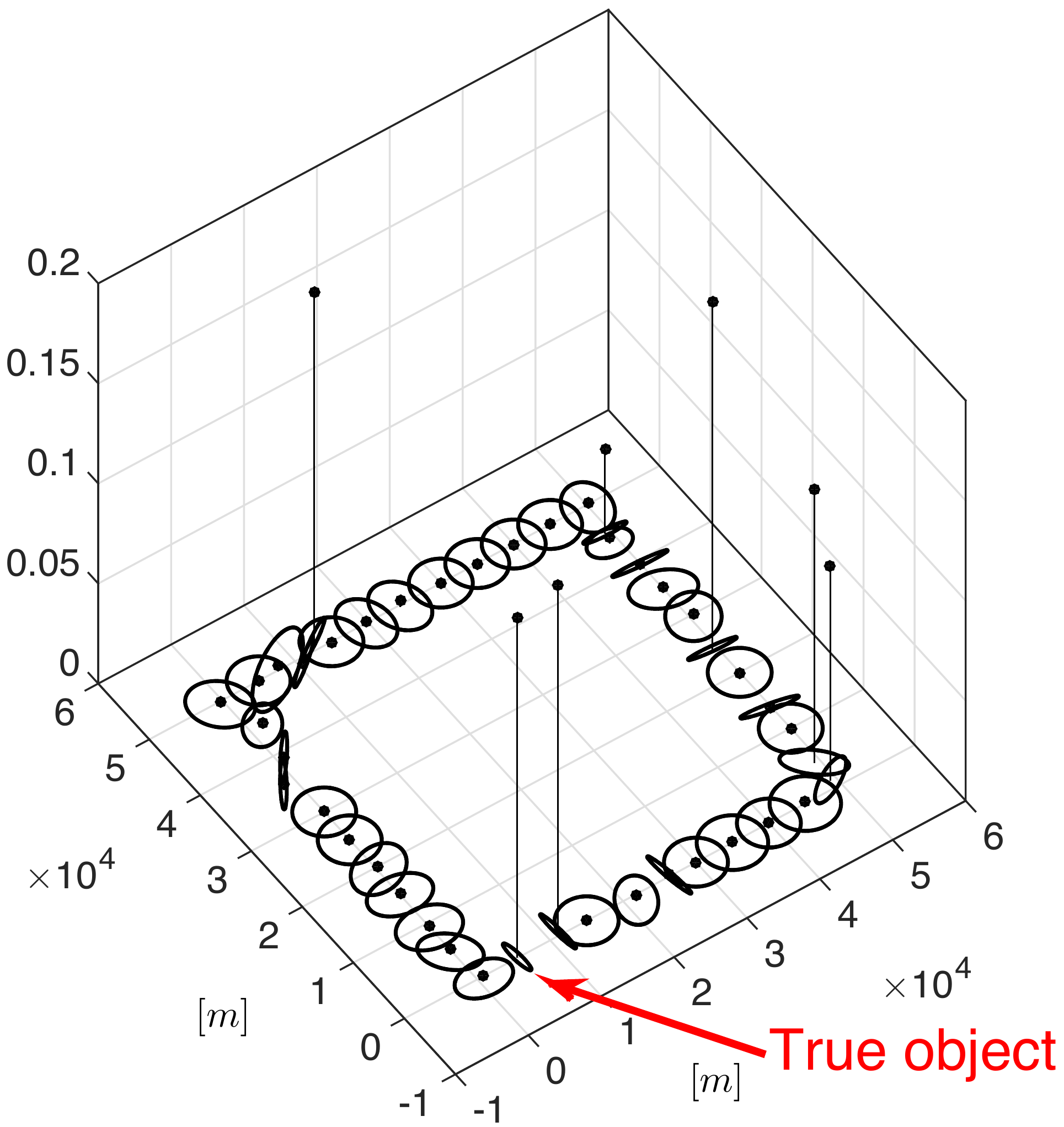}
                \label{fig:benefits_consensus-a}
        }
        \subfloat[][GM after the first consensus step ($L = 1$)]{
                \includegraphics[width=0.45\textwidth]{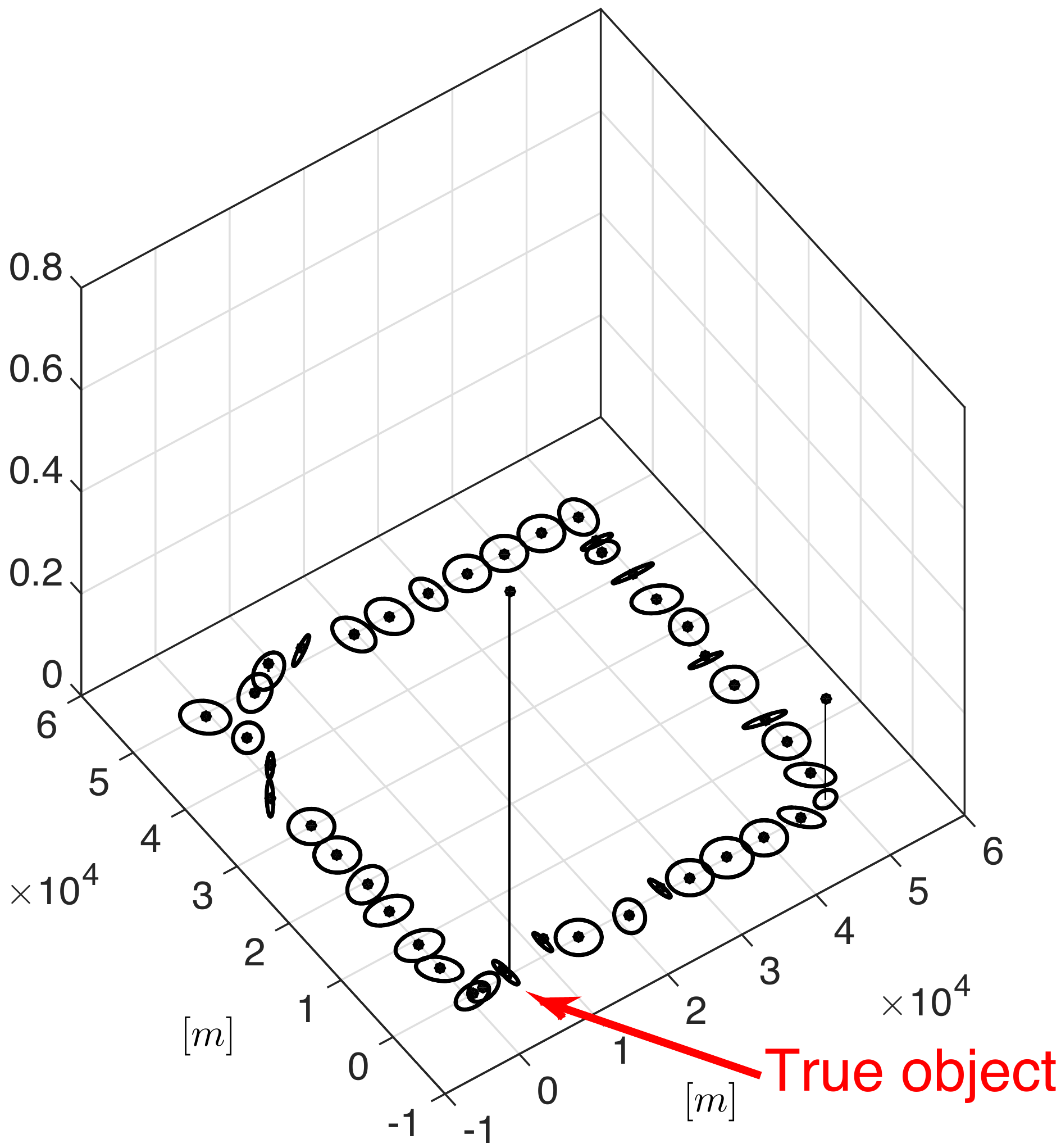}
		\label{fig:benefits_consensus-b}
        }

        \subfloat[][GM after the second consensus step ($L = 2$)]{
                \includegraphics[width=0.45\textwidth]{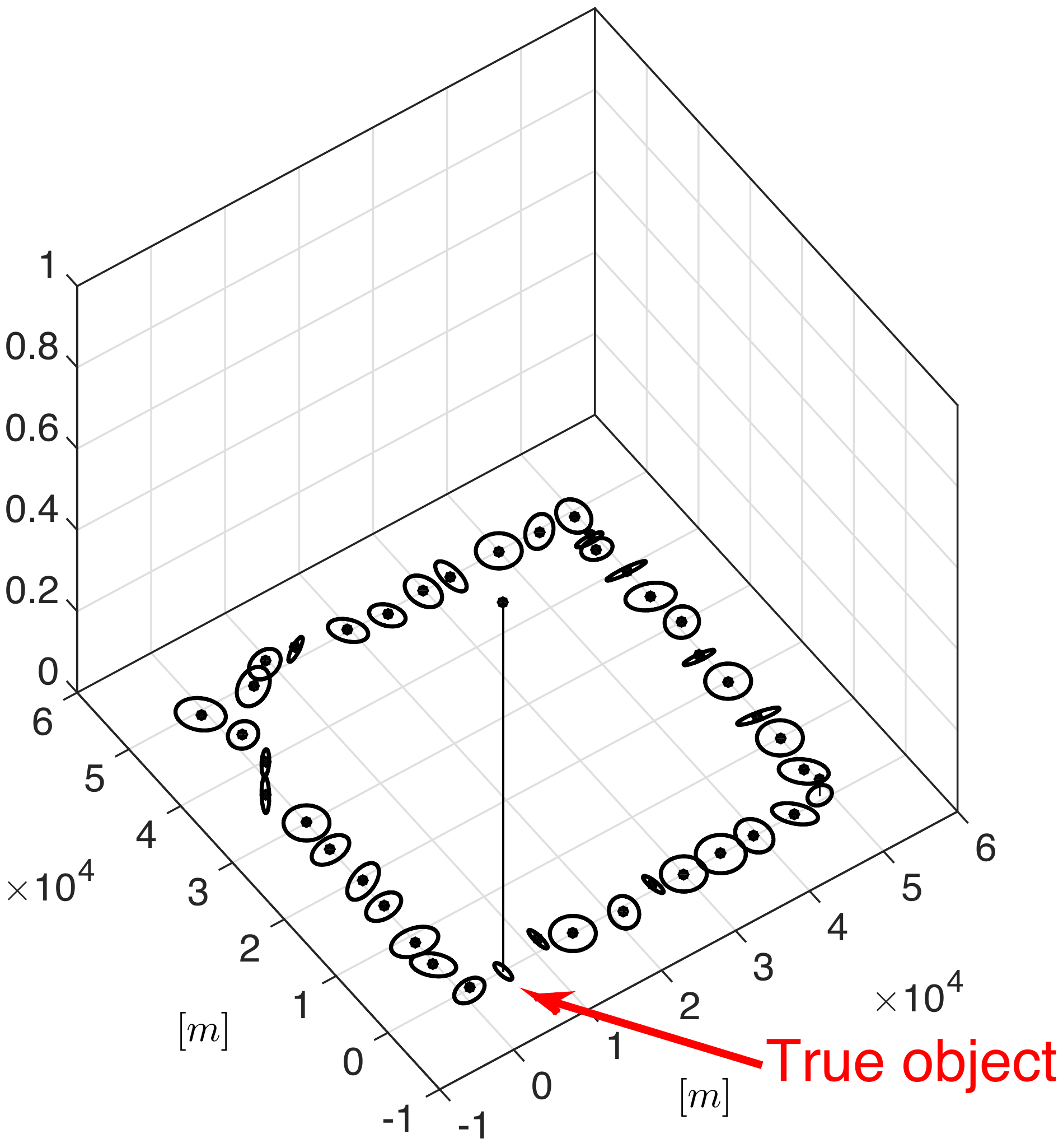}
		\label{fig:benefits_consensus-c}
        }
        \subfloat[][GM after the third consensus step ($L = 3$)]{
                \includegraphics[width=0.45\textwidth]{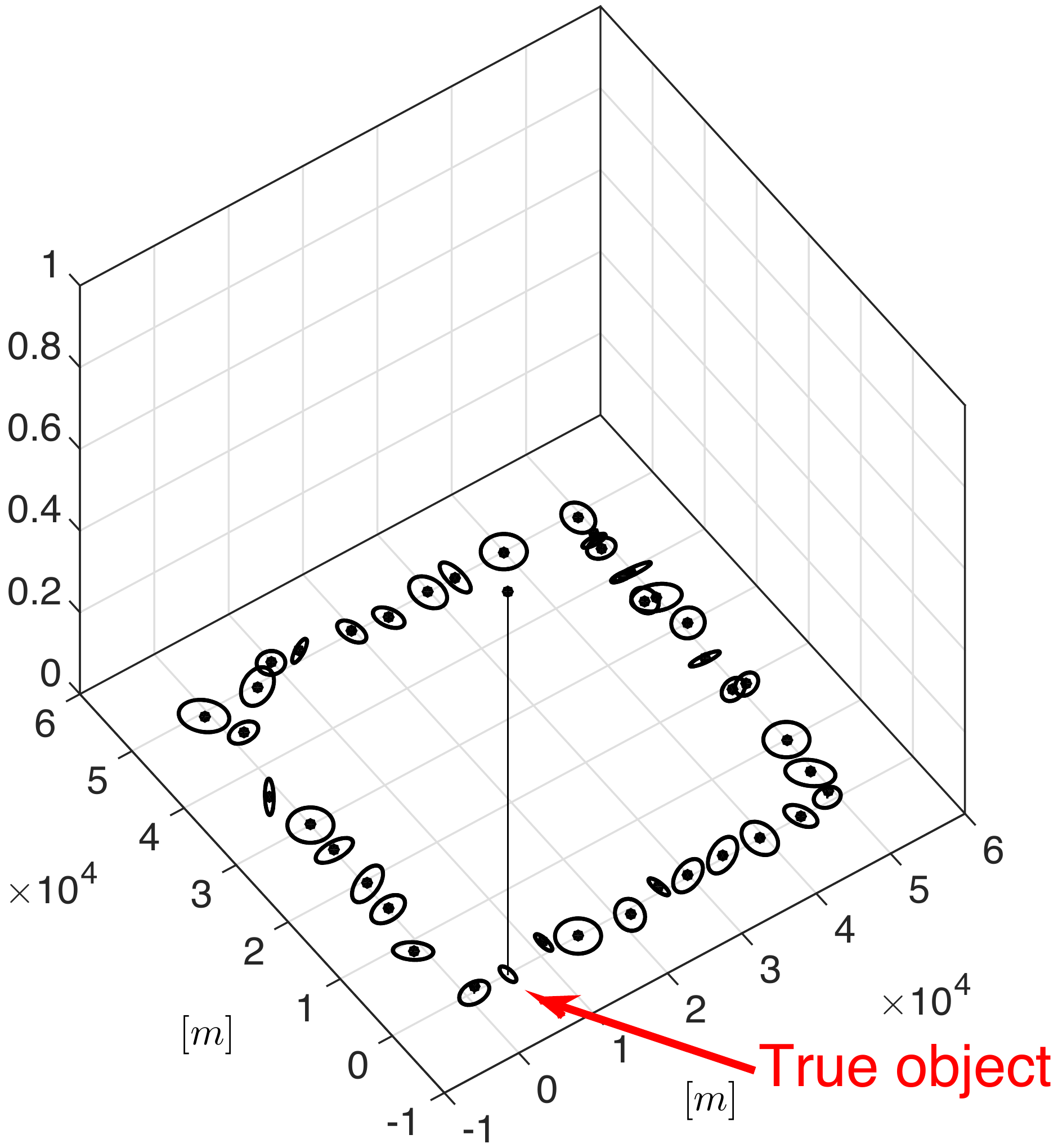}
                \label{fig:benefits_consensus-d}
        }
	\caption{GM representation within TOA sensor $1$ (see Fig. 2) at the initial time instant throughout the consensus process. Means and $99.7\%$ confidence ellipses of the Gaussian components of the location PDF are displayed in the horizontal plane while weights of such components are on the vertical axis.}    
       	\label{fig:benefits_consensus}
\end{figure}

In order to have an idea about the data communication requirements of the proposed algorithm, it is useful to examine the statistics of the number of Gaussian components of each node just before carrying out a consensus step; this determines, in fact, the number of data to be exchanged between nodes.
To this end, Table \ref{tab:gausscomponent} reports the average, standard deviation, minimum and maximum number of components in each node and at each consensus step during the simulation. Notice that, in the considered case-study, the average number of Gaussian components to be transmitted is about $20 \div 30$. Since each component involves $15$ real numbers ($4$ for the mean, $10$ for the covariance and $1$ for the weight), the overall average communication load per consensus step is about $1.2 \div 1.8$ kbytes if a $4$-byte single-precision floating-point representation is adopted for each real number.

To give a rough idea of the computation time, our MATLAB implementation on a Personal Computer with $4.2$ $GHz$ clock exhibited an average processing time per consensus step in the range of $45 \div 55$ $ms$. Clearly, this time can be significantly reduced with a different, e.g. C/C++, 
implementation.

The CCPHD filter has also been successfully applied in scenarios with range and/or Doppler sensors \cite{batchifan2013eusipco} and with passive multi-receiver radar systems \cite{batchifan2013irs}.

\begin{table}[h!]
	\renewcommand{\arraystretch}{1.3}
	\setlength\arrayrulewidth{0.5pt}\arrayrulecolor{black} 
	\setlength\doublerulesep{0.5pt}\doublerulesepcolor{black} 
	\caption{Number of Gaussian components before consensus for CGM-CPHD}
	\label{tab:gausscomponent}
	\centering
	~\\
	\scalebox{1}{
	\begin{tabular}{>{\columncolor[gray]{.95}}c||c|c||c|}
		{\textbf{TOA}$_\mb{1}$} & \multicolumn{1}{>{\columncolor[gray]{.95}}c|}{AVG} & \multicolumn{1}{>{\columncolor[gray]{.95}}c||}{STD} & \multicolumn{1}{>{\columncolor[gray]{.95}}c|}{MAX} \\
		\hline
		\hline
		\textit{$L = 1$} & $29.39$ & $12.98$ & $63$ \\
		\hline
		\textit{$L = 2$} & $10.63$ & $6.97$ & $48$ \\
		\hline
		\textit{$L = 3$} & $5.18$ & $4.58$ & $48$ \\
		\hline
	\end{tabular}
	}
	\scalebox{1}{
	\begin{tabular}{>{\columncolor[gray]{.95}}c||c|c||c|}
		{\textbf{TOA}$_\mb{2}$} & \multicolumn{1}{>{\columncolor[gray]{.95}}c|}{AVG} & \multicolumn{1}{>{\columncolor[gray]{.95}}c||}{STD} & \multicolumn{1}{>{\columncolor[gray]{.95}}c|}{MAX} \\
		\hline
		\hline
		\textit{$L = 1$} & $29.34$ & $13.17$ & $66$ \\
		\hline
		\textit{$L = 2$} & $10.33$ & $7.11$ & $58$ \\
		\hline
		\textit{$L = 3$} & $4.61$ & $4.49$ & $58$ \\
		\hline
	\end{tabular}
	}
	~\\~\\~\\
	\scalebox{1}{
	\begin{tabular}{>{\columncolor[gray]{.95}}c||c|c||c|}
		{\textbf{TOA}$_\mb{3}$} & \multicolumn{1}{>{\columncolor[gray]{.95}}c|}{AVG} & \multicolumn{1}{>{\columncolor[gray]{.95}}c||}{STD} & \multicolumn{1}{>{\columncolor[gray]{.95}}c|}{MAX} \\
		\hline
		\hline
		\textit{$L = 1$} & $29.08$ & $13.13$ & $63$ \\
		\hline
		\textit{$L = 2$} & $10.42$ & $7.06$ & $58$ \\
		\hline
		\textit{$L = 3$} & $4.7$ & $4.49$ & $57$ \\
		\hline
	\end{tabular}
	}
	\scalebox{1}{
	\begin{tabular}{>{\columncolor[gray]{.95}}c||c|c||c|}
		{\textbf{TOA}$_\mb{4}$} & \multicolumn{1}{>{\columncolor[gray]{.95}}c|}{AVG} & \multicolumn{1}{>{\columncolor[gray]{.95}}c||}{STD} & \multicolumn{1}{>{\columncolor[gray]{.95}}c|}{MAX} \\
		\hline
		\hline
		\textit{$L = 1$} & $29.72$ & $13.06$ & $67$ \\
		\hline
		\textit{$L = 2$} & $10.79$ & $6.97$ & $52$ \\
		\hline
		\textit{$L = 3$} & $5.25$ & $4.6$ & $52$ \\
		\hline
	\end{tabular}
	}
	~\\~\\~\\
	\scalebox{1}{
	\begin{tabular}{>{\columncolor[gray]{.95}}c||c|c||c|}
		{\textbf{DOA}$_\mb{1}$} & \multicolumn{1}{>{\columncolor[gray]{.95}}c|}{AVG} & \multicolumn{1}{>{\columncolor[gray]{.95}}c||}{STD} & \multicolumn{1}{>{\columncolor[gray]{.95}}c|}{MAX} \\
		\hline
		\hline
		\textit{$L = 1$} & $24.12$ & $15$ & $52$ \\
		\hline
		\textit{$L = 2$} & $5.65$ & $4.96$ & $53$ \\
		\hline
		\textit{$L = 3$} & $4.22$ & $4.25$ & $51$ \\
		\hline
	\end{tabular}
	}
	\scalebox{1}{
	\begin{tabular}{>{\columncolor[gray]{.95}}c||c|c||c|}
		{\textbf{DOA}$_\mb{2}$} & \multicolumn{1}{>{\columncolor[gray]{.95}}c|}{AVG} & \multicolumn{1}{>{\columncolor[gray]{.95}}c||}{STD} & \multicolumn{1}{>{\columncolor[gray]{.95}}c|}{MAX} \\
		\hline
		\hline
		\textit{$L = 1$} & $23.61$ & $15.32$ & $53$ \\
		\hline
		\textit{$L = 2$} & $5.6$ & $4.94$ & $50$ \\
		\hline
		\textit{$L = 3$} & $4.19$ & $4.24$ & $48$ \\
		\hline
	\end{tabular}
	}
	~\\~\\~\\
	\scalebox{1}{
	\begin{tabular}{>{\columncolor[gray]{.95}}c||c|c||c|}
		{\textbf{DOA}$_\mb{3}$} & \multicolumn{1}{>{\columncolor[gray]{.95}}c|}{AVG} & \multicolumn{1}{>{\columncolor[gray]{.95}}c||}{STD} & \multicolumn{1}{>{\columncolor[gray]{.95}}c|}{MAX} \\
		\hline
		\hline
		\textit{$L = 1$} & $23.45$ & $15.23$ & $52$ \\
		\hline
		\textit{$L = 2$} & $4.92$ & $4.78$ & $58$ \\
		\hline
		\textit{$L = 3$} & $4.22$ & $4.35$ & $60$ \\
		\hline
	\end{tabular}
	}
\end{table}

\chapter{Centralized multi-object tracking}
\label{chap:mot}
\spminitoc
In this chapter a new approximation of the $\delta$-GLMB filter is presented \cite{mdglmbf}.
It will be shown that the GLMB distribution can be used to construct a principled approximation to an arbitrary labeled RFS density that matches the PHD and the cardinality distribution \cite{papi2014}.
The resulting filter is referred to as the Marginalized $\delta$-GLMB (M$\delta$-GLMB) tracking filter since it can be interpreted as a \textit{marginalization over the data associations}.
The proposed filter is consequently computationally cheaper than the $\delta$-GLMB filter while still preserving key summary statistics of the multi-object posterior.
Importantly, the M$\delta$-GLMB filter facilitates tractable multi-sensor multi-object tracking.
Unlike PHD \cite{vo-ma}, CPHD \cite{vo-vo-cantoni} and Multi-Bernoulli based filters \cite{Vo2009}, the proposed approximation accommodates statistical dependence between objects.
An alternative derivation of the LMB filter proposed in \cite{lmbf} based on the newly proposed M$\delta$-GLMB filter is presented \cite{mdglmbf}.

\section{The $\delta$-GLMB filter}
\label{sec:dglmb}
An efficient approach to multi-object tracking was presented in \cite{vovo1} using the $\delta$-GLMB distribution (\ref{eq:dglmbpdf}) (or equivalently (\ref{eq:dglmbpdf2})), i.e.
\bie
	\boldsymbol{\pi}(\lb{X}) & = & \Delta(\lb{X})\sum_{\left(I,\xi\right)\in\mathcal{F}\left(\mathbb{L}\right)\times\Xi}w^{\left(I,\xi\right)}\delta_{I}\left(\mathcal{L}\left(\lb{X}\right)\right)\left[p^{\left(\xi\right)}\right]^{\lb{X}} \, ,\\
	& = & \Delta(\lb{X})\sum_{I\in\mathcal{F}\left(\mathbb{L}\right)}\delta_{I}\left(\mathcal{L}\left(\lb{X}\right)\right)\sum_{\xi\in\Xi}w^{\left(I,\xi\right)}\left[p^{\left(\xi\right)}\right]^{\lb{X}} \, .
\eie

The $\delta$-GLMB density naturally arises in multi-object tracking problems when using the standard detection based measurement model .
In the following, the prediction and update steps of the $\delta$-GLMB filter is briefly recalled; details can be found in \cite{vovo1,vovo2}.

\subsection{$\delta$-GLMB prediction}
The standard multi-object dynamic model is described as follows.
Given the multi-object state $\lb{Z}$, each state $(\zeta, \ell_{-}) \in \lb{Z}$ either continues to exist at the next time step with probability $P_{S}(\zeta, \ell_{-})$ and evolves to a new state $(x, \ell)$ with probability density $\varphi_{k|k-1}(x|\zeta,\ell_{-})$, or disappears with probability $1 - P_{S}(\zeta,\ell_{-})$.
Note that the label of the objects is preserved in the transition, only the kinematic part of state changes.
Assuming that $\lb{Z}$ has distinct labels and that conditional on $\lb{Z}$, the transition of the kinematic states are mutually independent, then the set $\lb{W}$ of surviving objects at the next time is a labeled multi-Bernoulli RFS \cite{vovo1}
\begin{equation}
	\boldsymbol{f}_{S}\!\left( \mathbf{W} | \mathbf{Z} \right) = \Delta\!\left( \mathbf{W} \right) \Delta\!\left( \mathbf{Z} \right) \, 1_{\mathcal{L}\left( \mathbf{Z} \right)}\left( \mathcal{L}\!\left( \mathbf{W} \right) \right) \left[ \Phi\left( \mathbf{W}; \cdot \right) \right]^{\mathbf{Z}} \, ,\label{eq:survivorpdf}
\end{equation}
where
\begin{equation}
	\Phi\!\left( \mathbf{W}; \zeta, \ell_{-} \right) = \sum_{\left( x, \ell \right) \in \mathbf{W}} \delta_{\ell_{-}}\left( \ell \right) P_{S}\left( \zeta, \ell_{-} \right) f\left( x | \zeta, \ell_{-} \right) + \left[ 1 - 1_{\mathcal{L}\left( \mathbf{W} \right)}\left( \ell_{-} \right) \right] \left( 1 - P_{S}\left( \zeta, \ell_{-} \right) \right) \, .
\end{equation}
The $\Delta\!\left( \mathbf{Z} \right)$ in (\ref{eq:survivorpdf}) ensures that only $\mathbf{Z}$ with distinct labels are considered.

The set of new objects born at the next time step is distributed according to 
\be
	\boldsymbol{f}_{B}(\lb{Y}) = \Delta(\lb{Y}) \, w_{B}(\lbs{\lb{Y}})\left[p_{B}\right]^{\lb{Y}}\label{eq:Birth_transition}
\ee
The birth density $\boldsymbol{f}_{B}(\cdot)$ is defined on $\mathbb{X} \times \mathbb{L}$ and $\boldsymbol{f}_{B}(\lb{Y}) = 0$ if $\lb{Y}$ contains any element $\lb{x}$ with $\lbs{\lb{x}} \notin \mathbb{L}$.
The birth model (\ref{eq:Birth_transition}) includes both labeled Poisson and labeled multi-Bernoulli densities.
The multi-object state at the next time $\mathbf{X}$ is the superposition of surviving objects and new born objects, i.e. $\mathbf{X} = \mathbf{W} \cup \mathbf{Y}$.
Since the label spaces $\mathbb{L}$ and $\mathbb{B}$ are disjoint, the labeled birth objects and surviving objects are independent.
Thus the multi-object transition density turns out to be the product of the transition density (\ref{eq:survivorpdf}) and the density of new objects (\ref{eq:Birth_transition})
\begin{equation}
	\boldsymbol{f}\!\left( \mathbf{X} | \mathbf{Z} \right) = \boldsymbol{f}_{S}\!\left( \mathbf{X} \cap \left( \mathbb{X} \times \mathbb{L} \right) | \mathbf{Z} \right) \, \boldsymbol{f}_{B}\left( \mathbf{X} - \left( \mathbb{X} \times \mathbb{L} \right) \right) \, .
\end{equation}
Additional details can be found in \cite[Subsection IV.D]{vovo1}.

If the current multi-object prior density is a $\delta$-GLMB of the form (\ref{eq:dglmbpdf}), then the multi-object prediction density is also a $\delta$-GLMB given by 
\be
	\boldsymbol{\pi}_{k|k-1}(\lb{X}) = \Delta(\lb{X}) \sum_{\left(I,\xi\right) \in \mathcal{F}(\lbsp) \times \Xi} w_{k|k-1}^{\left( I, \xi \right)} \delta_{I}\left(\lbs{\lb{X}}\right)\left[p_{k|k-1}^{\left(\xi\right)}\right]^{\lb{X}}\label{eq:dglmbpredictedpdf}
\ee
where
\bie
	w_{k|k-1}^{\left(I,\xi\right)} & = & w_{B}(I\backslash\lbsp[-])w_{S}^{(\xi)}(I\cap\lbsp[-]) \, ,\\
	p_{k|k-1}^{(\xi)}(x,\ell) & = & 1_{\lbsp[-]}(\ell)p_{S}^{(\xi)}(x,\ell)+1_{\mathbb{B}}(\ell)p_{B}(x,\ell) \, ,\\
	p_{S}^{(\xi)}(x,\ell) & = & \frac{\left\langle P_{S}(\cdot,\ell) \, \varphi_{k|k-1}(x|\cdot,\ell),p_{k-1}^{(\xi)}(\cdot,\ell)\right\rangle }{\eta_{S}^{(\xi)}(\ell)} \, ,\\
	\eta_{S}^{(\xi)}(\ell) & = & \left\langle P_{S}(\cdot,\ell),p_{k-1}^{(\xi)}(\cdot,\ell)\right\rangle \, ,\\
	w_{S}^{(\xi)}(L) & = & [\eta_{S}^{(\xi)}]^{L}\sum_{J\subseteq\lbsp[-]}1_{J}(L)[1-\eta_{S}^{(\xi)}]^{J-L}w_{k}^{(J,\xi)} \, .
\eie

\subsection{$\delta$-GLMB update}
The standard multi-object observation model is described as follows.
For a given multi-object state $\lb{X}$, each state $\mathbf{x}\in\lb{X}$ is either detected with probability $P_{D}\left(\mathbf{x}\right)$ and generates a point $y$ with likelihood $g(y|\mathbf{x})$, or missed with probability $1 - P_{D}\left(\mathbf{x}\right)$, i.e. $\mathbf{x}$ generates a Bernoulli RFS with parameter $(P_{D}(\mathbf{x}), g(\cdot|\mathbf{x}))$.
Assuming that conditional on $\lb{X}$ these Bernoulli RFSs are independent, then the set $W \subset \mathbb{Y}$ of detected points (non-clutter measurements) is a multi-Bernoulli RFS with parameter set $\left\{ \left( P_{D}(\mathbf{x}), g(\cdot|\mathbf{x}) \right) : \mathbf{x} \in \lb{X} \right\}$.
The set $\mathcal{C} \subset \mathbb{Y}$ of false observations (or clutter), assumed independent of the detected points, is modeled by a Poisson RFS with intensity function $\kappa(\cdot)$.
The multi-object observation $Y$ is the superposition of the detected points and false observations, i.e. $Y = W \cup \mathcal{C}$.
Assuming that, conditional on $\lb{X}$, detections are independent, and that clutter is independent of the detections, the multi-object likelihood is given by 
\be
	g_{k}(Y|\lb{X})=e^{-\left\langle \kappa ,1\right\rangle}\kappa^{Y}\sum_{\theta \in \Theta (\mathcal{L}(\lb{X}))}\left[\psi_{Y}(\cdot ;\theta )\right] ^{\lb{X}} \, ,\label{eq:RFSmeaslikelihood0}
\ee
where $\Theta (I)$ is the set of mappings $\theta:I \rightarrow \{0,1,...,M\},$ such that $\theta(i) = \theta( i^{\prime} ) > 0$ implies $i = i^{\prime}$, and
\be
	\psi_{Y}(x, \ell; \theta ) =	\left\{ 
							\begin{array}{ll}
								\dfrac{P_{D}(x,\ell) \, g_{k}(y_{\theta(\ell)}|x,\ell)}{\kappa (y_{\theta(\ell)})} \, , & \text{if }\theta (\ell) > 0 \\ 
								1-P_{D}(x,\ell) \, , & \text{if }\theta(\ell) = 0
							\end{array} \right. \, .\label{eq:molikelihood}
\ee
Note that an association map $\theta$ specifies which tracks generated which measurements, i.e. track $\ell$ generates measurement $y_{\theta(\ell)}\in Y$, with undetected tracks assigned to $0$.
The condition ``$\theta(i)=\theta (i^{\prime}) > 0$ implies $i = i^{\prime}$'', means that a track can generate at most one measurement, and a measurement can be assigned to at most one track, at one time instant.
Additional details can be found in \cite[Subsection IV.C]{vovo1}.

If the current multi-object prediction density is also a $\delta$-GLMB of the form (\ref{eq:dglmbpdf}), then the multi-object posterior density is a $\delta$-GLMB given by
\be
	\boldsymbol{\pi}_{k}\left(\lb{X}\right) = \Delta(\lb{X}) \,
	\sum_{\left(I,\xi\right)\in\mathcal{F}\left(\mathbb{L}\right)\times\Xi} \,
	\sum_{\theta\in\Theta(I)} w_{k}^{\left(I,\xi,\theta\right)}(Y_{k})\delta_{I}\left(\mathcal{L}\left(\lb{X}\right)\right)\left[p_{k}^{\left(\xi,\theta\right)}\right]^{\lb{X}}\label{eq:dglmbupdatedpdf}
\ee
where $\Theta(I)$ denotes the subset of the current maps with domain $I$, and
\bie
	w_{k}^{(I,\xi,\theta)}(Y_{k}) & \propto & w_{k|k-1}^{\left(I,\xi\right)}\left[\eta_{Y_{k}}^{(\xi,\theta)}(\ell)\right]^{I} \, ,\\
	\eta_{Y_{k}}^{(\xi,\theta)}(\ell) & = & \left\langle p_{k|k-1}^{(\xi)}(\cdot,\ell),\psi_{Y_{k}}(\cdot,\ell;\theta)\right\rangle \, ,\\
	p_{k}^{\left(\xi,\theta\right)}\left(x, \ell\right) & = & \frac{p_{k|k-1}^{(\xi)}(x,\ell)\psi_{Y_{k}}(x,\ell;\theta)}{\eta_{Y_{k}}^{(\xi,\theta)}(\ell)} \, ,\\
	\psi_{Y_{k}}(x,\ell;\theta) & = &	\begin{cases}
							\dfrac{P_{D}(x,\ell) \, g(y_{\theta(\ell)}|x,\ell)}{\kappa(y_{\theta(\ell)})} \, , & \mbox{if } \theta(\ell)>0\\
							1-P_{D}(x,\ell) \, , & \mbox{if } \theta(\ell)=0
						\end{cases} \, . 
\eie
Notice that the new association maps $\theta$ can be added (stacked) to their respective association histories $\xi$ in order to have again the more compact form (\ref{eq:dglmbpdf}) for the updated $\delta$-GLMB (\ref{eq:dglmbupdatedpdf}).

\section{GLMB approximation of multi-object densities}
\label{sce:glmbapprox}
In this section a GLMB approximation of the multi-object density with statistically dependent objects is proposed.
In particular a GLMB density that matches the multi-object density of interest in both the PHD and cardinality distribution is derived.
The strategy is inspired by Mahler's iid cluster approximation in the CPHD filter \cite{mahler2}, which has proven to be very effective in practical multi-object filtering problems \cite{mah2014book,geoschwil2009,svewinsve2009}.
Moreover, the GLMB approximation captures object dependencies whereas the iid cluster approximation assumes that individual objects are iid from a common single-object density.
Proof of the result is given in the appendix \ref{chap:appendix}.

Our result follows from the observation that any labeled RFS density ${\boldsymbol{\pi}}(\cdot)$ on ${\mathcal{F}}({\mathbb{X}}{\mathcal{\times}}{\mathbb{L}})$ can be written as 
\begin{equation}
	{\boldsymbol{\pi}}({\lb{X}})=w({\mathcal{L}}({\lb{X}}))p({\lb{X}})\label{eq:GLMBjoint}
\end{equation}
where 
\bie
	w(\{\ell_{1},\ldots,\ell_{n}\}) & \triangleq & \int{\boldsymbol{\pi}}(\{(x_{1},\ell_{1}),\ldots,(x_{n},\ell_{n})\}) dx_{1} \cdots dx_{n} \label{eq:GLMB1}\\
	p(\{(x_{1},\ell_{1}),\ldots,(x_{n},\ell_{n})\}) & \triangleq & \frac{{\boldsymbol{\pi}}(\{(x_{1},\ell_{1}),\ldots,(x_{n},\ell_{n})\})}{w(\{\ell_{1},\ldots,\ell_{n}\})}\label{eq:GLMB2}
\eie
It is implicitly assumed that $p({\lb{X}})$ is defined to be zero whenever $w({\mathcal{L}}({\lb{X}}))$ is zero.

Note that since ${\boldsymbol{\pi}}(\cdot)$ is symmetric in its arguments, the integral in (\ref{eq:GLMB1}) and $w$ is symmetric in $\ell_{1},\ldots,\ell_{n}$.
Moreover, since $\sum_{L\in{\mathcal{F}}({\mathbb{L}})}w(L)=1$, $w$ is indeed a probability distribution on ${\mathcal{F}}({\mathbb{L}})$ and can be interpreted as the probability that the labeled RFS has label set $\{\ell_{1},\ldots,\ell_{n}\}$. For $w(\{\ell_{1},\ldots,\ell_{n}\})>0$, $p(\{(x_{1},\ell_{1}),\ldots,(x_{n},\ell_{n})\})$\ is the joint probability density (on ${\mathbb{X}}^{n}$) of the kinematic states $x_{1},\ldots,x_{n}$ given that their corresponding labels are $\ell_{1},\ldots,\ell_{n}$.
\begin{pro}[GLMB approximation]\label{pro:glmbapprox}~\\
	The GLMB that matches the cardinality distribution and the PHD of a labeled RFS with density ${\boldsymbol{\pi}}(\cdot)$ is given by:
	\be
		\clmb[\hat]{}{}{\lb{X}} = \sum_{I\in{\mathcal{F}}({\mathbb{L}})}w^{(I)}({\mathcal{L}}({\lb{X}}))\left[p^{(I)}\right]^{{\lb{X}}}\label{eq:ApproximateGLMB}
	\ee
	where,
	\bie
		w^{(I)}(L) & = & \delta_{I}(L)w(I) \label{eq:MarginalizeGeneral-1}\\
		p^{(I)}(x,\ell) & = & 1_{I}(\ell)p_{I\backslash\{\ell\}}(x,\ell)  \label{eq:MarginalizeGeneral-2}\\
		p_{\{\ell_{1},\ldots,\ell_{n}\}}(x,\ell) & = & \int p(\left\{ (x,\ell),(x_{1},\ell_{1}),\ldots,(x_{n},\ell_{n})\right\} )d\left(x_{1},\ldots,x_{n}\right)\label{eq:MarginalizeGeneral-3}
	\eie
\end{pro}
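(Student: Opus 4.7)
The plan is to verify directly that the proposed density $\hat{\boldsymbol{\pi}}$ in (\ref{eq:ApproximateGLMB}) is a valid labeled RFS density of GLMB form and that its cardinality distribution and PHD coincide with those of $\boldsymbol{\pi}$. The key structural tool is the factorization (\ref{eq:GLMBjoint})--(\ref{eq:GLMB2}) which rewrites \emph{any} labeled RFS density as the product of a cardinality/label distribution $w(\cdot)$ on $\mathcal{F}(\mathbb{L})$ and a conditional joint density $p(\cdot)$ on the kinematic states given the label set. I would first record the symmetry of $p(\cdot)$ in its kinematic arguments (inherited from that of $\boldsymbol{\pi}$) and the fact that $\sum_{L}w(L)=1$, so that $\hat{\boldsymbol{\pi}}$ is a bona fide GLMB with index set $\mathbb{C}=\mathcal{F}(\mathbb{L})$, weights $w^{(I)}(L)=\delta_I(L)w(I)$, and single-object densities $p^{(I)}(x,\ell)=1_I(\ell)\,p_{I\setminus\{\ell\}}(x,\ell)$.

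Next I would match the cardinality distributions. Using (\ref{eq:lrfscardinality}) applied to $\hat{\boldsymbol{\pi}}$, the Kronecker $\delta_I(L)$ collapses the inner sum and gives $\hat{\rho}(n)=\sum_{L\in\mathcal{F}_n(\mathbb{L})}w(L)$. On the other hand, applying the set-integral definition (\ref{eq:setint}) to $\boldsymbol{\pi}$ and invoking (\ref{eq:GLMB1}) together with the fact that $p(\cdot)$ integrates to $1$ on its kinematic arguments for every fixed label set, yields exactly $\rho(n)=\sum_{L\in\mathcal{F}_n(\mathbb{L})}w(L)$. Here the only subtlety to be careful with is turning the sum over ordered $n$-tuples of distinct labels into a sum over $n$-element label sets $L$, which introduces the $n!$ that cancels the $1/n!$ in the set integral.

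I would then match the (labeled) PHDs. For $\boldsymbol{\pi}$, the PHD is $\mathbf{d}(x,\ell)=\int\boldsymbol{\pi}(\{(x,\ell)\}\cup\mathbf{Y})\,\delta\mathbf{Y}$; substituting (\ref{eq:GLMBjoint}), marginalizing out the remaining kinematic states through (\ref{eq:MarginalizeGeneral-3}), and rewriting the sum over ordered tuples as a sum over subsets of $\mathbb{L}\setminus\{\ell\}$ produces
\begin{equation*}
\mathbf{d}(x,\ell)=\sum_{I\in\mathcal{F}(\mathbb{L}):\,\ell\in I} w(I)\,p_{I\setminus\{\ell\}}(x,\ell).
\end{equation*}
Specializing the GLMB PHD formula (\ref{eq:glmbphd}) to $\hat{\boldsymbol{\pi}}$, the inner sum $\sum_{L\subseteq\mathbb{L}}1_L(\ell)\,w^{(I)}(L)=1_I(\ell)\,w(I)$ and, combined with the indicator $1_I(\ell)$ already present in $p^{(I)}(x,\ell)$, gives the identical expression; the same manipulation works for the unlabeled version of (\ref{eq:glmbphd}) by summing over $\ell\in\mathbb{L}$.

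The main obstacle I anticipate is bookkeeping in the set-integral manipulations: carefully converting symmetric sums over ordered distinct-label tuples into unordered set sums, ensuring that the indicator $1_I(\ell)$ built into $p^{(I)}$ correctly implements the constraint $\ell\in I$ imposed implicitly by $\delta_I(\mathcal{L}(\mathbf{X}))$, and checking that points where $w(I)=0$ (so $p_{I\setminus\{\ell\}}$ is not uniquely defined by (\ref{eq:GLMB2})) do not contaminate either side, since such terms are multiplied by $w(I)=0$ in both the cardinality and PHD expressions. Once these bookkeeping points are handled cleanly, the matching of cardinality and PHD follows by direct computation, establishing the proposition.
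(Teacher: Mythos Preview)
Your proposal is correct and follows essentially the same approach as the paper: direct verification that the cardinality distribution and the labeled PHD of $\hat{\boldsymbol{\pi}}$ coincide with those of $\boldsymbol{\pi}$, via the factorization (\ref{eq:GLMBjoint})--(\ref{eq:GLMB2}) and the Kronecker-$\delta$ collapse in the weight. The only cosmetic difference is that for the PHD of $\hat{\boldsymbol{\pi}}$ you invoke the pre-stated GLMB PHD formula (\ref{eq:glmbphd}), whereas the paper re-derives that formula in situ from the set-integral definition (citing Lemma~3 of \cite{vovo1}); both routes land on the same expression $\sum_{I\ni\ell} w(I)\,p_{I\setminus\{\ell\}}(x,\ell)$, and your bookkeeping remarks about ordered versus unordered label sums and the harmless $w(I)=0$ case are exactly the points to watch.
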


\begin{rem}
	Note that in \cite[Section V]{beavovo2014} a GLMB was proposed to approximate a particular family of labeled RFS densities that arises from multi-object filtering with merged measurements.
	By applying Proposition \ref{pro:glmbapprox}, it can be shown that the approximation used in \cite[Section V]{beavovo2014} preserves the cardinality distribution and first moment.
\end{rem}

\begin{rem}
	In multi-object tracking, the matching of the labeled PHDs $\hat{d}(\cdot,\ell)$ and $d(\cdot,\ell)$ in Proposition \ref{pro:glmbapprox} is a stronger result than simply matching the unlabeled PHDs alone.
\end{rem}

\section{Marginalizations of the $\delta$-GLMB density}
\label{sec:marginal}
One of the main factors increasing the computational burden of the $\delta$-GLMB filter \cite{vovo2} is the exponential growth of the number of hypotheses in the update of the prior (\ref{eq:dglmbupdatedpdf}), which gives rise to the explicit sum over an association history variable $\xi$.
Further, note that the number of association histories is considerably increased in a multi-sensor context.
The idea behind the M$\delta$-GLMB filter is to construct a principled GLMB approximation $\clmb[\hat]{}{}{\cdot}$ to the posterior density $\clmb{}{}{\cdot}$ by means of a marginalization with respect to the association histories $\xi$.
In this section, the time index is dropped for the sake of simplicity and the attention is devoted to the M$\delta$-GLMB approximation corresponding to any $\delta$-GLMB density.
\begin{pro}[The M$\delta$-GLMB density]\label{pro:mdglmb}~\\
	A Marginalized $\delta$-GLMB (M$\delta$-GLMB) density $\clmb[\hat]{}{}{\cdot}$ corresponding to the $\delta$-GLMB density $\clmb{}{}{\cdot}$ in (\ref{eq:dglmbpdf}) is a probability density of the form
	\bie
		\clmb[\hat]{}{}{\lb{X}} & = & \dli{\lb{X}} \sum_{I \in \mathcal{F}(\lbsp)} \delta_{I}(\lbs{\lb{X}}) w^{(I)} \left[p^{(I)}\right]^{\lb{X}} \, ,\label{eq:mdglmbpdf}\\
		w^{(I)} & = & \sum_{\xi\in\Xi}w^{(I,\xi)} \, ,\label{eq:mdglmb:w}\\
		p^{(I)}(x,\ell) & = & 1_{I}(\ell)\frac{1}{w^{(I)}}\sum_{\xi\in\Xi}w^{(I,\xi)}p^{(\xi)}(x,\ell) \, .\label{eq:mdglmb:p}
	\eie
	The M$\delta$-GLMB (\ref{eq:mdglmbpdf}) preserves both (labeled) PHD and cardinality distribution of the original $\delta$-GLMB density $\clmb{}{}{\cdot}$ in (\ref{eq:dglmbpdf}).
\end{pro}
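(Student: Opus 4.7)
The plan is to show that the M$\delta$-GLMB density in (\ref{eq:mdglmbpdf})--(\ref{eq:mdglmb:p}) is precisely the GLMB approximation that Proposition \ref{pro:glmbapprox} constructs from the given $\delta$-GLMB density $\boldsymbol{\pi}(\cdot)$. Since Proposition \ref{pro:glmbapprox} guarantees that the resulting approximation matches both the (labeled) PHD and the cardinality distribution of $\boldsymbol{\pi}(\cdot)$, the claim will follow immediately. The whole argument is therefore a direct identification of the two objects, carried out in three steps.

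First I would rewrite the $\delta$-GLMB density in the canonical factored form $\boldsymbol{\pi}(\mathbf{X}) = w(\mathcal{L}(\mathbf{X})) \, p(\mathbf{X})$ required by (\ref{eq:GLMB1})--(\ref{eq:GLMB2}). For $\mathbf{X} = \{(x_1,\ell_1),\dots,(x_n,\ell_n)\}$ with distinct labels, the indicator $\delta_I(\mathcal{L}(\mathbf{X}))$ collapses the outer sum to the single term $I = \{\ell_1,\dots,\ell_n\}$, so integrating over $x_1,\dots,x_n$ (and using that each $p^{(\xi)}(\cdot,\ell)$ is a probability density integrating to one) yields
\begin{equation*}
w(\{\ell_1,\dots,\ell_n\}) = \sum_{\xi \in \Xi} w^{(\{\ell_1,\dots,\ell_n\},\xi)},
\end{equation*}
which is exactly $w^{(I)}$ as defined in (\ref{eq:mdglmb:w}).

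Next I would use (\ref{eq:GLMB2}) to obtain $p(\mathbf{X})$ and then compute the marginals $p_{J}(x,\ell)$ demanded by (\ref{eq:MarginalizeGeneral-3}). Writing $I = \{\ell\} \cup J$ with $J = \{\ell_1,\dots,\ell_n\}$ and pulling the sum over $\xi$ outside the integral, the independence built into the multi-object exponential $[p^{(\xi)}]^{\mathbf{X}}$ makes the integrand factorize, so
\begin{equation*}
p_{J}(x,\ell) = \frac{1}{w^{(I)}} \sum_{\xi \in \Xi} w^{(I,\xi)} \, p^{(\xi)}(x,\ell) \prod_{i=1}^{n} \int p^{(\xi)}(x_i,\ell_i)\, dx_i = \frac{1}{w^{(I)}} \sum_{\xi \in \Xi} w^{(I,\xi)} \, p^{(\xi)}(x,\ell).
\end{equation*}
Inserting this into (\ref{eq:MarginalizeGeneral-2}) reproduces $p^{(I)}(x,\ell)$ as in (\ref{eq:mdglmb:p}), and substituting (\ref{eq:MarginalizeGeneral-1}) into (\ref{eq:ApproximateGLMB}) recovers (\ref{eq:mdglmbpdf}) verbatim.

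Finally I would conclude by invoking Proposition \ref{pro:glmbapprox}: the M$\delta$-GLMB coincides with $\hat{\boldsymbol{\pi}}$ of that proposition, hence it matches the cardinality distribution and the labeled PHD of $\boldsymbol{\pi}$. The only delicate point is the bookkeeping around $\Delta(\mathbf{X})$ and the constraint that labels in $\mathbf{X}$ be distinct, which is what makes the single term $I = \mathcal{L}(\mathbf{X})$ survive the sum; once this is handled, the remainder is the elementary marginalization above. I expect this identification step, together with a careful verification that the factorization of $p^{(\xi)}$ indeed allows the product of integrals to reduce to unity, to be the only nontrivial part of the argument.
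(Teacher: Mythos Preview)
Your proposal is correct and follows essentially the same approach as the paper: both proofs apply Proposition~\ref{pro:glmbapprox} to the $\delta$-GLMB density by first rewriting it in the canonical form $\boldsymbol{\pi}(\mathbf{X}) = w(\mathcal{L}(\mathbf{X}))\,p(\mathbf{X})$, computing $w(\{\ell_1,\dots,\ell_n\}) = \sum_{\xi\in\Xi} w^{(\{\ell_1,\dots,\ell_n\},\xi)}$, and then marginalizing $p(\mathbf{X})$ using the factorization of $[p^{(\xi)}]^{\mathbf{X}}$ to recover (\ref{eq:mdglmb:w})--(\ref{eq:mdglmb:p}). The paper carries out the marginalization a bit more explicitly by tracking the $\delta_J$ over $J\in\mathcal{F}(\mathbb{L})$ and then collapsing it to the single surviving term $J=I$, but the substance is identical to your argument.
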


Note now that the M$\delta$-GLMB density provided by Proposition \ref{pro:mdglmb} is a $\delta$-GLMB which aggregates the association histories $\xi$ relative to the label set $I$ via (\ref{eq:mdglmb:w})-(\ref{eq:mdglmb:p}).
In fact, the marginalized density (\ref{eq:mdglmbpdf}) has no association history set $\Xi$.

The M$\delta$-GLMB density can be exploited to construct an efficient recursive multi-object tracking filter by calculating the M$\delta$-GLMB approximation step after the $\delta$-GLMB update, and predicting forward in time using the $\delta$-GLMB prediction.

\section{The M$\delta$-GLMB filter}
\label{sec:mdglmbf}
The M$\delta$-GLMB filter propagates a M$\delta$-GLMB multi-object posterior density forward in time via the multi-object Bayesian recursion (\ref{eq:MTBayesPred})-(\ref{eq:MTBayesUpdate}).
Since the M$\delta$-GLMB is still a $\delta$-GLMB, the equations derived for the $\delta$-GLMB filter \cite{vovo2} as a closed-form solution for the eqs. (\ref{eq:MTBayesPred})-(\ref{eq:MTBayesUpdate}) still hold.
In the following, the prediction and update steps for the $\delta$-GLMB filter are briefly recalled; additional details can be found in
\cite{vovo2,mdglmbf}.

\subsection{M$\delta$-GLMB prediction}
Given the multi-object state $\lb{Z}$, each state $(\zeta, \ell_{-}) \in \lb{Z}$ either continues to exist at the next time step with probability $P_{S}(\zeta, \ell_{-})$ and evolves to a new state $(x,\ell)$ with probability density $\varphi_{k|k-1}(x|\zeta,\ell_{-})$, or dies with probability $1-P_{S}(\zeta,\ell_{-})$.
The set of newborn objects at the next time step is distributed according to the LMB
\begin{equation}
	\boldsymbol{f}_{B}=\left\{ \left( r_{B}^{(\ell )},p_{B}^{(\ell )}\right)\right\}_{\ell \in \mathbb{B}}  \label{eq:MDGLMB:Birth_transition}
\end{equation}
with $\mathbb{L}_{-}\mathbb{\cap B=\varnothing}$ and $w_{B}(\lb{X})$ defined in (\ref{eq:lmb:w}).
The multi-object state at the current time $\lb{X}_{k}$ is the superposition of surviving objects and new born objects.

If the current multi-object prior density is a M$\delta$-GLMB of the form (\ref{eq:mdglmbpdf}), then the multi-object prediction density is also a M$\delta$-GLMB given by
\begin{equation}
	{\mathbf{\boldsymbol{\pi}}}_{k|k-1}({\lb{X}}) = \Delta({\lb{X}})\sum_{I \in \mathcal{\mathcal{F}}(\lbsp)} \delta_{I}\left(\mathcal{L}\left(\lb{X}\right)\right) \, w_{k|k-1}^{\left( I \right)} \left[p_{k|k-1}^{\left( I \right)}\right]^{\lb{X}} \, ,\label{eq:mdglmbpredictedpdf}
\end{equation}
where
\bie
	w_{k|k-1}^{\left( I \right)} & = & w_{B}(I \backslash \lbsp[-]) \, w_{S}^{(I)}(I \cap \lbsp[-]) \, ,\\
	p_{k|k-1}^{\left( I \right)}(x,\ell) & = & 1_{\lbsp[-]}(\ell)p_{S}^{(I)}(x,\ell)+1_{\mathbb{B}}(\ell)p_{B}(x,\ell) \, ,\\
	p_{S}^{(I)}(x, \ell) & = & \frac{\left\langle P_{S}(\cdot,\ell)\varphi_{k|k-1}(x | \cdot, \ell), p_{k-1}^{(I)}(\cdot, \ell)\right\rangle }{\eta_{S}^{(I)}(\ell)} \, ,\\
	\eta_{S}^{( I )}(\ell) & = & \left\langle P_{S}(\cdot, \ell), p_{k-1}^{(I)}(\cdot, \ell)\right\rangle \, ,\\
	w^{(I)}_{S}(L) & = & [\eta_{S}^{(I)}]^{L}\!\!\!\sum_{J\subseteq\lbsp[-]}1_{J}(L)[1-\eta_{S}^{(I)}]^{J-L}w_{k}^{(J)} \, ,
\eie
which is exactly the $\delta$-GLMB prediction step (\ref{eq:dglmbpredictedpdf}) with no association histories from previous time step, i.e. $\Xi = \varnothing$, and with the convention of having the superscript $(I)$ instead of $(\xi)$ due to the marginalization (\ref{eq:mdglmb:w})-(\ref{eq:mdglmb:p}).

\begin{rem}\label{rem:maxhppred}
	The number of components $\left( w_{k+1|k}^{\left(I\right)}, p_{k+1|k}^{\left(I\right)} \right)$ computed after the M$\delta$-GLMB prediction step (\ref{eq:mdglmbpredictedpdf}) is $\left|\mathcal{F}(\mathbb{L}_{0:k+1})\right|$.
	On the other hand, the number of components $\left( w_{k+1|k}^{\left(I, \xi\right)},\right.$ $\left.p_{k+1|k}^{\left(\xi\right)} \right)$ after the $\delta$-GLMB prediction (\ref{eq:dglmbpredictedpdf}) is $\left|\mathcal{F}(\mathbb{L}_{0:k+1}) \times \Xi \right|$ for $w_{k+1|k}^{\left(I, \xi\right)}$ and $\left| \Xi \right|$ for $p_{k+1|k}^{\left(\xi\right)}$.
	Notice that the number of weights $w_{k+1|k}^{\left(I\right)}$ of the M$\delta$-GLMB is significantly lower than the $w_{k+1|k}^{\left(I, \xi\right)}$ of $\delta$-GLMB.
	As for the number of location PDFs $p_{k+1|k}^{\left(\cdot\right)}$, it is worth noticing that the growth rate of the association histories $\xi \in \Xi$ is super-exponential with time \cite{vovo1,vovo2}, while the growth rate of the cardinality of $\mathcal{F}(\mathbb{L}_{0:k+1})$ is exponential.
\end{rem}

The use of the M$\delta$-GLMB approximation further reduces the number of hypotheses in the posterior density while preserving the PHD and cardinality distribution.
Moreover, the M$\delta$-GLMB is in a form that is suitable for efficient and tractable information fusion (i.e. multi-sensor processing) as will be shown in the next subsection.

\subsection{M$\delta$-GLMB update}
Given a multi-object state $\lb{X}$, each state $(x, \ell) \in \lb{X}$ is either detected with probability $P_{D}\left( x, \ell \right)$ and generates a measurement $y$ with likelihood $g_{k}(y|x, \ell)$, or missed with probability $1 - P_{D}(x, \ell)$.
The multi-object observation $Y = \{y_{1}, \dots, y_{M}\}$ is the superposition of the detected points and Poisson clutter with intensity function $\kappa(\cdot)$.

If the current multi-object prediction density is a M$\delta$-GLMB of the form (\ref{eq:mdglmbpdf}), then the multi-object posterior density is a $\delta$-GLMB given by
\be
	\clmb{k}{}{\lb{X}} = \dli{\lb{X}} \sum_{I \in \mathcal{F}\left( \lbsp \right)} \, \sum_{\theta\in\Theta(I)} \delta_{I}\left(\mathcal{L}\left(\lb{X}\right)\right) w_{k}^{\left(I,\theta\right)}(Y_{k}) \left[p_{k}^{\left(I,\theta\right)} \right]^{\lb{X}} \, ,\label{eq:mdglmbupdatedpdf}
\ee
where $\Theta(I)$ denotes the subset of the current maps with domain $I$, and
\bie
	w_{k}^{(I,\theta)}(Y_{k}) & \propto & w_{k|k-1}^{\left(I\right)}\left[\eta_{Y_{k}}^{(\theta)}(\ell)\right]^{I} \, ,\\
	\eta_{Y_{k}}^{(I,\theta)}(\ell) & = & \left\langle p_{k|k-1}^{(I)}(\cdot,\ell),\psi_{Y_{k}}(\cdot,\ell;\theta)\right\rangle \, ,\\
	p_{k}^{\left(I,\theta\right)}\left( x \right) & = & \frac{p_{k|k-1}^{(I)}(x,\ell)\psi_{Y_{k}}(x,\ell;\theta)}{\eta_{Y_{k}}^{(I,\theta)}(\ell)} \, ,\\
	\psi_{Y_{k}}(x,\ell;\theta) & = & \begin{cases}
							\dfrac{P_{D}(x,\ell)g(y_{\theta(\ell)}|x,\ell)}{\kappa(y_{\theta(\ell)})}, & \mbox{if}\ \theta(\ell)>0\\
							1-P_{D}(x,\ell), & \mbox{if}\ \theta(\ell)=0
						\end{cases} \, .
\eie
Using (\ref{eq:mdglmb:w})-(\ref{eq:mdglmb:p}), the M$\delta$-GLMB density corresponding to the $\delta$-GLMB density in (\ref{eq:mdglmbupdatedpdf}) is a probability density of the form (\ref{eq:mdglmbpdf}) with
\bie
	w_{k}^{(I)} & = & \sum_{\theta\in\Theta(I)}w_{k}^{(I,\theta)} \, , \label{eq:mdglmb:wnxt}\\
	p_{k}^{(I)}(x,\ell) & = & 1_{I}(\ell)\frac{1}{w_{k}^{(I)}}\sum_{\theta\in\Theta(I)}w_{k}^{(I,\theta)}p_{k}^{(I,\theta)}(x,\ell) \, . \label{eq:mdglmb:pnxt}
\eie
The M$\delta$-GLMB density provided by (\ref{eq:mdglmb:wnxt})-(\ref{eq:mdglmb:pnxt}) preserves both PHD and cardinality distribution of the original $\delta$-GLMB density.

\begin{rem}\label{rem:maxhpup}
	Each hypothesis $I \in \mathcal{F}\left( \mathbb{L} \right)$ generates a set of $\left| \Theta(I) \right|$ new measurement-to-track association maps for the $\delta$-GLMB posterior.
	The number of components $\left( w_{k+1|k}^{\left(I, \theta\right)}, p_{k+1|k}^{\left(I, \theta\right)} \right)$ stored/computed after the M$\delta$-GLMB update step (\ref{eq:mdglmbupdatedpdf}) is $\left| \mathcal{F}\!\left( \mathbb{L} \right) \right| \cdot \sum_{I \in \mathcal{F}\left(\mathbb{L}\right)}\left| \Theta(I) \right|$.
	On the other hand, the number of hypotheses $\left( w_{k+1|k}^{\left(I, \xi, \theta \right)}, p_{k+1|k}^{\left(\xi, \theta\right)} \right)$ after the $\delta$-GLMB update (\ref{eq:dglmbupdatedpdf}) is $\left| \mathcal{F}\!\left( \mathbb{L} \right) \times \Xi \right| \cdot\sum_{I \in \mathcal{F}(\mathbb{L})}\left| \Theta(I) \right|$ for $w_{k+1|k}^{\left(I, \xi, \theta \right)}$ and $\left| \Xi \right| \cdot \sum_{I \in \mathcal{F}(\mathbb{L})}\left| \Theta(I) \right|$ for $p_{k+1|k}^{\left(\xi, \theta\right)}$.
	The same conclusions along the lines of Remark \ref{rem:maxhppred} hold.
\end{rem}

\begin{rem}\label{rem:maxhpmarginal}
	After the marginalization procedure (\ref{eq:mdglmb:wnxt})-(\ref{eq:mdglmb:pnxt}) only $\left|\mathcal{F}(\mathbb{L})\right|$ hypotheses are retained, as all the new contributions provided by the association maps $\left| \Theta(I) \right|$ are aggregated in a single component.
	Notice that $\left|\mathcal{F}(\mathbb{L})\right|$ is the same number of hypotheses produced during the prediction step (\ref{eq:mdglmbpredictedpdf}) (see Remark \ref{rem:maxhppred}).
	Thus, the prediction step (\ref{eq:mdglmbupdatedpdf}) sets the upper bound of the hypotheses that will be retained after each full M$\delta$-GLMB step.
\end{rem}

In the multi-sensor setting, for each time instant $k$, the update step (\ref{eq:mdglmbupdatedpdf}) (or (\ref{eq:dglmbupdatedpdf}) for the $\delta$-GLMB) is repeatedly evaluated using different measurement sets $Y^{i}$ provided by the sensors $i \in \ncal$ (see eq. (\ref{eq:msMTBayesUpdate})).
From Remarks \ref{rem:maxhpup} and \ref{rem:maxhpmarginal}, M$\delta$-GLMB is preferable to $\delta$-GLMB in terms of memory and computational requirements, since the number of remaining hypotheses after each sensor update step in (\ref{eq:msMTBayesUpdate}) is always set to $\left|\mathcal{F}(\mathbb{L})\right|$.
Note that this does not apply to $\delta$-GLMB due to the super-exponential memory/computational growth as reported in Remark \ref{rem:maxhppred}.
This is an important property of M$\delta$-GLMB since it yields a principled approximation which greatly decreases the need of pruning hypotheses with respect to $\delta$-GLMB \cite{vovo2}.
In fact, pruning in the $\delta$-GLMB might lead to poor performance in multi-sensor scenarios with low SNR (e.g. high clutter intensity, low detection probability, etc.) and limited storage/computational capabilities.
For instance, this may happen if a subset of the sensors do not detect one or multiple objects and hypotheses associated to the true tracks are removed due to pruning.
Furthermore, from a mathematical viewpoint, pruning between corrections generally produces a less informative and order-independent approximation of the posterior distribution in (\ref{eq:msMTBayesUpdate}).

\section{The LMB filter}
\label{sec:lmbf}
The LMB filter introduced in \cite{lmbf} is a single component approximation of a GLMB density (\ref{eq:glmbpdf}) that matches the unlabeled PHD.
An alternative derivation of the LMB approximation (\ref{eq:lmbpdf}) first proposed in \cite{lmbf} through a connection with the M$\delta$-GLMB
approximation is provided hereafter.

\subsection{Alternative derivation of the LMB Filter}
Recall that a LMB density is uniquely parameterized by a set of existence probabilities $r^{\left(\ell\right)}$ and corresponding track densities $p^{(\ell)}(\cdot)$:
\be
	\lmb{\lb{X}} = \dli{\lb{X}} \, w(\mathcal{L}(\mathbf{X}))p^{\mathbf{X}}\label{eq:lmbpdfnxt}
\ee
where
\bie
	w(L) & = & \prod_{\imath\in\mathbb{L}}\left(1-r^{\left(\imath\right)}\right)\prod_{\ell\in L}\frac{1_{\mathbb{L}}(\ell)r^{\left(\ell\right)}}{1-r^{(\ell)}}\\
	p(x,\ell) & = & p^{(\ell)}(x)
\eie

By extracting individual tracks from the M$\delta$-GLMB approximation $\clmb[\hat]{}{}{\cdot}$ (\ref{eq:mdglmbpdf}) it is possible to prove that the same expressions
for the existence probabilities and state densities originally proposed for the LMB filter in \cite{lmbf} can be obtained.
In fact
\bie
	r^{(\ell)} & \triangleq & \mbox{Pr}_{\hat{\boldsymbol{\pi}}}(\ell \in \mathcal{L}(\mathbf{X}))\\
	& = & \sum_{\ell \in L}w(L)\\
	& = & \sum_{I\in\mathcal{F}(\mathbb{L})}1_{I}(\ell)w(L)\\
	& = & \sum_{I \in \mathcal{F}(\mathbb{L})}1_{I}(\ell) w^{\left(I\right)} \, ,\\
	& = & \sum_{(I, \xi) \in \mathcal{F}(\mathbb{L}) \times \Xi} 1_{I}(\ell) w^{\left( I,\xi \right)} \, ,
\eie
\bie
	p^{(\ell)}(x) & \triangleq & \dfrac{\mbox{Pr}_{\hat{\boldsymbol{\pi}}}((x,\ell) \in \mathbf{X})}{\mbox{Pr}_{\hat{\boldsymbol{\pi}}}(\ell \in \mathcal{L}(\mathbf{X}))}\label{eq:lmb:defphd} \\
	& = & \dfrac{\displaystyle\int\hat{\boldsymbol{\pi}}((x,\ell)\cup\mathbf{X})\delta\mathbf{X}}{r^{(\ell)}}\\
	& = & \dfrac{\hat{d}(x,\ell)}{r^{(\ell)}} \, ,\label{eq:lmb:phd}
\eie
where the notation for the numerator in (\ref{eq:lmb:defphd}) is defined as per \cite[eq. 11.111]{mahler}, while the numerator of (\ref{eq:lmb:phd}) follows from \cite[eq. 11.112]{mahler}.
Notice that the numerator is precisely the PHD $\hat{d}(\cdot)$ corresponding to $\clmb[\hat]{}{}{\cdot}$, which by Proposition \ref{pro:glmbapprox} exactly matches the PHD $d(\cdot)$ corresponding to $\clmb{}{}{\cdot}$.
It can be easily verified that
\be
	\hat{d}(x,\ell) = \sum_{I \in \mathcal{F}(\mathbb{L})}1_{I}(\ell)w^{(I)}p^{(I)}(x,\ell)
\ee
 and consequently
\bie
	p^{(\ell)}(x) & = & \frac{1}{r^{(\ell)}}\sum_{I \in \mathcal{F}(\mathbb{L})}1_{I}(\ell)w^{(I)}p^{(I)}(x,\ell)\\
	& = & \frac{1}{r^{(\ell)}}\sum_{(I, \xi) \in \mathcal{F}(\mathbb{L}) \times \Xi} 1_{I}(\ell)w^{(I,\xi)} p^{(\xi)}(x,\ell)
\eie

\begin{rem}
	The property of matching the labeled PHD of the $\delta$-GLMB does not hold for the LMB tracking filter, as shown in \cite[Section III]{lmbf}, due to the imposed multi-Bernoulli structure for the cardinality distribution.
This can be also proved by combining (\ref{eq:lmbpdfnxt}) with Proposition \ref{pro:glmbapprox}, thus showing again that only the unlabeled PHD of the LMB matches exactly the unlabeled PHD of the original $\delta$-GLMB density.
\end{rem}

As suggested by its name, the LMB filter propagates an LMB multi-object posterior density forward in time \cite{lmbf}.
Its recursion will be briefly reviewed in the next subsections.
Additional details on its implementation can be found in \cite{lmbf}.

\subsection{LMB prediction}
Given the previous multi-object state $\lb{Z}$, each state $(\zeta,\ell_{-})\in \lb{Z}$ either continues to exist at the next time step with probability $P_{S}(\zeta,\ell_{-})$ and evolves to a new state $(x,\ell)$ with probability density $\varphi_{k|k-1}(x|\zeta,\ell_{-})$, or dies with probability $1 - P_{S}(\zeta,\ell_{-})$.
The set of new objects born at the next time step is distributed according to
\begin{equation}
	\boldsymbol{f}_{B}=\left\{ \left( r_{B}^{(\ell )},p_{B}^{(\ell )}\right)\right\}_{\ell \in \mathbb{B}}  \label{eq:LMB:Birth_transition}
\end{equation}
with $\mathbb{L}_{-}\mathbb{\cap B=\varnothing }$. The multi-object state at the current time $\lb{X}_{k}$ is the superposition of surviving objects and new born objects.

Suppose that the multi-object posterior density at time $k-1$ is LMB with parameter set $\boldsymbol{\pi }_{k-1}=\left\{ \left( r^{(\ell)},p^{(\ell )}\right) \right\}_{\ell \in \mathbb{L}_{-}}$. Then, it has been shown in \cite{lmbf} that the multi-object predicted density is also LMB with parameter set
\begin{equation}
	\boldsymbol{\pi }_{k|k-1}=\left\{ \left( r_{S}^{(\ell )},p_{S}^{(\ell)}\right) \right\}_{\ell \in \mathbb{L}_{-}}\cup \left\{ \left(r_{B}^{(\ell )},p_{B}^{(\ell )}\right) \right\}_{\ell \in \mathbb{B}} \, ,
\label{eq:lmbpredictedpdf}
\end{equation}
where 
\bie
	r_{S}^{(\ell )} & = &\eta_{S}(\ell ) \, r^{(\ell )} \, ,\label{eq:LMBupdate1}\\
	p_{S}^{(\ell )}(x) & = & \dfrac{\left\langle P_{S}(\cdot ,\ell )\varphi_{k|k-1}(x|\cdot ,\ell),p(\cdot ,\ell )\right\rangle}{\eta_{S}(\ell)} \, ,\label{eq:LMBupdate2}\\
	\eta_{S}(\ell ) & = &\left\langle P_{S}(\cdot ,\ell ),p(\cdot ,\ell)\right\rangle \, .
\eie

\subsection{LMB update}
Given a multi-object state $\lb{X}$, each state $(x, \ell)\in \lb{X}$ is either detected with probability $P_{D}\left( x,\ell \right)$ and generates a measurement $y$ with likelihood $g_{k}(y|x, \ell)$, or missed with probability $1-P_{D}(x, \ell )$. The multi-object observation $Y=\{y_{1},...,y_{M}\}$ is the superposition of the detected points and Poisson clutter with intensity function $\kappa(\cdot)$.
Assuming that, conditional on $\lb{X}$, detections are independent, and that clutter is independent of the detections, the multi-object likelihood is the same of the M$\delta$-GLMB Update in (\ref{eq:molikelihood}).

Suppose that the multi-object predicted density is LMB with parameter set
\be
	\boldsymbol{\pi}_{k|k-1} = \left\{ \left(r_{k|k-1}^{(\ell )},p_{k|k-1}^{(\ell )}\right) \right\}_{\ell \in \mathbb{L}} \, .
\ee
The multi-object posterior density cannot be evaluated in a closed form \cite{lmbf}.
Specifically, the LMB $\boldsymbol{\pi}_{k|k-1}$ has to be converted to a $\delta$-GLMB density by (possibly) generating all possible combinations of label sets $I \in \mathcal{F}\!\left( \lbs{\lb{X}} \right)$ from the constituent Bernoulli components \cite{lmbf}.
Further, the $\delta$-GLMB $\boldsymbol{\pi}_{k|k-1}(\cdot)$ resulting from the conversion process undergoes the $\delta$-GLMB update step (\ref{eq:dglmbupdatedpdf}).
Finally, the GLMB posterior $\boldsymbol{\pi}_{k}(\cdot)$ is approximated by an LMB that matches exactly the first moment of the unlabeled posterior intensity.
Thus, the updated LMB is given by
\be
	\clmb{k}{}{\lb{X}} = \left\{\left( r_{k}^{(\ell )},p_{k}^{(\ell )}\right) \right\}_{\ell \in \mathbb{L}} \, ,\label{eq:lmbupdatedpdf}
\ee
where 
\bie
	r_{k}^{(\ell )} & = & \sum_{(I,\theta )\in \mathcal{F}(\lbsp)\times\Theta (I)} \! 1_{I}(\ell) \, w_{k}^{(I,\theta )}(Y_{k}) \, ,\label{eq:existenceBernoulli} \\
	p_{k}^{(\ell )}(x) & = & \frac{1}{r_{k}^{(\ell)}}\sum_{(I,\theta) \in \mathcal{F}(\lbsp)\times \Theta (I)} 1_{I}(\ell) \, w_{k}^{(I,\theta )}(Y_{k}) \, p_{k}^{(\theta )}(x,\ell) \, , \label{eq:spatialBernoulli} \\
	w^{(I,\theta )}(Y_{k}) & \propto & \left[ \eta_{Y_{k}}^{(\theta )}\right]^{I}\prod\limits_{\ell \in \lbsp \backslash I}\left( 1-r_{k|k-1}^{(\ell)}\right) \prod \limits_{i\in I}1_{\lbsp}(i)r_{k|k-1}^{(i)} \, ,\\
	p_{k}^{(\theta )}(x,\ell) & = & \frac{p_{k|k-1}(x,\ell) \, \psi_{Y_{k}}(x,\ell ;\theta )}{\eta_{Y_{k}}^{(\theta )}(\ell )} \, ,\\
	\eta_{Y_{k}}^{( \theta )}(\ell) & = & \left\langle p_{k|k-1}(\cdot, \ell ),\psi_{Y_{k}}(\cdot ,\ell ;\theta )\right\rangle \, ,\\
	\psi_{Y_{k}}(x,\ell;\theta) & = & \begin{cases}
							\dfrac{P_{D}(x,\ell)g(y_{\theta(\ell)}|x,\ell)}{\kappa(y_{\theta(\ell)})}, & \mbox{if}\ \theta(\ell)>0\\
							1-P_{D}(x,\ell), & \mbox{if}\ \theta(\ell)=0
						\end{cases} \, .
\eie
Additional details for an efficient implementation of the LMB filter can be found in \cite{lmbf}.

\section{Performance evaluation}
To assess performance of the proposed Marginalized $\delta$-GLMB (M$\delta$-GLMB) tracker, a $2$-dimensional multi-object tracking scenario is considered over a surveillance area of $50\times50 \, [km^{2}]$.
Two sensor sets are used to represent scenarios with different observability capabilities.
In particular:
\begin{enumerate}[label=\textsc{\roman*}.]
	\item a single radar in the middle of the surveillance region is used as it guarantee observability;
	\item a set of $3$ \textit{range-only} (Time Of Arrival, TOA) sensors, deployed as shown in Fig. \ref{fig:3toa}, are used as they do not guarantee observability individually, but information from different sensors need to be combined to achieve it.
\end{enumerate}

The scenario consists of $5$ objects as depicted in Fig. \ref{fig:cmot:5trajectories}.
\begin{figure}[h!]
	\centering
	\includegraphics[width=0.6\columnwidth]{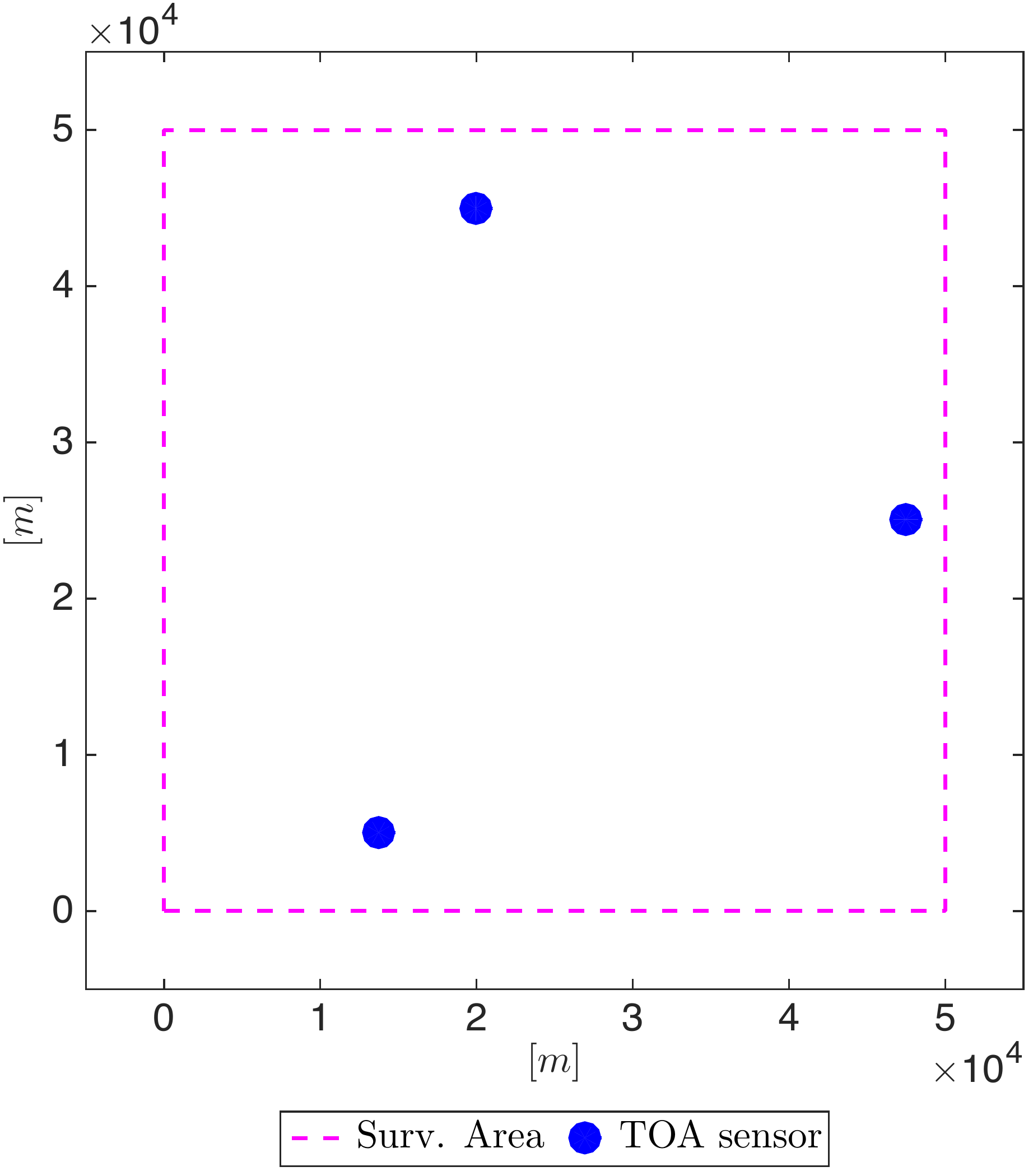}
	\caption{Network with 3 TOA sensors.}
	\label{fig:3toa}
\end{figure}
\begin{figure}[h!]
	\centering
	\includegraphics[width=0.6\columnwidth]{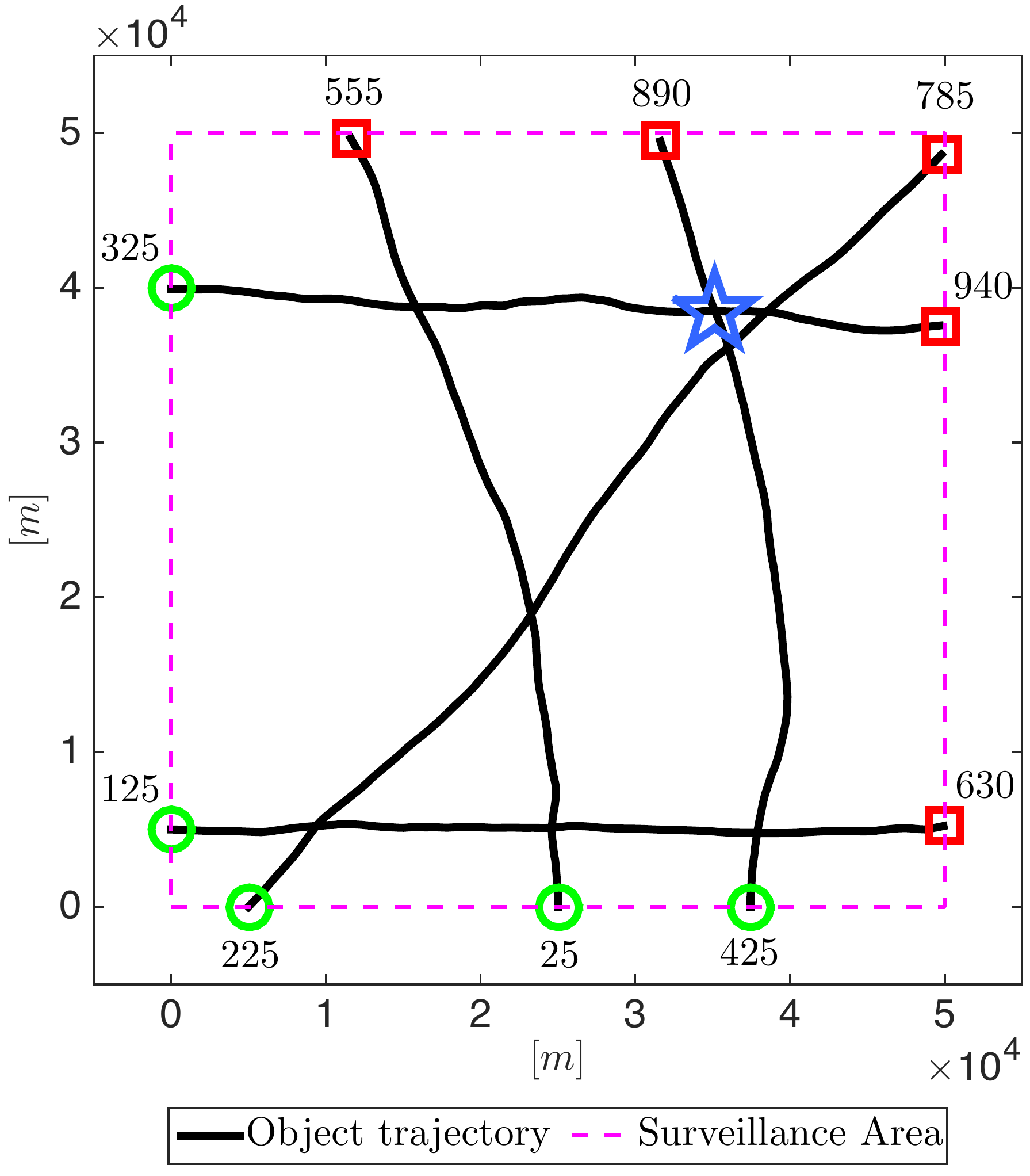}
	\caption{Object trajectories considered in the simulation experiment. The start/end point for each trajectory is denoted, respectively, by $\bullet\backslash\blacksquare$. The {\Large$\star$} indicates a rendezvous point.}
	\label{fig:cmot:5trajectories}
\end{figure}
For the sake of comparison, the M$\delta$-GLMB is also compared with the $\delta$-GLMB ($\delta$-GLMB) \cite{vovo1,vovo2} and LMB (LMB) \cite{lmbf} filters.
The three tracking filters are implemented using Gaussian Mixtures to represent their predicted and updated densities \cite{vovo2,lmbf}.
Due to the non linearity of the sensors, the \textit{Unscented Kalman Filter} (UKF) \cite{juluhl2004} is exploited to update means and covariances of the Gaussian components.

The kinematic object state is denoted by $x = \left[ p_{x}, \, \dot{p}_{x}, \, p_{y}, \, \dot{p}_{y} \right]^{\top}$, i.e. the planar position and velocity.
The object motion is modeled according to the Nearly-Constant Velocity (NCV) model \cite{far1985v1,far1985v2,book0,book}:
\be
x_{k + 1} = \left[ \begin{array}{cccc}
1 & T_{s} & 0 & 0	\\
0 & 1 	  & 0 & 0		\\
0 & 0 	  & 1 & T_{s} \\
0 & 0 	  & 0 & 1		\end{array} \right] x_{k} + w_{k} \, , \qquad
Q = \sigma_{w}^{2} \left[ \begin{array}{cccc}
\frac{1}{4}T_{s}^{4} & \frac{1}{2}T_{s}^{3} & 0 & 0 \\
\frac{1}{2}T_{s}^{3} & T_{s}^{2} & 0 & 0 \\
0 & 0 & \frac{1}{4}T_{s}^{4} & \frac{1}{2}T_{s}^{3}\\
0 & 0 & \frac{1}{2}T_{s}^{3} & T_{s}^{2} \end{array} \right]
\ee
where: $w_{k}$ is a white noise with zero mean and covariance matrix $Q$, $\sigma_{w} = 5 \, [m/s^{2}]$ and the sampling interval is $T_{s} = 5\,[s]$.

The radar has the following measurement function:
\begin{equation}
\begin{array}{c}
h(x) = \left[ \begin{array}{ll}
				\angle [ \left( p_{x} - x^{r} \right) + j \left( p_{y} - y^{r} \right)] \\[0.5em]
                                	\sqrt{ \left( p_{x} - x^{r} \right)^2+ \left( p_{y} - y^{r} \right)^2 }
			\end{array} \right]
\end{array}
\end{equation}
where $( x^{r}, y^{r} )$ represents the known position of the radar and the standard deviation of the measurement noise is $\sigma_{\theta} = 1 \, [\mbox{}^{\circ}]$ for the azimuth and $\sigma_{r} =  100 \, [m]$ for the range.
The measurement functions of the $3$ TOA of Fig. \ref{fig:3toa} are:
\begin{equation}
h(x) = \sqrt{ \left( p_{x} - x^{s} \right)^2+ \left( p_{y} - y^{s} \right)^2 } \, ,
\end{equation}
where $( x^{s}, y^{s} )$ represents the known position of sensor (indexed with) $s$. The standard deviation of the TOA measurement noise is taken as $\sigma_{TOA} = 100 \, [m]$.

The clutter is characterized by a Poisson process with parameter $\lambda_{c} = 15$.
The probability of object detection is $P_{D} = 0.85$.

In the considered scenario, objects pass through the surveillance area with partial prior information for object birth locations. 
Accordingly, a $10$-component LMB RFS $\boldsymbol{\pi}_{B} = \left\{ \left( r^{\left( \ell \right)}_{B},\right.\right.$ $\left.\left.p^{\left( \ell \right)}_{B} \right) \right\}_{\ell \in \mathbb{B}}$ has been hypothesized for the birth process.
Table \ref{tab:cmot:borderlineinit} gives a detailed summary of such components.
\begin{table}[h!]
	\setlength\arrayrulewidth{0.5pt}\arrayrulecolor{black} 
	\setlength\doublerulesep{0.5pt}\doublerulesepcolor{black} 
	\caption{Components of the LMB RFS birth process at a given time $k$.}
	\label{tab:cmot:borderlineinit}
	\centering
	$r^{\left( \ell \right)} = 0.09$\\
	$p^{\left( \ell \right)}_{B}(x) = \mathcal{N}\!\left( x;\, m^{\left( \ell \right)}_{B}, P_{B} \right)$\\
	$P_{B} = \operatorname{diag}\!\left( 10^{6}, 10^{4}, 10^{6}, 10^{4} \right)$\\\vspace{0.5em}
	\scalebox{1}{
	\begin{tabular}{>{\columncolor[gray]{.95}}c||c|c|c||}
		{\textbf{Label}} & $\left( k, \, 1 \right)$ & $\left( k, \, 2 \right)$ & $\left( k, \, 3 \right)$ \\
		\hline
		$m^{\left( \ell \right)}_{B}$ & $\left[ 0, \, 0, \, 40000,\, 0 \right]^{\top}$ & $\left[ 0, \, 0, \, 25000,\, 0 \right]^{\top}$ & $\left[ 0, \, 0, \, 5000,\, 0 \right]^{\top}$\\
		\hline
		\hline
	\end{tabular}
	}\\\vspace{0.5em}
	\scalebox{1}{
	\begin{tabular}{>{\columncolor[gray]{.95}}c||c|c|c||}
		{\textbf{Label}} & $\left( k, \, 4 \right)$ & $\left( k, \, 5 \right)$ & $\left( k, \, 6 \right)$\\
		\hline
		$m^{\left( \ell \right)}_{B}$ & $\left[ 5000, \, 0, \, 0,\, 0 \right]^{\top}$ & $\left[ 25000, \, 0, \, 0,\, 0 \right]^{\top}$ & $\left[ 36000, \, 0, \, 0,\, 0 \right]^{\top}$\\
		\hline
		\hline
	\end{tabular}
	}\\\vspace{0.5em}
	\scalebox{1}{
	\begin{tabular}{>{\columncolor[gray]{.95}}c||c|c||}
		{\textbf{Label}} & $\left( k, \, 7 \right)$ & $\left( k, \, 8 \right)$\\
		\hline
		$m^{\left( \ell \right)}_{B}$ & $\left[ 50000, \, 0, \, 15000,\, 0 \right]^{\top}$ & $\left[ 50000, \, 0, \, 40000,\, 0 \right]^{\top}$\\
		\hline
		\hline
	\end{tabular}
	}\\\vspace{0.5em}
	\scalebox{1}{
	\begin{tabular}{>{\columncolor[gray]{.95}}c||c|c||}
		{\textbf{Label}} & $\left( k, \, 9 \right)$ & $\left( k, \, 10 \right)$\\
		\hline
		$m^{\left( \ell \right)}_{B}$ & $\left[ 40000, \, 0, \, 50000,\, 0 \right]^{\top}$ & $\left[ 10000, \, 0, \, 50000,\, 0 \right]^{\top}$\\
		\hline
		\hline
	\end{tabular}
	}
\end{table}
Due to the partial prior information on the object birth locations, some of the LMB components cover a state space region where there is no birth.
Clutter measurements are, therefore, more prone to generate false objects.

Multi-object tracking performance is evaluated in terms of the \textit{Optimal SubPattern Analysis} (OSPA) metric \cite{schvovo2008} with Euclidean distance $p = 2$ and cutoff $c = 600 \, [m]$.
The reported metric is averaged over $100$ Monte Carlo trials for the same object trajectories but different, independently generated, clutter and measurement noise realizations. The duration of each simulation trial is fixed to $1000 \, [s]$ ($200$ samples).

The three tracking filters are coupled with the \textit{parallel CPHD look ahead strategy} described in \cite{vovo1,vovo2}. The CPHD \cite{vo-vo-cantoni} filter.

\subsection{Scenario 1: radar}
Figs. \ref{fig:1:cardMDGLMB}, \ref{fig:1:cardDGLMB} and \ref{fig:1:cardLMB} display the statistics (mean and standard deviation) of the estimated number of objects obtained, respectively, with M$\delta$-GLMB, $\delta$-GLMB and LMB tracking filters.
As it can be seen, all the algorithms estimate the object cardinality accurately, with no substantial differences.
This result implies that, in the presence of a single sensor guaranteeing observability, the approximations made by both M$\delta$-GLMB and LMB trackers are not critical in that they provide performance comparable to $\delta$-GLMB with the advantage of a cheaper computational burden and reduced storage requirements.
Note that the problems introduced by the rendezvous point (e.g. merged or lost tracks) are correctly tackled by all the algorithms.

Fig. \ref{fig:cmot1:ospa} shows the OSPA distance of the algorithms.
Note again that, in agreement with the estimated cardinality distributions, the OSPA distances are nearly identical.
\begin{figure}[h!]
        \begin{minipage}[t][][t]{\columnwidth}
	        	\centering
		\includegraphics[width=\columnwidth]{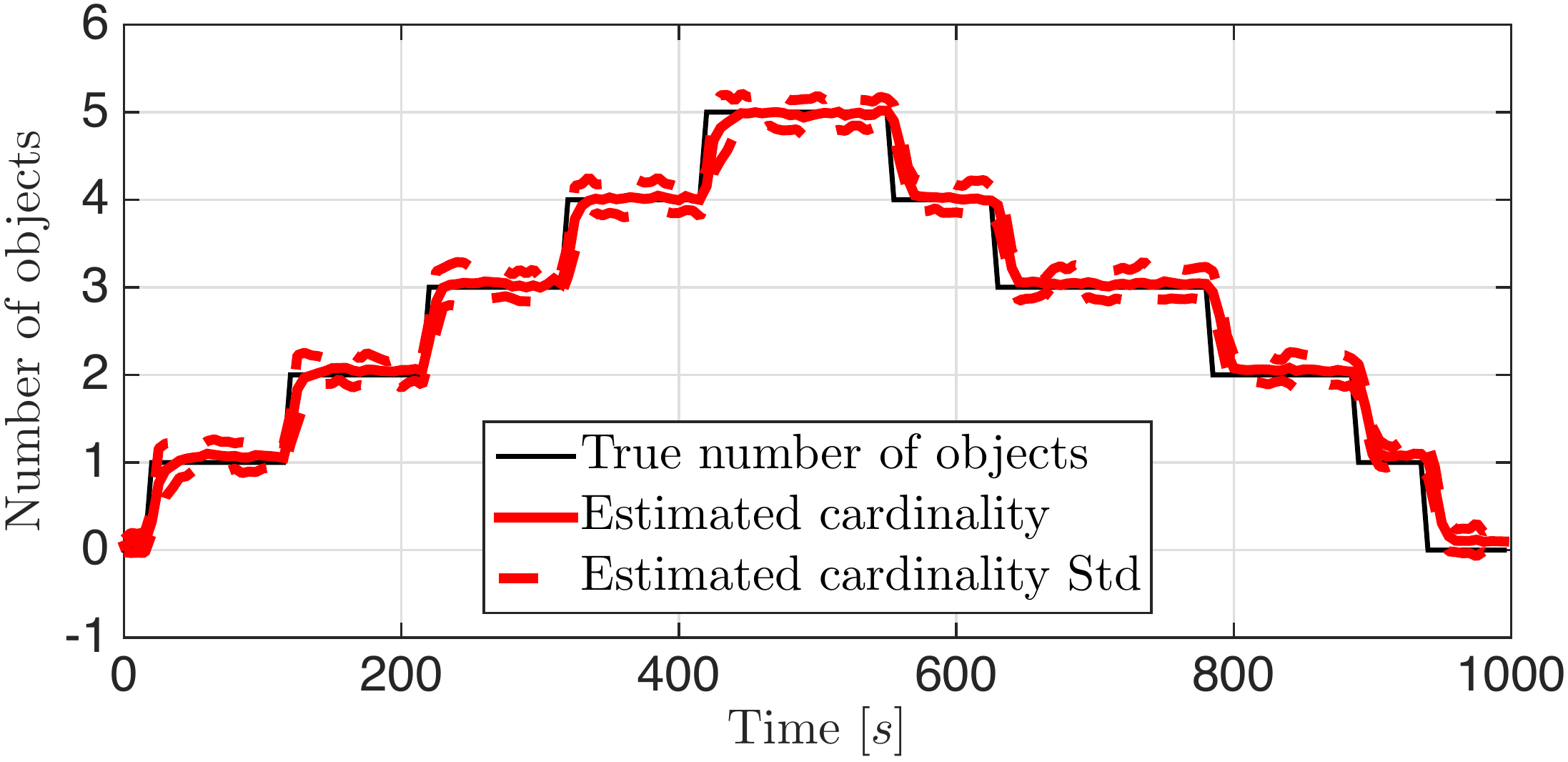}
		\caption{Cardinality statistics for M$\delta$-GLMB tracking filter using 1 radar.}
		\label{fig:1:cardMDGLMB}
        \end{minipage}\vspace{0.5em}
        \begin{minipage}[t][][t]{\columnwidth}
	        \centering
		\includegraphics[width=\columnwidth]{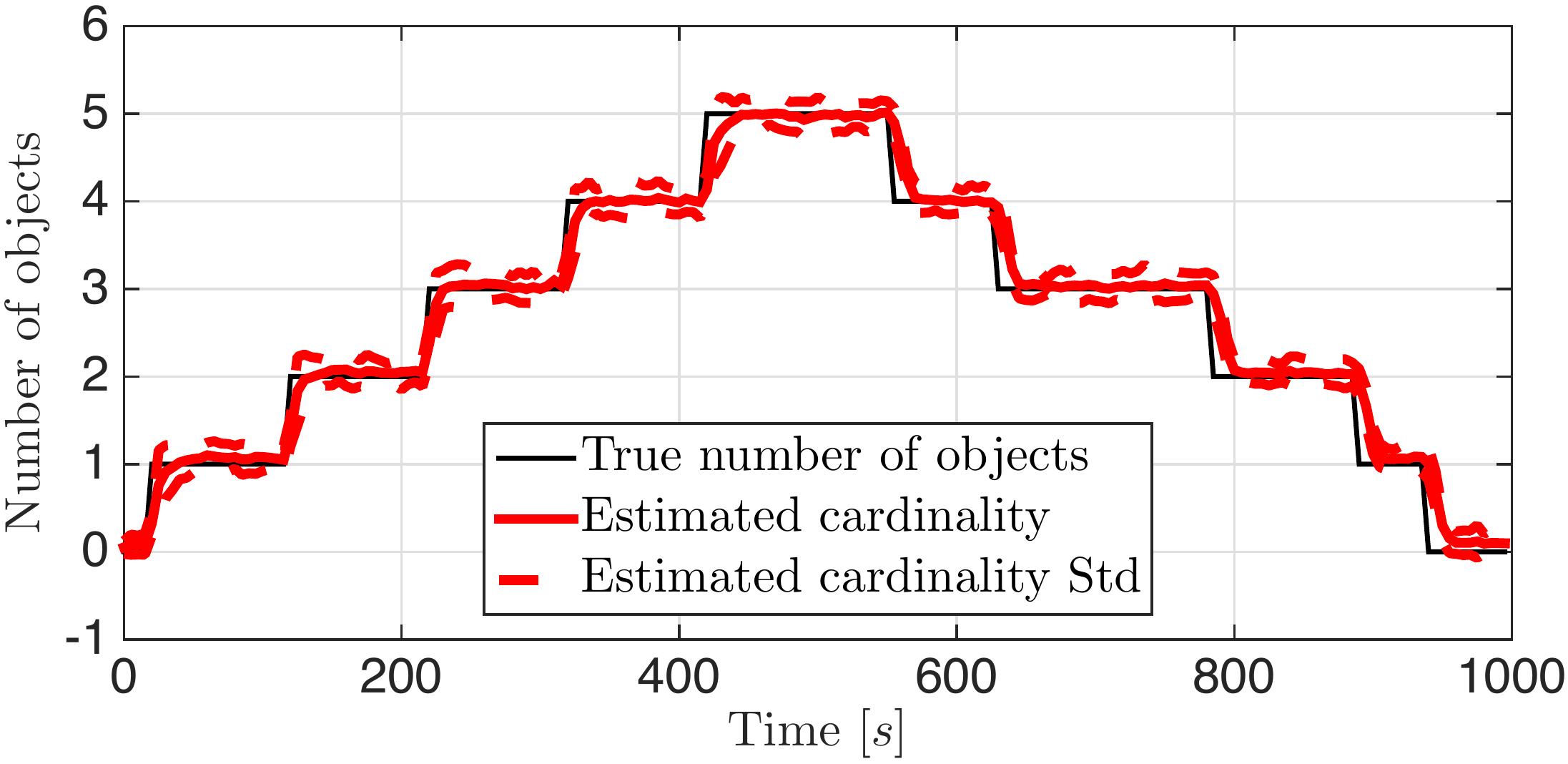}
		\caption{Cardinality statistics for $\delta$-GLMB tracking filter using 1 radar.}
		\label{fig:1:cardDGLMB}
        \end{minipage}
\end{figure}
\begin{figure}[h!]
        \begin{minipage}[t][][t]{\columnwidth}
	        	\centering
		\includegraphics[width=\columnwidth]{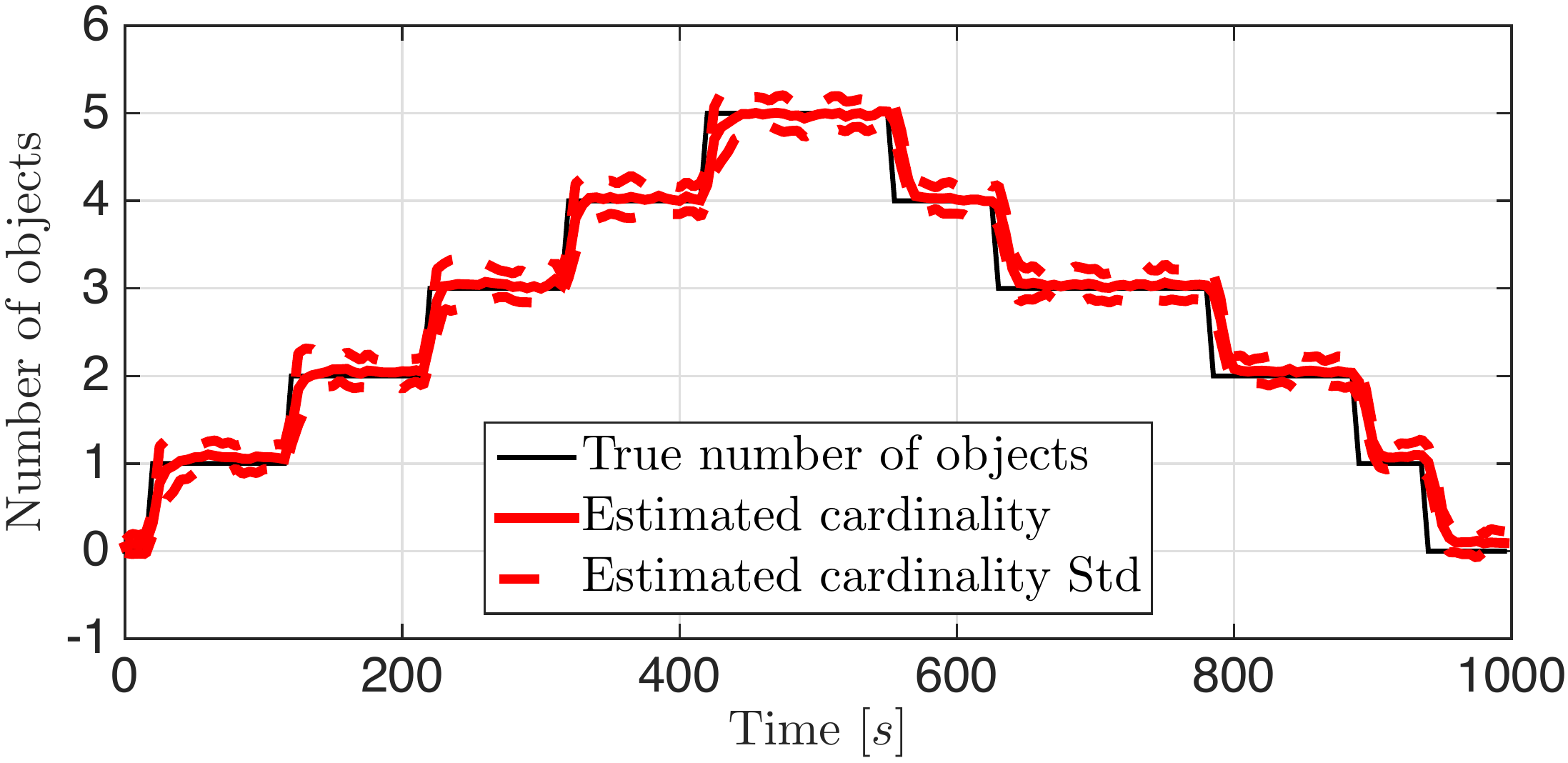}
		\caption{Cardinality statistics for LMB tracking filter using 1 radar.}
		\label{fig:1:cardLMB}
        \end{minipage}\vspace{0.5em}
        \begin{minipage}[t][][t]{\columnwidth}
        		\centering
		\includegraphics[width=\columnwidth]{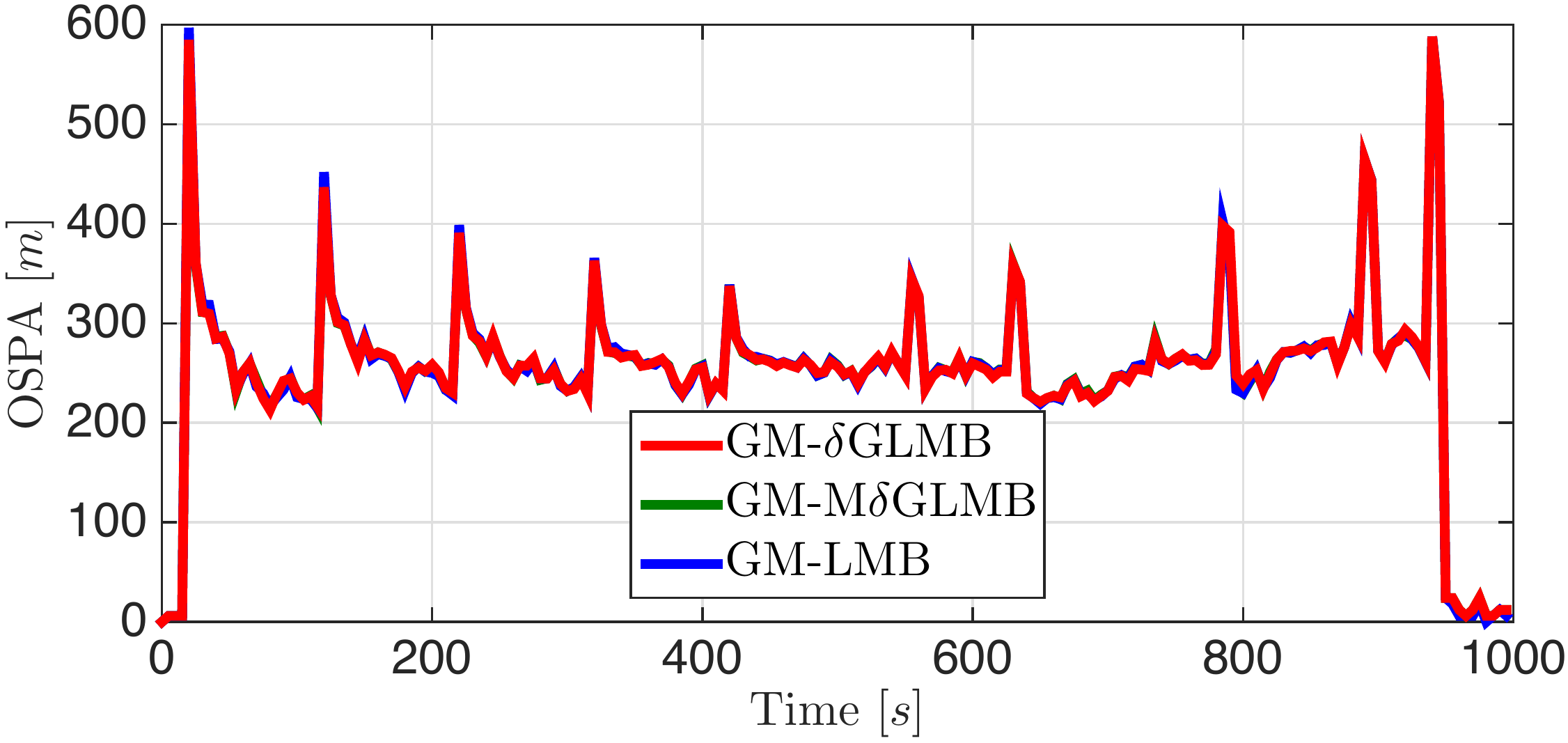}
		\caption{OSPA distance ($c = 600 \, [m]$, $p = 2$) using 1 radar.}
		\label{fig:cmot1:ospa}
        \end{minipage}
\end{figure}

\subsection{Scenario 2: 3 TOA sensors}
Figs. \ref{fig:2:cardMDGLMB}, \ref{fig:2:cardDGLMB} and \ref{fig:2:cardLMB} display the statistics (mean and standard deviation) of the estimated number of objects obtained, respectively, with M$\delta$-GLMB, $\delta$-GLMB and LMB trackers.
The M$\delta$-GLMB and the $\delta$-GLMB tracking filters estimate the object cardinality accurately, while the LMB exhibits poor performance and higher standard deviation due to losing some tracks when 4 or 5 objects are jointly present in the surveillance area.
It is worth noticing that the M$\delta$-GLMB tracker exhibits performance very close to that of $\delta$-GLMB and that the problems introduced by the rendezvous point are again succesfully tackled.

Fig. \ref{fig:cmot2:ospa} shows the OSPA distance.
Note that the OSPA of M$\delta$-GLMB is close to the one of $\delta$-GLMB, while LMB shows an overall higher error in agreement with the cardinality error due to losing tracks.
\begin{figure}[h!]
	\begin{minipage}[t][][t]{\columnwidth}
		\centering
		\includegraphics[width=\columnwidth]{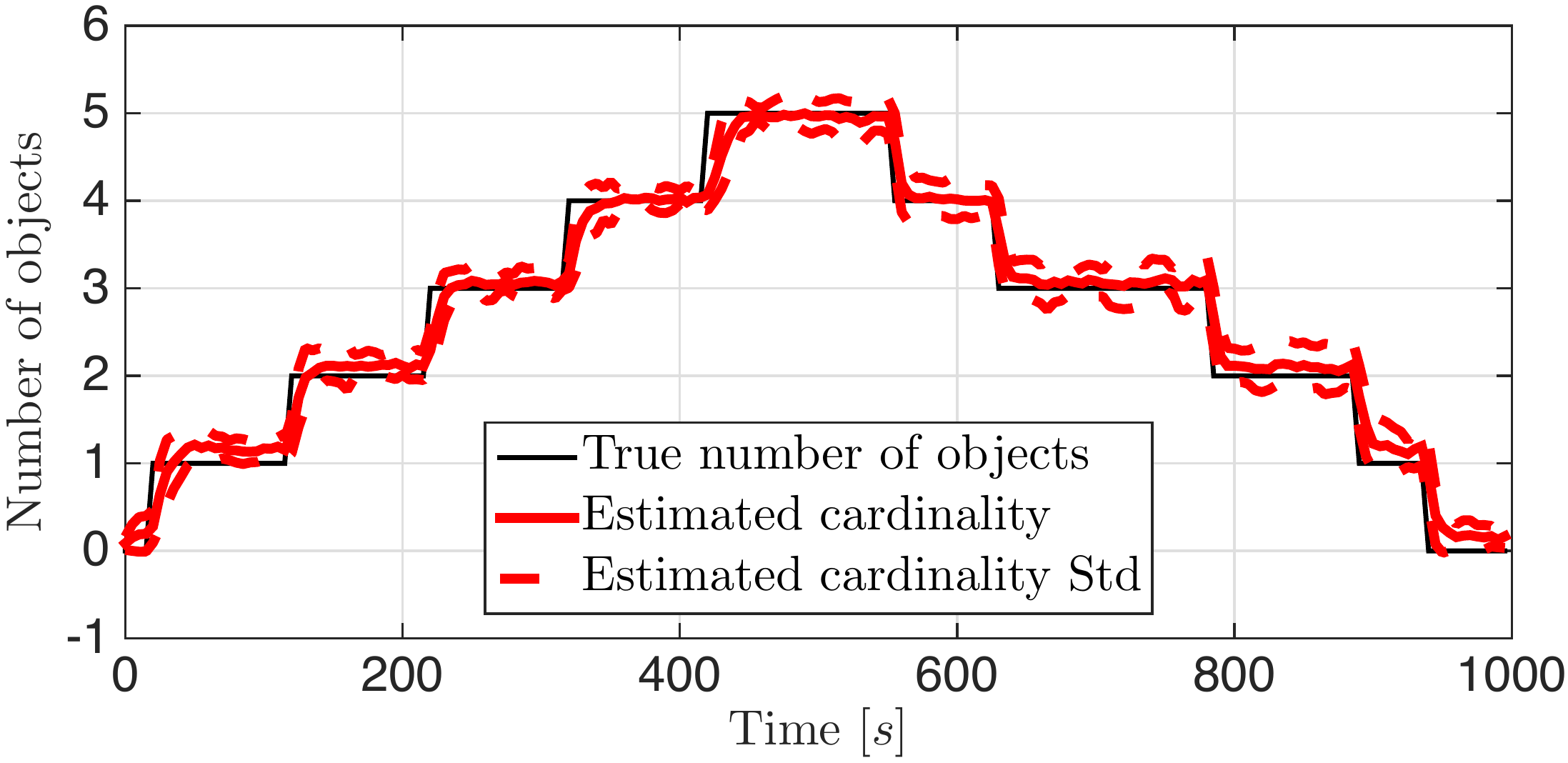}
		\caption{Cardinality statistics for M$\delta$-GLMB tracking filter using 3 TOA.}
		\label{fig:2:cardMDGLMB}
        \end{minipage}\vspace{0.5em}
        \begin{minipage}[t][][t]{\columnwidth}
        		\centering
		\includegraphics[width=\columnwidth]{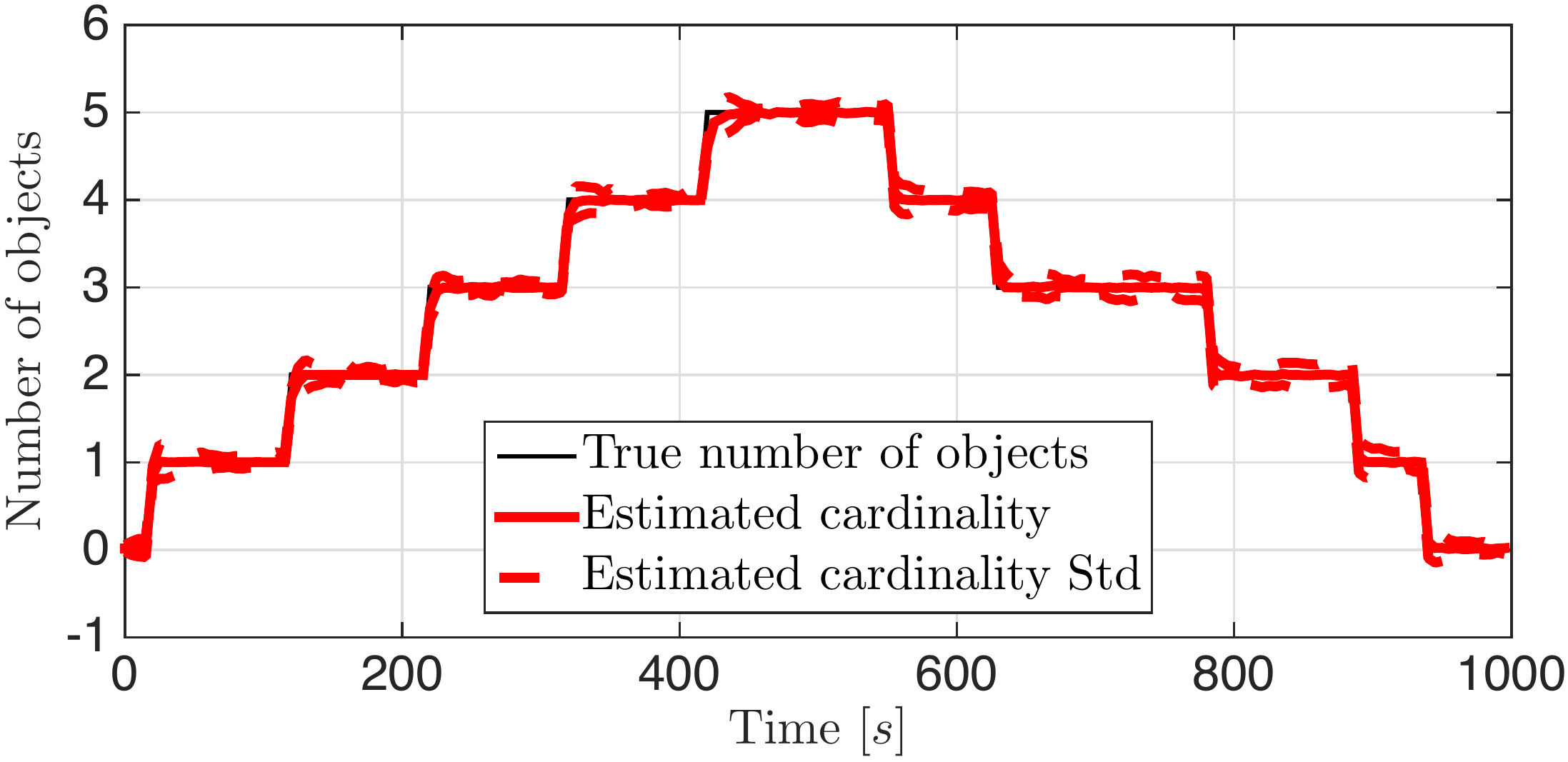}
		\caption{Cardinality statistics for $\delta$-GLMB tracking filter using 3 TOA.}
		\label{fig:2:cardDGLMB}
        \end{minipage}
\end{figure}
\begin{figure}[h!]
        \begin{minipage}[t][][t]{\columnwidth}
        		\centering
		\includegraphics[width=\columnwidth]{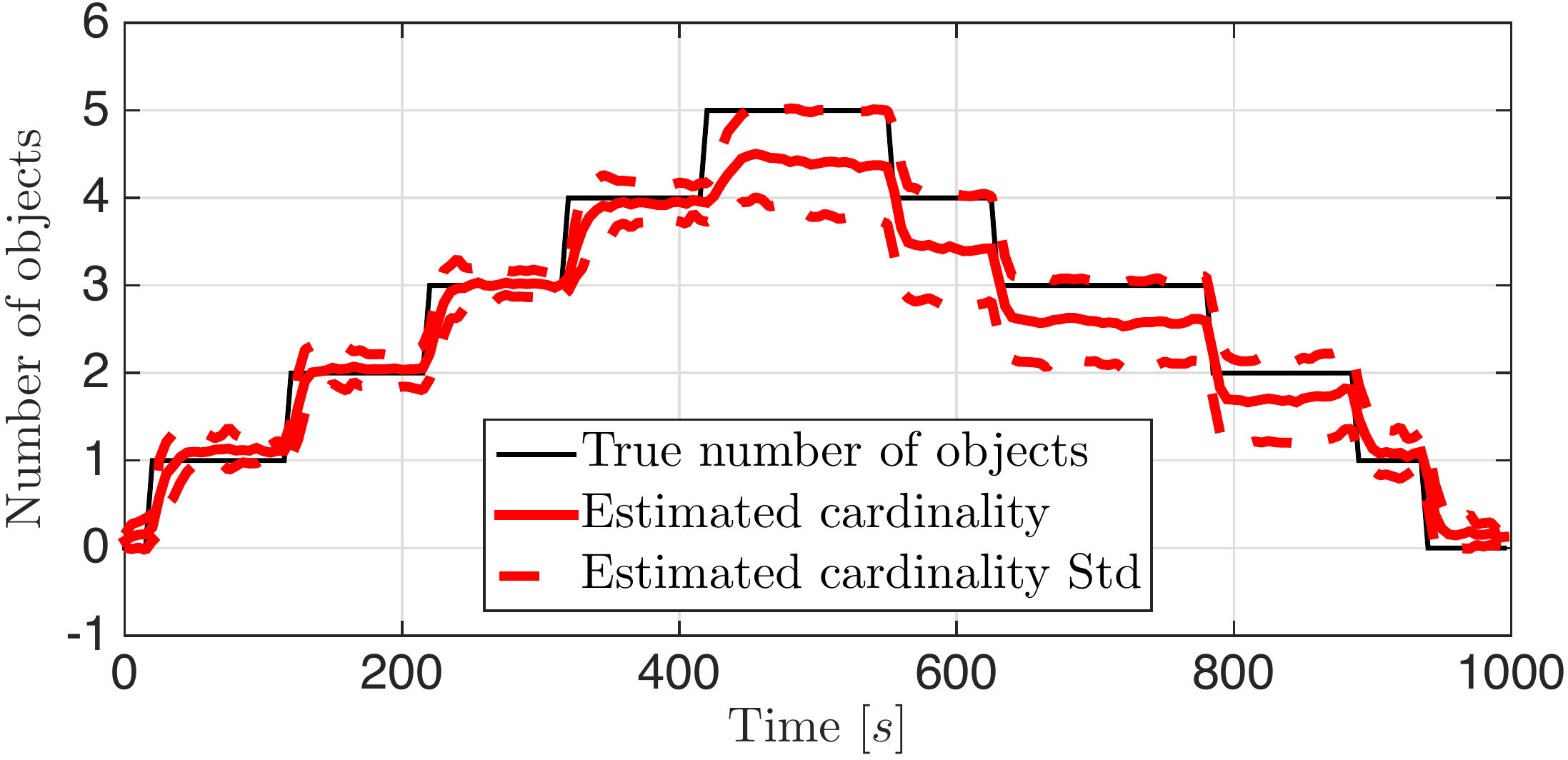}
		\caption{Cardinality statistics for LMB tracking filter using 3 TOA.}
		\label{fig:2:cardLMB}
        \end{minipage}\vspace{0.5em}
        \begin{minipage}[t][][t]{\columnwidth}
        		\centering
		\includegraphics[width=\columnwidth]{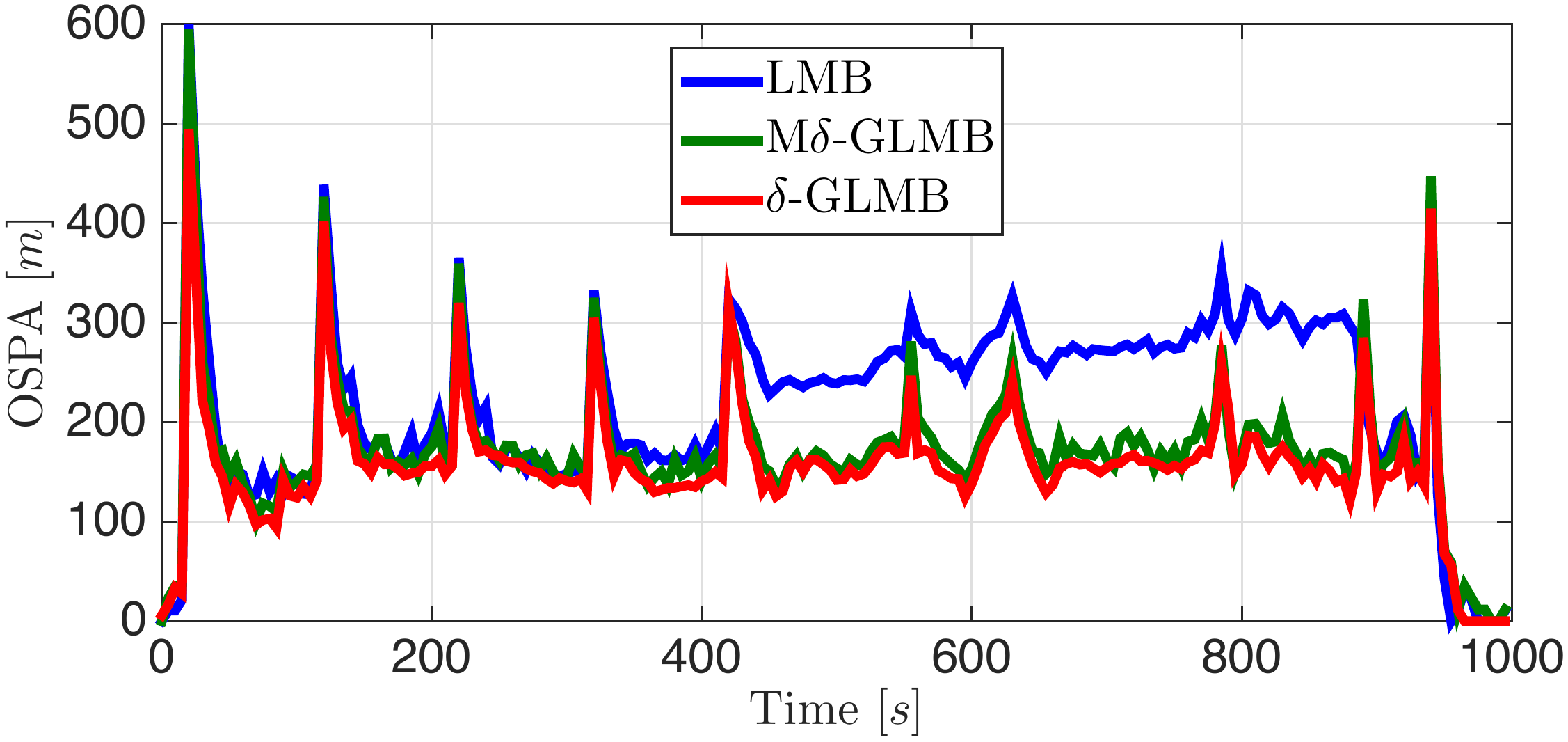}
		\caption{OSPA distance ($c = 600 \, [m]$, $p = 2$) using 3 TOA.}
		\label{fig:cmot2:ospa}
        \end{minipage}
\end{figure}

\chapter{Distributed multi-object tracking}
\label{chap:dmot}
\spminitoc
The focus of this chapter is on \textit{Distributed Multi-Object Tracking} (DMOT).
We adopt the RFS formulation since it provides the concept of \textit{probability density} for \textit{multi-object state} that allows us to directly extend existing tools in distributed estimation to the multi-object case.
Such a concept is not available in the \textit{Multiple Hypotheses Tracking} (MHT) and \textit{Joint Probabilistic Data Association} (JPDA) approaches \cite{reid,far1985v1,far1985v2,book0,book,BlPo}.

DMOF based on the RFS paradigm has previously been described in chapter \ref{chap:dmof} \cite{mahler-chap-2012,ccphd,emd,ccjpda}.
Specifically, an information-theoretic approach to robust distributed multi-object estimation based on the CCPHD filter has been proposed.
However, this formulation is not able to provide estimates of the object's trajectories or tracks.

Taking into account the above-mentioned considerations, the main contributions in the present chapter are \cite{fanvovo2015}:
\begin{itemize}
	\item the generalization of the robust distributed multi-object filtering approach of subsection \ref{ssec:KLA} to multi-object tracking;
	\item the development of the \textit{Consensus Marginalized $\delta$-Generalized Labeled Multi-Bernoulli} (CM$\delta$GLMB) and \textit{Consensus Labeled Multi-Bernoulli} (CLMB) filters as the first distributed multi-object trackers in the RFS framework.
\end{itemize}

The proposed solutions are based on the concept of labeled RFS introduced in subsection \ref{ssec:lrfss}, that enables the estimation of multi-object trajectories in a principled manner \cite{vovo1}.
Moreover, labeled RFS conjugate priors \cite{vovo1} have lead to the development of a tractable analytic multi-object tracking solution called the \textit{$\delta $-Generalized Labeled Multi-Bernoulli} ($\delta $-GLMB) filter \cite{vovo2}.
However, it is not known if this filter is amenable to DMOT.
Nonetheless, the M$\delta$-GLMB and the LMB filters \cite{mdglmbf,lmbf} are two efficient approximations of the $\delta$-GLMB filter that
\begin{enumerate}[label=\textsc{\roman*}.]
	\item have an appealing mathematical formulation that facilitates an efficient and tractable closed-form fusion rule for DMOT;
	\item preserve key summary statistics of the full multi-object posterior.
\end{enumerate}
In addition, labeled RFS-based trackers do not suffer from the so-called ``\textit{spooky effect}'' \cite{spooky} that degrades the performance in the presence of low detection probability like the multi-object filters \cite{vo-vo-cantoni,ccphd,emd}.

\section{Information fusion with labeled RFS}
\label{sec:lrfsnwgm}
In this section, the notion of KLA (\ref{KLA}) introduced in subsection \ref{ssec:KLA} will be extended to densities of labeled RFSs and will be shown to be equivalent to the \textit{normalized weighted geometric mean}.
Proofs of the results are provided in the appendix \ref{chap:appendix}.

\subsection{Labeled multi-object Kullback-Leibler average}
\label{ssec:lkla}

The benefit of using the RFS framework which provides the concept of probability density of the multi-object state, allows us to directly extend the notion of multi-object KLA originally devised in \cite{ccphd} to labeled multi-object densities \cite{fanvovo2015}.
Herewith, we adopt the measure theoretic notion of multi-object density given in \cite{VSD05}, which does not suffer of the unit compatibility problem in the integrals involving the product of powers of multi-object densities. Note that this density is equivalent to the FISST density as shown in \cite{VSD05}. Thus, the standard inner product notation is extended to multi-object densities
\be
	\left\langle \boldsymbol{f}, \,\boldsymbol{g}\right\rangle \triangleq \int \boldsymbol{f}(\mathbf{X})\,\boldsymbol{g}(\mathbf{X}) \delta \mathbf{X} \, .
\ee

The weighted KLA $\boldsymbol{f}_{KLA}$ of the labeled multi-object densities $\boldsymbol{f}^{i}$ is defined by 
\begin{IEEEeqnarray}{rCl}
	\boldsymbol{f}_{KLA} & \triangleq & \arg \inf_{\boldsymbol{f}} \sum_{i \in \mathcal{N}} \omega^{i} D_{KL}\left( \boldsymbol{f} \parallel \boldsymbol{f}^{i} \right), \label{eq:lkla}\\
	\omega^{i} & \ge & 0,~~~~~\sum_{i \in \mathcal{N}} \omega^{i} = 1. \label{eq:klaweights}
\end{IEEEeqnarray}

where  
\begin{equation}
	D_{KL}\left( \boldsymbol{f}\parallel \boldsymbol{g}\right) \triangleq \int \boldsymbol{f}(\mathbf{X})\log \!\dfrac{\boldsymbol{f}(\mathbf{X})}{\boldsymbol{g}(\mathbf{X})}\delta \mathbf{X}  \label{eq:dkl}
\end{equation}
is the KLD \cite{mahler1,mahler} between two multi-object densities $\boldsymbol{f}(\cdot)$ and $\boldsymbol{g}(\cdot)$.
\begin{thm}[KLA of labeled RFSs]\label{thm:kla}~\\
	The weighted KLA defined in (\ref{eq:lkla}) is the normalized weighted geometric mean of the multi-object densities $\boldsymbol{f}^{i}(\cdot)$, $i\in \mathcal{N}$, i.e.
	\begin{equation}
		\boldsymbol{f}_{KLA}=\bigoplus_{i\in \mathcal{N}}\left( \omega^{i} \odot \boldsymbol{f}^{i}\right) .  \label{eq:kla:gci}
	\end{equation}
\end{thm}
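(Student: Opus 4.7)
The plan is to reduce the constrained infimum over labeled multi-object densities to a single KLD expression whose minimizer is immediate. First, I would expand the objective $\sum_{i \in \mathcal{N}} \omega^i D_{KL}(\boldsymbol{f} \| \boldsymbol{f}^i)$ using the definition (\ref{eq:dkl}) and the labeled set integral (\ref{eq:setint}), pulling $\boldsymbol{f}(\mathbf{X})$ out as a common factor and exploiting the constraint $\sum_{i \in \mathcal{N}} \omega^i = 1$ to rewrite the weighted sum of logarithms as $\log \prod_{i \in \mathcal{N}} [\boldsymbol{f}^i(\mathbf{X})]^{\omega^i}$.

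Next, I would introduce the auxiliary labeled multi-object density
\begin{equation*}
\boldsymbol{g}(\mathbf{X}) \triangleq \frac{\prod_{i \in \mathcal{N}} [\boldsymbol{f}^i(\mathbf{X})]^{\omega^i}}{\int \prod_{i \in \mathcal{N}} [\boldsymbol{f}^i(\mathbf{X})]^{\omega^i}\, \delta \mathbf{X}},
\end{equation*}
which, by construction, coincides with $\bigoplus_{i \in \mathcal{N}} (\omega^i \odot \boldsymbol{f}^i)$. Denoting the normalizing set integral by $Z$ and substituting into the expanded objective, a direct manipulation gives
\begin{equation*}
\sum_{i \in \mathcal{N}} \omega^i D_{KL}(\boldsymbol{f} \| \boldsymbol{f}^i) \;=\; D_{KL}(\boldsymbol{f} \| \boldsymbol{g}) - \log Z.
\end{equation*}
Since $Z$ does not depend on $\boldsymbol{f}$, the non-negativity of $D_{KL}$ (Gibbs' inequality, extended to labeled RFS densities through the measure-theoretic framework of \cite{VSD05}) immediately implies that the infimum is attained uniquely at $\boldsymbol{f} = \boldsymbol{g}$, establishing the claim.

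The main obstacle I foresee is not algebraic but measure-theoretic: the manipulation requires raising labeled multi-object densities to fractional powers $\omega^i \in (0,1)$ and integrating the resulting product via a set integral. This step is precisely why the measure-theoretic density of \cite{VSD05} is preferred over the FISST density, since raising the latter to a non-integer power destroys unit compatibility across cardinality strata. Once this technical foundation is in place, one must also briefly justify that the normalizer $Z$ is finite and strictly positive so that $\boldsymbol{g}$ is a bona fide multi-object density; this can be done by a Hölder-type argument applied to $\prod_{i \in \mathcal{N}} [\boldsymbol{f}^i]^{\omega^i}$ using $\sum_{i \in \mathcal{N}} \omega^i = 1$. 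The remaining steps are formally identical to the single-object derivation (\ref{eq:geomean}) and to Theorem \ref{thm:KLA:GCI} for unlabeled RFSs, with the labeled set integral $\delta \mathbf{X}$ replacing $\delta X$ throughout.
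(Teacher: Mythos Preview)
Your proposal is correct and follows essentially the same approach as the paper: the paper simply states that Theorem~\ref{thm:kla} follows from Theorem~1 of \cite[Section III.B]{ccphd} (i.e., Theorem~\ref{thm:KLA:GCI} here), whose proof in the appendix proceeds exactly as you outline---rewriting the weighted sum of KLDs as $D_{KL}(\boldsymbol{f}\|\boldsymbol{g})-\log Z$ and invoking nonnegativity of the KLD. Your additional remarks on the measure-theoretic density of \cite{VSD05} and the finiteness of $Z$ via a H\"older-type bound go beyond what the paper makes explicit, but are consistent with its framework.
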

\noindent Theorem \ref{thm:kla} follows from Theorem 1 of \cite[Section III.B]{ccphd}.

\begin{rem}
The fusion rule (\ref{eq:kla:gci}) is the labeled multi-object version of the multi-object Chernoff fusion first proposed by Mahler \cite{mah2000}.
\end{rem}

%
Subsequently, it will be shown that the KLAs of M$\delta$-GLMB and LMB densities are also, respectively, a M$\delta$-GLMB and an LMB densities.
In particular closed-form solutions for the normalized weighted geometric means are derived.
These results are necessary to implement the fusion rule (\ref{eq:kla:gci}) for M$\delta$-GLMB and LMB tracking filters.

\subsection{Normalized weighted geometric mean of M$\delta$-GLMB densities}
The following result holds.

\begin{thm}[NWGM of M$\delta$-GLMB RFSs]\label{thm:mdglmb:nwgm}~\\
	Let $\boldsymbol{\pi }^{\imath}(\cdot)$, $\imath =1,\dots ,\mathcal{I}$, be M$\delta$-GLMB densities on $\mathcal{F}(\mathbb{X}\mathcal{\times }\mathbb{L})$ and $\omega^{\imath}\in \left( 0,1 \right)$, $\imath =1,\dots ,\mathcal{I}$, such that $\sum_{\imath =1}^{\mathcal{I}}\omega^{\imath}=1$. Then the normalized weighted geometric mean is given by
	\begin{equation}
		{\bigoplus_{\imath =1}^{\mathcal{I}}}\,\left( \omega^{\imath}\odot \boldsymbol{\pi}^{\imath}\right)\left( \lb{X} \right) = \Delta\!(\mathbf{X})\sum_{L \in \mathcal{F}\!\left( \mathbb{L} \right)} \delta_{L}\!\left( \mathcal{L}\!\left( \mathbf{X} \right) \right) \, \overline{w}^{\left( L \right)}\left[ \overline{p}^{\left( L \right)}\right]^{\mathbf{X}}\label{eq:mdglmb:nwgm}
	\end{equation}
	where
	\begin{IEEEeqnarray}{rCl}
		\overline{p}^{\left( L \right)} & = & \dfrac{\displaystyle \prod_{\imath = 1}^{\mathcal{\mathcal{I}}} \left( p_{\imath}^{\left( L \right)} \right)^{\omega^{\imath}}}{\displaystyle\int \prod_{\imath = 1}^{\mathcal{I}} \left( p_{\imath}^{\left( L \right)} \right)^{\omega^{\imath}} dx}\\
		\overline{w}^{\left( L \right)} & = & \dfrac{\displaystyle\prod_{\imath = 1}^{\mathcal{\mathcal{I}}} \left( w_{\imath}^{\left( L \right)} \right)^{\omega^{\imath}} \left[ \int \prod_{\imath = 1}^{\mathcal{I}} \left( p_{\imath}^{\left( L \right)}\!\left( x, \cdot \right) \right)^{\omega^{\imath}} dx \right]^{L}}
			{\displaystyle \sum_{F \subseteq \mathbb{L}} \prod_{\imath = 1}^{\mathcal{I}} \left( w_{\imath}^{\left( F \right)} \right)^{\omega^{\imath}} \left[ \int \prod_{\imath = 1}^{\mathcal{I}} \left( p_{\imath}^{\left( F \right)}\!\left( x, \cdot \right) \right)^{\omega^{\imath}} dx \right]^{F}}
	\end{IEEEeqnarray}
\end{thm}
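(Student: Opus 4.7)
The plan is to prove Theorem \ref{thm:mdglmb:nwgm} by direct algebraic manipulation of the definitions of the $\odot$ and $\oplus$ operators, exploiting the key structural feature of the M$\delta$-GLMB form: namely that its density, when evaluated on a labeled set $\mathbf{X}$, collapses to a single term of the sum over $\mathcal{F}(\mathbb{L})$ due to the Kronecker delta $\delta_I(\mathcal{L}(\mathbf{X}))$. This observation turns exponentiation of a sum (which a priori need not preserve the M$\delta$-GLMB structure) into exponentiation of a single product term, and is what makes the closed form work out cleanly.

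First, I would write the NWGM as
\begin{equation*}
\bigoplus_{\imath=1}^{\mathcal{I}} (\omega^{\imath} \odot \boldsymbol{\pi}^{\imath})(\mathbf{X}) = \frac{\prod_{\imath=1}^{\mathcal{I}} \left[\boldsymbol{\pi}^{\imath}(\mathbf{X})\right]^{\omega^{\imath}}}{\int \prod_{\imath=1}^{\mathcal{I}} \left[\boldsymbol{\pi}^{\imath}(\mathbf{Z})\right]^{\omega^{\imath}} \delta \mathbf{Z}},
\end{equation*}
which follows from properties \textsc{p.a}--\textsc{p.f} of the fusion and weighting operators. Then, for any $\mathbf{X}$ with $\mathcal{L}(\mathbf{X}) = L$, I would substitute the M$\delta$-GLMB form (\ref{eq:mdglmbpdf}) to obtain $\boldsymbol{\pi}^{\imath}(\mathbf{X}) = \Delta(\mathbf{X}) w_{\imath}^{(L)} [p_{\imath}^{(L)}]^{\mathbf{X}}$. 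Since $\Delta(\mathbf{X}) \in \{0,1\}$ and $\omega^{\imath} > 0$, we have $\Delta(\mathbf{X})^{\omega^{\imath}} = \Delta(\mathbf{X})$, and the multi-object exponential obeys $[p_{\imath}^{(L)}]^{\omega^{\imath}\mathbf{X}} = \left[(p_{\imath}^{(L)})^{\omega^{\imath}}\right]^{\mathbf{X}}$. Consequently
\begin{equation*}
\prod_{\imath=1}^{\mathcal{I}} \left[\boldsymbol{\pi}^{\imath}(\mathbf{X})\right]^{\omega^{\imath}} = \Delta(\mathbf{X}) \left(\prod_{\imath=1}^{\mathcal{I}}(w_{\imath}^{(L)})^{\omega^{\imath}}\right) \left[\prod_{\imath=1}^{\mathcal{I}}(p_{\imath}^{(L)})^{\omega^{\imath}}\right]^{\mathbf{X}}.
\end{equation*}

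Next I would normalize the single-object factor by defining $c^{(L)}(\ell) \triangleq \int \prod_{\imath}(p_{\imath}^{(L)}(x,\ell))^{\omega^{\imath}} dx$ and $\bar p^{(L)}(x,\ell) \triangleq \prod_{\imath}(p_{\imath}^{(L)}(x,\ell))^{\omega^{\imath}} / c^{(L)}(\ell)$, so that $\bar p^{(L)}$ is a location density and the multi-object exponential factorizes as $[\prod_{\imath}(p_{\imath}^{(L)})^{\omega^{\imath}}]^{\mathbf{X}} = [c^{(L)}]^{L} [\bar p^{(L)}]^{\mathbf{X}}$ (using that $\mathcal{L}(\mathbf{X}) = L$). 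To compute the denominator, I would apply the labeled set integral (\ref{eq:setint}): summing over all finite label subsets $F \subseteq \mathbb{L}$ and integrating the kinematic variables, and then use the distinct-label indicator together with the symmetry of the integrand to get
\begin{equation*}
\int \prod_{\imath}[\boldsymbol{\pi}^{\imath}(\mathbf{Z})]^{\omega^{\imath}} \delta \mathbf{Z} = \sum_{F \subseteq \mathbb{L}} \left(\prod_{\imath}(w_{\imath}^{(F)})^{\omega^{\imath}}\right) [c^{(F)}]^{F}.
\end{equation*}
Dividing and rearranging recovers (\ref{eq:mdglmb:nwgm}) with the stated $\overline{w}^{(L)}$ and $\overline{p}^{(L)}$, and hence the NWGM is again an M$\delta$-GLMB on $\mathcal{F}(\mathbb{X} \times \mathbb{L})$.

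The only delicate step is the evaluation of the normalizing set integral: one must be careful that the sum in the M$\delta$-GLMB definition is picked off correctly under integration (so that no cross-terms between different label configurations appear), and that the set-integral combinatorial factor $1/n!$ cancels the implicit ordering of the $n!$ representations of an unordered labeled set. Beyond this bookkeeping, the argument is essentially a repeated use of the M$\delta$-GLMB ``single-term'' collapse combined with the multiplicativity of the multi-object exponential; no analytical approximation is needed, and the result is exact.
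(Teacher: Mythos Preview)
Your proposal is correct and follows essentially the same approach as the paper's proof: both exploit the ``single-term collapse'' of the Kronecker delta to turn exponentiation of the M$\delta$-GLMB sum into exponentiation of a single product term, then factor out the normalized location density and compute the normalizing set integral (the paper invokes Lemma~3 of \cite[Section III.B]{vovo1}, which is exactly the labeled set-integral identity you describe). The only cosmetic difference is that the paper carries out the computation for two densities and then appeals to induction, whereas you work directly with $\mathcal{I}$ factors.
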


The fusion rule for M$\delta$-GLMBs follows by applying Theorem \ref{thm:mdglmb:nwgm} to find the KLA (\ref{eq:kla:gci}) of the M$\delta$-GLMB's $\boldsymbol{\pi}^{i}(\cdot)$, $i\in \mathcal{N}$. This is summarized in the following Proposition.

\begin{pro}[KLA of M$\delta$-GLMB RFSs]\label{pro:mdglmb:fusion}~\\
	Suppose that each agent $i \in \mathcal{N}$ is provided with an M$\delta$-GLMB $\boldsymbol{\pi}^{i}(\cdot)$ and that all agents share the same label space for the birth process, then the M$\delta$-GLMB components $\overline{p}^{\left( L \right)}(\cdot)$ and $\overline{w}^{\left( L \right)}$ of the KLA are given by
	\begin{IEEEeqnarray}{rCl}
	\overline{p}^{\left( L \right)}\!\left( x, \ell \right) & = & \bigoplus_{i \in \mathcal{N}} \left( \omega^{i} \odot p^{\left( L \right)} \right)\!\left( x, \ell \right) \, , \label{eq:mdglmbfusion:pdf}\\
		\overline{w}^{\left( L \right)} & = & \dfrac{\widetilde{w}^{\left( L \right)}}{\displaystyle \sum_{F \subseteq \mathbb{L}} \widetilde{w}^{\left( F \right)}} \, ,\label{eq:mdglmbfusion:w}
	\end{IEEEeqnarray}
	where 
	\begin{IEEEeqnarray}{rCl}
		\widetilde{w}^{\left( L \right)} & = & \displaystyle\prod_{i \in \mathcal{N}} \left( w_{i}^{\left( L \right)} \right)^{\omega^{i}} \, \left[ \int \widetilde{p}^{\left( L \right)}\!\left( x, \cdot \right) dx \right]^{L} \, ,\\
		\widetilde{p}^{\left( L \right)} & = & \prod_{i  \in \mathcal{N}} \left( p_{i}^{\left( L \right)} \right)^{\omega^{i}} \, .
	\end{IEEEeqnarray}
\end{pro}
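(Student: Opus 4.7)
The plan is to deduce this as a direct corollary of Theorem \ref{thm:kla} and Theorem \ref{thm:mdglmb:nwgm}. First I would invoke Theorem \ref{thm:kla} to replace the infimum-based KLA definition by the normalized weighted geometric mean
\begin{equation*}
\boldsymbol{f}_{KLA} = \bigoplus_{i \in \mathcal{N}} \left( \omega^{i} \odot \boldsymbol{\pi}^{i} \right).
\end{equation*}
Since each $\boldsymbol{\pi}^{i}$ is an M$\delta$-GLMB and, by assumption, all agents share a common label space $\mathbb{L}$ (including the labels for newborn objects), the sums $\sum_{I \in \mathcal{F}(\mathbb{L})}$ in each M$\delta$-GLMB representation (\ref{eq:mdglmbpdf}) run over the same index set. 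This common-label-space assumption is what makes the products $\prod_{i \in \mathcal{N}}(w_i^{(L)})^{\omega^i}$ and $\prod_{i \in \mathcal{N}}(p_i^{(L)}(x,\ell))^{\omega^i}$ meaningful term-by-term; without it, label alignment between agents would not be well-defined and the NWGM would trivially vanish on mismatched hypotheses.

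Next I would apply Theorem \ref{thm:mdglmb:nwgm} with $\mathcal{I} = |\mathcal{N}|$ and weights $\omega^{i}$, which directly yields an M$\delta$-GLMB density of the form (\ref{eq:mdglmb:nwgm}) with closed-form expressions for $\overline{p}^{(L)}$ and $\overline{w}^{(L)}$. The remaining work is purely algebraic: rewrite those expressions in the compact operator form of the Proposition. For the location density, by definition of $\odot$ and $\oplus$ in (\ref{oplus})--(\ref{odot}),
\begin{equation*}
\bigoplus_{i \in \mathcal{N}} \left( \omega^{i} \odot p^{(L)} \right)\!\left( x, \ell \right)
= \frac{\prod_{i \in \mathcal{N}} \left[ p_{i}^{(L)}(x,\ell) \right]^{\omega^{i}}}
{\int \prod_{i \in \mathcal{N}} \left[ p_{i}^{(L)}(x,\ell) \right]^{\omega^{i}} dx},
\end{equation*}
which coincides with the expression for $\overline{p}^{(L)}$ given by Theorem \ref{thm:mdglmb:nwgm}, establishing (\ref{eq:mdglmbfusion:pdf}).

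For the weights, I would introduce the shorthand $\widetilde{p}^{(L)} = \prod_{i \in \mathcal{N}} (p_{i}^{(L)})^{\omega^{i}}$ and $\widetilde{w}^{(L)} = \prod_{i \in \mathcal{N}} (w_{i}^{(L)})^{\omega^{i}}\,[\int \widetilde{p}^{(L)}(x,\cdot)\,dx]^{L}$, where the multi-object exponential applied to $L \subseteq \mathbb{L}$ produces $\prod_{\ell \in L}\int \widetilde{p}^{(L)}(x,\ell)\,dx$. Substituting these into the numerator/denominator structure of $\overline{w}^{(L)}$ in Theorem \ref{thm:mdglmb:nwgm} gives precisely (\ref{eq:mdglmbfusion:w}), since the denominator $\sum_{F \subseteq \mathbb{L}}\widetilde{w}^{(F)}$ enforces the normalization constraint $\sum_{L}\overline{w}^{(L)} = 1$ required of an M$\delta$-GLMB weight sequence.

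The only subtle step — and the one I would spell out carefully — is the consistency of the label-space assumption with the fact that $\bigoplus$ and $\odot$ in (\ref{eq:mdglmbfusion:pdf}) act on functions of $(x,\ell)$ while the product $[\cdots]^{L}$ in $\widetilde{w}^{(L)}$ implicitly uses the labels in $L$ to index the integrals. No other obstacle is anticipated, because the heavy lifting (deriving the closed-form NWGM of two M$\delta$-GLMBs and extending to $\mathcal{I}$ agents) has already been discharged in Theorem \ref{thm:mdglmb:nwgm}; the Proposition is essentially its reformulation in the network-fusion notation used throughout the chapter.
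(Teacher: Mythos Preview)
Your proposal is correct and follows essentially the same approach as the paper: the paper's proof is a one-line remark that the KLA (\ref{eq:kla:gci}) can be evaluated via (\ref{eq:mdglmb:nwgm}) of Theorem~\ref{thm:mdglmb:nwgm}, and your plan simply spells out this reduction (invoking Theorem~\ref{thm:kla} for the NWGM characterization, then Theorem~\ref{thm:mdglmb:nwgm} for the closed form, and finally matching the operator notation). The extra discussion of the common-label-space assumption and the algebraic rewriting is helpful elaboration, but the underlying argument is identical.
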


\begin{rem}
Note that the quantities $\widetilde{w}^{\left( L \right)}$ and $\overline{p}^{\left( L \right)}$ can be independently determined using (\ref{eq:mdglmbfusion:pdf}) and (\ref{eq:mdglmbfusion:w}). Thus, the overall fusion procedure is fully parallelizable.
\end{rem}

\begin{rem}
Notice that (\ref{eq:mdglmbfusion:pdf}) is indeed the CI fusion rule \cite{juluhl1997} for the single-object PDFs.
\end{rem}

\subsection{Normalized weighted geometric mean of LMB densities}
The following result holds.

\begin{thm}[NWGM of LMB RFSs]\label{thm:lmb:nwgm}~\\
	Let $\boldsymbol{\pi }^{\imath}=\left\{ \left(r_{\imath}^{(\ell )},p_{\imath}^{(\ell )}\right) \right\} _{\ell \in \mathbb{L}}$, $\imath =1,\dots ,\mathcal{I}$, be LMB densities on $\mathcal{F}(\mathbb{X}\mathcal{\times }\mathbb{L})$ and $\omega^{\imath}\in \left( 0,1 \right)$, $\imath =1,\dots ,\mathcal{I}$, such that $\sum_{\imath =1}^{\mathcal{I}}\omega^{\imath}=1$. Then the normalized weighted geometric mean is given by
	\begin{equation}
		{\bigoplus_{\imath =1}^{\mathcal{I}}}\,\left( \omega^{\imath}\odot \boldsymbol{\pi}^{\imath}\right) =\left\{ \left( \overline{r}^{(\ell )},\overline{p}^{(\ell )}\right)\right\} _{\ell \in \mathbb{L}} \label{eq:lmb:nwgm}
	\end{equation}
	where
	\begin{IEEEeqnarray}{rCl}
		\overline{r}^{(\ell )} & = & \dfrac{\displaystyle \int \prod\limits_{\imath =1}^{\mathcal{I}}\left( r_{\imath}^{(\ell )}p_{\imath}^{(\ell )}(x)\right) ^{\omega^{\imath}}dx}{\displaystyle \prod\limits_{\imath =1}^{\mathcal{I}}\left( 1-r_{\imath}^{(\ell )}\right) ^{\omega^{\imath}}+\int \prod\limits_{\imath =1}^{\mathcal{I}}\left( r_{\imath}^{(\ell)}p_{\imath}^{(\ell )}(x)\right) ^{\omega^{\imath}}dx} \\
		\overline{p}^{(\ell )} &=&{\bigoplus_{\imath =1}^{\mathcal{I}}}\,\left( \omega^{\imath}\odot p_{\imath}^{(\ell )}\right)
	\end{IEEEeqnarray}
\end{thm}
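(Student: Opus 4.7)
The plan is to exploit the multi-Bernoulli product structure of an LMB density to reduce the weighted geometric mean to a label-by-label computation. First I will rewrite each LMB density $\boldsymbol{\pi}^{\imath}(\cdot)$ using the factorized form
\begin{equation*}
\boldsymbol{\pi}^{\imath}(\mathbf{X}) = \Delta(\mathbf{X}) \prod_{\ell \in \mathbb{L}} \left(1-r_{\imath}^{(\ell)}\right) \prod_{(x,\ell)\in \mathbf{X}} \frac{1_{\mathbb{L}}(\ell)\, r_{\imath}^{(\ell)}\, p_{\imath}^{(\ell)}(x)}{1-r_{\imath}^{(\ell)}},
\end{equation*}
which is obtained by combining (\ref{eq:lmbpdf})--(\ref{eq:lmb:pdf}) with the multi-object exponential notation (\ref{eq:setexp}). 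The distinct-label indicator $\Delta(\mathbf{X})$ is idempotent under positive exponentiation, and the weighted geometric mean commutes with the two products over $\ell$ and over $(x,\ell)\in \mathbf{X}$. Hence the unnormalized product $\prod_{\imath=1}^{\mathcal{I}} [\boldsymbol{\pi}^{\imath}(\mathbf{X})]^{\omega^{\imath}}$ has exactly the same factorized structure, with the per-label Bernoulli parameters replaced by their $\omega^{\imath}$-weighted geometric combinations.

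Second, I will identify the resulting Bernoulli components. For each $\ell \in \mathbb{L}$, define
\begin{equation*}
a^{(\ell)} \triangleq \prod_{\imath=1}^{\mathcal{I}} \left(1-r_{\imath}^{(\ell)}\right)^{\omega^{\imath}}, \qquad b^{(\ell)}(x) \triangleq \prod_{\imath=1}^{\mathcal{I}} \left(r_{\imath}^{(\ell)} p_{\imath}^{(\ell)}(x)\right)^{\omega^{\imath}}.
\end{equation*}
Then the unnormalized geometric mean is
\begin{equation*}
\Delta(\mathbf{X}) \prod_{\ell \in \mathbb{L}} a^{(\ell)} \prod_{(x,\ell)\in \mathbf{X}} \frac{1_{\mathbb{L}}(\ell)\, b^{(\ell)}(x)}{a^{(\ell)}},
\end{equation*}
which by inspection is an unnormalized LMB with per-label unnormalized Bernoulli densities $(a^{(\ell)}, b^{(\ell)}(\cdot))$. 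The set integral over $\mathcal{F}(\mathbb{X}\times\mathbb{L})$ of this object factorizes across labels (by independence of Bernoulli components with disjoint labels) and yields the product $\prod_{\ell\in\mathbb{L}} \left( a^{(\ell)} + \int b^{(\ell)}(x)\, dx \right)$.

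Third, I will carry out the normalization label by label. After normalization, each factor becomes a proper Bernoulli density with existence probability
\begin{equation*}
\overline{r}^{(\ell)} = \frac{\int b^{(\ell)}(x)\, dx}{a^{(\ell)} + \int b^{(\ell)}(x)\, dx},
\end{equation*}
which matches the stated expression, and with spatial density $\overline{p}^{(\ell)}(x) = b^{(\ell)}(x)/\int b^{(\ell)}(x')\, dx'$. The identity $b^{(\ell)}(x) = \prod_{\imath}[r_{\imath}^{(\ell)}]^{\omega^{\imath}} \prod_{\imath}[p_{\imath}^{(\ell)}(x)]^{\omega^{\imath}}$ shows that the $[r_{\imath}^{(\ell)}]^{\omega^{\imath}}$ factors cancel between numerator and denominator, so $\overline{p}^{(\ell)} = \bigoplus_{\imath}(\omega^{\imath}\odot p_{\imath}^{(\ell)})$ as claimed. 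Assembling the per-label Bernoullis back into an LMB via (\ref{eq:lmbpdf}) yields (\ref{eq:lmb:nwgm}).

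\noindent\textbf{Main obstacle.} The only subtle point I anticipate is justifying that the normalization of the set integral factorizes across $\ell\in\mathbb{L}$; this requires carefully expanding the set integral (\ref{eq:setint}) and using the distinct-label indicator $\Delta(\mathbf{X})$ to rewrite the sum over unordered subsets as a product over labels of the binary choice ``label $\ell$ present or absent''. Once this combinatorial step is done, the result follows immediately from the fact that the fusion operator $\oplus$ and the weighting operator $\odot$ on single-object PDFs absorb the multiplicative constants in $b^{(\ell)}$, so that only the normalized fusion $\overline{p}^{(\ell)}$ and existence probability $\overline{r}^{(\ell)}$ survive.
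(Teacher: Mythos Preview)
Your proposal is correct and follows essentially the same route as the paper. The paper argues for the case $\mathcal{I}=2$ and then appeals to induction, and it packages the factorization of the normalization constant---your ``main obstacle''---into a separate Lemma (Lemma~\ref{lem:k}), proved via Lemma~3 of \cite{vovo1} together with the Binomial identity $\sum_{L\subseteq\mathbb{L}} f^{L}=(1+f)^{\mathbb{L}}$; your direct treatment for general $\mathcal{I}$ and your $a^{(\ell)},b^{(\ell)}$ are exactly the paper's $\widetilde{q}^{(\ell)},\widetilde{r}^{(\ell)}$ up to the spatial integral.
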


The fusion rule for LMBs follows by applying Theorem \ref{thm:lmb:nwgm} to find
the KLA (\ref{eq:kla:gci}) of the LMB's $\left\{ \left( r_{i}^{(\ell
)},p_{i}^{(\ell )}\right) \right\} _{\ell \in \mathbb{L}}$, $i\in \mathcal{N}
$. This is summarized in the following Proposition.

\begin{pro}[KLA of LMB RFSs]\label{pro:lmb:fusion}~\\
Suppose that each agent $i \in \mathcal{N}$ is provided with an LMB $\boldsymbol{\pi}^{i}(\cdot)$ and that all agents share the same label space for the birth process, then the LMB components $\left\{ \left( \overline{r}^{(\ell )},\overline{p}^{(\ell)}\right) \right\} _{\ell \in \mathbb{L}}$ of the KLA are given by
\begin{IEEEeqnarray}{rCl}
		\overline{r}^{(\ell)} &=& \dfrac{\widetilde{r}^{(\ell)}}{\widetilde{q}^{(\ell)} + \widetilde{r}^{(\ell)}} \, ,\label{eq:lmbfusion:ex}\\
		\overline{p}^{(\ell)}\!(x) & = & \bigoplus_{i \in \mathcal{N}} \left( \omega^{i} \odot p_{i}^{(\ell)} \right)\!\left(x\right) \, , \label{eq:lmbfusion:pdf}
	\end{IEEEeqnarray}where 
\begin{IEEEeqnarray}{rCl}
		\widetilde{r}^{(\ell)} & = & \displaystyle\int\prod_{i \in \mathcal{N}} \left( r_{i}^{(\ell)} p_{i}^{(\ell)}(x) \right)^{\omega^{i}}dx \label{eq:fusion:barex} \, ,\\
		\widetilde{q}^{(\ell)} & = & \prod_{i \in \mathcal{N}}\left( 1 - r_{i}^{(\ell)} \right)^{\omega^{i}} \label{eq:fusion:barnex} \, .
	\end{IEEEeqnarray}
\end{pro}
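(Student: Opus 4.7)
The plan is to obtain Proposition \ref{pro:lmb:fusion} as a direct specialization of Theorem \ref{thm:kla} together with Theorem \ref{thm:lmb:nwgm}. First, Theorem \ref{thm:kla} identifies the weighted KLA $\boldsymbol{f}_{KLA}$ with the normalized weighted geometric mean $\bigoplus_{i \in \mathcal{N}} \left( \omega^{i} \odot \boldsymbol{\pi}^{i} \right)$, so it suffices to evaluate this NWGM when every $\boldsymbol{\pi}^{i}$ is an LMB density of the form $\{(r_{i}^{(\ell)}, p_{i}^{(\ell)})\}_{\ell \in \mathbb{L}}$. Under the assumption that all agents share the same label space $\mathbb{L}$, Theorem \ref{thm:lmb:nwgm} applies with index set $\mathcal{I} = |\mathcal{N}|$ and immediately yields an LMB density with parameters $\{(\overline{r}^{(\ell)}, \overline{p}^{(\ell)})\}_{\ell \in \mathbb{L}}$.

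The second step is to rewrite the closed-form expressions of Theorem \ref{thm:lmb:nwgm} in terms of the auxiliary quantities $\widetilde{r}^{(\ell)}$ and $\widetilde{q}^{(\ell)}$ defined in (\ref{eq:fusion:barex})--(\ref{eq:fusion:barnex}). For the existence probability I would use the factorization
\begin{equation*}
\prod_{i \in \mathcal{N}} \left( r_{i}^{(\ell)}\, p_{i}^{(\ell)}(x) \right)^{\omega^{i}} = \left( \prod_{i \in \mathcal{N}} \left( r_{i}^{(\ell)} \right)^{\omega^{i}} \right) \prod_{i \in \mathcal{N}} \left( p_{i}^{(\ell)}(x) \right)^{\omega^{i}},
\end{equation*}
and observe that integrating in $x$ yields exactly the scalar $\widetilde{r}^{(\ell)}$, while the remaining factor $\prod_{i \in \mathcal{N}}(1-r_{i}^{(\ell)})^{\omega^{i}}$ in the denominator of Theorem \ref{thm:lmb:nwgm} is by definition $\widetilde{q}^{(\ell)}$, recovering (\ref{eq:lmbfusion:ex}). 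For the spatial density I would then recognize the geometric-mean expression $\overline{p}^{(\ell)}(x) = \prod_{i \in \mathcal{N}} ( p_{i}^{(\ell)}(x) )^{\omega^{i}} / \int \prod_{i \in \mathcal{N}} ( p_{i}^{(\ell)}(x) )^{\omega^{i}} dx$ produced by Theorem \ref{thm:lmb:nwgm} as exactly the compact operator form (\ref{eq:lmbfusion:pdf}), using the definitions (\ref{oplus})--(\ref{odot}) of $\oplus$ and $\odot$.

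I do not expect any substantive obstacle in this proof, since both ingredients (the KLA-as-NWGM identification and the closed-form NWGM of LMBs) are already established. The main care required is in bookkeeping: checking that the $\omega^{i}$ in (\ref{eq:klaweights}) are carried consistently into Theorem \ref{thm:lmb:nwgm}, separating cleanly the scalar factors $(r_{i}^{(\ell)})^{\omega^{i}}$ and $(1-r_{i}^{(\ell)})^{\omega^{i}}$ from the $x$-dependent contribution $(p_{i}^{(\ell)}(x))^{\omega^{i}}$, and verifying that the shared birth label space hypothesis guarantees that each label $\ell \in \mathbb{L}$ is common to all constituent LMBs, so that label-wise fusion is well-posed and no cross-label coupling arises. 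With these verifications, equations (\ref{eq:lmbfusion:ex})--(\ref{eq:fusion:barnex}) follow by direct identification.
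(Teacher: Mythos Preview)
Your proposal is correct and follows essentially the same route as the paper: the paper's proof is the one-line observation that the KLA in (\ref{eq:kla:gci}) is the NWGM by Theorem~\ref{thm:kla}, and Theorem~\ref{thm:lmb:nwgm} then gives exactly the expressions (\ref{eq:lmbfusion:ex})--(\ref{eq:fusion:barnex}). Your additional bookkeeping (the factorization of $(r_i^{(\ell)} p_i^{(\ell)}(x))^{\omega^i}$) is harmless but unnecessary, since the numerator and denominator in Theorem~\ref{thm:lmb:nwgm} already coincide verbatim with the definitions of $\widetilde{r}^{(\ell)}$ and $\widetilde{q}^{(\ell)}$.
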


\begin{rem}
Note that each Bernoulli component $\left( \overline{r}^{(\ell )},\overline{p}^{(\ell
)}\right) $ can be independently determined using eqs. (\ref{eq:lmbfusion:ex}) 
and (\ref{eq:lmbfusion:pdf}). Thus, the overall fusion procedure is fully
parallelizable.
\end{rem}

\begin{rem}
Notice that also in this case eq. (\ref{eq:lmbfusion:pdf}) is the CI fusion rule \cite{juluhl1997} for the single-object PDFs.
\end{rem}

\subsection{Distributed Bayesian multi-object tracking via consensus}
\label{sec:fmttf}
At time $k$, the global KLA (\ref{eq:kla:gci}) which would require all the local multi-object densities $\boldsymbol{\pi}_{k}^{i}(\cdot)$, $i\in \mathcal{N}$, to be available, can be computed in a distributed and scalable way by iterating regional averages via the consensus algorithm \cite[Section III.A]{ccphd}\cite{cp} described in subsection \ref{ssec:consensus}.
Thus, each agent $i \in \mathcal{N}$ iterates the consensus steps
\begin{equation}
	\boldsymbol{\pi }_{k,l}^{i}=\bigoplus_{j\in \mathcal{N}^{i}}\left( \omega^{i,j}\odot \boldsymbol{\pi}_{k,l-1}^{j}\right) \, , \label{eq:consenus:alg}
\end{equation}
with $\boldsymbol{\pi}_{k,0}^{i}(\cdot) = \boldsymbol{\pi}_{k}^{i}(\cdot)$; $\omega^{i,j}\geq 0$, satisfying $\sum_{j\in \mathcal{N}^{i}}~\omega^{i,j}=1$, are the consensus weights relating agent $i$ to nodes $j\in \mathcal{N}^{i}$.
Using the same arguments as the single-object case (subsection \ref{ssec:consensus}), it follows that, at time $k$, if the consensus matrix is primitive and doubly stochastic, the consensus iterate of each node in the network converges to the global unweighted KLA (\ref{eq:uwgeomean}) of the multi-object posterior densities as the number of consensus iterations $l$ tends to infinity.
Convergence follows along the same line as in \cite{Calafiore,cp} since $\mathcal{F}\!\left( \mathbb{X} \times \mathbb{L} \right)$ is a metric space \cite{mahler}.
In practice, the iteration is stopped at some finite $l$.
\begin{rem}
	The consensus iteration (\ref{eq:consenus:alg}) is the multi-object counterpart of equation (\ref{eq:consensuspdf}) reviewed in Subsection \ref{ssec:consensus}. 
\end{rem}

For M$\delta$-GLMB multi-object densities, (\ref{eq:consenus:alg}) can be computed via (\ref{eq:mdglmbfusion:pdf}) and (\ref{eq:mdglmbfusion:w}), while for LMB densities by means of (\ref{eq:lmbfusion:ex}) and (\ref{eq:lmbfusion:pdf}).
In the present work, each single-object density $p_{i}$ is represented by a \textit{Gaussian Mixture} (GM) of the form
\begin{equation}
	p(x)=\sum_{j=1}^{N_{G}}\alpha _{j}\,\mathcal{N}\!\left(x;x_{j},P_{j}\right) .  \label{eq:gmpdf}
\end{equation}
Note that the fusion rules (\ref{eq:mdglmbfusion:pdf}) and (\ref{eq:lmbfusion:pdf}) involve exponentiation and multiplication of GM in the form of (\ref{eq:gmpdf}) which, in general, does not provide a GM.
Hence, the same solution conceived in subsection \ref{ssec:ccphd:gmfusion} for fusing GMs is also adopted here.

The other common approach for representing a single object location PDF $p$ is using \textit{particles}. Information fusion involving convex combinations of Dirac delta functions requires, at the best of our knowledge, the exploitation of additional techniques like kernel density estimation \cite{emd}, least square estimation \cite{clike1,clike2} or parametric model approaches \cite{coates} which increase the in-node computational burden. Moreover, the local filtering steps are also more resource demanding with respect to a GM implementation. At this stage, it is preferred to follow the approach devised in \cite{ccphd} by adopting a GM representation.

\section{Consensus labeled RFS information fusion}
\label{sec:clmbif}
In this section, two novel fully distributed and scalable multi-object tracking algorithms are described by exploiting Propositions \ref{pro:mdglmb:fusion} and \ref{pro:lmb:fusion} along with consensus \cite{Olfati,Xiao,Calafiore,cp} to propagate information throughout the network \cite{fanvovo2015}.
Pseudo-codes of the algorithms are also provided.

\subsection{Consensus M$\delta$-GLMB filter}
\label{ssec:cmdglmbf}
This subsection describes the novel \textit{Gaussian Mixture - Consensus Marginalized $\delta$-Generalized Labelled Multi-Bernoulli} (GM-CM$\delta$GLMB) filter algorithm.
The operations reported in Table \ref{alg:cmdglmb} are sequentially carried out locally by each agent $i\in \mathcal{N}$ of the network. Each node operates autonomously at each sampling interval $k$, starting from its own previous estimates of the multi-object distribution $\boldsymbol{\pi}^{i}(\cdot)$, having location PDFs $p^{(I)}\!\left( x, \ell \right)$, $\forall \,\ell \in I$, $I \in \mathcal{F}\!\left( \mathbb{L} \right)$, represented
with a GM, and producing, at the end of the sequence of operations, its new multi-object distribution $\boldsymbol{\pi }^{i}(\cdot)=\hat{\boldsymbol{\pi }}_{N}^{i}(\cdot)$ as an outcome of the Consensus procedure.
The steps of the GM-C$\delta$GLMB algorithm follow hereafter.
\begin{enumerate}
\item Each agent $i \in \mathcal{N}$ locally performs a GM-$\delta$GLMB prediction (\ref{eq:mdglmbpredictedpdf}) and update (\ref{eq:mdglmbupdatedpdf}). Additional details of the two procedures can be found in \cite[Section IV.B]{vovo2}.
\item Consensus takes place in each node $i$ involving the subnetwork $\mathcal{N}^{i}$.
More precisely, at each consensus step, node $i$ transmits its data to nodes $j$ such that $i\in \mathcal{N}^{j}$ and waits until it receives data from $j\in \mathcal{N}^{i}\backslash \{i\}$.
Next, node $i$ carries out the fusion rule of Proposition \ref{pro:mdglmb:fusion} over $\mathcal{N}^{i}$, i.e. performs (\ref{eq:consenus:alg}) using local information and information received by $\mathcal{N}^{i}$.
Finally, a merging step for each location PDF is applied to reduce the joint communication-computation burden for the next consensus step.
This procedure is repeatedly applied for a chosen number $N\geq 1$ of consensus steps.
\item After consensus, an estimate of the object set is obtained from the cardinality PMF and the location PDFs via an estimate extraction described in Table \ref{alg:mdglmb:estextr}.
\end{enumerate}

\begin{table}[h!]
\caption{Gaussian Mixture - Consensus Marginalized $\delta$-Generalized Labeled Multi-Bernoulli (GM-CM$\delta$GLMB) filter}
\label{alg:cmdglmb}\renewcommand{\arraystretch}{1.3}
\hrulefill\hrule
\begin{algorithmic}[0]
	\Procedure{GM-CM$\delta$GLMB}{\textsc{Node} $i$, \textsc{Time} $k$}
		\State \textsc{Local Prediction} \Comment{See (\ref{eq:mdglmbpredictedpdf}) and \cite[Table 2, Section V]{vovo2}}\vspace{0.5em}
		\State \textsc{Local Update} \Comment{See (\ref{eq:mdglmbupdatedpdf}) and \cite[Table 1, Section IV]{vovo2}}\vspace{0.5em}
		\State \textsc{Marginalization} \Comment{See (\ref{eq:mdglmb:wnxt}) and (\ref{eq:mdglmb:pnxt})}\vspace{0.5em}
		\For{$n = 1, \dots, N$}
			\State \textsc{Information Exchange}
			\State GM-M$\delta$DGLMB \textsc{Fusion} \Comment{See eqs. (\ref{eq:mdglmbfusion:pdf}) and (\ref{eq:mdglmbfusion:w})}
			\State GM \textsc{Merging} \Comment{See \cite[Table II, Section III.C]{vo-ma}}
		\EndFor\vspace{0.5em}
		\State \textsc{Estimate Extraction} \Comment{See algorithm in Table \ref{alg:mdglmb:estextr}}
	\EndProcedure
\end{algorithmic}
\hrule\hrulefill
\end{table}

\begin{table}[h!]
\caption{GM-CM$\delta$GLMB estimate extraction}
\label{alg:mdglmb:estextr}\renewcommand{\arraystretch}{1.3}
\hrulefill\hrule
	\begin{algorithmic}[0]
		\State \textbf{\textsc{Input:}} $\boldsymbol{\pi}$, $N_{max}$
		\State \textbf{\textsc{Output:}} $\hat{\mathbf{X}}$
	\end{algorithmic}
	\hrule\vspace{1mm} 
	\begin{algorithmic}[0]
		\For{$c = 1, \dots, C_{max}$}
			\State $\displaystyle{\rho(c) = \sum_{I \in \mathcal{F}_{c}(\mathbb{L})} w^{(I)}}$
		\EndFor
		\State $\displaystyle{\hat{C} = \arg \max_{c} \rho(c)}$
		\State $\displaystyle \hat{I} = \arg \max_{I \in \mathcal{F}_{\hat{C}}\!\left( \mathbb{L} \right)} w^{(I)}$\vspace{0.5em}
		\State $\displaystyle{\hat{\mathbf{X}} = \left\{ \left( \hat{x}, \hat{\ell}\right)\!: \hat{\ell} \in \hat{I},\, \hat{x} = \arg \max_{x}p^{(\hat{I})}(x, \hat{\ell}) \right\}}$
\end{algorithmic}
\hrule\hrulefill
\end{table}

\subsection{Consensus LMB filter}
\label{ssec:clmbf}
This subsection describes the novel \textit{Gaussian Mixture - Consensus Labelled Multi-Bernoulli} (GM-CLMB) filter algorithm. 
The operations reported in Table \ref{alg:clmb} will be sequentially carried out locally by each agent $i\in \mathcal{N}$ of the network.
Each node operates autonomously at each sampling interval $k$, starting from its own previous estimates of the multi-object distribution $\boldsymbol{\pi }^{i}(\cdot)$, having location PDFs $p^{(\ell )}$, $\forall \,\ell \in \mathbb{L}$, represented with a GM, and producing, at the end of the sequence of operations, its new multi-object distribution $\boldsymbol{\pi }^{i}(\cdot) = \hat{\boldsymbol{\pi }}_{N}^{i}(\cdot)$ as an outcome of the consensus procedure.
A summary description of the steps of the GM-CLMB algorithm follows.

\begin{enumerate}
\item Each agent $i \in \mathcal{N}$ locally performs a GM-LMB prediction (\ref{eq:lmbpredictedpdf}) and update (\ref{eq:lmbupdatedpdf}). The update procedure involves two more steps: $\left. \textsc{i} \right)$ a GM $\delta$-GLMB distribution is created from the predicted GM-LMB and updated using the local measurement set $Y_{k}$; $\left. \textsc{ii} \right)$ the updated GM $\delta$-GLMB is converted back to a GM-LMB
distribution. The details of the GM-LMB prediction and update can be found in \cite{lmbf,vovo2}.

\item Consensus takes place in each node $i$ involving the subnetwork $\mathcal{N}^{i}$. More precisely, at each consensus step, node $i$ transmits its data to nodes $j$ such that $i\in \mathcal{N}^{j}$ and waits until it receives data from $j\in \mathcal{N}^{i}\backslash \{i\}$. Next, node $i$ carries out the fusion rule of Proposition \ref{pro:lmb:fusion} over $\mathcal{N}^{i}$, i.e. performs (\ref{eq:consenus:alg}) using local information and information received by $\mathcal{N}^{i}$. Finally, a merging step for each location PDF is applied to reduce the joint communication-computation burden for the next consensus step. This procedure is repeatedly applied for a chosen number $N\geq 1$ of
consensus steps.

\item After consensus, an estimate of the object set is obtained from the cardinality PMF and the location PDFs via an estimate extraction described in Table \ref{alg:clmb:estextr}.
\end{enumerate}

\begin{table}[h!]
\caption{Gaussian Mixture - Consensus Labeled Multi-Bernoulli (GM-CLMB) filter}
\label{alg:clmb}\renewcommand{\arraystretch}{1.3}
\hrulefill\hrule
\begin{algorithmic}[0]
	\Procedure{GM-CLMB}{\textsc{Node} $i$, \textsc{Time} $k$}
		\State \textsc{Local Prediction} \Comment{See eq. (\ref{eq:lmbpredictedpdf}) and \cite[Proposition 2, Section III.A]{lmbf}}\vspace{0.5em}
		\State GM-LMB $\rightarrow$ GM $\delta$-GLMB \Comment{See \cite[Section IV.C.1]{lmbf}}
		\State \textsc{Local Update} \Comment{See eq. (\ref{eq:lmbupdatedpdf}) and \cite[Table 1, Section IV]{vovo2}}
		\State GM $\delta$-GLMB $\rightarrow$ GM-LMB \Comment{See \cite[Proposition 4, Section III.B]{lmbf}}\vspace{0.5em}
		\For{$n = 1, \dots, N$}
			\State \textsc{Information Exchange}
			\State GM-LMB \textsc{Fusion} \Comment{See eqs. (\ref{eq:lmbfusion:ex}) and (\ref{eq:lmbfusion:pdf})}
			\State GM \textsc{Merging} \Comment{See \cite[Table II, Section III.C]{vo-ma}}
		\EndFor\vspace{0.5em}
		\State \textsc{Estimate Extraction} \Comment{See algorithm in Table \ref{alg:clmb:estextr}}
	\EndProcedure
\end{algorithmic}
\hrule\hrulefill
\end{table}

\begin{table}[h!]
\caption{GM-CLMB estimate extraction}
\label{alg:clmb:estextr}\renewcommand{\arraystretch}{1.3}
\hrulefill\hrule
\begin{algorithmic}[0]
	\State \textbf{\textsc{Input:}} $\boldsymbol{\pi} = \left\{ r^{(\ell)},p^{(\ell)} \right\}_{\ell \in \mathbb{L}}$, $N_{max}$
	\State \textbf{\textsc{Output:}} $\hat{\mathbf{X}}$
\end{algorithmic}
\hrule
\begin{algorithmic}[0]
\For{$c = 1, \dots, C_{max}$}
	\State $\displaystyle{\rho(c) = \sum_{L \in \mathcal{F}_{c}(\mathbb{L})} w(L)}$
\EndFor
\State $\displaystyle{\hat{C} = \arg \max_{c} \rho(c)}$
\State $\hat{\mathbb{L}} = \varnothing$
\For{$\hat{c} = 1, \dots, \hat{C}$}
	\State $\displaystyle{\hat{\mathbb{L}} = \hat{\mathbb{L}} \cup \arg \max_{\ell \in \mathbb{L}\backslash\hat{\mathbb{L}}} r^{(\ell)}}$
\EndFor\vspace{0.5em}
\State $\displaystyle{\hat{\mathbf{X}} = \left\{ \left( \hat{x}, \hat{\ell}\right)\!: \hat{\ell} \in \hat{\mathbb{L}},\, \hat{x} = \arg \max_{x}p^{(\hat{\ell})}(x) \right\}}$
\end{algorithmic}
\hrule\hrulefill
\end{table}

\section{Performance evaluation}
\label{sec:performance}
To assess performance of the proposed Gaussian Mixture Consensus Marginalized $\delta$-GLMB (GM-CM$\delta$GLMB) and LMB (GM-CLMB) described in section \ref{sec:clmbif}, a $2$-dimensional multi-object tracking scenario is considered over a surveillance area of $50\times50 \, [km^{2}]$, wherein the sensor network of Fig. \ref{fig:mot:4toa3doa} is deployed. The scenario consists of $5$ objects as depicted in Fig. \ref{fig:5trajectories}.
For the sake of comparison, the trackers are also compared with the Gaussian Mixture Consensus CPHD (GM-CCPHD) filter of \cite{ccphd} which, however, does not provide labeled tracks.
\begin{figure}[h!]
        \begin{minipage}[t][][t]{\columnwidth}
	        \centering
		\includegraphics[width=0.6\columnwidth]{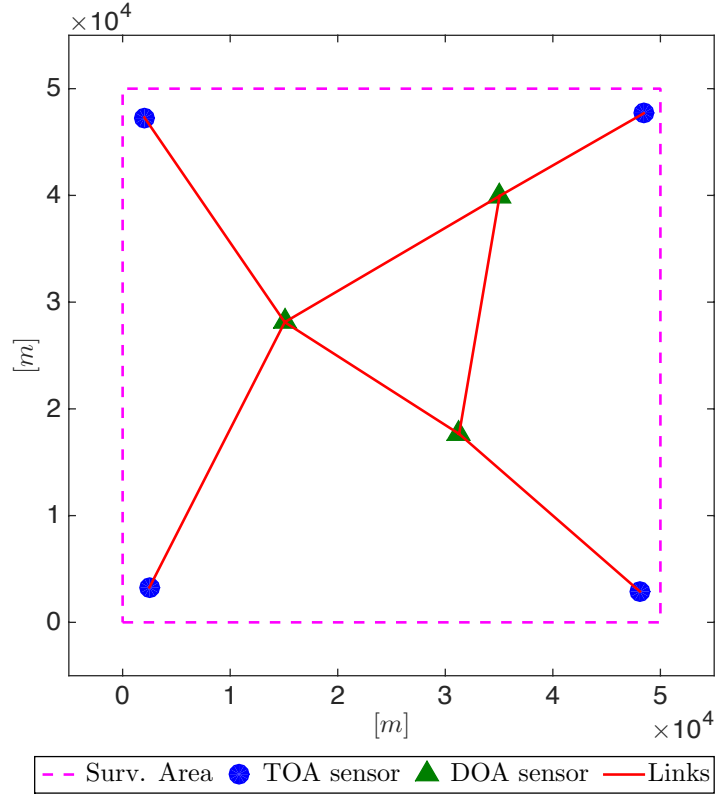}
		\caption{Network with 7 sensors: 4 TOA and 3 DOA.}
		\label{fig:mot:4toa3doa}
        \end{minipage}
\end{figure}
\begin{figure}[h!]
        \begin{minipage}[t][][t]{\columnwidth}
 	       	\centering
		\includegraphics[width=0.6\columnwidth]{./5trajectories.pdf}
		\caption{Tbject trajectories considered in the simulation experiment. The start/end point for each trajectory is denoted, respectively, by $\bullet\backslash\blacksquare$. The {\Large$\star$} indicates a rendezvous point.}
		\label{fig:5trajectories}
        \end{minipage}
\end{figure}

The kinematic object state is denoted by $x = \left[ p_{x}, \, \dot{p}_{x}, \, p_{y}, \, \dot{p}_{y} \right]^{\top}$, i.e. the planar position and velocity.
The motion of objects is modeled by the filters according to the Nearly-Constant Velocity (NCV) model \cite{far1985v1,far1985v2,book0,book}:
\be
x_{k + 1} = \left[ \begin{array}{cccc}
1 & T_{s} & 0 & 0	\\
0 & 1 	  & 0 & 0		\\
0 & 0 	  & 1 & T_{s} \\
0 & 0 	  & 0 & 1		\end{array} \right] x_{k} + w_{k} \, , \qquad
Q = \sigma_{w}^{2} \left[ \begin{array}{cccc}
\frac{1}{4}T_{s}^{4} & \frac{1}{2}T_{s}^{3} & 0 & 0 \\
\frac{1}{2}T_{s}^{3} & T_{s}^{2} & 0 & 0 \\
0 & 0 & \frac{1}{4}T_{s}^{4} & \frac{1}{2}T_{s}^{3}\\
0 & 0 & \frac{1}{2}T_{s}^{3} & T_{s}^{2} \end{array} \right] \, ,
\ee
where: $w_{k}$ is a white noise with zero mean and covariance matrix $Q$, $\sigma_{w} = 5 \, [m/s^{2}]$ and the sampling interval is $T_{s} = 5\,[s]$.

The sensor network considered in the simulation (see Fig. \ref{fig:mot:4toa3doa}) consists of $4$ \textit{range-only} (Time Of Arrival, TOA) and $3$ \textit{bearing-only} (Direction Of Arrival, DOA) sensors characterized by the following measurement functions:
\be
\begin{array}{c}
h^{i}(x) = \left\{ \begin{array}{ll}
				\angle [ \left( p_{x} - x^i \right) + j \left( p_{y} - y^i \right)], & \mbox{if $i$ is a DOA sensor} \\[0.5em]
                                	\sqrt{ \left( p_{x} - x^i \right)^2+ \left( p_{y} - y^i \right)^2 }, & \mbox{if $i$ is a TOA sensor}
			\end{array} \right.
\end{array}
\ee
where $( x^i, y^i )$ represents the known position of sensor $i$. The standard deviation of DOA and TOA measurement noises are taken respectively as $\sigma_{DOA} = 1 \, [\mbox{}^{\circ}]$ and $\sigma_{TOA} = 100 \, [m]$. Because of the non linearity of the aforementioned sensors, the \textit{Unscented Kalman Filter} (UKF) \cite{juluhl2004} is exploited in each sensor in order to update means and covariances of the Gaussian components.

Three different scenarios will be considered, each of which has a different clutter Poisson process with parameter $\lambda_{c}$ and a probability of object detection $P_{D}$.
\begin{itemize}
	\item \textbf{High SNR}: $\lambda_{c} = 5$, $P_{D} = 0.99$. These parameters were used in the work \cite{ccphd} and, therefore, will be used as a first comparison test.
	\item \textbf{Low SNR}: $\lambda_{c} = 15$, $P_{D} = 0.99$. These parameters try to describe, in a realistic way, a scenario characterized by high clutter rate $\lambda_{c}$.
	\item \textbf{Low $\mathbf{P_{D}}$}: $\lambda_{c} = 5$, $P_{D} = 0.7$. These parameters test the distributed algorithms in the presence of severe misdetection.
\end{itemize}
All the above-mentioned case studies have, for each sensor, a uniform clutter spatial distribution over the surveillance area.

In the considered scenario, objects pass through the surveillance area with partial prior information for object birth locations. 
Accordingly, a $10$-component LMB RFS $\boldsymbol{\pi}_{B} = \left\{ \left( r^{\left( \ell \right)}_{B}, p^{\left( \ell \right)}_{B} \right) \right\}_{\ell \in \mathbb{B}}$ has been hypothesized for the birth process.
Table \ref{tab:borderlineinit} gives a detailed summary of such components.
\begin{table}[h!]
	\renewcommand{\arraystretch}{1.3}
	\setlength\arrayrulewidth{0.5pt}\arrayrulecolor{black} 
	\setlength\doublerulesep{0.5pt}\doublerulesepcolor{black} 
	\caption{Components of the LMB RFS birth process at a given time $k$.}
	\label{tab:borderlineinit}
	\centering
	$r^{\left( \ell \right)} = 0.09$\\
	$p^{\left( \ell \right)}_{B}(x) = \mathcal{N}\!\left( x;\, m^{\left( \ell \right)}_{B}, P_{B} \right)$\\
	$P_{B} = \operatorname{diag}\!\left( 10^{6}, 10^{4}, 10^{6}, 10^{4} \right)$\\\vspace{0.5em}
	\scalebox{1}{
	\begin{tabular}{>{\columncolor[gray]{.95}}c||c|c|c||}
		{\textbf{Label}} & $\left( k, \, 1 \right)$ & $\left( k, \, 2 \right)$ & $\left( k, \, 3 \right)$ \\
		\hline
		$m^{\left( \ell \right)}_{B}$ & $\left[ 0, \, 0, \, 40000,\, 0 \right]^{\top}$ & $\left[ 0, \, 0, \, 25000,\, 0 \right]^{\top}$ & $\left[ 0, \, 0, \, 5000,\, 0 \right]^{\top}$\\
		\hline
		\hline
	\end{tabular}
	}\vspace{0.5em}
	\scalebox{1}{
	\begin{tabular}{>{\columncolor[gray]{.95}}c||c|c|c||}
		{\textbf{Label}} & $\left( k, \, 4 \right)$ & $\left( k, \, 5 \right)$ & $\left( k, \, 6 \right)$\\
		\hline
		$m^{\left( \ell \right)}_{B}$ & $\left[ 5000, \, 0, \, 0,\, 0 \right]^{\top}$ & $\left[ 25000, \, 0, \, 0,\, 0 \right]^{\top}$ & $\left[ 36000, \, 0, \, 0,\, 0 \right]^{\top}$\\
		\hline
		\hline
	\end{tabular}
	}\vspace{0.5em}
	\scalebox{1}{
	\begin{tabular}{>{\columncolor[gray]{.95}}c||c|c||}
		{\textbf{Label}} & $\left( k, \, 7 \right)$ & $\left( k, \, 8 \right)$\\
		\hline
		$m^{\left( \ell \right)}_{B}$ & $\left[ 50000, \, 0, \, 15000,\, 0 \right]^{\top}$ & $\left[ 50000, \, 0, \, 40000,\, 0 \right]^{\top}$\\
		\hline
		\hline
	\end{tabular}
	}\vspace{0.5em}
	\scalebox{1}{
	\begin{tabular}{>{\columncolor[gray]{.95}}c||c|c||}
		{\textbf{Label}} & $\left( k, \, 9 \right)$ & $\left( k, \, 10 \right)$\\
		\hline
		$m^{\left( \ell \right)}_{B}$ & $\left[ 40000, \, 0, \, 50000,\, 0 \right]^{\top}$ & $\left[ 10000, \, 0, \, 50000,\, 0 \right]^{\top}$\\
		\hline
		\hline
	\end{tabular}
	}
\end{table}
Due to the partial prior information on the object birth locations, some of the LMB components cover a state space region where there is no birth. Therefore, clutter measurements are more prone to generate false objects.

Multi-object tracking performance is evaluated in terms of the \textit{Optimal SubPattern Assignment} (OSPA) metric \cite{schvovo2008} with Euclidean distance $p = 2$ and cutoff $c = 600 \, [m]$.
The reported metric is averaged over $100$ Monte Carlo trials for the same object trajectories but different, independently generated, clutter and measurement noise realizations. The duration of each simulation trial is fixed to $1000 \, [s]$ ($200$ samples).

The GM-CM$\delta$GLMB and the GM-CLMB are capped to $20000$, $8000$ and $3000$ hypotheses \cite{vovo1,vovo2}, respectively, for the High SNR, Low SNR and Low $P_{D}$ scenario, and are coupled with the \textit{parallel CPHD look ahead strategy} described in \cite{vovo1,vovo2}. The CPHD filter is similarly capped, for each case study, to the same number of components through pruning and merging of mixture components.

The parameters of the GM-CCPHD filter have been chosen as follows: the survival probability is $P_{s} = 0.99$; the maximum number of Gaussian components is $N_{max} = 25$; the merging threshold is $\gamma_{m} = 4$; the truncation threshold is $\gamma_{t} = 10^{-4}$; the extraction threshold is $\gamma_{e} = 0.5$; the birth intensity function is the PHD of the LMB RFS of Table \ref{tab:borderlineinit}.
A single consensus step $L = 1$ is employed for all the simulations.

\subsection{High SNR}
Figs. \ref{fig:1:cardCCPHD}, \ref{fig:1:cardCLMB} and \ref{fig:1:cardCMDGLMB} display the statistics (mean and standard deviation) of the estimated number of objects obtained, respectively, with GM-CCPHD, GM-CLMB and GM-CM$\delta$GLMB.
As it can be seen, all the distributed algorithms estimate the object cardinality accurately, with the GM-CM$\delta$GLMB exhibiting better estimated cardinality variance.
Note that the difficulties introduced by the rendezvous point (e.g. merged or lost tracks) are correctly tackled by all the distributed algorithms.

Fig. \ref{fig:1:ospa} shows the OSPA distance for the three algorithms.
The improved localization performance of GM-CLMB and GM-CM$\delta$GLMB is attributed to two factors: (a) the ``spooky effect'' \cite{spooky} causes GM-CCPHD filter to temporarily drop objects which are subjected to missed detections and to declare multiple estimates for existing tracks in place of the dropped objects, and (b) the two trackers are generally able to better localize objects due to a more accurate propagation of the posterior density.
Note that GM-CLMB and GM-CM$\delta$GLMB have similar performance since the approximations introduced by the LMB tracker (see (\ref{eq:existenceBernoulli})-(\ref{eq:spatialBernoulli})) are negligible in the case of high SNR.
\begin{figure}[h!]
        \begin{minipage}[t][][t]{\columnwidth}
        		\centering
		\includegraphics[width=\columnwidth]{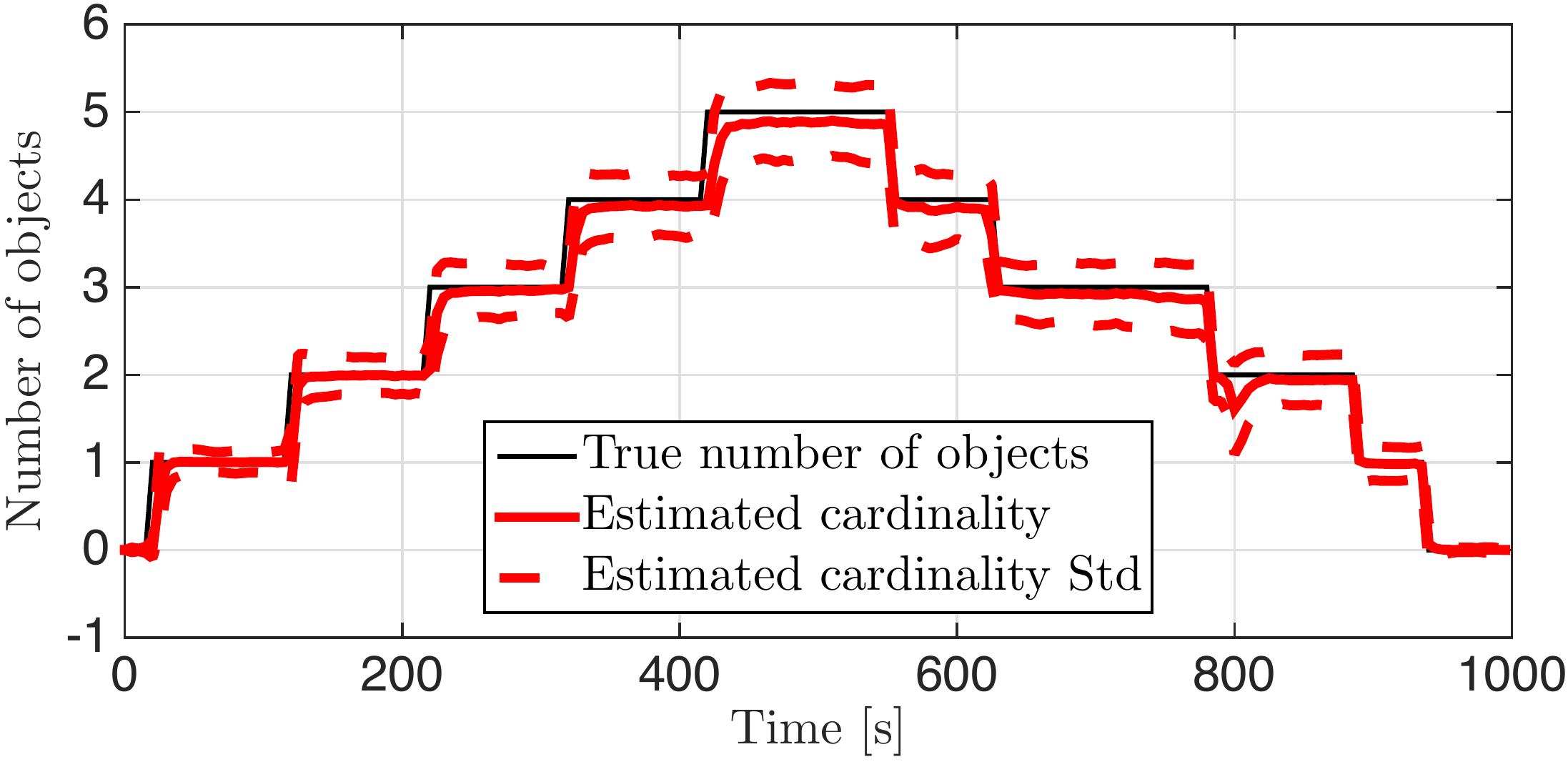}
		\caption{Cardinality statistics for the GM-CCPHD filter under high SNR.}
		\label{fig:1:cardCCPHD}
        \end{minipage}\vspace{0.5em}
        \begin{minipage}[t][][t]{\columnwidth}
	        \centering
		\includegraphics[width=\columnwidth]{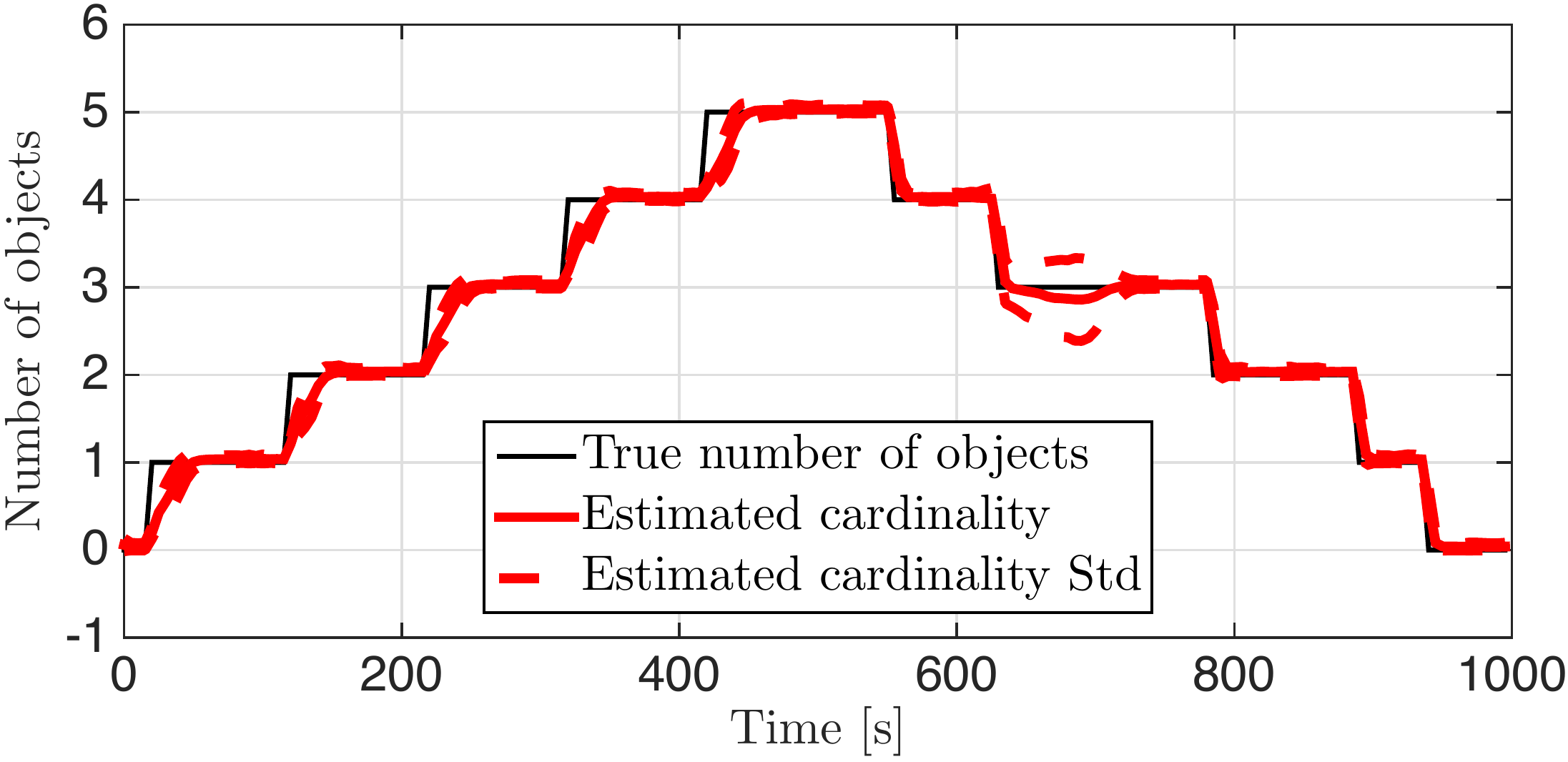}
		\caption{Cardinality statistics for the GM-CLMB tracker under high SNR.}
		\label{fig:1:cardCLMB}
        \end{minipage}
\end{figure}
\begin{figure}[h!]
        \begin{minipage}[t][][t]{\columnwidth}
	        \centering
		\includegraphics[width=\columnwidth]{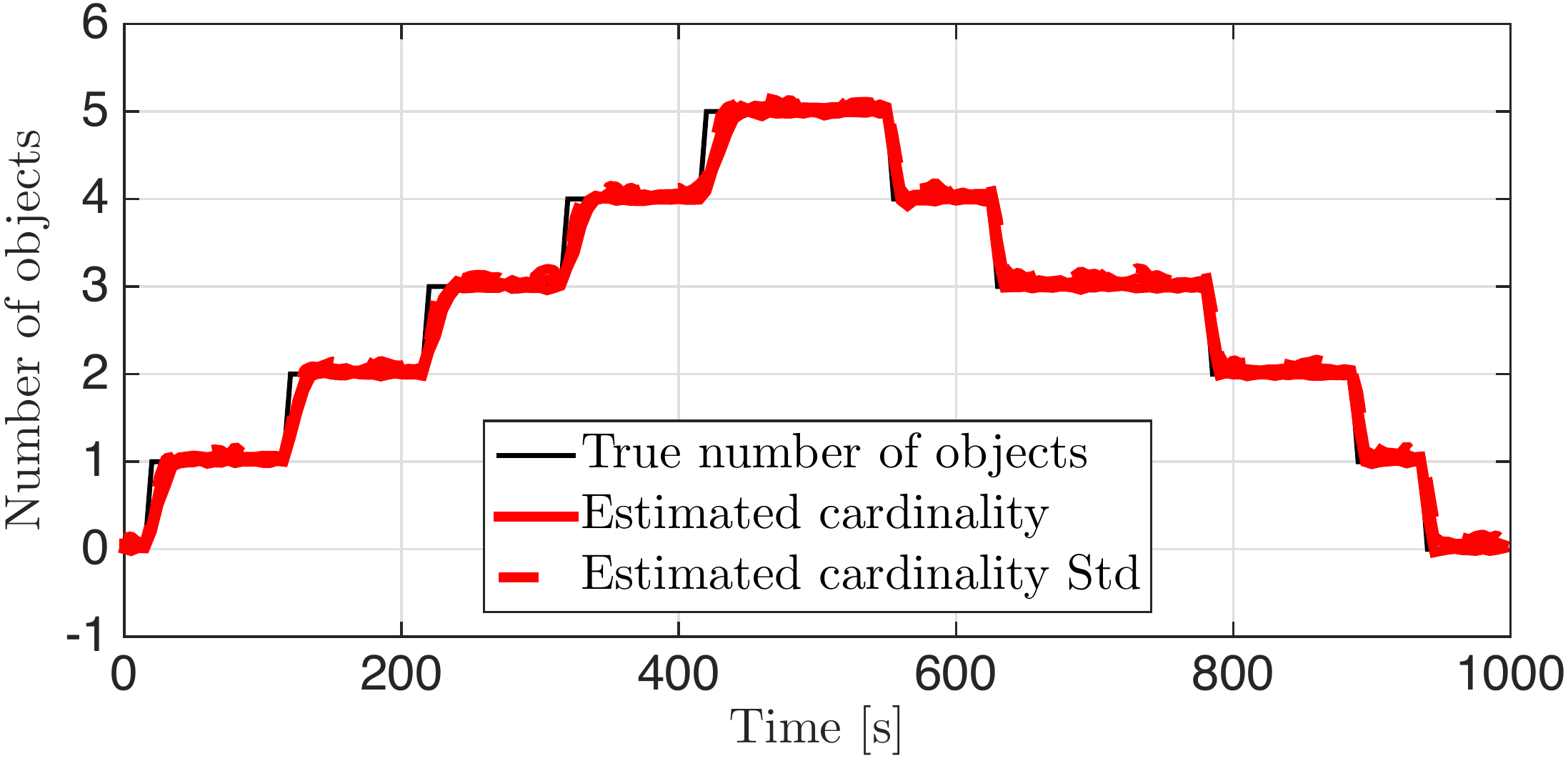}
		\caption{Cardinality statistics for the GM-CM$\delta$GLMB tracker under high SNR.}
		\label{fig:1:cardCMDGLMB}
        \end{minipage}\vspace{0.5em}
        \begin{minipage}[t][][t]{\columnwidth}
	        \centering
		\includegraphics[width=\columnwidth]{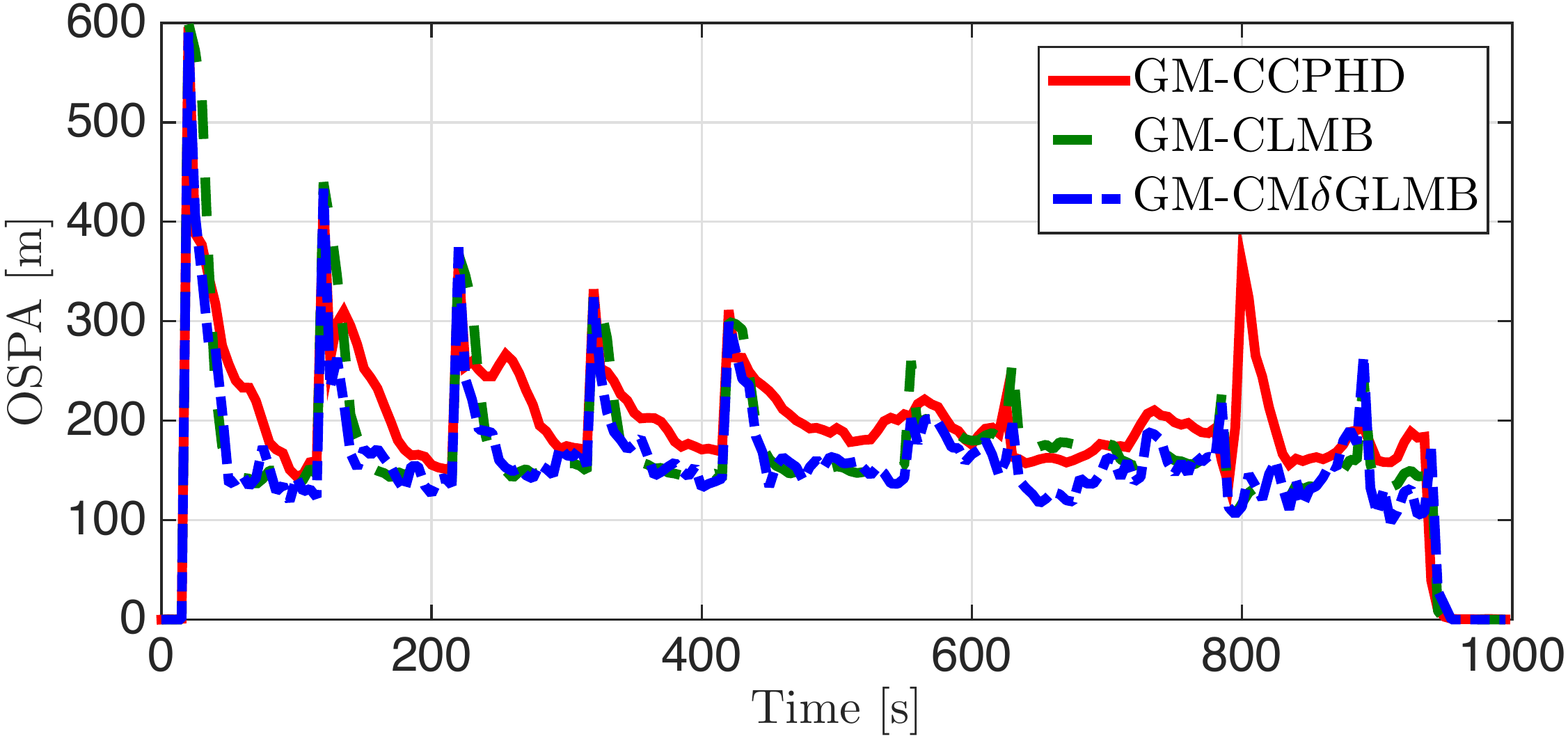}
		\caption{OSPA distance ($c = 600 \, [m]$, $p = 2$) under high SNR.}
		\label{fig:1:ospa}
        \end{minipage}
\end{figure}

\subsection{Low SNR}
Figs. \ref{fig:2:cardCCPHD} and \ref{fig:2:cardCMDGLMB} display the statistics (mean and standard deviation) of the estimated number of objects obtained, respectively, with the GM-CCPHD and the GM-CM$\delta$GLMB.
As it can be seen, the two distributed algorithms estimate the object cardinality accurately, with the GM-CM$\delta$GLMB exhibiting again better estimated cardinality variance.

Note that the GM-CLMB fails to track the objects. The problem is due to the approximation (\ref{eq:existenceBernoulli})-(\ref{eq:spatialBernoulli}) made to convert a $\delta$-GLMB to an LMB, becoming significant with low SNR.
In particular, each local tracker fails to properly set the existence probability of the tracks for three main factors: (a) no local observability, (b) high clutter rate and (c) loss of the full posterior cardinality distribution after the probability density conversion.
By having low existence probabilities, the extraction of the tracks fails even if the single object densities are correctly propagated in time.

Fig. \ref{fig:2:ospa} shows the OSPA distance for the current scenario.
As for the previous case study, GM-CM$\delta$GLMB outperforms GM-CCPHD.
\begin{figure}[h!]
        \begin{minipage}[t][][t]{\columnwidth}
        		\centering
		\includegraphics[width=0.9\columnwidth]{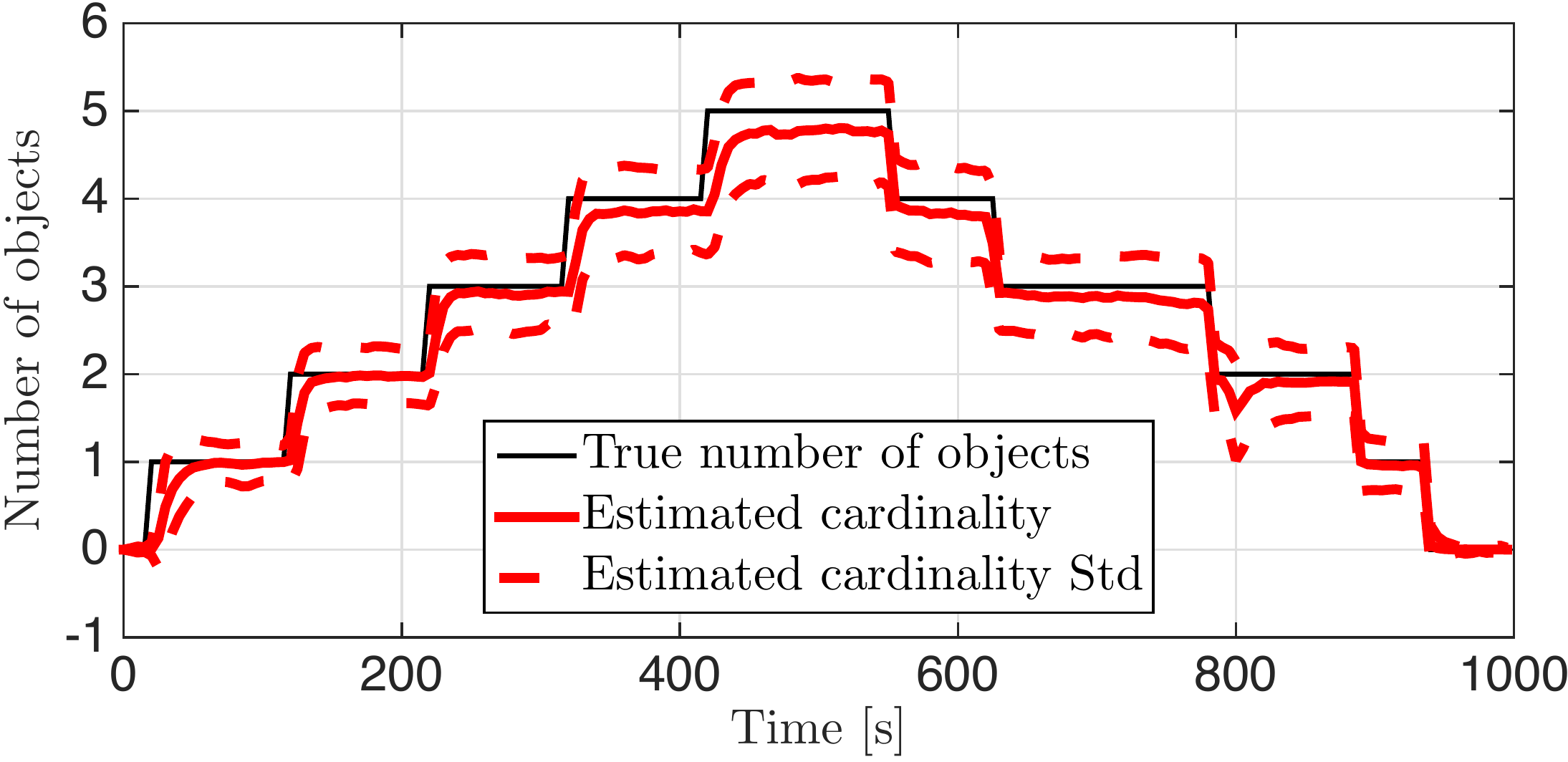}
		\caption{Cardinality statistics for the GM-CCPHD filter under low SNR.}
		\label{fig:2:cardCCPHD}
        \end{minipage}\vspace{0.5em}
        \begin{minipage}[t][][t]{\columnwidth}
        		\centering
		\includegraphics[width=0.9\columnwidth]{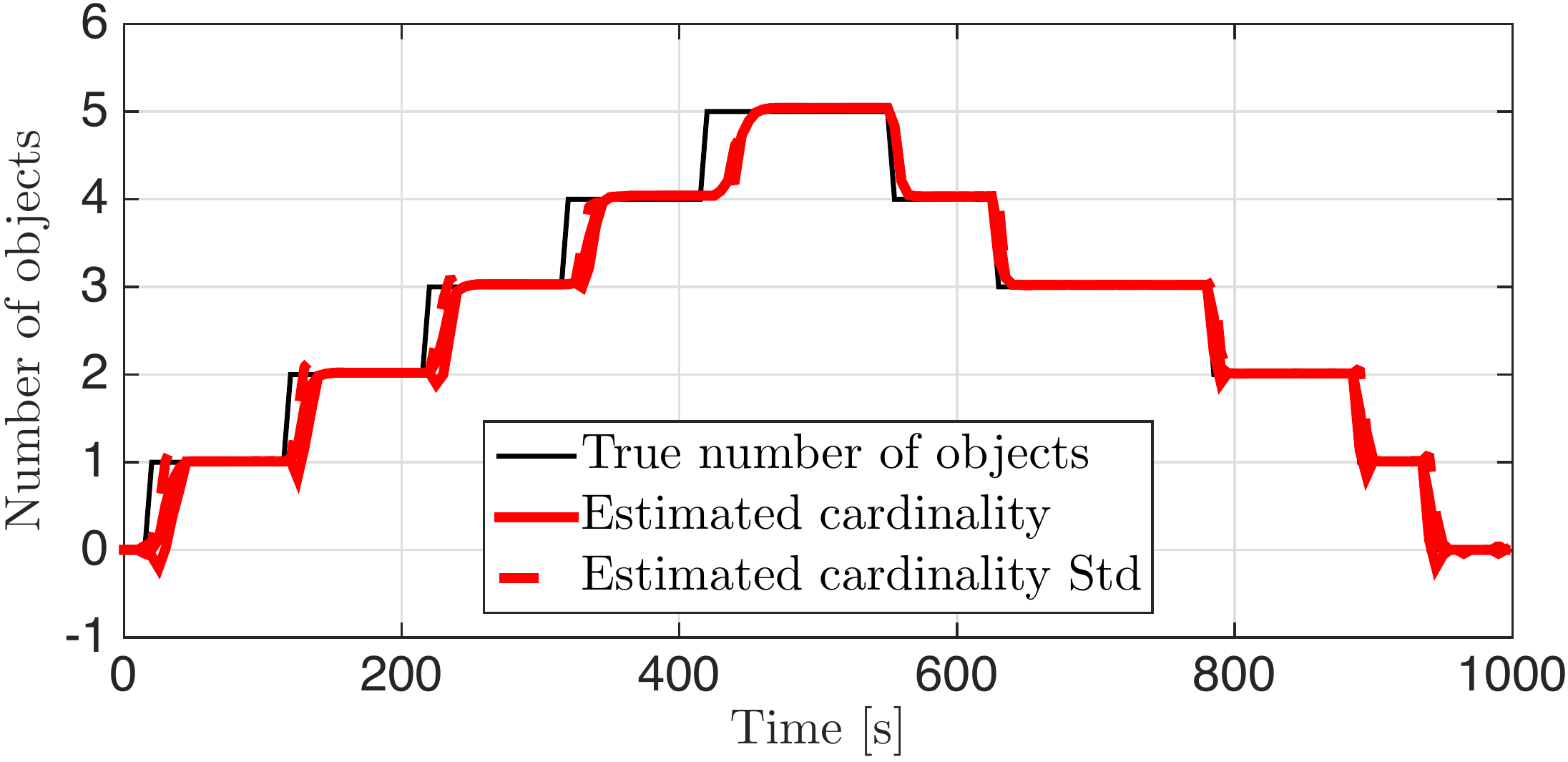}
		\caption{Cardinality statistics for the GM-CM$\delta$GLMB tracker under low SNR.}
		\label{fig:2:cardCMDGLMB}
        \end{minipage}\vspace{0.5em}
        \begin{minipage}[t][][t]{\columnwidth}
        		\centering
		\includegraphics[width=0.9\columnwidth]{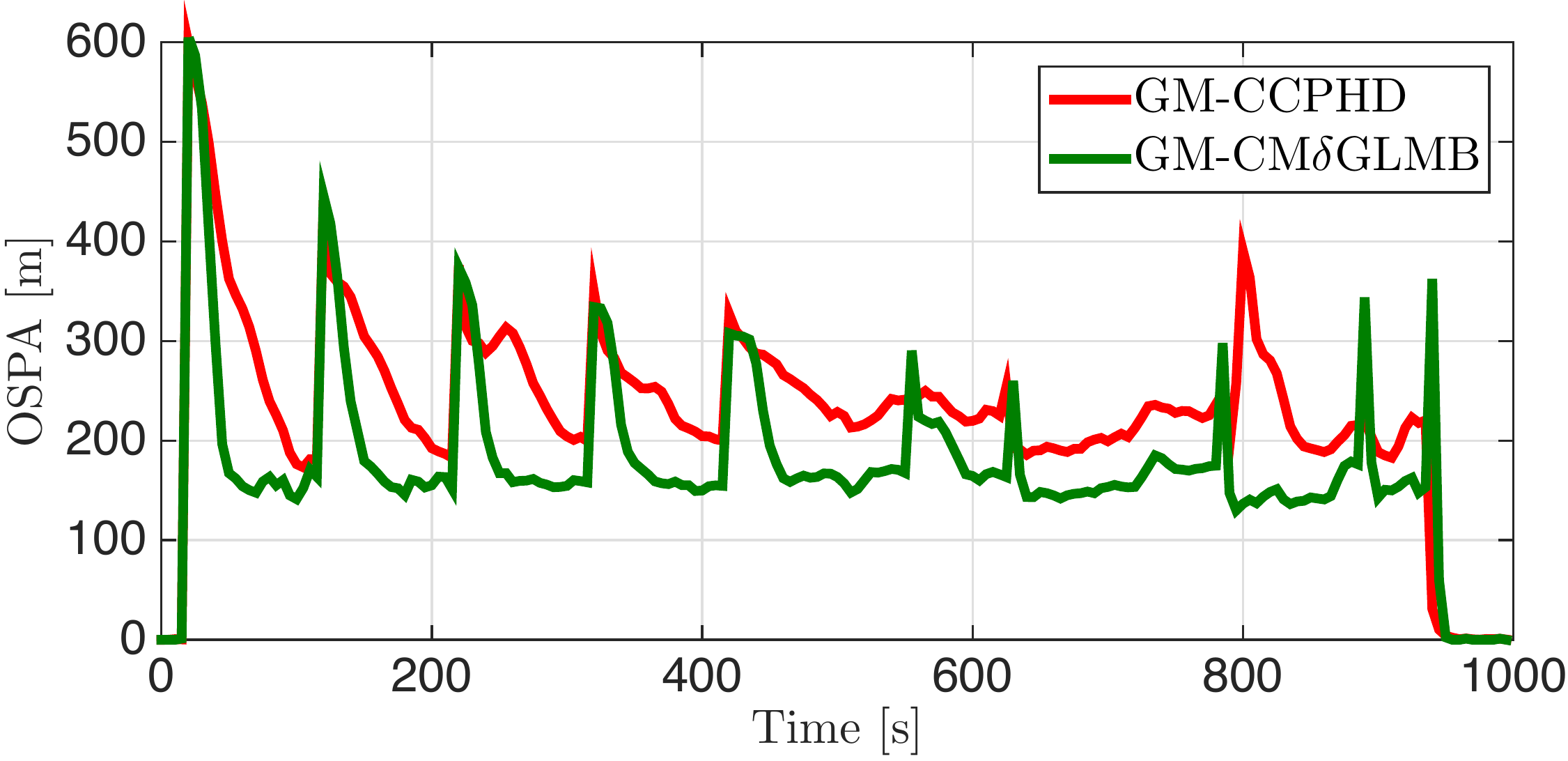}
		\caption{OSPA distance ($c = 600 \, [m]$, $p = 2$) under low SNR.}
		\label{fig:2:ospa}
        \end{minipage}
\end{figure}

\subsection{Low $P_{D}$}
Figs. \ref{fig:3:cardCMDGLMB} displays the statistics (mean and standard deviation) of the estimated number of objects obtained with the GM-CM$\delta$GLMB.
It is worth pointing out that the only working distributed algorithm is, indeed, the GM-CM$\delta$GLMB and that it exhibits good performance in terms of average number of estimated objects in a very tough scenario with $P_{D} = 0.7$.

Fig. \ref{fig:3:ospa} shows the OSPA distance for the current scenario.
\begin{figure}[h!]
        \begin{minipage}[t][][t]{\columnwidth}
        		\centering
		\includegraphics[width=\columnwidth]{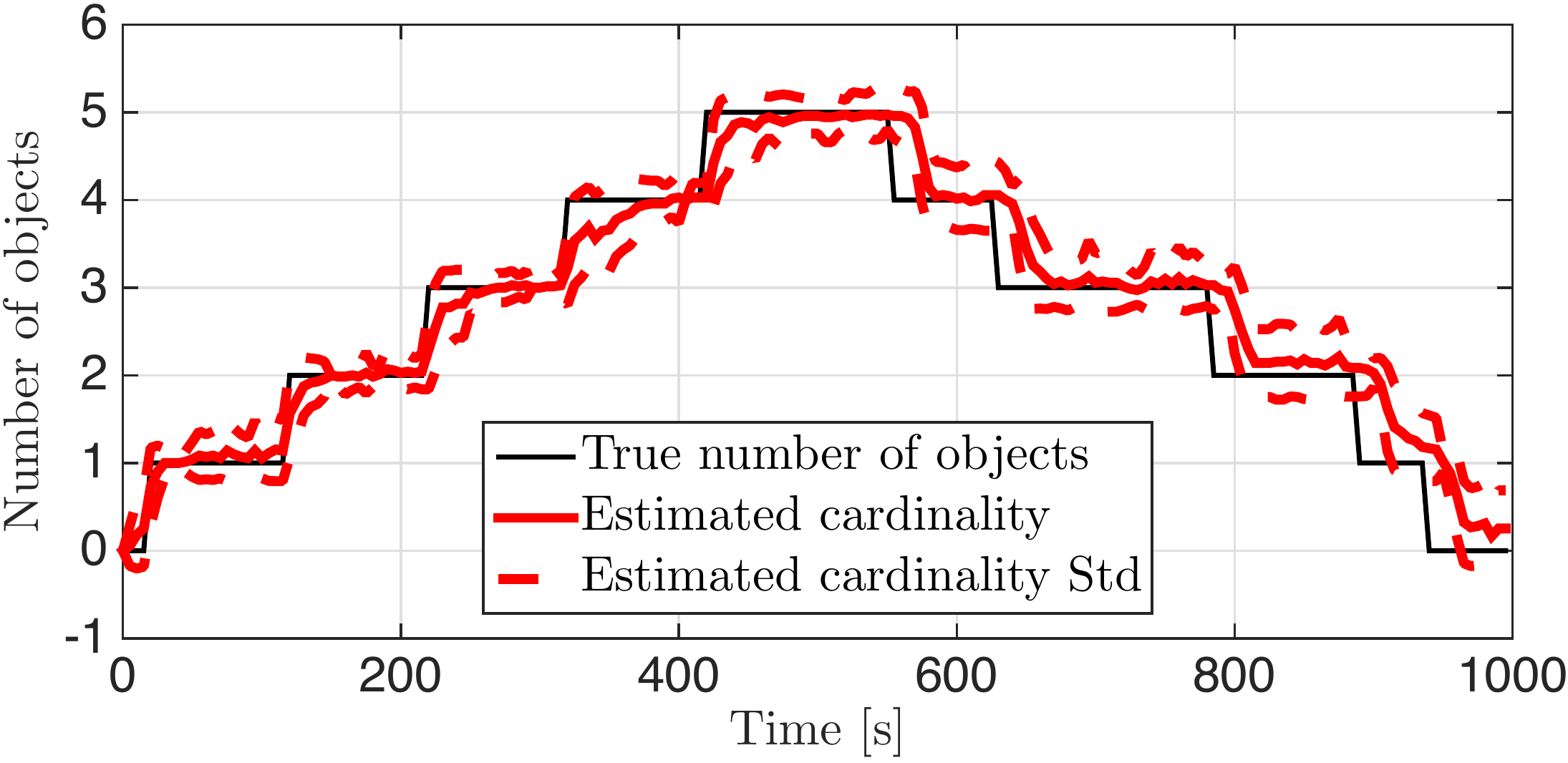}
		\caption{Cardinality statistics for the GM-CM$\delta$GLMB tracker under low $P_{D}$.}
		\label{fig:3:cardCMDGLMB}
        \end{minipage}\vspace{0.5em}
        \begin{minipage}[t][][t]{\columnwidth}
        		\centering
		\includegraphics[width=\columnwidth]{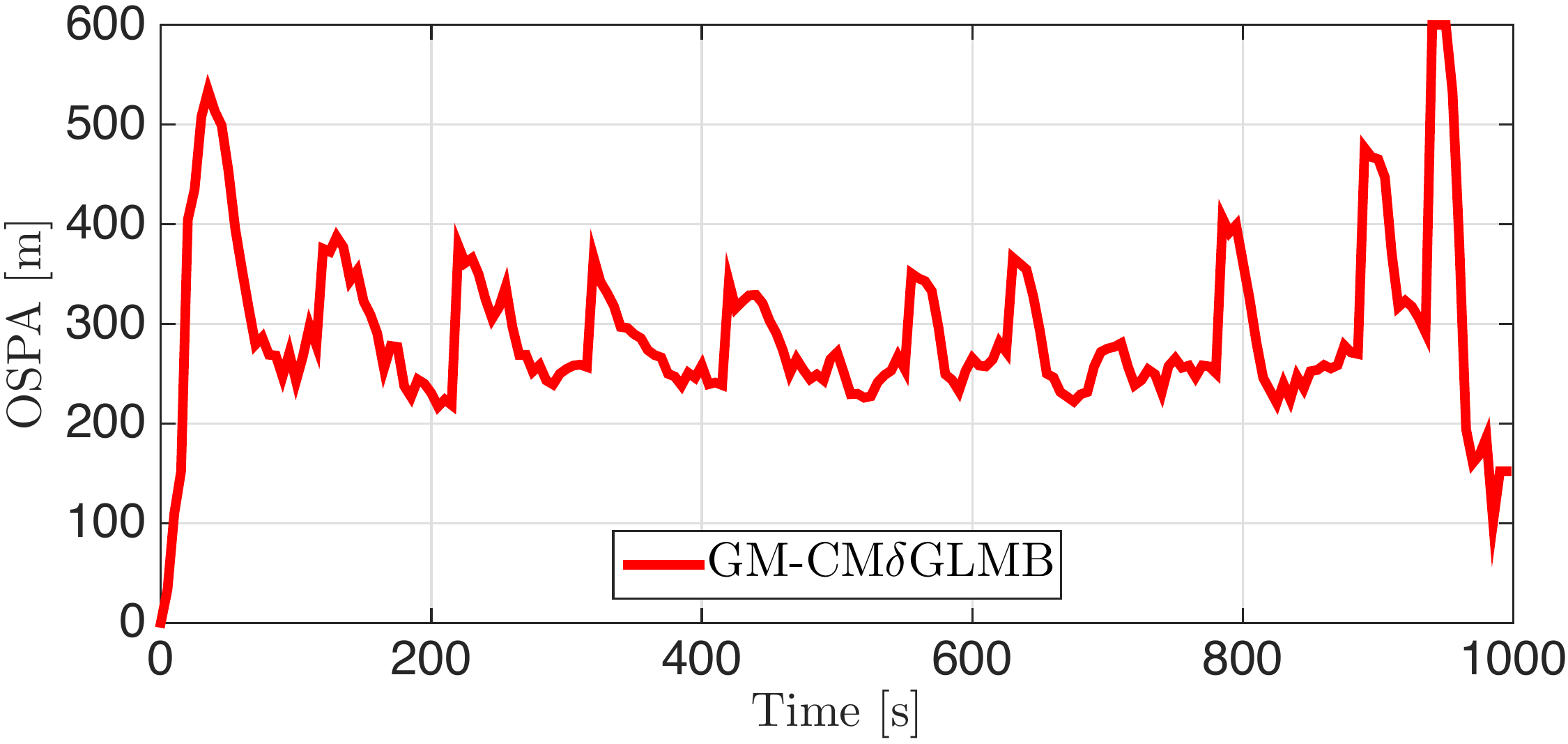}
		\caption{OSPA distance ($c = 600 \, [m]$, $p = 2$) under low $P_{D}$.}
		\label{fig:3:ospa}
        \end{minipage}
\end{figure}

\chapter{Conclusions and future work}
\label{chap:conclusion}
The present dissertation has summarized the work carried out during my 3-year Ph.D. research work in the field of distributed filtering and tracking.
Contributions have spanned from single-object to multi-object scenarios, in both centralized and distributed settings.
In particular, distributed tracking over sensor networks requires redesigning the architecture and algorithms to address the following issues:
\begin{enumerate}[label=\arabic*.]
	\item lack of a central fusion node;
	\item scalable processing with respect to the network size;
	\item each node operates without knowledge of the network network topology;
	\item each node operates without knowledge of the dependence between its own information and the information received from other nodes.
\end{enumerate}
Four main ingredients have been considered in order to provide novel research contributions.
\begin{enumerate}[label=\textsc{\roman*}.]
	\item The Bayesian framework for modeling the quantities to be estimated as random variables characterized by Probability Density Functions (PDFs), and for providing an improved estimation of such quantities by conditioning the PDF on the available noisy measurements.
	\item The Random Finite Set (RFS) formulation as it provides the concept of probability density of the multi-object state that allows us to directly generalize single-object estimation to the multi-object case.
	\item The Kullback-Leibler Average (KLA) as a sensible information-theoretic definition for consistently fusing PDFs of the quantity to be estimated provided by different nodes (or agents).
	\item Consensus as a tool for developing scalable and reliable distributed fusion techniques over a network where each agent aims to compute the collective average of a given quantity by iterative regional averages, where the terms ``collective'' and ``regional'' mean ``over all network nodes'' and, respectively, ``over neighboring nodes only''.
\end{enumerate}
\par
\noindent The main contributions of the present thesis are herewith summarized.
\begin{itemize}[label=\scriptsize{$\square$}]
	\item  A novel distributed single-object filter, namely \textit{Parallel Consensus on Likelihoods and Priors} (CLCP). The proposed algorithm is based on the idea of carrying out, in parallel, a separate consensus for the novel information (likelihoods) and one for the prior information (priors). This parallel procedure is conceived as an improvement of the \textit{Consensus on Posteriors} (CP) approach to avoid underweighting the novel information during the fusion steps. The outcomes of the two consensuses are then combined to provide the fused posterior density.
	\item Two novel consensus \textit{Multiple Model} (MM) filters to be used for tracking a maneuvering object, namely \textit{Distributed First Order Generalized Pseudo-Bayesian} (DGPB$_{1}$) and \textit{Distributed Interacting Multiple Model} (DIMM) filters.
	\item A novel result on \textit{Distributed Multi-Object Filtering} (DMOF) over a sensor network using RFSs. Specifically, a generalisation of the single-object \textit{Kullback-Leibler Average} (KLA) to the RFS framework has been devised.
	\item A novel consensus multi-object filter, the \textit{Consensus Cardinalized Probability Hypothesis Density} (CCPHD) filter. Each tracking agent locally updates multi-object CPHD, i.e. the cardinality distribution and the PHD, exploiting the multi-object dynamics and the available local measurements, exchanges such information with communicating agents and then carries out a fusion step to combine the information from all neighboring agents.
	\item A novel approximation of the $\delta$-GLMB filter, namely the \textit{Marginalized $\delta$-Generalised Labeled Multi-Bernoulli} (M$\delta$-GLMB). The result is based on a principled \textit{Generalised Labeled Multi-Bernoulli} (GLMB) approximation of the labeled RFS posterior that matches exactly the posterior PHD and cardinality distribution. The proposed approximation can be interpreted as performing a marginalization with respect to the association histories arising from the $\delta$-GLMB filter. The key advantage of the new filter lies in the reduced growth rate of the number of new components generated at each filtering step. In particular, the approximation (or marginalization) step performed after each update is guaranteed to reduce the number of generated components which normally arise from multiple measurement-to-track association maps. Typically, the proposed M$\delta$-GLMB filter requires much less computation and storage especially in multi-sensor scenarios compared to the $\delta$-GLMB filter. Furthermore the proposed M$\delta$-GLMB filter inherits the same implementation strategies and parallelizability of the $\delta$-GLMB filter.
	\item The generalisation of the KLA to labeled RFSs.
	\item Two novel consensus tracking filters, namely a \textit{Consensus Marginalized $\delta$-Generalized Labeled Multi-Bernoulli} (CM$\delta$GLMB) and \textit{Consensus Labeled Multi-Bernoulli} (CLMB) tracking filter. The proposed algorithms provide fully distributed, scalable and computationally efficient solutions for multi-object tracking.
\end{itemize}
\par
\noindent From the above-mentioned contributions, the following conclusions can be drawn.
\begin{itemize}[label=$\triangleright$]
	\item The KLA turns out to be an effective information-theoretic approach for distributing information over a sensor network.
	\item Consensus algorithms can be exploited so as to fuse, in a fully distributed and scalable fashion, the information collected from multiple sensors which are supposed to be heterogeneous and geographically dispersed.
	\item The CLCP has proven to be a reliable distributed filter. Its performance is crucially dependant on the choice of a suitable weight of the innovation (likelihood) term combined with the prior. In particular, the choice of such a weight equal to the (usually unknown) number of nodes provides the best performance when the number of consensus steps is sufficiently large, but for very few consensus steps can be improved resorting to other simple choices, proposed in this thesis work, which do not require knowledge of the network size.
	\item The DGPB$_{1}$ and DIMM have been proven to be capable of tracking a highly-maneuvering object with a network consisting of range-only and angle-only sensors along with communication nodes. DIMM performs significantly better than DGPB$_{1}$ when a turn is approaching.
	\item The multi-object KLA in the RFS framework turns out to be a significant contribution in the field of distributed multi-object filtering. The proposed consensus algorithm admits an intuitive interpretation in terms of distributed averaging of the local multi-object densities. Its formulation has led to the development of the CCPHD filter which proved its effectiveness in a realistic simulation scenario.
	\item The labeled RFS approach turns out to be a strong mathematical tool for modeling the probability density of a multi-object state. Its key features allow to directly generalise, in a rigorous and elegant way, the multi-object KLA to the broader labeled multi-object case.
	\item The M$\delta$-GLMB provides an excellent tradeoff between computational/storage requirements and performance for multi-object tracking. Moreover, it has an appealing mathematical formulation which $1)$ facilitates an efficient and tractable closed-form solution for distributed multi-object tracking and $2)$ preserves key summary statistics of the full multi-object posterior.
	\item The CLMB fails to track objects in low SNR scenarios. The problem is due to the approximation made to convert a $\delta$-GLMB into an LMB, becoming significant with low SNR. In particular, each local tracking filter fails to properly set the existence probability of the tracks in the case of having: $1)$ no local observability, $2)$ high clutter rate and $3)$ loss of the full posterior cardinality distribution after the probability density conversion. By having low existence probabilities, the extraction of the tracks fails even if the single object densities are correctly propagated in time.
\end{itemize}
\par
\noindent Based on the above achievements, possible future work will concern the following issues.
\begin{itemize}[label=$\circ$]
	\item To consider sensors with different field-of-view to cope with real-world sensor capabilities. So far, it has been assumed that the area subjected to surveillance is exactly the intersection of the sensor fields-of-view, or that each sensor field-of-view is exactly the surveillance region. By a practical viewpoint this issue needs to be addressed. Thus, the proposed algorithms will have to take into account a state-dependent probability of detection instead of a constant one as assumed in the present work. Moreover, the fusion techniques will have to be redesigned to consider the above-mentioned context. 
	\item To investigate distributed measurement-driven object initialization. The object birth process of the RFS filtering approach is capable of accounting for a large variety of scenarios in which object birth locations could also be unknown. However, a measurement-driven approach would be much more effective compared to a static birth process. The information provided by the sensors could be used to dynamically accommodate each sensor birth process. Thus, a more complete picture of newborn objects could be achieved by fusing local information with the one provided by the other nodes.
\end{itemize}

\appendix
\chapter{Proofs}
\label{chap:appendix}
\begin{proof}[\textbf{Proof of Theorem \ref{thm:wkla}:}]\label{apx:proofthm1}
\phantomsection{}
\addcontentsline{toc}{section}{Proof of Theorem \ref{thm:wkla}}
~\\
The cost to be minimized by the KLA is
\begin{IEEEeqnarray*}{rCl}
	J\left( \mu \right) & = & \displaystyle{\sum_{i \in \mathcal{N}}} \, \pi^i \, D_{KL} \left( \mu \parallel \mu^i \right)\, = \, \displaystyle{\sum_{i \in \mathcal{N}}} \, \pi^i \, \displaystyle{\sum_{j \in \mathcal{R}}} \, \mu^j \, \log\!\left( \dfrac{\mu^j}{\mu^{i,j}} \right) \\
	& = & \displaystyle{\sum_{j \in \mathcal{R}}} \, \mu^j \, \displaystyle{\sum_{i \in \mathcal{N}}} \, \pi^i \, \log\!\left( \dfrac{\mu^i}{\mu^{i,j}} \right)\\
	& = & \displaystyle{\sum_{j \in \mathcal{R}}} \, \mu^j \, \displaystyle{\sum_{i \in \mathcal{N}}} \, \log\!\left( \dfrac{\mu^j}{\mu^{i,j}} \right)^{\pi^i}  \\
	 & = & \displaystyle{\sum_{j \in \mathcal{R}}} \, \mu^j \, \log\!\left( \displaystyle{\prod_{i \in \mathcal{N}}}\dfrac{(\mu^j)^{\pi^i}}{( \mu^{i,j})^{\pi^i}} \right) \\
	 & = & \displaystyle{\sum_{j \in \mathcal{R}}} \, \mu^j \, \log\!\left( \dfrac{\mu^j}{\displaystyle{\prod_{i \in \mathcal{N}}} \, \left( \mu^{i,j} \right)^{\pi^i} } \right)
\end{IEEEeqnarray*}
where the relationship $\sum_{i \in \mathcal{N}} \, \pi^i = 1$ has been exploited.
Let $\overline{\mu}$ be defined as in (\ref{thm:pmfkla}), then 
\be
	J(\mu) = \displaystyle{\sum_{j \in \mathcal{R}}} \, \mu^j \, \log\!\left( \dfrac{\mu^j}{c \, \overline{\mu}^j} \right)
\ee
where $c \triangleq \displaystyle{\sum_{h \in \mathcal{R}}} \, \displaystyle{\prod_{i \in \mathcal{N}}} \, \left( \mu^{i,h} \right)^{\pi^i}$.
Hence
\be
	J \left( \mu \right) = D_{KL} \left( \mu \parallel \overline{\mu} \right) - \log \, c
\ee
where the relationship $\sum_{j \in \mathcal{R}} \, \mu^j = 1$ has been exploited.
Then the infimum of  $J \left( \mu \right)$, i.e. the weighted KLA of (\ref{KLA-d}), is just provided by $\mu = \overline{\mu}$ as in (\ref{thm:pmfkla}).
\end{proof}

\clearpage
\begin{proof}[\textbf{Proof of Theorem \ref{thm:KLA:GCI}:}]
\phantomsection{}
\addcontentsline{toc}{section}{Proof of Theorem \ref{thm:KLA:GCI}}
~\\
Let
\begin{IEEEeqnarray}{rCl}
	\widetilde{f}(X) & = & \displaystyle{\prod_i} \left[ f^i(X) \right]^{\omega_i} \, \\ 
	c & = & \displaystyle{\int} \widetilde{f}(X) \delta X \, ,
\end{IEEEeqnarray}
so that the fused multi-object density in (\ref{KLA:GCI}) can be expressed as $\overline{f}(X) = \widetilde{f}(X) / c$.
Then, the cost to be minimized by the KLA is
\begin{IEEEeqnarray*}{rCl}
	J(f) & = & \displaystyle{\sum_i} \omega_i D_{KL} \left( f \parallel f^i \right)\\
	& = & \displaystyle{\sum_i} \omega_i \displaystyle{\int}  f \left( X \right) \log\!\left(\dfrac{f(X)}{f^i(X)}\right) \delta X \\
	& = & \displaystyle{\sum_i} \omega_i \displaystyle{\sum_{n=0}^{\infty}} \dfrac{1}{n!} \displaystyle{\int} f \left( \{ x_1, \dots,x_n \} \right) \log\!\left( \dfrac{f \left( \{ x_1, \dots,x_n \} \right)}{f^i \left( \{ x_1, \dots,x_n \} \right)} \right) dx_1 \cdots dx_n \\
	& = & \displaystyle{\sum_{n=0}^{\infty}} \dfrac{1}{n!} \displaystyle{\int} f \left( \{ x_1, \dots,x_n \} \right) \displaystyle{\sum_i}  \omega_i \log\!\left( \dfrac{f \left( \{ x_1, \dots,x_n \} \right)}{f^i \left( \{ x_1, \dots,x_n \} \right)} \right) dx_1 \cdots dx_n \\
	& = & \displaystyle{\sum_{n=0}^{\infty}} \dfrac{1}{n!} \displaystyle{\int} f \left( \{ x_1, \dots,x_n \} \right) \log\!\left( \dfrac{f \left( \{ x_1, \dots,x_n \} \right)}{\displaystyle\prod_i \left[ f^i \left( \{ x_1, \dots,x_n \} \right)\right]^{\omega_i}} \right) dx_1 \cdots dx_n \\
	& = & \displaystyle{\sum_{n=0}^{\infty}} \dfrac{1}{n!} \displaystyle{\int} f \left( \{ x_1, \dots,x_n \} \right) \log\!\left( \dfrac{f \left( \{ x_1, \dots,x_n \} \right)}{ \widetilde{f} \left( \{ x_1, \dots,x_n \} \right) } \right) dx_1 \cdots dx_n \\
	& = & \displaystyle{\sum_{n=0}^{\infty}} \dfrac{1}{n!} \displaystyle{\int} f \left( \{ x_1, \dots,x_n \} \right) \log\!\left( \dfrac{f \left( \{ x_1, \dots,x_n \} \right)}{ c \cdot \overline{f} \left( \{ x_1, \dots,x_n \} \right) } \right) dx_1 \cdots dx_n \\
	& = & \displaystyle{\sum_{n=0}^{\infty}} \dfrac{1}{n!} \displaystyle{\int} f \left( \{ x_1, \dots,x_n \} \right) \log\!\left( \dfrac{f \left( \{ x_1, \dots,x_n \} \right)}{  \overline{f} \left( \{ x_1, \dots,x_n \} \right) }\right) dx_1 \cdots dx_n +\\
	&& - \displaystyle{\sum_{n=0}^{\infty}} \dfrac{1}{n!} \displaystyle{\int} f \left( \{ x_1, \dots,x_n \} \right) \log\!\left( c \right) dx_1 \cdots dx_n \\
	& = & \displaystyle{\int}  f \left( X \right) \, \log\!\left( \dfrac{f(X)}{\overline{f}(X)} \right) \delta X - \log\!\left( c \right) \displaystyle{\int} f(X) \delta X \\
	& = & D_{KL}\!\left( f \parallel \overline{f} \right) - \log\!\left( c \right)
\end{IEEEeqnarray*}
Since the KLD is always nonnegative and is zero if and only if its two arguments coincide almost everywhere, the above $J(f)$ is trivially minimized by $f_{KLA}(X)$ defined as in (\ref{KLA:GCI}).
\end{proof}

\begin{proof}[\textbf{Proof of Proposition \ref{pro:glmbapprox}:}]
\phantomsection{}
\addcontentsline{toc}{section}{Proof of Proposition \ref{pro:glmbapprox}}
~\\
First, the cardinality distributions of $\clmb{}{}{\cdot}$ and $\clmb[\hat]{}{}{\cdot}$ are proven to be the same.
Following (\ref{eq:lrfscardinality}), the cardinality distribution of the GLMB in (\ref{eq:ApproximateGLMB}) is given by:
\bie
	{\hat{\rho}}{(n)} & = & \sum_{L\in{\mathcal{F}}_{n}({\mathbb{L}})}\sum_{I\in{\mathcal{F}}({\mathbb{L}})}{w}^{(I)}{(L)} \\
	& = & \sum_{L\in{\mathcal{F}}_{n}({\mathbb{L}})}{w(L)} \\
	& = & \frac{1}{n!}\sum_{\left(\ell_{1},\ldots\ell_{n}\right)\in{\mathbb{L}}^{n}}{w(\{\ell}_{1}{,\ldots,\ell}_{n}{\})} \\
	& = & \frac{1}{n!}\sum_{\left(\ell_{1},\ldots\ell_{n}\right)\in{\mathbb{L}}^{n}}\int{\pi}\left(\left\{ (x_{1},\ell_{1}),\ldots,(x_{n},\ell_{n})\right\} \right)dx_1 \cdots dx_n\\
	& = & \frac{1}{n!}\int{\pi}\left(\left\{ {\mathbf{x}}_{1},\ldots,{\mathbf{x}}_{n})\right\} \right) d\mathbf{x}_1 \cdots d\mathbf{x}_n\\
	& = & \rho(n)
\eie
Second, the PHD corresponding to $\clmb{}{}{\cdot}$ and $\clmb[\hat]{}{}{\cdot}$ are proven to be the same.
Note from the definition of $p^{(I)}(x,\ell)$ that
\bie
	{p}^{(\{\ell,\ell_{1},\ldots,\ell_{n}\})}{(x,\ell)} & = & 1_{\{\ell,\ell_{1},\ldots,\ell_{n}\}}{(\ell)p}_{\{\ell_{1},\ldots,\ell_{n}\}}{(x,\ell)} \\
	& = & \int{p(}\left\{ (x,\ell),(x_{1},\ell_{1}),\ldots,(x_{n},\ell_{n})\right\} ) dx_1 \cdots dx_n\label{eq:pImarg}
\eie
Substituting the GLMB density (\ref{eq:ApproximateGLMB}) into the PHD of the object with label (or track) $\ell$ yields
\be
	\hat{d}(x, \ell) = \int \clmb{}{}{\left\{ \left( x, \ell \right) \cup \lb{X} \right\}} \delta \lb{X}
\ee
from which
\bie
	\hat{d}(x,\ell) & = & \int\sum_{I\in{\mathcal{F}}({\mathbb{L}})}w^{(I)}\left(\mathcal{L}(\{(x,\ell)\}\cup\mathbf{X})\right)[p^{(I)}]^{\{(x,\ell)\}\cup\mathbf{X}}\delta\mathbf{X} \\
	& = & \sum_{I\in{\mathcal{F}}({\mathbb{L}})}\left[\int w^{(I)}\left(\{\ell\}\cup\mathcal{L}(\mathbf{X})\right)[p^{(I)}]^{\mathbf{X}}\delta\mathbf{X}\right]p^{(I)}(x,\ell)\\
	& = & \sum_{I\in{\mathcal{F}}({\mathbb{L}})}\sum_{L\in{\mathcal{F}}({\mathbb{L}})}w^{(I)}\left(\{\ell\}\cup L\right)\left[p^{(I)}(\cdot,\ell)dx\right]^{L}p^{(I)}(x,\ell)
\eie
where the last step follows from Lemma 3 in \cite[Section III.B]{vovo1}.
Noting that $p^{(I)}(\cdot,\ell)$ is a probability density, and using (\ref{eq:MarginalizeGeneral-1}) gives
\bie
	\hat{d}(x,\ell) & = &\sum_{L\in{\mathcal{F}}({\mathbb{L}})}\sum_{I\in{\mathcal{F}}({\mathbb{L}})}\delta_{I}\left(\{\ell\}\cup L\right)w(I)p^{(I)}(x,\ell) \\
	& = & \sum_{L\in{\mathcal{F}}({\mathbb{L}})}w\left(\{\ell\}\cup L\right)p^{(\{\ell\}\cup L)}(x,\ell) \\
	& = & \sum_{n=0}^{\infty}\frac{1}{n!}\sum_{\left(\ell_{1},\ldots,\ell_{n}\right)\in{\mathbb{L}}^{n}}w\left(\left\{ \ell,\ell_{1},\ldots\ell_{n}\right\} \right)p^{(\left\{ \ell,\ell_{1},\ldots\ell_{n}\right\} )}(x,\ell)
\eie
from which by applying (\ref{eq:pImarg}), we obtain:
\bie
	\hat{d}(x,\ell) & = & \sum_{n=0}^{\infty}\frac{1}{n!}\sum_{\left(\ell_{1},\ldots,\ell_{n}\right)\in{\mathbb{L}}^{n}}w\left(\left\{ \ell,\ell_{1},\ldots\ell_{n}\right\} \right) \int p\left(\left\{ (x,\ell),(x_{1},\ell_{1}),\ldots,(x_{n},\ell_{n})\right\} \right) dx_1 \cdots dx_n\IEEEnonumber\\\\ 
	& = & \int\boldsymbol{\pi}({\left\{ \left(x,\ell\right)\right\} \cup\mathbf{X}})\delta\mathbf{X}\\
	& =  & d(x,\ell) \, .
\eie
\end{proof}

\begin{proof}[\textbf{Proof of Proposition \ref{pro:mdglmb}:}]
\phantomsection{}
\addcontentsline{toc}{section}{Proof of Proposition \ref{pro:mdglmb}}
~\\
	We apply the result in Proposition \ref{pro:glmbapprox} which can be used to calculate the parameters of the marginalized $\delta$-GLMB density.
	Notice that such a result applies to any labeled RFS density and our first step is to rewrite the $\delta$-GLMB density (\ref{eq:dglmbpdf}) in the general form for a labeled RFS density \cite{papi2014}, i.e.
	\be
		\boldsymbol{\pi}(\mathbf{X}) = w(\mathcal{L}(\mathbf{X}))p(\mathbf{X})
	\ee
	where 
	\bie
		w(\left\{ \ell_{1},\ldots,\ell_{n}\right\} ) & \triangleq & \int_{\mathbb{X}^{n}}\boldsymbol{\pi}(\left\{ (x_{1},\ell_{1}),\ldots,(x_{n},\ell_{n})\right\} )d(x_{1},\ldots,x_{n})\\
		& = & \sum_{I\in\mathcal{F}(\mathbb{L})}\delta_{I}(\left\{ \ell_{1},\ldots,\ell_{n}\right\} )\sum_{\xi\in\Xi}w^{(I,\xi)}\int_{\mathbb{X}^{n}}p^{(\xi)}(x_{1},\ell_{1})\cdots p^{(\xi)}(x_{n},\ell_{n})dx_{1}\cdots dx_{n}\IEEEnonumber\\~\\
		& = & \sum_{\xi\in\Xi}w^{(\left\{ \ell_{1},\ldots,\ell_{n}\right\} ,\xi)}\sum_{I\in\mathcal{F}(\mathbb{L})}\delta_{I}(\left\{ \ell_{1},\ldots,\ell_{n}\right\} )\\
		& = & \sum_{\xi\in\Xi}w^{(\left\{ \ell_{1},\ldots,\ell_{n}\right\} ,\xi)}
	\eie
	and
	\bie
		p(\left\{ (x_{1},\ell_{1}),\ldots,(x_{n},\ell_{n})\right\} ) & \triangleq & \frac{\boldsymbol{\pi}(\left\{ (x_{1},\ell_{1}),\ldots,(x_{n},\ell_{n})\right\} )}{w(\left\{ \ell_{1},\ldots,\ell_{n}\right\} )}\\
		& = & \Delta(\left\{ (x_{1},\ell_{1}),\ldots,(x_{n},\ell_{n})\right\}) \dfrac{1}{w(\left\{ \ell_{1},\ldots,\ell_{n}\right\} )} \cdot \IEEEnonumber\\
			&& \cdot \sum_{I\in\mathcal{F}\left(\mathbb{L}\right)}\delta_{I}\left(\left\{ \ell_{1},\ldots,\ell_{n}\right\} \right)\sum_{\xi\in\Xi}w^{\left(I,\xi\right)}\,\left[p^{\left(\xi\right)}\right]^{\left\{ (x_{1},\ell_{1}),\ldots,(x_{n},\ell_{n})\right\}} \, .\label{eq:p}
	\eie
	Applying Proposition \ref{pro:glmbapprox}, the parameters $w^{(I)}$ and $p^{(I)}(\cdot)$ for the M$\delta$-GLMB approximation that match the cardinality and PHD are
	\bie
		w^{(I)}(L) & = & \delta_{I}(L)w(I)=\delta_{I}(L)\sum_{\xi\in\Xi}w^{(I,\xi)}
	\eie
	and
	\bie
		p^{(I)}(x,\ell) & = & 1_{I}(\ell)p_{I-\left\{ \ell\right\} }(x,\ell)\\
		& = & 1_{I}(\ell)\int p(\left\{ (x,\ell),(x_{1},\ell_{1}),\ldots,(x_{j},\ell_{j})\right\} )d(x_{1},\ldots,x_{j})\label{eq:pmarignal}
	\eie
	where $I\backslash\{\ell\}=\left\{ \ell_{1},\ldots,\ell_{j}\right\} $.
	Substituting the expression (\ref{eq:p}) in (\ref{eq:pmarignal}), we have
	\bie
		p^{(I)}(x,\ell) & = & 1_{I}(\ell) \, \Delta(\left\{ (x,\ell),(x_{1},\ell_{1}),\ldots,(x_{j},\ell_{j})\right\}) \, \dfrac{1}{w(\left\{ \ell,\ell_{1},\ldots,\ell_{j}\right\} )}\cdot \IEEEnonumber\\
			&& \phantom{1_{I}(\ell)} \cdot \sum_{J\in\mathcal{F}\left(\mathbb{L}\right)}\delta_{J}\left(\left\{ \ell,\ell_{1},\ldots,\ell_{j}\right\} \right) \sum_{\xi\in\Xi}w^{\left(J,\xi\right)}\,\int\left[p^{\left(\xi\right)}\right]^{\left\{ (x,\ell),(x_{1},\ell_{1}),\ldots,(x_{j},\ell_{j})\right\} }d(x_{1},\ldots,x_{j})\IEEEnonumber\\&&\\
		& = & 1_{I}(\ell) \, \Delta(\left\{ (x,\ell),(x_{1},\ell_{1}),\ldots,(x_{j},\ell_{j})\right\}) \, \frac{1}{w(\left\{ \ell,\ell_{1},\ldots,\ell_{j}\right\} )}\IEEEnonumber\\
			&& \phantom{1_{I}(\ell)} \cdot \sum_{J\in\mathcal{F}\left(\mathbb{L}\right)}\delta_{J}\left(\left\{ \ell,\ell_{1},\ldots,\ell_{j}\right\} \right)\sum_{\xi\in\Xi}w^{\left(J,\xi\right)}p^{(\xi)}(x,\ell)
	\eie
	and, noting that $I=\left\{ \ell,\ell_{1},\ldots,\ell_{j}\right\}$, it follows that only one term in the sum over $J$ is non-zero thus giving
	\be
		p^{(I)}(x,\ell) = 1_{I}(\ell)\Delta(\left\{ (x,\ell),(x_{1},\ell_{1}),\ldots,(x_{j},\ell_{j})\right\} )\frac{1}{\displaystyle\sum_{\xi\in\Xi}w^{(I,\xi)}}\sum_{\xi\in\Xi}w^{\left(I,\xi\right)}p^{(\xi)}(x,\ell)
	\ee
	Consequently, the M$\delta$-GLMB approximation is given by
	\bie
		\hat{\boldsymbol{\pi}}(\mathbf{X}) & = & \sum_{I\in\mathcal{F}(\mathbb{L})}w^{(I)}(\mathcal{L}(\mathbf{X}))\left[p^{(I)}\right]^{\mathbf{X}}\\
		& = & \Delta(\mathbf{X})\sum_{I\in\mathcal{F}(\mathbb{L})}\delta_{I}(\mathcal{L}(\mathbf{X}))\sum_{\xi\in\Xi}w^{(I,\xi)}\left[1_{I}(\cdot)\frac{1}{\displaystyle\sum_{\xi\in\Xi}w^{(I,\xi)}}\sum_{\xi\in\Xi}w^{\left(I,\xi\right)}p^{(\xi)}(\cdot,\cdot)\right]^{\mathbf{X}}\\
		& = & \Delta(\mathbf{X})\sum_{I\in\mathcal{F}(\mathbb{L})}\delta_{I}(\mathcal{L}(\mathbf{X}))w^{(I)}\left[p^{(I)}\right]^{\mathbf{X}}
	\eie
	where
	\bie
		w^{(I)} & = & \sum_{\xi\in\Xi}w^{(I,\xi)}\\
		p^{(I)}(x,\ell) & = &1_{I}(\ell)\frac{1}{\displaystyle\sum_{\xi\in\Xi}w^{(I,\xi)}}\sum_{\xi\in\Xi}w^{\left(I,\xi\right)}p^{(\xi)}(x,\ell)\\
		& = & 1_{I}(\ell)\frac{1}{w^{(I)}}\sum_{\xi\in\Xi}w^{\left(I,\xi\right)}p^{(\xi)}(x,\ell)
	\eie
\end{proof}

\begin{proof}[\textbf{Proof of Theorem \ref{thm:mdglmb:nwgm}:}]
\phantomsection{}
\addcontentsline{toc}{section}{Proof of Theorem \ref{thm:mdglmb:nwgm}}
~\\
	For the sake of simplicity, let us consider only two M$\delta$-GLMB densities $\imath \in \left\{ 1, 2 \right\}$. From (\ref{eq:mdglmb:nwgm}) one gets 
	\be
		\overline{\boldsymbol{\pi}}(\mathbf{X}) = \dfrac{1}{K} \left[ \Delta\!(\mathbf{X})\sum_{L \in \mathcal{F}\!\left( \mathbb{L} \right)} \delta_{L}\!\left( \mathcal{L}\!\left( \mathbf{X} \right) \right) \, w_{1}^{\left( L \right)}\left[ p_{1}^{\left( L \right)}\right]^{\mathbf{X}} \right]^{\omega} \left[ \Delta\!(\mathbf{X})\sum_{L \in \mathcal{F}\!\left( \mathbb{L} \right)} \delta_{L}\!\left( \mathcal{L}\!\left( \mathbf{X} \right) \right) \, w_{2}^{\left( L \right)}\left[ p_{2}^{\left( L \right)}\right]^{\mathbf{X}} \right]^{1-\omega} .\label{eq:proof:mdglmbexp}
	\ee
	Notice that the exponentiation of a sum of delta functions is a sum of the exponentiated delta function terms, i.e
	\begin{numcases}{\overline{\boldsymbol{\pi}}(\mathbf{X})^{\omega} = }
		\Delta\!(\mathbf{X}) \left( w^{\left( L_{1} \right)} \right)^{\omega}\left[ \left( p^{\left( L_{1} \right)} \right)^{\omega}\right]^{\mathbf{X}} & \!\!\!\!\!\!\! if $\mathcal{L}\!\left( \mathbf{X} \right) = L_{1}$\nonumber\\
		\vdots & \!\!\!\!\!\vdots\nonumber\\
		\Delta\!(\mathbf{X}) \left( w^{\left( L_{n} \right)} \right)^{\omega}\left[ \left( p^{\left( L_{n} \right)} \right)^{\omega}\right]^{\mathbf{X}} & \!\!\!\!\!\!\! if $\mathcal{L}\!\left( \mathbf{X} \right) = L_{n}$\nonumber
	\end{numcases}
	Thus, (\ref{eq:proof:mdglmbexp}) yields:
	\begin{IEEEeqnarray}{rCl}
		\overline{\boldsymbol{\pi}}(\mathbf{X}) & = & \dfrac{\Delta\!(\mathbf{X})}{K} \sum_{L \in \mathcal{F}\!\left( \mathbb{L} \right)} \delta_{L}\!\left( \mathcal{L}\!\left( \mathbf{X} \right) \right) \, \left( w_{1}^{\left( L \right)} \right)^{\omega} \left( w_{2}^{\left( L \right)} \right)^{1-\omega} \, \left[ \left( p_{1}^{\left( L \right)} \right)^{\omega} \left( p_{2}^{\left( L \right)} \right)^{1-\omega} \right]^{\mathbf{X}}\\
		& = & \dfrac{\Delta\!(\mathbf{X})}{K} \sum_{L \in \mathcal{F}\!\left( \mathbb{L} \right)} \delta_{L}\!\left( \mathcal{L}\!\left( \mathbf{X} \right) \right) \, \left( w_{1}^{\left( L \right)} \right)^{\omega} \left( w_{2}^{\left( L \right)} \right)^{1-\omega} \left[ \int \left( p_{1}^{\left( L \right)}\!\left( x, \cdot \right) \right)^{\omega} \left( p_{2}^{\left( L \right)}\!\left( x, \cdot \right) \right)^{1-\omega} \right]^{L}  \cdot \IEEEnonumber\\
			&& \phantom{\dfrac{\Delta\!(\mathbf{X})}{K} \sum_{L \in \mathcal{F}\!\left( \mathbb{L} \right)}} \cdot \left[ \left( \omega \odot p_{1}^{\left( L \right)} \right) \oplus \left( \left( 1 - \omega \right) \odot p_{2}^{\left( L \right)} \right) \right]^{\mathbf{X}} .\label{eq:proof:mdglmbexp2}
	\end{IEEEeqnarray}
	The normalization constant $K$ can be easily evaluated exploiting Lemma 3 of \cite[Section III.B]{vovo1}, i.e.
	\begin{IEEEeqnarray}{rCl}
		K & = & \int \Delta\!(\mathbf{X}) \sum_{L \in \mathcal{F}\!\left( \mathbb{L} \right)} \delta_{L}\!\left( \mathcal{L}\!\left( \mathbf{X} \right) \right) \, \left( w_{1}^{\left( L \right)} \right)^{\omega} \left( w_{2}^{\left( L \right)} \right)^{1-\omega} \left[ \left( p_{1}^{\left( L \right)} \right)^{\omega} \left( p_{2}^{\left( L \right)} \right)^{1-\omega} \right]^{\mathbf{X}} \delta X\\
		& = & \sum_{L \subseteq \mathbb{L}} \left( w_{1}^{\left( L \right)} \right)^{\omega} \left( w_{2}^{\left( L \right)} \right)^{1-\omega} \left[ \int \left( p_{1}^{\left( L \right)} \right)^{\omega} \left( p_{2}^{\left( L \right)}\!\left( x, \cdot \right) \right)^{1-\omega} dx \right]^{L} \, .\label{eq:proof:mdglmbnorm}
	\end{IEEEeqnarray}
	Applying (\ref{eq:proof:mdglmbnorm}) in (\ref{eq:proof:mdglmbexp2}) one has
	\begin{IEEEeqnarray}{rCl}
		\overline{p}^{\left( L \right)} & = & \left[ \left( \omega \odot p_{1}^{\left( L \right)} \right) \oplus \left( \left( 1 - \omega \right) \odot p_{2}^{\left( L \right)} \right) \right]^{\mathbf{X}}\\
		\overline{w}^{\left( L \right)} & = & \dfrac{\displaystyle\left( w_{1}^{\left( L \right)} \right)^{\omega} \left( w_{2}^{\left( L \right)} \right)^{1-\omega} \left[ \int \left( p_{1}^{\left( L \right)}\!\left( x, \cdot \right) \right)^{\omega} \left( p_{2}^{\left( L \right)}\!\left( x, \cdot \right) \right)^{1-\omega} \right]^{L}}
			{\displaystyle\sum_{F \subseteq \mathbb{L}}\left( w_{1}^{\left( F \right)} \right)^{\omega} \left( w_{2}^{\left( F \right)} \right)^{1-\omega} \left[ \int \left( p_{1}^{\left( F \right)}\!\left( x, \cdot \right) \right)^{\omega} \left( p_{2}^{\left( F \right)}\!\left( x, \cdot \right) \right)^{1-\omega} \right]^{F}}\IEEEnonumber\\
	\end{IEEEeqnarray}
	It can be proved by induction that Theorem \ref{thm:mdglmb:nwgm} holds considering $I$ M$\delta$-GLMB densities instead of $2$.
\end{proof}

\begin{proof}[\textbf{Proof of Proposition \ref{pro:mdglmb:fusion}:}]
\phantomsection{}
\addcontentsline{toc}{section}{Proof of Proposition \ref{pro:mdglmb:fusion}}
~\\
	The proof readily follows by noting that the KLA (\ref{eq:kla:gci}) can be evaluated using (\ref{eq:mdglmb:nwgm}) of Theorem \ref{thm:mdglmb:nwgm}.
\end{proof}

The following Lemma is useful to prove Theorem \ref{thm:lmb:nwgm}.
\begin{lem}[Normalization constant of LMB KLA]
\label{lem:k}
~\newline
	Let $\boldsymbol{\pi}^{1}(\mathbf{X}) = \left\{ \left(r_{1}^{(\ell)}, p_{1}^{(\ell)} \right) \right\}_{\ell \in \mathbb{L}}$ and $\boldsymbol{\pi}^{2}(\mathbf{X}) = \left\{ \left( r_{2}^{(\ell)},p_{2}^{(\ell)} \right) \right\}_{\ell \in \mathbb{L}}$ be two LMB densities on $\mathbb{X} \times \mathbb{L}$ and $\omega \in \left( 0, 1 \right)$ then
	\begin{IEEEeqnarray}{rCl}
		K & \triangleq & \displaystyle\int\boldsymbol{\pi}^{1}(\mathbf{X})^{\omega}\boldsymbol{\pi}^{2}(\mathbf{X})^{1-\omega}\delta\mathbf{X} \label{eq:defK}\\
		& = & \left< \omega \odot \boldsymbol{\pi}^{1}, \left( 1-\omega \right) \odot \boldsymbol{\pi}^{2} \right> \\
		& = & \left( \widetilde{q}^{(\cdot)} + \widetilde{r}^{(\cdot)} \right)^{\mathbb{L}}.
	\label{eq:k}
	\end{IEEEeqnarray}
\end{lem}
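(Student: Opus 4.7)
\noindent\textbf{Proof plan for Lemma \ref{lem:k}.} The strategy is to bring the product $\boldsymbol{\pi}^{1}(\mathbf{X})^{\omega}\,\boldsymbol{\pi}^{2}(\mathbf{X})^{1-\omega}$ into the canonical form $\Delta(\mathbf{X})\,h(\mathcal{L}(\mathbf{X}))\,g^{\mathbf{X}}$ so that the set integral in (\ref{eq:defK}) can be evaluated in closed form via Lemma~3 of \cite[Section III.B]{vovo1}, and then to recognize the resulting sum over subsets of $\mathbb{L}$ as a product over $\mathbb{L}$ through the elementary identity
\begin{equation*}
\sum_{L \subseteq \mathbb{L}} \prod_{\ell \in \mathbb{L}\setminus L} a^{(\ell)} \prod_{\ell \in L} b^{(\ell)} \;=\; \prod_{\ell \in \mathbb{L}} \bigl( a^{(\ell)} + b^{(\ell)} \bigr) \, .
\end{equation*}

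First, I would rewrite each LMB density using (\ref{eq:lmbpdf})--(\ref{eq:lmb:pdf}) in the explicit form
\begin{equation*}
\boldsymbol{\pi}^{i}(\mathbf{X}) = \Delta(\mathbf{X}) \prod_{\ell \in \mathbb{L} \setminus \mathcal{L}(\mathbf{X})}\!\!\bigl(1-r_{i}^{(\ell)}\bigr) \prod_{(x,\ell)\in \mathbf{X}} r_{i}^{(\ell)}\, p_{i}^{(\ell)}(x) \, ,
\end{equation*}
and then take the $\omega$-th and $(1-\omega)$-th powers and multiply. Since $\Delta(\mathbf{X})\in\{0,1\}$, one has $\Delta(\mathbf{X})^{\omega}=\Delta(\mathbf{X})$, so the result has the form $\Delta(\mathbf{X})\,h(\mathcal{L}(\mathbf{X}))\,g^{\mathbf{X}}$ with
\begin{equation*}
h(L) = \prod_{\ell\in\mathbb{L}\setminus L}\widetilde{q}^{(\ell)} \, ,\qquad g(x,\ell) = \bigl(r_{1}^{(\ell)}p_{1}^{(\ell)}(x)\bigr)^{\omega}\bigl(r_{2}^{(\ell)}p_{2}^{(\ell)}(x)\bigr)^{1-\omega} \, ,
\end{equation*}
where $\widetilde{q}^{(\ell)}$ is as in (\ref{eq:fusion:barnex}) specialized to two agents.

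Second, I would apply Lemma~3 of \cite[Section III.B]{vovo1}, which gives
\begin{equation*}
\int \Delta(\mathbf{X})\,h(\mathcal{L}(\mathbf{X}))\,g^{\mathbf{X}}\,\delta\mathbf{X} \;=\; \sum_{L\subseteq \mathbb{L}} h(L) \prod_{\ell\in L} \int g(x,\ell)\,dx \, .
\end{equation*}
By construction $\int g(x,\ell)\,dx = \widetilde{r}^{(\ell)}$ (cf.\ (\ref{eq:fusion:barex})), hence
\begin{equation*}
K \;=\; \sum_{L \subseteq \mathbb{L}} \prod_{\ell \in \mathbb{L}\setminus L} \widetilde{q}^{(\ell)} \prod_{\ell \in L} \widetilde{r}^{(\ell)} \, .
\end{equation*}

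Finally, invoking the elementary identity above with $a^{(\ell)}=\widetilde{q}^{(\ell)}$ and $b^{(\ell)}=\widetilde{r}^{(\ell)}$ yields
$K = \prod_{\ell \in \mathbb{L}}(\widetilde{q}^{(\ell)}+\widetilde{r}^{(\ell)}) = (\widetilde{q}^{(\cdot)}+\widetilde{r}^{(\cdot)})^{\mathbb{L}}$, which is (\ref{eq:k}); the inner-product reformulation $K = \langle \omega\odot\boldsymbol{\pi}^{1},(1-\omega)\odot\boldsymbol{\pi}^{2}\rangle$ is then immediate from the definition of $\odot$ in (\ref{odot}) applied to multi-object densities. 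No step involves a substantive obstacle; the only care needed is the bookkeeping in combining the two exponentiated LMBs so that the factor $\prod_{\ell\in\mathbb{L}\setminus\mathcal{L}(\mathbf{X})}$ remains a function of $\mathcal{L}(\mathbf{X})$ alone, which is what permits the direct use of Lemma~3.
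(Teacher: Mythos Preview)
Your proposal is correct and follows essentially the same route as the paper: both invoke Lemma~3 of \cite[Section III.B]{vovo1} to reduce the set integral to a sum over $L\subseteq\mathbb{L}$ and then collapse that sum to a product over $\mathbb{L}$ via the same combinatorial identity (the paper writes it as the ``Binomial Theorem'' $\sum_{L\subseteq\mathbb{L}} f^{L}=(1+f)^{\mathbb{L}}$ after factoring out $(\widetilde{q}^{(\cdot)})^{\mathbb{L}}$, which is exactly your identity with $a^{(\ell)}=\widetilde{q}^{(\ell)}$, $b^{(\ell)}=\widetilde{r}^{(\ell)}$). Your write-up is in fact a bit more explicit than the paper's in showing how the product of the two exponentiated LMBs attains the canonical form required by Lemma~3.
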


\begin{proof}[\textbf{Proof of Lemma \ref{lem:k}:}]
\phantomsection{}
\addcontentsline{toc}{section}{Proof of Lemma \ref{lem:k}}
~\\
	We make use of the Binomial Theorem \cite{math} which states 
	\begin{equation}
		\sum_{L \subseteq \mathbb{L}} f^{L} = \left( 1 + f \right)^{\mathbb{L}}
		\label{eq:bin}
	\end{equation}
	Applying Lemma 3 of \cite[Section III.B]{vovo1} to the definition (\ref{eq:defK}) of $K$ gives 
	\begin{IEEEeqnarray}{rCl}
		K & = & \left( \widetilde{q}^{(\cdot)} \right)^{\mathbb{L}} \sum_{L \subseteq \mathbb{L}}\left( \dfrac{\widetilde{r}^{(\cdot)}}{\widetilde{q}^{(\cdot)}} \right)^{L} \, .
		\label{eq:int:k}
	\end{IEEEeqnarray}
	Applying (\ref{eq:bin}) to (\ref{eq:int:k}) gives
	\begin{IEEEeqnarray}{rCl}
		K & = & \left( \widetilde{q}^{(\cdot)} \right)^{\mathbb{L}} \left( 1 + \dfrac{\widetilde{r}^{(\cdot)}}{\widetilde{q}^{(\cdot)}} \right)^{\mathbb{L}}\\
		& = & \left( \widetilde{q}^{(\cdot)} + \widetilde{r}^{(\cdot)} \right)^{\mathbb{L}}
	\end{IEEEeqnarray}
	having defined 
	\begin{IEEEeqnarray}{rCl}
		\widetilde{r}^{(\ell)} & \triangleq & \displaystyle\int \left( r_{1}^{(\ell)} p_{1}^{(\ell)}(x) \right)^{\omega} \left( r_{2}^{(\ell)} p_{2}^{(\ell)}(x) \right)^{1-\omega}dx, \label{lem:unnex}\\
		\widetilde{q}^{(\ell)} & \triangleq & \left( 1 - r_{1}^{(\ell)} \right)^{\omega} \left( 1 - r_{2}^{(\ell)} \right)^{1-\omega}. \label{lem:unex}
	\end{IEEEeqnarray}
\end{proof}

\begin{proof}[\textbf{Proof of Theorem \ref{thm:lmb:nwgm}:}]
\phantomsection{}
\addcontentsline{toc}{section}{Proof of Theorem \ref{thm:lmb:nwgm}}
~\\
	For the sake of simplicity, let us consider only two LMB densities $\imath \in \left\{ 1, 2 \right\}$. From (\ref{eq:lmb:nwgm}) one gets 
	\begin{IEEEeqnarray}{rCl}
		\overline{\boldsymbol{\pi}}(\mathbf{X}) & = & \dfrac{1}{K} \left[ \Delta\!\left( \mathbf{X} \right) w_{1}(\mathcal{L}(\mathbf{X})) p_{1}^{\mathbf{X}} \right]^{\omega} \left[ \Delta\!\left( \mathbf{X} \right) w_{2}(\mathcal{L}(\mathbf{X})) p_{2}^{\mathbf{X}} \right]^{1-\omega}\\
		& = & \dfrac{\Delta\!\left( \mathbf{X} \right)}{K}
		\left( \widetilde{q}^{(\cdot)} \right)^{\mathbb{L}}
		\left[ 1_{\mathbb{L}}\!\left( \cdot \right) \left( \dfrac{r_{1}^{(\cdot)}}{1 - r_{1}^{(\cdot)}}\right)^{\omega} \left( \dfrac{r_{2}^{(\cdot)}}{1 - r_{2}^{(\cdot)}} \right)^{1-\omega} \right]^{\mathcal{L}(\mathbf{X})}\left( p_{1}^{\omega}\,p_{2}^{1-\omega} \right)^{\mathbf{X}}\\
		& = & \dfrac{\Delta\!\left( \mathbf{X} \right)}{K}
		\left( \widetilde{q}^{(\cdot)} \right)^{\mathbb{L}} \left[ 1_{\mathbb{L}}\!\left( \cdot \right) \left( \dfrac{r_{1}^{(\cdot)}}{1 - r_{1}^{(\cdot)}}\right)^{\omega} \left( \dfrac{r_{2}^{(\cdot)}}{1 - r_{2}^{(\cdot)}} \right)^{1-\omega} \int p_{1}^{\omega}\,p_{2}^{1-\omega} dx \right]^{\mathcal{L}(\mathbf{X})} \cdot \IEEEnonumber\\
		&& \cdot \left[ \left( \omega \odot p_{1} \right) \oplus \left( \left( 1-\omega \right) \odot p_{2} \right) \right]^{\mathbf{X}}
	\end{IEEEeqnarray}
	Hence, recalling definitions (\ref{lem:unnex}) and (\ref{lem:unex}), one has 
	\begin{IEEEeqnarray}{rCl}
		\overline{w}(L) & = & \dfrac{\left( \widetilde{q}^{(\cdot)} \right)^{\mathbb{L}} \left( 1_{\mathbb{L}}\!\left( \cdot \right) \dfrac{\widetilde{r}^{(\cdot)}}{\widetilde{q}^{(\cdot)}} \right)^{L}}{\left( \widetilde{q}^{(\cdot)} + \widetilde{r}^{(\cdot)}\right)^{\mathbb{L}}}\\
		& = & \left( \dfrac{\widetilde{q}^{(\cdot)}}{\widetilde{q}^{(\cdot)} + \widetilde{r}^{(\cdot)}}\right)^{\mathbb{L}\backslash L} \left( \dfrac{1_{\mathbb{L}}\!\left( \cdot \right) \widetilde{r}^{(\cdot)}}{\widetilde{q}^{(\cdot)} + \widetilde{r}^{(\cdot)}}\right)^{L}\label{eq:thm:weight}\\
		& = & \left( \dfrac{\widetilde{q}^{(\cdot)}}{\widetilde{q}^{(\cdot)} + \widetilde{r}^{(\cdot)}} \right)^{\mathbb{L}} \left( 1_{\mathbb{L}}\!\left( \cdot \right) \dfrac{\widetilde{r}^{(\cdot)}}{\widetilde{q}^{(\cdot)}} \right)^{L}, \\
		\overline{p}^{(\ell)}(x) & = & 
		\left[ \left( \omega \odot p_{1}^{(\ell)} \right) \oplus \left( \left( 1-\omega \right) \odot p_{2}^{(\ell)} \right) \right]^{\mathbf{X}}.
	\end{IEEEeqnarray}
	It follows from (\ref{eq:thm:weight}) that $\overline{q}^{(\ell)} + \overline{r}^{(\ell)} = 1$, $\forall\, \ell \in \mathbb{L}$, where
	\begin{IEEEeqnarray}{rCl}
		\overline{r}^{(\ell)} & = & \dfrac{\widetilde{r}^{(\ell)}}{\widetilde{q}^{(\ell)} + \widetilde{r}^{(\ell)}} \,\\
		\overline{q}^{(\ell)} & = & \dfrac{\widetilde{q}^{(\ell)}}{\widetilde{q}^{(\ell)} + \widetilde{r}^{(\ell)}}.
	\end{IEEEeqnarray}
	It can be proved by induction that Theorem \ref{thm:lmb:nwgm} holds considering $I$ LMB densities instead of $2$.
\end{proof}

\begin{proof}[\textbf{Proof of Proposition \ref{pro:lmb:fusion}:}]
\phantomsection{}
\addcontentsline{toc}{section}{Proof of Proposition \ref{pro:lmb:fusion}}
~\\
	The proof readily follows by noting that the KLA (\ref{eq:kla:gci}) can be evaluated using (\ref{eq:lmb:nwgm}) of Theorem \ref{thm:lmb:nwgm}.
\end{proof}

\cleardoublepage
\singlespacing
\phantomsection{}
\addcontentsline{toc}{chapter}{Bibliography}
\bibliographystyle{IEEEtranSA}

\begin{thebibliography}{BCF{\etalchar{+}}14b}
\providecommand{\url}[1]{#1}
\csname url@samestyle\endcsname
\providecommand{\newblock}{\relax}
\providecommand{\bibinfo}[2]{#2}
\providecommand{\BIBentrySTDinterwordspacing}{\spaceskip=0pt\relax}
\providecommand{\BIBentryALTinterwordstretchfactor}{4}
\providecommand{\BIBentryALTinterwordspacing}{\spaceskip=\fontdimen2\font plus
\BIBentryALTinterwordstretchfactor\fontdimen3\font minus
  \fontdimen4\font\relax}
\providecommand{\BIBforeignlanguage}[2]{{%
\expandafter\ifx\csname l@#1\endcsname\relax
\typeout{** WARNING: IEEEtranSA.bst: No hyphenation pattern has been}%
\typeout{** loaded for the language `#1'. Using the pattern for}%
\typeout{** the default language instead.}%
\else
\language=\csname l@#1\endcsname
\fi
#2}}
\providecommand{\BIBdecl}{\relax}
\BIBdecl

\bibitem[AF70]{ackfu1970}
G.~Ackerson and K.~S. Fu, ``On state estimation in switching environments,''
  \emph{IEEE Trans. on Automatic Control}, vol.~15, no.~1, pp. 10--17, 1970.

\bibitem[Aka98]{Akaike}
H.~Akaike, ``Information theory and the extension of the maximum likelihood
  principle,'' in \emph{Selected Papers of Hirotugu Akaike}.\hskip 1em plus
  0.5em minus 0.4em\relax Springer New York, 1998, pp. 199--213.

\bibitem[AM12]{andmoo2012}
B.~Anderson and J.~B. Moore, \emph{Optimal filtering}.\hskip 1em plus 0.5em
  minus 0.4em\relax Courier Dover Publications, 2012.

\bibitem[AS64]{math}
\BIBentryALTinterwordspacing
M.~Abramowitz and I.~A. Stegun, \emph{Handbook of mathematical functions: with
  formulas, graphs, and mathematical tables}.\hskip 1em plus 0.5em minus
  0.4em\relax Washington, DC: National Bureau of Standards, 1964. [Online].
  Available:
  \url{http://people.math.sfu.ca/~cbm/aands/abramowitz_and_stegun.pdf}
\BIBentrySTDinterwordspacing

\bibitem[BBS88]{blobar1988}
H.~Blom and Y.~Bar-Shalom, ``The interacting multiple model algorithm for
  systems with {M}arkovian switching coefficients,'' \emph{IEEE Trans. on
  Automatic Control}, vol.~33, no.~8, pp. 780--783, 1988.

\bibitem[BC14]{cp}
G.~Battistelli and L.~Chisci, ``Kullback-{L}eibler average, consensus on
  probability densities, and distributed state estimation with guaranteed
  stability,'' \emph{Automatica}, vol.~50, no.~3, pp. 707--718, 2014.

\bibitem[BCF{\etalchar{+}}13a]{ccphd}
G.~Battistelli, L.~Chisci, C.~Fantacci, A.~Farina, and A.~Graziano, ``Consensus
  {CPHD} filter for distributed multitarget tracking,'' \emph{IEEE Journal of
  Selected Topics in Signal Processing}, vol.~7, no.~3, pp. 508--520, 2013.

\bibitem[BCF{\etalchar{+}}13b]{batchifan2013eusipco}
------, ``Distributed multitarget tracking with range-{D}oppler sensors,'' in
  \emph{Proc. 21$^{st}$ European Signal Processing Conference (EUSIPCO)}, pp.
  1-5, Marrakech, Morocco, 2013.

\bibitem[BCF{\etalchar{+}}13c]{batchifan2013irs}
------, ``Distributed multitarget tracking for passive multireceiver radar
  systems,'' in \emph{Proc. 14$^{th}$ International Radar Symposium (IRS)}, pp.
  337-342, Dresden, Germany, 2013.

\bibitem[BCF{\etalchar{+}}13d]{batchifan2013}
------, ``Distributed tracking with {D}oppler sensors,'' in \emph{Proc.
  52$^{nd}$ IEEE Conf. on Decision and Control (CDC)}, pp. 4760-4765, Florence,
  Italy, 2013.

\bibitem[BCF14a]{cpcl}
G.~Battistelli, L.~Chisci, and C.~Fantacci, ``Parallel consensus on likelihoods
  and priors for networked nonlinear filtering,'' \emph{IEEE Signal Processing
  Letters}, vol.~21, no.~7, pp. 787--791, 2014.

\bibitem[BCF{\etalchar{+}}14b]{batchifan2014}
G.~Battistelli, L.~Chisci, C.~Fantacci, A.~Farina, and A.~Graziano,
  ``Consensus-based multiple-model {B}ayesian filtering for distributed
  tracking,'' \emph{IET Radar, Sonar \& Navigation}, DOI:
  \href{http://dx.doi.org/10.1049/iet-rsn.2014.0071}{10.1049/iet-rsn.2014.0071},
  2014.

\bibitem[BCF{\etalchar{+}}14c]{ccjpda}
G.~Battistelli, L.~Chisci, C.~Fantacci, N.~Forti, A.~Farina, and A.~Graziano,
  ``Distributed peer-to-peer multitarget tracking with association-based track
  fusion,'' in \emph{Proc. 17$^{th}$ Int. Conf. on Information Fusion
  (FUSION)}, Salamanca, Spain, 2014.

\bibitem[BCMP11]{batchi2011}
G.~Battistelli, L.~Chisci, S.~Morrocchi, and F.~Papi, ``An
  information-theoretic approach to distributed state estimation,'' in
  \emph{Proc. 18$^{th}$ IFAC World Congress}, pp. 12477-12482, Milan, Italy,
  2011.

\bibitem[BJA12]{BaJuAg}
T.~Bailey, S.~Julier, and G.~Agamennoni, ``On conservative fusion of
  information with unknown non-{G}aussian dependence,'' in \emph{Proc.
  15$^{th}$ Int. Conf. on Information Fusion (FUSION)}, pp. 1876-1883,
  Singapore , 2012.

\bibitem[BP99]{BlPo}
S.~Blackman and R.~Popoli, \emph{Design and analysis of modern tracking
  systems}.\hskip 1em plus 0.5em minus 0.4em\relax Artech House, Norwood, MA,
  USA, 1999.

\bibitem[BSF88]{book0}
Y.~Bar-Shalom and T.~Fortmann, \emph{Tracking and data association}.\hskip 1em
  plus 0.5em minus 0.4em\relax Academic Press, San Diego, CA, USA, 1988.

\bibitem[BSL95]{book}
Y.~Bar-Shalom and X.~Li, \emph{Multitarget-multisensor tracking: principles and
  techniques}.\hskip 1em plus 0.5em minus 0.4em\relax YBS Publishing, Storrs,
  CT, USA, 1995.

\bibitem[BSLK01]{bar-shalom}
Y.~Bar-Shalom, X.~R. Li, and T.~Kirubarajan, \emph{Estimation with applications
  to tracking and navigation}.\hskip 1em plus 0.5em minus 0.4em\relax John
  Wiley \& Sons, Hoboken, NJ, USA, 2001.

\bibitem[BVV14]{beavovo2014}
M.~Beard, B.-T. Vo, and B.-N. Vo, ``Multi-target tracking with merged
  measurements using labelled random finite sets,'' in \emph{Proc. 17$^{th}$
  Int. Conf. on Information Fusion (FUSION)}, Salamanca, Spain, 2014.

\bibitem[CA09]{Calafiore}
G.~C. Calafiore and F.~Abrate, ``Distributed linear estimation over sensor
  networks,'' \emph{International Journal of Control}, vol.~82, no.~5, pp.
  868--882, 2009.

\bibitem[Cam70]{Campbell}
L.~Campbell, ``Equivalence of {G}auss's principle and minimum discrimination
  information estimation of probabilities,'' \emph{The Annals of Mathematical
  Statistics}, vol.~41, no.~3, pp. 1011--1015, 1970.

\bibitem[CCM10]{mori1}
K.~C. Chang, C.-Y. Chong, and S.~Mori, ``Analytical and computational
  evaluation of scalable distributed fusion algorithms,'' \emph{IEEE Trans. on
  Aerospace and Electronic Systems}, vol.~46, no.~4, pp. 2022--2034, 2010.

\bibitem[CCSZ08]{Chiuso}
R.~Carli, A.~Chiuso, L.~Schenato, and S.~Zampieri, ``Distributed {K}alman
  filtering based on consensus strategies,'' \emph{IEEE Journal on Selected
  Areas in Communications}, vol.~26, pp. 622--633, 2008.

\bibitem[CJMR10]{cljumhri2010}
D.~Clark, S.~Julier, R.~Mahler, and B.~Risti\'c, ``Robust multi-object fusion
  with unknown correlation,'' in \emph{Proc. 2010 Sensor Signal Processing for
  Defense Conference (SSPD 2010)}, London, UK, 2010.

\bibitem[CMC90]{chmoch1990}
C.-Y. Chong, S.~Mori, and K.-C. Chang, ``Distributed multitarget multisensor
  tracking,'' in \emph{Multitarget-Multisensor Tracking: Advanced
  Applications}, Y.~Bar-Shalom, Ed.\hskip 1em plus 0.5em minus 0.4em\relax
  Artech House, Norwood, MA, USA, 1990, ch.~8.

\bibitem[Coa04]{coates}
M.~Coates, ``Distributed particle filters for sensor networks,'' in \emph{Proc.
  3$^{rd}$ International Symposium on Information Processing in Sensor Networks
  (IPSN'04)}, pp. 99-107, New York, NY, USA, 2004.

\bibitem[CS10]{Sayed}
F.~S. Cattivelli and A.~Sayed, ``Diffusion strategies for distributed {Kalman}
  filtering and smoothing,'' \emph{IEEE Trans. on Automatic Control}, vol.~55,
  pp. 2069--2084, 2010.

\bibitem[CT12]{info}
T.~M. Cover and J.~A. Thomas, \emph{Elements of information theory}.\hskip 1em
  plus 0.5em minus 0.4em\relax John Wiley \& Sons, Hoboken, NJ, USA, 2012.

\bibitem[FBC{\etalchar{+}}12]{fanbatchi2012}
C.~Fantacci, G.~Battistelli, L.~Chisci, A.~Farina, and A.~Graziano,
  ``Multiple-model algorithms for distributed tracking of a maneuvering
  target,'' in \emph{Proc. 15$^{th}$ Int. Conf. on Information Fusion
  (FUSION)}, Singapore, 2012, pp. 1028--1035.

\bibitem[FFTS10]{Farina}
M.~Farina, G.~Ferrari-Trecate, and R.~Scattolini, ``Distributed moving horizon
  estimation for linear constrained systems,'' \emph{IEEE Trans. on Automatic
  Control}, vol.~55, no.~11, pp. 2462--2475, 2010.

\bibitem[Fis12]{fis1912}
R.~A. Fisher, ``On an absolute criterion for fitting frequency curves,''
  \emph{Messenger of Math.}, pp. 41--155, 1912.

\bibitem[FS85]{far1985v1}
A.~Farina and F.~A. Studer, \emph{Radar data processing, vol. I: introduction
  and tracking}.\hskip 1em plus 0.5em minus 0.4em\relax Research Studies Press,
  Letchworth, Hertfordshire, England, 1985.

\bibitem[FS86]{far1985v2}
------, \emph{Radar data processing, vol. II: advanced topics and
  applications}.\hskip 1em plus 0.5em minus 0.4em\relax Research Studies Press,
  Letchworth, Hertfordshire, England, 1986.

\bibitem[FSU09]{spooky}
D.~Franken, M.~Schmidt, and M.~Ulmke, ````{S}pooky action at a distance'' in
  the cardinalized probability hypothesis density filter,'' \emph{IEEE Trans.
  on Aerospace and Electronic Systems}, vol.~45, no.~4, pp. 1657--1664, 2009.

\bibitem[FVPV15]{mdglmbf}
C.~Fantacci, B.-T. Vo, F.~Papi, and B.-N. Vo, ``The marginalized
  $\delta$-{GLMB} filter,'' \emph{preprint available online at
  \href{http://arxiv.org/abs/1501.00926}{arXiv:1501.00926}}, 2015.

\bibitem[FVV{\etalchar{+}}15]{fanvovo2015}
C.~Fantacci, B.-N. Vo, B.-T. Vo, G.~Battistelli, and L.~Chisci, ``Consensus
  labeled random finite set filtering for distributed multi-object tracking,''
  \emph{preprint available online at
  \href{http://arxiv.org/abs/1501.01579}{arXiv:1501.01579}}, 2015.

\bibitem[Gau04]{gauss}
F.~C. Gauss, \emph{Theoria Motus Corporum Coelestium in Sectionibus Conicis
  Solem Ambientium}.\hskip 1em plus 0.5em minus 0.4em\relax Hamburg, 1809.
  (Translation of Theoria Motus: \textit{Theory of the Motion of the Heavenly
  Bodies Moving About the Sun in Conic Sections}. Dover Phoenix ed., 2004).

\bibitem[GMN97]{goomahngu1997}
I.~Goodman, R.~P.~S. Mahler, and H.~Nguyen, \emph{Mathematics of Data
  Fusion}.\hskip 1em plus 0.5em minus 0.4em\relax Kluwer Academic Publishers,
  Norwell, MA, USA, 1997.

\bibitem[GSW09]{geoschwil2009}
R.~Georgescu, S.~Schoenecker, and P.~Willett, ``{GM-CPHD} and {MLPDA} applied
  to the {SEABAR07} and {TNO}-blind multi-static sonar data,'' in \emph{Proc.
  12$^{th}$ Int. Conf. on Information Fusion (FUSION)}, pp. 1851-1858, Seattle,
  WA, USA, 2009.

\bibitem[HL64]{holee1964}
Y.~Ho and R.~Lee, ``A {B}ayesian approach to problems in stochastic estimation
  and control,'' \emph{IEEE Trans. on Automatic Control}, vol.~9, no.~4, pp.
  333--339, 1964.

\bibitem[HSH{\etalchar{+}}12]{clike1}
O.~Hlinka, O.~Slu$\check{\mbox{c}}$iak, F.~Hlawatsch, P.~M. Djuri\'{c}, and
  M.~Rapp, ``Likelihood consensus and its application to particle filtering,''
  \emph{IEEE Trans. on Signal Processing}, vol.~60, no.~8, pp. 4334--4349,
  2012.

\bibitem[Hur02]{hurley2002}
M.~Hurley, ``An information theoretic justification for covariance intersection
  and its generalization,'' in \emph{Proc. 5$^{th}$ Int. Conf. on Information
  Fusion (FUSION)}, pp. 505-511, Annapolis, MD, USA, 2002.

\bibitem[Jay03]{Jaynes}
E.~T. Jaynes, \emph{Probability Theory: The Logic of Science}.\hskip 1em plus
  0.5em minus 0.4em\relax Cambridge University Press, 2003.

\bibitem[Jaz07]{jaz2007}
A.~H. Jazwinski, \emph{Stochastic processes and filtering theory}.\hskip 1em
  plus 0.5em minus 0.4em\relax Courier Dover Publications, 2007.

\bibitem[JBU06]{jubauh2006}
S.~J. Julier, T.~Bailey, and J.~K. Uhlmann, ``Using exponential mixture models
  for suboptimal distributed data fusion,'' in \emph{Proc. 2006 IEEE Nonlinear
  Stat. Signal Proc. Workshop (NSSPW'06)}, pp. 160-163, Cambridge, UK, 2006.

\bibitem[JU97]{juluhl1997}
S.~J. Julier and J.~K. Uhlmann, ``A non-divergent estimation algorithm in the
  presence of unknown correlations,'' in \emph{Proc. IEEE American Control
  Conference (ACC 1997)}, pp. 2369-2373, Albuquerque, NM, 1997.

\bibitem[JU04]{juluhl2004}
------, ``Unscented filtering and nonlinear estimation,'' \emph{Proc. of the
  IEEE}, vol.~92, no.~3, pp. 401--422, 2004.

\bibitem[JUDW95]{juluhldw1995}
S.~J. Julier, J.~K. Uhlmann, and H.~F. Durrant-Whyte, ``A new approach for
  filtering nonlinear systems,'' in \emph{Proc. 1995 American Control
  Conference (ACC)}, 1995, pp. 1628--1632, Seattle, WA, USA, 1995.

\bibitem[Jul06]{jul2006}
S.~J. Julier, ``An empirical study into the use of {C}hernoff information for
  robust, distributed fusion of {G}aussian mixture models,'' in \emph{Proc.
  9$^{th}$ Int. Conf. on Information Fusion (FUSION)}, Florence, Italy, 2006.

\bibitem[Jul08]{julier2008}
------, ``Fusion without independence,'' \emph{2008 IET Seminar on Target
  Tracking and Data Fusion: Algorithms and Applications}, 2008.

\bibitem[Kal60]{kal1960}
R.~E. Kalman, ``A new approach to linear filtering and prediction problems,''
  \emph{Trans. ASME - Journal of Basic Engineering}, vol. 82 (Series D), no.~1,
  pp. 35--45, 1960.

\bibitem[KB61]{kalbuc1961}
R.~E. Kalman and R.~S. Bucy, ``New results in linear filtering and prediction
  theory,'' \emph{Trans. ASME - Journal of Basic Engineering}, vol. 83 (Series
  D), no.~1, pp. 95--108, 1961.

\bibitem[Kol50]{kol1950}
A.~N. Kolmogorov, \emph{Foundations of the Theory of Probability}.\hskip 1em
  plus 0.5em minus 0.4em\relax New York, Chelsea Pub. Co., 1950.

\bibitem[KT08]{Tomlin}
M.~Kamgarpour and C.~Tomlin, ``Convergence properties of a decentralized
  {K}alman filter,'' in \emph{Proc. 47$^{th}$ IEEE Conf. on Decision and
  Control (CDC)}, pp. 3205-3210, Cancun, Mexico, 2008.

\bibitem[Lee08]{UIF}
D.~J. Lee, ``Nonlinear estimation and multiple sensor fusion using unscented
  information filtering,'' \emph{IEEE Signal Processing Letters}, vol.~15,
  no.~8, pp. 861--864, 2008.

\bibitem[LJ05]{liji2005}
X.~R. Li and V.~Jilkov, ``Survey of maneuvering target tracking. {P}art {V}:
  Multiple-model methods,'' \emph{IEEE Trans. on Aerospace and Electronic
  Systems}, vol.~41, no.~4, pp. 1255--1321, 2005.

\bibitem[LJ12]{li-jia}
W.~Li and Y.~Jia, ``Consensus-based distributed multiple model {UKF} for jump
  {M}arkov nonlinear systems,'' \emph{IEEE Trans. on Automatic Control},
  vol.~57, no.~1, pp. 227--233, 2012.

\bibitem[Mah36]{mahal1925}
P.~C. Mahalanobis, ``On the generalised distance in statistics,'' \emph{Proc.
  National Institute of Sciences of India}, vol.~2, no.~1, pp. 49--55, 1936.

\bibitem[Mah00]{mah2000}
R.~P.~S. Mahler, ``Optimal/robust distributed data fusion: a unified
  approach,'' \emph{Proc. SPIE Defense and Security Symposium 2000}, 2000.

\bibitem[Mah03]{mahler1}
------, ``Multitarget {B}ayes filtering via first-order multitarget moments,''
  \emph{IEEE Trans. on Aerospace and Electronic Systems}, vol.~39, no.~4, pp.
  1152--1178, 2003.

\bibitem[Mah04]{mah2004}
------, ````{S}tatistics 101'' for multisensor, multitarget data fusion,''
  \emph{IEEE Trans. on Aerospace and Electronic Systems}, vol.~19, no.~1, pp.
  53--64, 2004.

\bibitem[Mah07a]{mahler2}
------, ``{PHD} filters of higher order in target number,'' \emph{IEEE Trans.
  on Aerospace and Electronic Systems}, vol.~43, no.~4, pp. 1523--1543, 2007.

\bibitem[Mah07b]{mahler}
------, \emph{Statistical multisource-multitarget information fusion}.\hskip
  1em plus 0.5em minus 0.4em\relax Artech House, Norwood, MA, 2007.

\bibitem[Mah12]{mahler-chap-2012}
------, ``Toward a theoretical foundation for distributed fusion,'' in
  \emph{Distributed Data Fusion for Network-Centric Operations}, D.~Hall, M.~L.
  II, C.-Y. Chong, and J.~Llinas, Eds.\hskip 1em plus 0.5em minus 0.4em\relax
  CRC Press, Boca Raton, FL, USA, 2012, ch.~8.

\bibitem[Mah13]{mah2013}
------, ````{S}tatistics 102'' for multisource-multitarget detection and
  tracking,'' \emph{IEEE Journal of Selected Topics in Signal Processing},
  vol.~7, no.~3, pp. 376--389, 2013.

\bibitem[Mah14]{mah2014book}
------, \emph{Advances in Statistical Multisource-Multitarget Information
  Fusion}.\hskip 1em plus 0.5em minus 0.4em\relax Artech House, Norwood, MA,
  2014.

\bibitem[Mat75]{mat1975}
G.~Matheron, \emph{Random sets and integral geometry}.\hskip 1em plus 0.5em
  minus 0.4em\relax John Wiley \& Sons, Hoboken, NJ, USA, 1975.

\bibitem[MSS62]{EKF}
L.~A. McGee, S.~F. Schmidt, and G.~L. Smith, ``Applications of statistical
  filter theory to the optimal estimation of position and velocity on board a
  circumlunar vehicle,'' NASA Technical Report R-135, Tech. Rep., 1962.

\bibitem[OC10]{OrCo}
B.~N. Oreshkin and M.~J. Coates, ``Asynchronous distributed particle filter via
  decentralized evaluation of {G}aussian products,'' in \emph{Proc. 13$^{th}$
  Int. Conf. on Information Fusion (FUSION)}, Edinburgh, UK, 2010.

\bibitem[OHD14]{clike2}
F.~H. O.~Hlinka and P.~M. Djuric, ``Consensus-based distributed particle
  filtering with distributed proposal adaptation,'' \emph{IEEE Trans. on Signal
  Processing}, vol.~62, no.~12, pp. 3029--3041, 2014.

\bibitem[OS07]{OlfatiInnovation}
R.~Olfati-Saber, ``Distributed {K}alman filtering for sensor networks,'' in
  \emph{Proc. 46$^{th}$ IEEE Conf. on Decision and Control (CDC)}, pp.
  5492-5498, New Orleans, LA, USA, 2007.

\bibitem[OSFM07]{Olfati}
R.~Olfati-Saber, J.~Fax, and R.~Murray, ``Consensus and cooperation in
  networked multi-agent systems,'' \emph{Proceedings of the IEEE}, vol.~95,
  no.~1, pp. 49--54, 2007.

\bibitem[PVV{\etalchar{+}}14]{papi2014}
F.~Papi, B.-N. Vo, B.-T. Vo, C.~Fantacci, and M.~Beard, ``Generalized labeled
  multi-{B}ernoulli approximation of multi-object densities,'' \emph{preprint
  available online at \href{http://arxiv.org/abs/1412.5294}{arXiv:1412.5294}},
  2014.

\bibitem[RAG04]{PF}
B.~Risti\'c, S.~Arulampalm, and N.~J. Gordon., \emph{Beyond the Kalman filter:
  Particle filters for tracking applications}.\hskip 1em plus 0.5em minus
  0.4em\relax Artech House Publishers, 2004.

\bibitem[Rei79]{reid}
D.~Reid, ``An algorithm for tracking multiple targets,'' \emph{IEEE Trans. on
  Automatic Control}, vol.~24, no.~6, pp. 843--854, 1979.

\bibitem[RVVD14]{lmbf}
S.~Reuter, B.-T. Vo, B.-N. Vo, and K.~Dietrmayer, ``The labeled
  multi-{B}ernoulli filter,'' \emph{IEEE Trans. on Signal Processing}, vol.~62,
  no.~12, pp. 3246--3260, 2014.

\bibitem[Sal88]{salm1988}
D.~J. Salmond, ``Mixture reduction algorithms for uncertain tracking,'' Tech.
  Rep. 88004, Royal Aerospace Establishment, Farnborough, UK, DTIC Number
  ADA197641, Tech. Rep., 1988.

\bibitem[Sal90]{salm1990}
------, ``Mixture reduction algorithms for target tracking in clutter,'' in
  \emph{SPIE Signal and Data Processing of Small Targets}, pp. 434-445, Los
  Angeles, CA, USA, 1990.

\bibitem[SKM95]{stokenmec1995}
D.~Stoyan, D.~Kendall, and J.~Mecke, \emph{Stochastic Geometry and its
  Applications}.\hskip 1em plus 0.5em minus 0.4em\relax John Wiley \& Sons,
  Hoboken, NJ, USA, 1995.

\bibitem[SSS09]{Stankovic}
S.~S. Stankovic, M.~Stankovic, and D.~Stipanovic, ``Consensus based overlapping
  decentralized estimation with missing observations and communication
  faults,'' \emph{Automatica}, vol.~45, no.~6, pp. 1397--1406, 2009.

\bibitem[Str60]{stra1960}
R.~L. Stratonovich, ``Conditional markov processes,'' \emph{Theory of
  Probability \& Its Applications}, vol.~5, no.~2, pp. 156--178, 1960.

\bibitem[SVV08]{schvovo2008}
D.~Schuhmacher, B.-T. Vo, and B.-N. Vo, ``A consistent metric for performance
  evaluation of multi-object filters,'' \emph{IEEE Trans. on Signal
  Processing}, vol.~56, no.~8, pp. 3447--3457, 2008.

\bibitem[SWS09]{svewinsve2009}
D.~Svensson, J.~Wintenby, and L.~Svensson, ``Performance evaluation of {MHT}
  and {GM-CPHD} in a ground target tracking scenario,'' in \emph{Proc.
  12$^{th}$ Int. Conf. on Information Fusion (FUSION)}, pp. 300-307, Seattle,
  WA, USA, 2009.

\bibitem[Tre04]{vantree2004}
H.~L.~V. Trees, \emph{Detection, estimation, and modulation theory}.\hskip 1em
  plus 0.5em minus 0.4em\relax John Wiley \& Sons, Hoboken, NJ, USA, 2004.

\bibitem[UCJ11]{unclju2011}
M.~Uney, D.~Clark, and S.~Julier, ``Information measures in distributed
  multitarget tracking,'' in \emph{Proc. 14$^{th}$ Int. Conf. on Information
  Fusion (FUSION)}, Chicago, IL, USA, 2011.

\bibitem[UCJ13]{emd}
------, ``Distributed fusion of phd filters via exponential mixture
  densities,'' \emph{IEEE Journal of Selected Topics in Signal Processing},
  vol.~7, no.~3, pp. 521--531, 2013.

\bibitem[UJCR10]{unjuclri2010}
M.~Uney, S.~Julier, D.~Clark, and B.~Risti\'c, ``Monte {C}arlo realisation of a
  distributed multi-object fusion algorithm,'' in \emph{Proc. 2010 Sensor
  Signal Processing for Defense Conference (SSPD 2010)}, London, UK, 2010.

\bibitem[VM06]{vo-ma}
B.-N. Vo and W.~Ma, ``The {G}aussian mixture probability hypothesis density
  filter,'' \emph{IEEE Trans. on Signal Processing}, vol.~54, no.~11, pp.
  4091--4104, 2006.

\bibitem[VSD05]{VSD05}
B.-N. Vo, S.~Singh, and A.~Doucet, ``Sequential {M}onte {C}arlo methods for
  multi-target filtering with random finite sets,'' \emph{IEEE Trans. on
  Aerospace and Electronic Systems}, vol.~41, no.~4, pp. 1224--1245, 2005.

\bibitem[VV13]{vovo1}
B.-T. Vo and B.-N. Vo, ``Labeled random finite sets and multi-object conjugate
  priors,'' \emph{IEEE Trans. on Signal Processing}, vol.~61, no.~13, pp.
  3460--3475, 2013.

\bibitem[VVC07]{vo-vo-cantoni}
B.-T. Vo, B.-N. Vo, and A.~Cantoni, ``Analytic implementations of the
  cardinalized probability hypothesis density filter,'' \emph{IEEE Trans. on
  Signal Processing}, vol.~55, no.~7, pp. 3553--3567, 2007.

\bibitem[VVC09]{Vo2009}
------, ``The cardinality balanced multi-target multi-{B}ernoulli filter and
  its implementations,'' \emph{IEEE Trans. on Signal Processing}, vol.~57,
  no.~2, pp. 409--423, 2009.

\bibitem[VVP14]{vovo2}
B.-N. Vo, B.-T. Vo, and D.~Phung, ``Labeled random finite sets and the {B}ayes
  multi-target tracking filter,'' \emph{IEEE Trans. on Signal Processing},
  vol.~62, no.~24, pp. 6554--6567, 2014.

\bibitem[VVPS10]{VVPS10}
B.-N. Vo, B.-T. Vo, N.-T. Pham, and D.~Suter, ``Joint detection and estimation
  of multiple objects from image observations,'' \emph{IEEE Trans. on Signal
  Processing}, vol.~58, no.~10, pp. 5129--5241, 2010.

\bibitem[Wil03]{will2003}
\BIBentryALTinterwordspacing
J.~L. Williams, ``Gaussian mixture reduction for tracking multiple maneuvering
  targets in clutter,'' Master's thesis, Air Force Institute of Technology,
  Wright-Patterson Air Force Base, OH, 2003. [Online]. Available:
  \url{http://handle.dtic.mil/100.2/ADA415317}
\BIBentrySTDinterwordspacing

\bibitem[WM06]{willmay2005}
J.~L. Williams and P.~S. Maybeck, ``Cost function-based hypothesis control
  techniques for multiple hypothesis tracking,'' \emph{Mathematical and
  Computer Modelling}, vol.~43, pp. 976--989, 2006.

\bibitem[WvdM01]{wanmer2001}
E.~A. Wan and R.~van~der Merwe, ``The unscented {K}alman filter,'' in
  \emph{{K}alman filtering and neural networks}, S.~Haykin, Ed.\hskip 1em plus
  0.5em minus 0.4em\relax John Wiley \& Sons, Hoboken, NJ, USA, 2001, pp.
  221--280.

\bibitem[XBL05]{Xiao}
L.~Xiao, S.~Boyd, and S.~Lall, ``A scheme for robust distributed sensor fusion
  based on average consensus,'' in \emph{Proc. 4$^{th}$ Int. Symposium on
  Information Processing in Sensor Networks (IPSN)}, pp. 63-70, Los Angeles,
  CA, USA, 2005.

\end{thebibliography}
\providecommand{\etalchar}[1]{$^{#1}$}

\end{document}